\documentclass[11pt]{article}
\usepackage[margin=1in]{geometry}
\usepackage{amsmath}
\usepackage{graphicx}
\usepackage{enumerate}
\usepackage{enumitem}
\usepackage{natbib}
\usepackage{url} 

\usepackage{algorithmic} 
\usepackage{algorithm} 
\usepackage{setspace} 
\usepackage{rotating} 

\usepackage{amsfonts, amssymb} 
\usepackage{amsthm}
\usepackage[T1]{fontenc}
\usepackage{lmodern}

\usepackage{epstopdf} 
\usepackage{hyperref} 
\usepackage{bm} 
\usepackage{bbm}
\usepackage{multirow} 
\usepackage{latexsym} 
\usepackage{rotating} 
\usepackage{titlesec} 
\usepackage{caption} 
\usepackage{mathtools} 
\usepackage{color} 
\usepackage{diagbox}
\usepackage{pict2e}
\usepackage{etoolbox}
\makeatletter
\@ifundefined{c@lofdepth}{}{\let\c@lofdepth\undefined}
\@ifundefined{c@lotdepth}{}{\let\c@lotdepth\undefined}
\makeatother

\usepackage{tocloft}

\usepackage[english]{babel}
\usepackage{empheq}

\usepackage{cases}
\usepackage{tikz}
\usetikzlibrary{arrows.meta,shapes,shapes.arrows,
shapes.geometric,shapes.multipart,decorations.pathmorphing,
positioning,shapes.swigs,}

\usepackage{lato}

\newtheorem{theorem}{Theorem}[section]
\newtheorem{lemma}[theorem]{Lemma}

\theoremstyle{definition}

\definecolor{mycolor}{RGB}{0,200,200} 

\newcommand{\indep}{\rotatebox[origin=c]{90}{$\models$}\,}
\newcommand{\nindep}{\not \hspace{-0.1cm} \rotatebox[origin=c]{90}{$\models$}\,}
\DeclareMathOperator*{\argmax}{arg\,max}
\DeclareMathOperator*{\argmin}{arg\,min}

\DeclareMathOperator*{\arginf}{arg\,inf}
\DeclareMathOperator*{\median}{median}
\newcommand{\EXP}{\mathbbm{E}}
\newcommand{\VAR}{\text{Var}}
\newcommand{\AVER}{\mathbbm{P}}

\newcommand{\cond}{\, | \,}
\newcommand{\con}{ ; }
\newcommand{\R}{\mathbbm{R}}
\newcommand{\ind}{\mathbbm{1}}
\newcommand{\T}{{^\intercal}}
\newcommand{\InfFt}{\texttt{IF}}
\newcommand{\uncInfFt}{\texttt{uIF}}
\newcommand{\augInfFt}{\texttt{Aug}}

\newcommand{\bV}{\bm{V}}
\newcommand{\bv}{\bm{v}}
\newcommand{\bX}{\bm{X}}
\newcommand{\bx}{\bm{x}}
\newcommand{\bL}{\bm{L}} 
\newcommand{\bO}{\bm{O}}
 
\newcommand{\potY}[1]{Y^{(#1)}}
\newcommand{\suppZ}{\mathcal{Z}}
\newcommand{\suppZb}{\{0,1\}}
\newcommand{\suppZinf}{\suppZ_{\inf,\bX}}
\newcommand{\suppX}{\mathcal{X}}
\newcommand{\suppYX}{\mathcal{Y}_{\bX}}

\newcommand{\model}{\mathcal{M}}
\newcommand{\SUPP}{{Supplementary Material}}

\newcommand{\odds}{\pi}

\newcommand{\ETA}{_{\eta}}
\newcommand{\alphahat}{\widehat{\alpha}\LSS}
\newcommand{\pihat}{\widehat{\pi}\LSS}
\newcommand{\betahat}{\widehat{\beta}\LSS}
\newcommand{\gammahat}{\widehat{\gamma}\LSS}
\newcommand{\muhat}{\widehat{\mu}\LSS}

\newcommand{\what}{\widehat{\omega}\LSS}

\newcommand{\EXPhat}{\widehat{\EXP}\LSS}

\newcommand{\ftrue}[1]{f_{#1}^*}
\newcommand{\ff}[1]{f_{#1}}
\newcommand{\fhat}[1]{\widehat{f}_{#1}}

\newcommand{\fyhat}{\fhat{Y}}
\newcommand{\fy}{\ff{Y}}

\newcommand{\fzhat}{\fhat{Z}}
\newcommand{\fz}{\ff{Z}}

\newcommand{\fahat}{\fhat{A}}
\newcommand{\fa}{\ff{A}}

\newcommand{\II}{\mathcal{I}}
\newcommand{\LSS}{^{(-k)}}
\newcommand{\LSSstar}{^{(-k)\star}}
\newcommand{\SSS}{^{(k)}}
\newcommand{\SSk}{\mathcal{I}_k}
\newcommand{\AVERk}{\AVER_{\SSk}}
\newcommand{\EXPk}{\EXP^{(-k)}}
\newcommand{\EMPk}{\mathbbm{G}_{\SSk}^{(-k)}}

\newcommand{\norm}{\mathcal{N}}
\newcommand{\normhat}{\widehat{\mathcal{N}}^{(-k)}}

\newcommand{\HL}[1]{\hyperlink{(#1)}{(#1)}}
\newcommand{\HT}[1]{\hypertarget{(#1)}{(#1)}}

\newcommand\numeq{\addtocounter{equation}{1}\tag{\theequation}}

\setlist[itemize]{
 leftmargin=!,       
  labelwidth=1cm,   
  labelsep=0.5em,     
  itemsep=0cm,
  topsep=0.1cm,
  parsep=0cm,
  align=left           
  }  
\setlist[enumerate]{
 leftmargin=!,       
  labelwidth=1cm,   
  labelsep=0.5em,     
  itemsep=0cm,
  topsep=0.1cm,
  parsep=0cm,
  align=left           
  }  

\allowdisplaybreaks
\hypersetup{
colorlinks=true,
linkcolor=blue,
filecolor=blue,
urlcolor=blue,
citecolor=blue
}

\graphicspath{ {plot/} }

\definecolor{red1}{RGB}{255,64,64}
\definecolor{blue1}{RGB}{128,255,255}
\definecolor{green1}{RGB}{0,205,0}

\begin{document}

\title{Nonparametric Inference with an Instrumental Variable under a Separable Binary Treatment Choice Model}
\author{
Chan Park$^{a}$, Eric J. Tchetgen Tchetgen$^{b}$\\[0.25cm]
\makebox[1cm][c]{{\footnotesize $^{a}$Department of Statistics, University of Illinois Urbana-Champaign, Champaign, IL 61820, U.S.A.}}\\
\makebox[1cm][c]{{\footnotesize $^{b}$Department of Statistics and Data Science, University of Pennsylvania, Philadelphia, PA 19104, U.S.A.}}
}
 \date{}
  \maketitle
\begin{abstract}
Instrumental variable (IV) methods are widely used to infer treatment effects in the presence of unmeasured confounding. In this paper, we study nonparametric inference with an IV under a \emph{separable binary treatment choice model}, which posits that the odds of the probability of taking the treatment, conditional on the instrument and the treatment-free potential outcome, factor into separable components for each variable. While nonparametric identification of smooth functionals of the treatment-free potential outcome among the treated, such as the average treatment effect on the treated, has been established under this model, corresponding nonparametric efficient estimation has proven elusive due to variationally dependent nuisance parameters defined in terms of counterfactual quantities. To address this challenge, we introduce a new variationally independent parameterization based on nuisance functions defined directly from the observed data. This parameterization, coupled with a novel fixed-point argument, enables the use of modern machine learning methods for nuisance function estimation. We characterize the semiparametric efficiency bound for any smooth functional of the treatment-free potential outcome among the treated and construct a corresponding semiparametric efficient estimator without imposing any unnecessary restriction on nuisance functions. Furthermore, we describe a straightforward generative model justifying our identifying assumptions and characterize empirically falsifiable implications of the framework to evaluate our assumptions in practical settings. Our approach seamlessly extends to nonlinear treatment effects, population-level effects, and nonignorable missing data settings. We illustrate our methods through simulation studies and an application to the Job Corps study.

\end{abstract}
\noindent%
{\it Keywords:}  Falsification; Job Corps; Missing Data; Odds Ratio; Semiparametric Efficiency

\newpage

\section{Introduction} \label{sec-introduction}

In recent years, causal inference has provided several rigorous methods for inferring the impact of hypothetical interventions from observational studies and randomized trials with imperfect adherence to the study protocol. In such studies, the validity of causal inference relies on adequately accounting for potential confounders of the treatment-outcome relationship of primary interest. In practice, confounding is typically addressed by adjusting for measured pre-treatment covariates, with the hope that they capture all relevant sources of confounding. However, in observational studies and randomized trials subject to protocol violations, unobserved confounding is almost always a concern, as it may be impossible to rule out the presence of hidden common causes of the treatment and outcome variables in such settings; see Section \ref{sec-application} for a real-world example.

To date, a rich literature has also been developed to address unmeasured confounding. One of the most prominent approaches is the instrumental variable (IV)  approach, which has been broadly applied across social and health sciences \citep{LATE1994, Angrist1996, JAMA1996, Card2001, Angrist2001, Abadie2003, Vansteelandt2003, Newey2003, HR2006, Tan2006, Tan2010, Vansteelandt2011, TTV2013, Ogburn2014, Wang2017IV, Kennedy2018, Sun2018, Kennedy2020, Liu2020, YeShaoKang2021, TT2024_NATE, Liu2025MIV, Sun2025}. Formally, the IV framework relies on the availability of a valid instrument which satisfies three core conditions, namely \HL{IV1}: an IV exclusion restriction condition; \HL{IV2}: an IV independence condition; and \HL{IV3}: an IV relevance condition; see Section \ref{sec-assumption} for formal definitions of these three core assumptions. In addition, for point identification of a treatment effect, an additional fourth IV assumption is typically required. The specific form of this assumption determines which causal estimand can be identified. Commonly studied estimands include the local average treatment effect \citep{LATE1994, Angrist1996}, the population average treatment effect \citep{Wang2017IV}, and the average treatment effect on the treated (ATT) \citep{Robins1994SNMM, HR2006, Liu2020, Liu2025MIV}; see \citet{Levis2024} for a recent review comparing these estimands and the associated IV assumptions.

In this paper, we study inference with an IV which assumes conditions \HL{IV1}-\HL{IV3}, and \HL{IV4}: a certain \emph{separable binary treatment choice model}, which posits that the odds of the extended propensity score---the probability of taking the treatment conditional on the instrument and the potential outcome under no treatment---can be written as a product of separable components, one for each variable; see Section \ref{sec-assumption} for its formal definition. When combined with the IV independence condition \HL{IV2}, this model assumes that the exogenous instrument and the endogenous hidden confounding exert separable, multiplicative effects on the treatment odds. Equivalently, the model implies a \textit{logit-separable treatment mechanism}, whereby the logit of the extended propensity score is additive in the instrument and the hidden confounding, represented through the potential outcome under no treatment. For this reason, we use ``separable binary treatment choice model'' and ``logit-separable treatment mechanism'' interchangeably throughout the paper. Accordingly, we refer to the IV framework satisfying \HL{IV1}-\HL{IV4} as the \emph{logit-separable IV model}.

In recent related work, \citet{Liu2020} established nonparametric identification of the law of the treatment-free potential outcome under \HL{IV1}-\HL{IV4}. Consequently, any smooth functional of this counterfactual distribution (e.g., ATT) is also nonparametrically identified. \citet{Sun2018} also established related results in a nonignorable missing data setting; see Section \ref{sec-identification of the ATT} for details. Importantly, both studies highlighted that identification generally fails if the separability condition is violated. For instance, if the logit of the extended propensity score cannot be expressed as a sum of components that depend only on the instrument and the potential outcome---implying non-separability on the logit scale---the model is no longer identified in general. While identification may still be possible without logit-separability, it typically requires separability to hold on an alternative scale, such as the probit scale. This example illustrates the unique and essential role of the separable treatment mechanism in achieving nonparametric identification.

Despite their foundational identification results, \citet{Sun2018}, \citet{Liu2020}, and more recently \citet{Sun2025} face several important limitations in estimation and inference. First, the parameterization proposed by \citet{Sun2018} and \citet{Liu2020} is not variationally independent, as infinite-dimensional nuisance parameters functionally constrain each other. In an effort to address this limitation, \citet{Sun2025} introduced a variationally independent parameterization; however, like the earlier approaches, their proposed re-parameterization intrinsically relies on counterfactual quantities which are not directly available from the observed data. Together, the lack of variational independence and the reliance on counterfactual parameters complicate estimation and inference. The previous papers attempted to resolve these complications by relying on parametric specification of certain nuisance functions. However, variational dependence makes model specification challenging, and can easily lead to incoherence among nuisance parameters, further compromising robustness of estimation and inference. Consequently, the theoretical guarantees of existing methods are limited in terms of model flexibility. Details of previously proposed parameterizations for the logit-separable IV model are given in Section \ref{sec-previous parameterization}.  In addition, previous studies lack a well-articulated generative model for the logit-separable treatment mechanism and do not provide empirical conditions to falsify the model when it does not hold. These limitations make it difficult for investigators to assess the model plausibility and apply the existing method to real-world settings. Accordingly, a comprehensive framework for principled nonparametric estimation and inference is currently lacking for the logit-separable IV model.

The overarching goal of this paper is to establish a comprehensive nonparametric IV framework under the logit-separable treatment mechanism with a binary instrument. Specifically, we extend and substantially generalize prior works \citep{Sun2018, Liu2020, Sun2025} by systematically addressing its fundamental limitations. As we argue below, our nonparametric treatment of the logit-separable IV model makes multiple novel contributions to the causal inference literature, as it necessitates the development of several new methodological techniques and technical arguments that cannot easily be deduced from the existing literature. Our main contributions are summarized as follows.

\vspace*{0.2cm}
\noindent (\textit{Contribution 1: A New Parameterization})  \\
    In Section \ref{sec-parameterization}, we propose an alternative parameterization that is variationally independent and relies solely on quantities directly available from the observed data. In other words, our model parameters are defined in a way that avoids any dependence on counterfactual quantities. Its construction involves introducing an intermediate quantity via a fixed-point equation in a function space, requiring a rigorous and nontrivial analysis. We further show that this equation can be efficiently solved using an iterative procedure that converges to its unique solution from any arbitrary initial values. Notably, the algebraic techniques developed herein are novel and may be of independent interest beyond the existing causal inference literature.

\vspace*{0.2cm}
\noindent (\textit{Contribution 2: Semiparametric Efficiency Theory})  \\
    In Section \ref{sec-theory}, we characterize the full set of influence functions for the ATT in a nonparametric model for the observed data law, which allows all nuisance functions to be unrestricted. This result stands in sharp contrast to previous work, which relies on parametric specifications for certain model parameters. Furthermore, we establish that the influence function for the ATT is unique and thus attains the semiparametric efficiency bound for the model, and is available in closed-form expression. Notably, deriving this expression presents several technical challenges, which we carefully address, as it necessitates solving a complex integral equation. 

\vspace*{0.2cm}
\noindent (\textit{Contribution 3: A Semiparametric Efficient Estimator})  \\
    In Section \ref{sec-estimation}, we construct an estimator for the ATT based on the efficient influence function introduced in Contribution 2. Importantly, all nuisance functions are estimated nonparametrically, leveraging the alternative parameterization introduced in Contribution 1. The proposed estimator is consistent, asymptotically normal, and semiparametric efficient, provided that some, but not necessarily all, nuisance functions are estimated at a sufficiently fast rate. We further compare the conditions required for asymptotic normality with those in previous work, considering both scenarios in which nuisance functions are estimated parametrically and nonparametrically. Simulation studies in Section \ref{sec-simulation} demonstrate that the finite-sample performance of the proposed estimator aligns well with the established asymptotic properties.

\vspace*{0.2cm}
\noindent (\textit{Contribution 4: A Generative Model and Falsification Implications})  \\
    We also provide results that facilitate the interpretation and assessment of the logit-separable IV model. Specifically, in Section \ref{sec-generative model}, we introduce a generative model for the logit-separable IV model motivated by standard discrete choice theory widely studied in economics and related empirical social sciences; in Section \ref{sec-falsification}, we examine the falsification implications of the IV assumptions. Together, these results help justify and interpret the IV model, thereby enhancing their practical relevance. Both the generative model and the falsification implications are illustrated through a real-world application in Section \ref{sec-application}.

\vspace*{0.2cm}
\noindent (\textit{Contribution 5: Extensions})  \\
    Lastly, as byproducts of our approach, we extend the method to infer nonlinear treatment effects, such as the quantile treatment effect on the treated (QTT), and treatment effects defined over the entire population, such as the average treatment effect, beyond causal effects among treated units. We also discuss how our IV framework can be readily adapted to settings with missing data under a nonignorable missing mechanism. Finally, we provide additional results in the case of categorical or continuous IV, where the model is no longer saturated, and causal parameters become over-identified. For example, we provide a closed-form characterization for the set of influence functions in the case of a categorical IV. Additionally, the new parameterization introduced in Contribution 1 extends to general $Z$ with only minor modifications to the fixed-point equation. Due to space constraints, these extensions are presented in {\SUPP} \ref{sec-supp-extension}.

\section{Preliminary} \label{sec-preliminary}
\subsection{Setup}

Suppose that we observe $N$ independent and identically distributed (i.i.d.) units, indexed by the subscript $i \in \{1,\ldots,N\}$. We suppress the subscript $i$ unless necessary. For each unit, the observed data are $\bO \equiv (Y, A, Z, \bX)$; here, $\bX \in \R^d$ denotes a $d$-dimensional pre-treatment covariate, $Z \in \{0,1\}$ is a candidate binary IV, $A \in \{0,1\}$ denotes treatment assignment ($A=1$ for treated, $A=0$ for control), and $Y \in \R$ is the outcome of interest. Let $\suppX \subseteq \R^d$ denote the support of $\bX$. Let $\potY{a,z} \in \R$ denote the potential outcome that one would have observed had, possibly contrary to the fact, $(A,Z)$ been set to $(a,z)$. Similarly, let $\potY{a} \in \R$ be the potential outcome that one would have observed had, possibly contrary to the fact, $A$ been set to $a$, regardless of
$Z$. In this paper, we mainly focus on the ATT, defined as $\tau^* = \EXP \{ \potY{1} - \potY{0} \cond A=1 \}$. In {\SUPP} \ref{sec-supp-extension}, we discuss how our method can be used to infer nonlinear treatment effects (e.g., QTT), population-level effects, (e.g., average treatment effect), and outcomes subject to a nonignorable missingness mechanism.

Throughout the paper, we impose the following assumptions, which are standard in the causal inference literature:
\begin{itemize}
    \item[\HT{A1}] (Consistency) $Y = \potY{A}$ almost surely;
    \item[\HT{A2}] (Overlap) For each $\bX \in \suppX$, the support of $\potY{0} \cond (A=a,Z=z,\bX)$ is identical for all values of $a \in \{0,1\}$ and $z \in \suppZb$, which is denoted by $\suppYX \subseteq \R$. 
\end{itemize}
Under these assumptions, the ATT can be expressed as $\tau^* = \EXP \{ Y - \potY{0} \cond A=1 \}$. Consequently, identification of the ATT depends on the ability to recover the distribution of $\potY{0} \cond (A=1)$ from the observed data. 

To facilitate the derivation of identification results and for later use, we introduce some additional notation. Let $f^*(\cdot \cond \cdot)$ denote the true (conditional) density function; for example, $f^*(Z \cond \bX)$ represents the true conditional density of $Z$ given $\bX$. Without an asterisk, $f (\cdot \cond \cdot)$ denotes a working model for a (conditional) density function. For each $\bX \in \suppX$, let $y_R \in \suppYX$ be a fixed outcome reference value, and define the functions $\odds^*$, $\alpha^*$, $\beta^*$, and $\mu^*$ by
\begin{align}
    &
    \odds^*(y,z,\bX) = 
    \frac{ f^*(A=1 \cond \potY{0}=y,Z=z,\bX ) }{ f^* (A=0 \cond \potY{0}=y,Z=z,\bX) } 
    \ , 
    \label{eq-nuis-pi}
    \\
    &
    \alpha^*(y,z,\bX)
    =
    \frac{\odds^*(y,z,\bX)}{\odds^*(y_R,z,\bX)} \ ,
    &&
    \hspace*{-1.75cm}
    \beta^*(z,\bX) = \odds^*(y_R,z,\bX)
    \label{eq-nuis-alpha beta}
    \ ,  \\ 
    &
    \mu^*(z,\bX)
    =
    \EXP \{ 
        \potY{0} \cond A=1,Z=z,\bX
    \} \ . 
    \label{eq-nuis-mu}
\end{align}
In words, $\odds^*$ is the odds function of $A$ given $(\potY{0}, Z,\bX)$, $\alpha^*$ is the (generalized) odds ratio function relating $A$ and $\potY{0}$ given $(Z,\bX)$ \citep{Chen2007, TTRR2010}, and $\beta^*$ is the baseline odds function. By definition, we have $\pi^* = \alpha^* \beta^* $; moreover, under \HL{A2}, both  $\alpha^*$ and $\beta^*$ take values in $(0,\infty)$. We remark that $\alpha^*$ can be interpreted as a scale quantifying the degree of confounding bias in the association between $A$ and $\potY{0}$. In particular, if $\alpha^*(y,Z,\bX)=1$ for all $y \in \suppYX$ given $(Z,\bX)$, then $A$ and $\potY{0}$ have no association, and are in fact conditionally independent, given $(Z,\bX)$; that is, there is no confounding bias conditional on $(Z,\bX)$. In addition, $\alpha^*$ satisfies the boundary condition that $\alpha^*(y_R,Z,\bX)=1$ for all $(Z,\bX)$. The choice of reference value $y_R$ does not affect the results established below.

Next, 
$\mu^*$ is the counterfactual outcome regression of $\potY{0}$ given $(A=1,Z,\bX)$. For any integrable function $g(\potY{0}, Z, \bX)$, we define $\mu^*(Z,\bX \con g) = \EXP \{ g(\potY{0},Z,\bX) \cond A=1,Z,\bX \}$. By straightforward algebra, it follows that
\begin{align*} 
    \mu^*(Z,\bX \con g)
    =
    \frac{ \EXP \{ g(Y,Z,\bX) \alpha^*(Y,Z,\bX) \cond A=0,Z,\bX \} }{
    \EXP \{ \alpha^*(Y,Z,\bX) \cond A=0,Z,\bX \}
    }
    \ .
    \numeq
    \label{eq-nuis-mu alpha}
\end{align*}

Finally, we introduce auxiliary notation used throughout the paper.
For a function $h(\bV)$, where $\bV \subseteq \{\potY{1},\potY{0},A,Z,\bX\}$, let $\|h \|_{\infty} = \sup_{\bv} | h(\bv) |$ and $\|h\|_{P,2} = [ \EXP \{ h^2(\bV) \} ]^{1/2}$ denote the supremum norm and $L^2(P)$-norm of $h$ with respect to the distribution of $\bV$.  With a slight abuse of notation, we write the integral of a generic function $g(\potY{a}, A, Z, \bX)$ with respect to the outcome $\potY{a}$ as $\int g(y,A,Z,\bX) \, dy $, which can be interpreted either as a summation or as an integral with respect to an appropriate dominating measure. For a sequence of random variables $\bV_N$ and a sequence of numbers $r_N$ , we write $\bV_N$ = $O_P(r_N)$ if $\bV_N /r_N$ is stochastically bounded, and $\bV_N=o_P(r_N)$ if $\bV_N /r_N$ converges to zero in probability as $N \rightarrow \infty$. Let $\bV_N \stackrel{D}{\rightarrow} \bV$ denote weak convergence of $\bV_N$ to $\bV$ as $N \rightarrow \infty$. Lastly, $\indep$ denotes statistical independence.

\subsection{Instrumental Variable Assumptions} \label{sec-assumption}

We next introduce the conditions required for the candidate IV $Z$ to be valid and sufficient for identifying the counterfactual distribution $\potY{0} \cond (A=1)$:
\begin{itemize}
    \item[\HT{IV1}] (Exclusion Restriction) $\potY{a,z}=\potY{a}$ almost surely for all $a \in \{0,1\}$ and $z \in \suppZb$;
    \item[\HT{IV2}] (Independence/Unconfoundedness) $\potY{0} \indep Z \cond \bX$;
    \item[\HT{IV3}] (Relevance) $A \nindep Z \cond \bX$;
    \item[\HT{IV4}] (Logit-separable Treatment Mechanism) $\alpha^*(y,Z=1,\bX) = \alpha^*(y,Z=0,\bX)$ for all $y \in \suppYX$ and $\bX \in \suppX$. Consequently, in slight notational abuse, we denote $\alpha^*(y,\bX)$, omitting its dependence on $Z$.
\end{itemize}
Condition \HL{IV1} states that $Z$ has no direct effect on $Y$, which allows us to focus solely on $\potY{a}$ without considering $Z$. Condition \HL{IV2} implies that $\bX$ includes all common causes of $Z$ and $\potY{0}$; in particular, it is automatically satisfied when $Z$ is randomized. Condition \HL{IV3} requires that $Z$ be associated with $A$ conditional on $\bX$, and this association may be either causal or non-causal. When \HL{IV1}-\HL{IV3} hold, $Z$ is referred to as a valid IV. However, these three conditions alone are insufficient for point identification of a treatment effect.

Condition \HL{IV4} is invoked to achieve point identification of the ATT and was first proposed by \citet{Sun2018} and \citet{Liu2020} in the missing data and causal inference contexts, respectively. Specifically, it requires that the odds ratio function $\alpha^*$ does not depend on the level of the IV. Therefore, under \HL{IV4}, the odds function admits the factorization $\odds^*(y,z,\bX) = \alpha^*(y,\bX) \beta^*(z,\bX)$, which implies that, on the logit scale, the treatment mechanism $A \cond (\potY{0}, Z, \bX)$ can be decomposed into components that depend separately on $(\potY{0},\bX)$ and $(Z,\bX)$. Consequently, \HL{IV4} can be interpreted as the logit-separable treatment mechanism. In addition, since $\alpha^*$ is a scale for measuring the degree of confounding, \HL{IV4} implies that this degree, measured on the odds ratio scale, remains constant across all levels of $Z$. Therefore, this condition can also be viewed as an odds ratio equi-confounding assumption, first introduced by \citet{UDID2022_Stat} and \citet{UDID2024_Epi} in a distributional generalization of difference-in-differences.

\subsection{A Generative Model} \label{sec-generative model}

It is instructive to consider a generative model that satisfies the four IV assumptions, as it helps illustrate how these assumptions can be justified and interpreted in real-world applications. In brief, the model is motivated by classical econometric frameworks based on the binary choice model \citep{McFadden1973, CiC2006, ImbensNewey2009, Train2009}. 

Specifically, we begin by assuming that each unit has an unobserved characteristic $U$, observed covariates $\bX$, and a candidate IV $Z$, with $U \indep Z \mid \bX$, i.e., the instrument is exogenous with respect to the unobserved characteristic. Following \citet{CiC2006}, let the potential outcome under treatment $A=a$ be $\potY{a}=h_a(U,\bX)$ where $h_a(u,\bX)$ is strictly monotone in $u$; in the econometrics literature, $h_a$ is often referred to as a production function. 

We further assume that a unit’s utility under treatment $A=a$ is $\mathcal{U}_a = \overline{h}_a(U,\bX) - c_a(Z,\bX) + \epsilon_a$. Here, $\overline{h}_a$ denotes the unit’s subjective belief regarding her production function $h_a$. Notably, $\overline{h}_a$ may differ from $h_a$, reflecting imperfect knowledge of the potential outcome under the corresponding treatment. The function $c_a$ represents the cost associated with choosing $A=a$ given observed information $(Z,\bX)$. We assume that $c_1(1,\bX) - c_1(0,\bX) \neq c_0(1,\bX) - c_0(0,\bX)$ holds for every $\bX$. In words, $Z$ directly instruments the cost associated with a treatment choice, and as such, treatment and control costs must respond differently to changes in $Z$. Finally, $\epsilon_0$ and $\epsilon_1$ are i.i.d. treatment-specific idiosyncratic error terms drawn from a type-I extreme value distribution (also known as the Gumbel distribution) with location 0 and scale 1,  a standard specification in the rich discrete choice literature \citep{McFadden1973, Train2009}.

In summary, a unit's utility under each treatment condition consists of perceived production $\overline{h}_{a}$ minus the cost $c_{a}$, plus an idiosyncratic error $\epsilon_{a}$. A unit chooses the treatment by comparing the two utilities; specifically, it takes the treatment if the utility under treatment exceeds that under control, i.e., $A = \ind (\mathcal{U}_1 > \mathcal{U}_0)$. Under this model, the IV conditions \HL{IV1}-\HL{IV4} are satisfied; technical details are provided in {\SUPP} \ref{sec-supp-generative model}.  The binary choice generative model is illustrated in Figure \ref{fig-Generative Model} using a single world intervention graph \citep{SWIG2013}.

\begin{figure}[!htp]
\centering
\scalebox{0.85}{
    \begin{tikzpicture}
\tikzset{line width=1pt, outer sep=1pt,
ell/.style={draw,fill=white, inner sep=2pt, line width=0.75pt},
elldot/.style={draw, dotted, fill=white, inner sep=1pt, line width=1pt},
swig vsplit={gap=3pt, inner line width right=0.75pt, line width right=1.75pt}};
\node[name=V, ell,  shape=ellipse] at (3,0) {$U$} ;
\node[name=Y0, ell,  shape=ellipse] at (12,0) {$\potY{a} = h_{a}(U)$} ;
\node[name=U0, ell,  shape=ellipse] at (3,1.25) {$\mathcal{U}_0 = \overline{h}_0(U) - c_0(Z) + \epsilon_0$}  ;
\node[name=U1, ell,  shape=ellipse] at (3,-1.25) {$\mathcal{U}_1 = \overline{h}_1(U) - c_1(Z) + \epsilon_1$}  ;
\node[name=Z, ell, shape=ellipse] at (0,0) {$Z$} ;
\node[name=A, shape=swig vsplit]  at (7,0) { \nodepart{left}{$A = \ind (\mathcal{U}_1 > \mathcal{U}_0)$} \nodepart{right}{$a$} };


\begin{scope}[>={Stealth[black]},
      every edge/.style={draw=black}]
\path [->] (V) edge[bend right=20] (Y0);
\path [->] (V) edge (U0);
\path [->] (V) edge (U1);
\path [->] (Z) edge (U0);
\path [->] (Z) edge (U1);
\path [->] (U0) edge (A);
\path [->] (U1) edge (A);
\path [->] (A) edge (Y0);
\end{scope} 

\end{tikzpicture} }%
\caption{A graphical illustration of the described generative model compatible with \protect\HL{IV1}-\protect\HL{IV4}. The measured covariates $\bX$ are suppressed for brevity.}
\label{fig-Generative Model}
\end{figure}
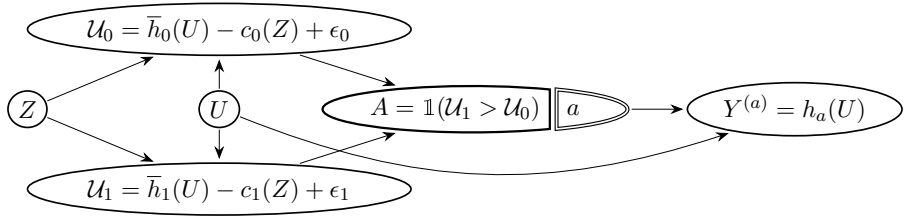

 
While the separability of the utility function might appear to be motivated primarily for analytical convenience---specifically to simplify the extended propensity score---it is, in fact, a necessary condition for identification.  To illustrate, suppose that the utility difference admits an interaction between $U$ and $Z$, say $\mathcal{U}_1 - \mathcal{U}_0 = \{ \overline{h}_1(U,\bX) - \overline{h}_0(U,\bX)\} - \{c_{1}(Z,\bX) - c_{0}(Z,\bX)\} + \imath(U,Z,\bX) + (\epsilon_{1}-\epsilon_{0})$, where $\imath$ captures the interaction between $U$ and $Z$. When $\imath (U,Z,\bX) \neq 0$, identification of the counterfactual data $(\potY{0},A,Z,X)$ generally fails, provided that the main effects are non-null (i.e., $\overline{h}_1(U,\bX)- \overline{h}_0(U,\bX)\neq 0$ and $c_{1}(Z,\bX) -  c_{0}(Z,\bX) \neq 0$); in {\SUPP} \ref{sec-supp-generative model}, we present a concrete example demonstrating this failure. Given the fundamental role of the separable structure in establishing identification, it is important to assess its plausibility in practice. To this end, we propose an empirical falsification test in Section \ref{sec-falsification}. Section \ref{sec-application} further demonstrates how this model can be justified in a real-world application.

We briefly illustrate the generative model across several domains to further highlight its interpretability. Several studies in the social sciences (e.g., \citep{Card2001}) have examined the causal effect of education on labor market earnings by using variables that influence schooling choices as instruments, such as geographic proximity to college. Omitting observed covariates, suppose that an individual’s utility from attending college $(a=1)$ or not $(a=0)$ is given by the following simple linear model, which governs a person's college attendance decision:
\begin{align*}
    &
    \mathcal{U}_a = b_{\text{int},a} + b_{h,a} U - b_{c,a} Z + \epsilon_a \ , 
    &&
    a \in \{0,1\} \ ,
    \numeq 
    \label{eq-linear utility}
    \\
    \Rightarrow
    \quad 
    &
    A = \ind( b_{\text{int}} + b_{h} U - b_{c} Z + \epsilon_1-\epsilon_0 > 0 ) \ , 
    &&
    b_{\star}\equiv b_{\star,1}-b_{\star,0} \ , \quad 
    \star \in \{\text{int},h,c\} \ ,
\end{align*}
where $U$ reflects the degree of innate self-motivation, $Z$ captures the geographic proximity to college, and $b_{c} \neq 0$ according to the requirement on the cost-function. Then, the rate $| b_{c} / b_{h}| $ characterizes the individual’s marginal rate of substitution between motivation and distance in the college attendance decision. Specifically, it quantifies the additional distance an individual is willing to travel for a one-unit increase in self-motivation on the utility scale.

As another example in the health sciences, \citet{JAMA1996} studied the causal effect of admission to a neonatal intensive care unit (NICU) versus non-admission on infant mortality. In this setting, let an individual’s utility from NICU admission follow the same linear model in \eqref{eq-linear utility}, where $U$ and $Z$ now reflect the severity of the infant’s illness and geographic proximity to the nearest NICU, respectively. As in the previous example, the ratio $| b_{c} / b_{h}| $ can be interpreted as the increase in distance an individual is willing to tolerate for a one-unit increase in illness severity.

Although this simple linear utility model facilitates interpretation and often aligns with subject-matter intuition, it is restrictive because it imposes a constant marginal rate of substitution and does not appropriately incorporate the possibility for diminishing returns on various forms of personal capital investments. In the college attendance example, the marginal impact of additional self-motivation on college attendance may diminish at higher levels of motivation, while reductions in distance may matter less once a college is sufficiently nearby. Similar diminishing returns might be expected in the NICU example. These considerations highlight the importance of accommodating flexible functional forms for the utility function---a feature incorporated in the proposed generative model---as it is essential for capturing more nuanced behaviors, such as the law of diminishing returns, and ensuring the model remains realistic.

\subsection{Identification} \label{sec-identification of the ATT}

Under \HL{A1}-\HL{A2} and \HL{IV1}-\HL{IV4},  \citet{Liu2020} and \citet{Sun2018} established that the joint distribution of $(\potY{0},A,Z,\bX)$ can be identified from the observed data; see Section \ref{sec-previous parameterization} for a brief review of their results. Consequently, the nuisance functions in \eqref{eq-nuis-pi}-\eqref{eq-nuis-mu} are also identified. In turn, the ATT is nonparametrically identified as:
\begin{align}
    \tau^*
    &
    =
    \EXP[ \{ A - (1-A) \alpha^*(Y,\bX) \beta^*(Z,\bX) \} Y ] / \EXP(A)
    \label{eq-IPW}
    \\
    &
    =
    \EXP[ A \{ Y - \mu^*(Z,\bX) \} ] / \EXP(A)
    \label{eq-OR}
    \\
    &
    =
    \EXP[ \{ A - (1-A) \alpha^*(Y,\bX) \beta^*(Z,\bX) \} \{ Y - \mu^*(Z,\bX) \} ] / \EXP(A) \ .
    \label{eq-AIPW}
\end{align}
The first representation involves only the treatment odds function $\odds^* = \alpha^* \beta^*$; the second representation relies solely on the outcome regression $\mu^*$; and the last representation incorporates both functions. Accordingly, we refer to these representations as the inverse probability-weighted (IPW), outcome regression-based, and augmented IPW (AIPW) representations of the ATT. Similar representations have been derived in a difference-in-differences setting \citep{Park2022JRSSA}.

For estimation and inference, one may use one of the representations \eqref{eq-IPW}-\eqref{eq-AIPW} as a basis. In particular, \citet{Robins2000_Sensitivity} showed that \eqref{eq-AIPW} corresponds to the efficient influence function (EIF) for the ATT in a semiparametric model where the odds ratio function $\alpha^*$ is known a priori, while the other nuisance functions of the observed data distribution are unrestricted and Assumptions \HL{IV1}-\HL{IV4} are dropped. This facilitates the construction of an estimator with robust properties in this semiparametric model, as formalized by semiparametric efficiency theory \citep{BKRW1998, Tsiatis2006, Robins2008_HOIF}. In practice, however, $\alpha^*$ is rarely known and must be estimated; in this case, robust inference attainable under known $\alpha^*$ is not guaranteed when $\alpha^*$ is treated as unknown and identified with the IV assumptions \HL{IV1}-\HL{IV4}.

For inference in the logit-separable IV model, \citet{Liu2020}, \citet{Sun2018}, and \citet{Sun2025} instead assumed that $\alpha^*$ belongs to a parametric model. They derived the set of influence functions (IFs) for the ATT under this semiparametric model and proposed locally semiparametric efficient estimators in which all nuisance functions are estimated parametrically.  However, these approaches do not accommodate an unrestricted $\alpha^*$ or fully nonparametric estimation of the nuisance functions, due to inherent limitations of their choice of parameterization; see Section \ref{sec-previous parameterization} for details.

In contrast, our new parameterization accommodates both an unrestricted $\alpha^*$ and nonparametric estimation of the nuisance functions. In the remainder of the paper, we first introduce this new parameterization in Section \ref{sec-parameterization} and then present the set of IFs for the ATT under the proposed model with unrestricted $\alpha^*$ in Section \ref{sec-theory}. By leveraging the binary nature of $Z$, we further show that the IF for the ATT is unique (and therefore is the EIF) and we provide its closed-form expression. Based on this EIF, we construct a globally efficient ATT estimator in Section \ref{sec-estimation} where all nuisance functions are estimated nonparametrically. 

\section{Parameterization} \label{sec-parameterization}

In this Section, we consider various potential choices of parameterization of the distribution of $(\potY{0}, A ,  Z) \cond \bX$ satisfying Assumptions \HL{A1}-\HL{A2} and \HL{IV1}-\HL{IV4}, including both existing approaches and our new proposal. Nuisance functions written without the asterisk $(^*)$ denote generic parameters that may differ from the true functions.

\subsection{Existing Parameterizations} \label{sec-previous parameterization}

We begin by reviewing parameterizations used in previous works. \citet{Sun2018} and \citet{Liu2020} considered a parameterization in terms of parameters $\theta_{\text{LS}} \equiv \{ \fz \equiv f(Z \cond \bX), \fy \equiv f(Y \cond A=0,Z,\bX), \alpha \equiv \alpha(\potY{0},\bX), \beta \equiv \beta(Z,\bX) \}$. Specifically, the density $f(\potY{0},A, Z  \cond \bX)$ is expressed as
\begin{align}
    \label{eq-Liu parameterization}
    &
    f(\potY{0}=y,A=a, Z \cond \bX)
    =
    \frac{ 
    \{ \alpha(y,\bX) \beta(Z,\bX) \}^{a}
    f(Y=y \cond A=0,Z,\bX) f(Z \cond \bX) }
    { \int \{ 1+ \alpha(t,\bX) \beta(Z,\bX) \}
    f(Y=t \cond A=0,Z,\bX) \, dt } 
    \ .
\end{align}
This representation is a direct consequence of \citet{Chen2007}; also see \citet{TTRR2010}. 

Therefore, identification of the distribution of $(\potY{0},A,Z) \cond \bX$ reduces to determining whether $\theta_{\text{LS}}$ can be identified from the observed data. The parameters $\fy$ and $\fz$ can be identified nonparametrically: $\fy$ by regressing $Y$ on $(Z, \bX)$ among units with $A=0$, and $\fz$ by regressing $Z$ on $\bX$. By contrast, it is not immediately obvious how $\alpha$ and $\beta$ can be identified from the observed data, as their definitions involve counterfactual quantities. Nonetheless, these quantities can be recovered using the following moment conditions \citep{Sun2018,Liu2020}:
\begin{align*}
    &  
    \EXP_{\theta_{\text{LS}}}
    \!
    \left[  
    \bigg\{
    \frac{1-A}
    {
    f ( A=0 \cond Y,Z,\bX; \alpha,\beta )
    }  
    -
    1
    \bigg\} 
    \times 
    b_1(\bX)
    \right] \! 
    = 0  \ ,      
    \\
    &  
    \EXP_{\theta_{\text{LS}}}
    \!
    \left[ 
    \!
    \left[
    \begin{array}{l}  
    \displaystyle{ \frac{
    (1-A)
    \{ b_2(Y,\bX) - \mu(Z,\bX \con b_2,\alpha,\fy) \}
    }
    {
    f ( A=0 \cond Y,Z,\bX; \alpha,\beta )
    }  
    }
    \\[0.3cm] 
    +
    \mu(Z,\bX \con b_2,\alpha,\fy)
    \end{array}
    \right]  
    \!
    \times     
    \!
    \left[
    \begin{array}{l}
    b_3(Z,\bX)
    \\[0.1cm] 
    -
    \EXP_{\theta_{\text{LS}}} \{ b_3(Z,\bX) \cond \bX \} 
    \end{array}
    \right]  
    \!
    \right] 
    = 0  \ ,     \! 
\end{align*}
where $(b_1, b_2, b_3)$ are arbitrary functions, $\EXP_{\theta_{\text{LS}}}$ denotes expectation under the observed data law parameterized by $\theta_{\text{LS}}$, and $\mu(Z,\bX \con b_2,\alpha,\fy)$ is given by the expression in \eqref{eq-nuis-mu alpha}. 

Unfortunately, the parameterization based on $\theta_{\text{LS}}$ has important drawbacks. In particular, the parameters are not variationally independent, as also noted by \citet{Sun2025}. To illustrate this issue, we consider the case of binary $Y$ and binary $Z$, suppressing covariates for simplicity. Based on the observed data, the following five parameters associated with $(\potY{0}, A, Z)$ can be identified: $\{ f(Z=1), f(A=1 \cond Z=0), f(A=1 \cond Z=1), f(Y=1\cond A=0,Z=0), f(Y=1\cond A=0,Z=1) \}$. However, the parameter set $\theta_{\text{LS}}$ comprises six quantities: $\{f(Z=1), f(Y=1|A=0,Z=0), f(Y=1|A=0,Z=1), \alpha(1), \beta(0), \beta(1) \}$, where $\alpha(0)$ is omitted because it is fixed by the boundary condition $\alpha(0) = 1$. This discrepancy implies that a restriction must be imposed on the model expressed in terms of $\theta_{\text{LS}}$. One such restriction is \HL{IV2}, which requires that the conditional density $f(\potY{0} \cond Z,\bX)$, obtained from \eqref{eq-Liu parameterization} by marginalizing over $A$, be independent of $Z$ conditional on $\bX$, i.e.,  
\begin{align*}
    &
    f(\potY{0}=y \cond \bX)
    \stackrel{\text{\HL{IV2} and \eqref{eq-Liu parameterization}}}{=}
    \frac{ 
    \{ 1 + \alpha(y,\bX) \beta(Z,\bX) \}
    f(Y=y \cond A=0,Z,\bX) f(Z \cond \bX) }
    { \int \{ 1+ \alpha(t,\bX) \beta(Z,\bX) \}
    f(Y=t \cond A=0,Z,\bX) \, dt }  \  .
\end{align*}
This result clearly implies a complex relationship among the components of $\theta_{\text{LS}}$, as the right-hand side of the above expression must be independent of $Z$. Consequently, $\theta_{\text{LS}}$ contains a redundant component, rendering the parameterization overidentified.

To address this issue of variational dependence, \citet{Sun2025} proposed an alternative parameterization based on $\theta_{\text{SMW}} \equiv \{ \fz, f_{\potY{0}} \equiv f( \potY{0} \cond \bX), \alpha, \beta \}$, where the density $f(\potY{0},A,Z \cond \bX)$ is given by
\begin{align*}
    f(\potY{0}=y,A=a,Z \cond \bX)
    =
    \underbrace{
    \frac{ \{ \alpha(y,\bX) \beta(Z,\bX) \}^{a} }{1+ \alpha(y,\bX) \beta(Z,\bX) } 
    }_{=f(A=a \cond \potY{0}=y,Z,\bX; \alpha,\beta)}
    f( \potY{0}=y \cond \bX) 
    f(Z \cond \bX)
    \ . 
    \numeq 
    \label{eq-Sun parameterization}
\end{align*}
In the toy example where both $Y$ and $Z$ are binary and $\bX$ is absent, $\theta_{\text{SMW}}$ clearly consists of five quantities: $\{ f(Z=1), f(\potY{0}=1), \alpha(1), \beta(0), \beta(1) \}$. This matches the number of parameters for $(\potY{0},A,Z)$ that are identifiable from the observed data, confirming that the parameterization under $\theta_{\text{SMW}}$ is variationally independent.

Again, identifying the distribution of $(\potY{0}, A, Z) \cond \bX$ reduces to determining whether $\theta_{\text{SMW}}$ can be recovered from the observed data. The $\fz$ component can be identified nonparametrically, and the remaining components $f_{\potY{0}}$, $\alpha$, and $\beta$ can be identified based on the following moment conditions \citep{Sun2025}:
\begin{align*}
    &  
    \EXP_{\theta_{\text{SMW}}}
    \left[  
    \bigg\{
    \frac{1-A}
    {
    f ( A=0 \cond Y,Z,\bX; \alpha,\beta )
    }  
    -
    1
    \bigg\} 
    \times 
    b_1(\bX)
    \right] \! 
    = 0  \ ,     
    \numeq
    \label{eq-Sun moment 1}
    \\
    &  
    \EXP_{\theta_{\text{SMW}}}
    \left[  
    \frac{
    1-A
    }
    {
    f ( A=0 \cond Y,Z,\bX; \alpha,\beta )
    } 
    \! \times  
    \! 
    \big[
    b_2(Y,\bX)
    -
    \EXP_{\theta_{\text{SMW}}} \{ b_2(\potY{0},\bX) \cond \bX \}
    \big]
    \right] \! 
    = 0  \ ,     
    \numeq
    \label{eq-Sun moment 2}
    \\
    &  
    \EXP_{\theta_{\text{SMW}}}
    \left[  
    \frac{
    1-A
    }
    {
    f( A=0 \cond Y,Z,\bX; \alpha,\beta )
    } 
    \! \times  
    \! 
    \left[
    \begin{array}{l} 
    \big[
    b_2(Y,\bX)
    -
    \EXP_{\theta_{\text{SMW}}} \{ b_2(\potY{0},\bX) \cond \bX \}
    \big] \! 
    \\[0.1cm] 
    \times 
    \big[
    b_3(Z,\bX)
    -
    \EXP_{\theta_{\text{SMW}}} \{ b_3(Z,\bX) \cond \bX \}
    \big]
    \end{array}
    \right]  
    \right] \! 
    = 0  \ ,     
    \numeq
    \label{eq-Sun moment 3}
\end{align*}
where $(b_1, b_2, b_3)$ are arbitrary functions and $\EXP_{\theta_{\text{SMW}}}$ denotes expectation under the observed data law parameterized by $\theta_{\text{SMW}}$.

To estimate each component of $\theta_{\text{SMW}}$, \citet{Sun2025} proposed two approaches:
\begin{itemize}
    \item[(i)] jointly estimate all three nuisance functions by specifying parametric models and fitting them using \eqref{eq-Sun moment 1}-\eqref{eq-Sun moment 3}; or
    \item[(ii)] posit a parametric model for $\alpha$, say $\alpha( \cdot \con \xi_{\alpha})$, then define $\beta=\beta( \cdot \con \xi_{\alpha})$ and $f_{\potY{0}} = f_{\potY{0}}( \cdot \con \xi_{\alpha})$ as the solutions to \eqref{eq-Sun moment 1} and \eqref{eq-Sun moment 2} (which may be estimated nonparametrically), and finally determine $\xi_{\alpha}$ from \eqref{eq-Sun moment 3}. This yields nuisance functions that simultaneously satisfy \eqref{eq-Sun moment 1}-\eqref{eq-Sun moment 3}.
\end{itemize}
In principle, one may attempt fully nonparametric estimation of these nuisance functions by adopting strategies analogous to (i) and (ii) above, but this approach is challenging. First, for the analogue of (i), jointly estimating all three components nonparametrically from \eqref{eq-Sun moment 1}-\eqref{eq-Sun moment 3} is arduous because they enter the equation in a highly intertwined manner. Specifically, $\alpha$ and $\beta$ are multiplicatively coupled, and $f_{\potY{0}}$ must be estimated through the mean embedding $\EXP_{\theta_{\text{SMW}}} \{ b_2(\potY{0},\bX) \cond \bX \}$, which must be verified for any admissible choice of $b_2$. Second, although a sequential strategy mirroring (ii) is conceivable, it necessitates specifying a nonparametric model for $\alpha$ to initialize the procedure. This sequential approach can be computationally intensive and may fail to yield solutions for \eqref{eq-Sun moment 1}-\eqref{eq-Sun moment 3}, even after exploring numerous candidate models for $\alpha$. 

In short, the parameterizations in previous works \citep{Sun2018, Liu2020, Sun2025} suffer from a lack of variational independence and rely on parameters defined in terms of counterfactual quantities. To partially address these challenges, the authors impose parametric models for the various nuisance functions, including the odds ratio function $\alpha$. Consequently, while the availability of an instrument in principle allows for $\alpha$ to be unrestricted, existing works assume it has a known functional form with an unknown finite-dimensional parameter. As a result, both the theoretical guarantees and the practical estimation procedures of these methods are confined to such parametric settings. Crucially, consistency and asymptotic normality of the existing estimators of the ATT require a correctly specified model for $\alpha$; see Section \ref{sec-bias comparison}. However, to the best of our knowledge, goodness-of-fit tests to evaluate possible model misspecification of $\alpha$ are currently lacking. Taken together, these issues limit the practical applicability of these existing methods and may compromise the validity of the resulting estimator.

\subsection{A New Parameterization} \label{sec-our parameterization}

To overcome the limitations of the existing standard parameterizations, we consider an alternative parameterization based on $\theta \equiv \{ \fz \equiv f(Z \cond \bX), \fa \equiv f(A \cond Z, \bX), \fy \equiv f(Y \cond A=0,Z,\bX) \}$. This parameterization has two distinctive features compared to those based on $\theta_{\text{LS}}$ and $\theta_{\text{SMW}}$. First, the parameters are variationally independent. To illustrate, consider the running toy example in which both $Y$ and $Z$ are binary and $\bX$ is absent. Then, $\theta$ consists of five quantities: $\{ f(Z=1), f(A=1 \cond Z=0), f(A=1 \cond Z=1), f(Y=1 \cond A=0, Z=0), f(Y=1 \cond A=0, Z=1) \}$, which matches the number of parameters for $(\potY{0},A,Z)$ identifiable from the observed data. Second, the parameters are directly defined in terms of the observed data and do not rely on any counterfactual quantities. Consequently, all components of $\theta$ can be readily estimated using standard nonparametric methods as well as highly adaptive modern machine learning methods; see Section \ref{sec-estimation} for details.

It is straightforward to see that $\theta$ characterizes the distribution of $(\potY{0}, A=0, Z) \cond \bX$ via $ f(\potY{0}=y, A=0, Z \cond \bX)    = f(Y=y \cond A=0, Z, \bX) f(A=0 \cond Z, \bX) f(Z \cond \bX)$. However, it is far less obvious how the counterfactual distribution $(\potY{0}, A=1, Z ) \cond \bX$ can be expressed in terms of $\theta$, since none of its components is directly defined in terms of the counterfactual data. 

We present the first main result of this paper, which shows how $\alpha$ and $\beta$---functions of counterfactual quantities---can be characterized from $\theta$, which are defined purely in terms of the observed data. To streamline the exposition, we introduce the following notation. For a given $\bX$, let $z_{m} = \argmin_{z\in\{0,1\}} f(A=1 \cond Z=z,\bX)$ and $z_{M} = \argmax_{z\in\{0,1\}} f(A=1 \cond Z=z,\bX)$; note that $z_{m} \neq z_{M}$ under \HL{IV3}. Although $z_{m}$ and $z_{M}$ depend on $\bX$, we suppress this dependence for notational brevity. Additionally, for $\bX \in \suppX$, define $\mathcal{L}_+^{\infty} (\mathcal{Y}_{\bX}) = \{ h \cond h(y,\bX) \in [0,\infty), \ \forall y  \in \mathcal{Y}_{\bX} \}$. We are now ready to state the Theorem.

\begin{theorem} \label{thm-Psi-binary}

Suppose that Assumptions \HL{A1}-\HL{A2} and \HL{IV1}-\HL{IV4} hold for $(\potY{0},A,Z,\bX)$, with corresponding parameters $\theta = \{ \fz,\fa,\fy \}$. For a fixed $\bX \in \suppX$, define the mapping $\Psi: \mathcal{L}_+^{\infty} (\mathcal{Y}_{\bX}) \rightarrow \mathcal{L}_+^{\infty} (\mathcal{Y}_{\bX})$ by
\begin{align*}
    &
    \!\!\!
    \Psi\big( g(y,\bX) \con \theta \big)
    \\
    &
    \!\!\!
    =
    \frac{
    \left\{
    \begin{array}{l}
    \ff{}(A=1 \cond Z=z_m,\bX)
    g(y,\bX)
    \\
    + \ff{}(A=0 \cond Z=z_m,\bX)
    \ff{}(Y=y \cond A=0,Z=z_m,\bX)
    \\
    -
    \ff{}(A=0 \cond Z=z_M,\bX)
    \ff{}(Y=y \cond A=0,Z=z_M,\bX)
    \end{array} 
    \right\}
    \!
    \displaystyle{
    \int \!
    g(t,\bX)
    R_{Y}(t,\bX) \, d t
    }
    }
    {
    \ff{}(A=1 \cond Z=z_M,\bX)
    R_{Y}(y,\bX)
    }   ,
    \numeq \label{eq-Psi-binary}
\end{align*}
where 
\begin{align}
 R_{Y}(y,\bX) = \frac{ f(Y=y \cond A=0,Z=z_M,\bX)}{f(Y=y \cond A=0,Z=z_m,\bX)} \ .
 \numeq 
 \label{eq-Ry-binary}
\end{align}

Then, the following results hold:
\begin{itemize}
    \item[(i)] Let $ g^\star(y,\bX \con \theta)$ denote the fixed point of $\Psi$, i.e., 
    \begin{align*}
        \Psi \big( g^\star(y,\bX \con \theta) \con \theta \big) = g^\star(y,\bX \con \theta) \ .
        \numeq 
        \label{eq-fixed point equation-binary}
    \end{align*}
    Then, $g^\star$ uniquely exists.

    \item[(ii)] The odds ratio function $\alpha$ and the baseline odds function $\beta$ in \eqref{eq-nuis-alpha beta} are expressed in terms of $g^\star$ as follows:
    \begin{align*}
    & 
    \alpha(y,\bX \con \theta)
    =
    \frac{ g^\star (y,\bX \con \theta) }{ g^\star  (y_R,\bX \con \theta) }
    \frac{ \ff{}(Y=y_R \cond A=0,Z=z_m,\bX) }{\ff{}(Y=y \cond A=0,Z=z_m,\bX)} \ , 
    && 
    y \in \suppYX \ ,
    \numeq \label{eq-alpha g-binary}
    \\
    & 
    \beta(z,\bX \con \theta) 
    =
    \frac{f(A=1 \cond Z=z,\bX) / f(A=0 \cond Z=z,\bX)}{ \int  \alpha(y,\bX \con \theta)  f(Y=y \cond A=0,Z=z,\bX) \, dy }  \ , 
    && 
    z \in \suppZb \ .
    \numeq \label{eq-beta g-binary}
    \end{align*}

\end{itemize} 
    
\end{theorem}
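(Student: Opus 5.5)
The plan is to read off what the fixed point must be from the structure of the model, prove existence by exhibiting it under the true law, and reduce uniqueness to a one-dimensional equation. Fix $\bX$ and suppress it. Write $p_z=f(A=1\cond Z=z)$, $q_z=1-p_z$, $f_z(y)=f(Y=y\cond A=0,Z=z)$, and $h_z(y)=f(\potY{0}=y\cond A=1,Z=z)$.

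\textbf{Step 1: the identity behind $\Psi$.} By \eqref{eq-nuis-pi}--\eqref{eq-nuis-alpha beta} and \HL{IV4}, $f(\potY{0}=y,A=1\cond Z=z)=\alpha(y)\beta(z)\,q_zf_z(y)$. This gives two facts:
\[
h_z(y)=\frac{\alpha(y)f_z(y)}{\int\alpha(t)f_z(t)\,dt}, \qquad \frac{p_z}{q_z}=\beta(z)\int\alpha(t)f_z(t)\,dt .
\]
By \HL{IV2}, $f(\potY{0}=y\cond Z=z)=p_zh_z(y)+q_zf_z(y)$ does not depend on $z$. Hence
\[
p_Mh_M = p_mh_m+q_mf_m-q_Mf_M .
\]
The first fact also gives $h_M=h_m R_Y\big/\int h_mR_Y$, because $\int h_mR_Y=\int\alpha f_M\big/\int\alpha f_m$.

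\textbf{Step 2: existence and part (ii).} Substitute the expression for $h_M$ into the \HL{IV2} identity and solve for $h_m$. This yields exactly $h_m=\Psi(h_m)$, so $g^\star:=h_m\in\mathcal{L}^\infty_+$ is a fixed point. Part (ii) then follows directly:
\begin{itemize}
\item $\alpha\propto h_m/f_m$, and normalizing by the boundary condition $\alpha(y_R)=1$ gives \eqref{eq-alpha g-binary};
\item the second fact in Step 1, solved for $\beta(z)$, is \eqref{eq-beta g-binary}.
\end{itemize}
Note that $g\equiv 0$ trivially satisfies $\Psi(0)=0$. So ``unique'' must be read among nonzero (equivalently, probability-density) elements of $\mathcal{L}^\infty_+$, and I would state this explicitly.

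\textbf{Step 3: uniqueness.} Let $g$ be any nonzero fixed point and set $c=\int gR_Y>0$, $D=q_mf_m-q_Mf_M$. The fixed-point equation becomes $g(y)\{p_MR_Y(y)-p_mc\}=c\,D(y)$, so
\[
g(y)=\frac{c\,D(y)}{p_MR_Y(y)-p_mc}.
\]
Thus $g$ is determined by the scalar $c$. Integrating against $R_Y$, $c$ must solve
\[
\phi(c):=\int \frac{D(y)R_Y(y)}{p_MR_Y(y)-p_mc}\,dy=1 .
\]
This is subject to the admissibility constraint $g\ge 0$, i.e. $D(y)$ and $p_MR_Y(y)-p_mc$ have the same sign wherever $D\neq0$. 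The constraint confines $c$ to an interval $I$: below $\inf_{\{D>0\}}p_MR_Y/p_m$ and above $\sup_{\{D<0\}}p_MR_Y/p_m$.

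I will then show $\phi(c)=1$ has at most one root in $I$. The approach I would try first is to differentiate:
\[
\phi'(c)=p_m\int \frac{D\,R_Y}{(p_MR_Y-p_mc)^2}\,dy .
\]
This equals $p_m\int \frac{g\,R_Y}{c\,(p_MR_Y-p_mc)}\,dy$. I would try to sign it using $p_M>p_m$ (which is where \HL{IV3} and the choice of $z_m,z_M$ enter), together with the normalizations $\int f_m=\int f_M=1$. The latter give $\int D=q_m-q_M=p_M-p_m>0$.

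\textbf{Main obstacle.} This monotonicity or convexity argument on $I$ is the step I expect to be hardest. If it fails, the fallback is to compare two candidate roots $c_1<c_2$ directly: form the two associated densities $g_1,g_2$ and show that the \HL{IV2}-type identity with normalization $\int g=1$ forces $c_1=c_2$.
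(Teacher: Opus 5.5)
Your existence argument and part (ii) are correct and coincide with the paper's. The fixed point is $g^\star=h_m=\alpha f_m/\int\alpha f_m$, obtained by substituting $h_M=h_mR_Y/\int h_mR_Y$ into the \HL{IV2} identity. Your remark that $g\equiv0$ must be excluded is also right; the paper excludes it only implicitly, by dividing by $\int gR_Y$ and appealing to positivity of $g$.

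The gap is uniqueness. Your reduction to $\phi(c)=1$ is sound, but you never carry out the step you call the main obstacle, and the ingredient you propose cannot do it. The fact $\int D=p_M-p_m>0$ does not sign $\phi'(c)=p_m\int DR_Y/(p_MR_Y-p_mc)^2\,dy$. If $D$ took both signs, admissibility would make this integrand take both signs too, and nothing in your argument would exclude two admissible roots. Your fallback (``show the \HL{IV2}-type identity forces $c_1=c_2$'') is not yet an argument.

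The missing idea is that, under the model, $D$ is positive \emph{pointwise}. \HL{IV2} and \HL{IV4} give $f(\potY{0}=y,A=0\cond Z=z,\bX)=f(\potY{0}=y\cond\bX)/\{1+\alpha(y)\beta(z)\}$, hence
\[
D(y)=f(\potY{0}=y\cond\bX)\left\{\frac{1}{1+\alpha(y)\beta(z_m)}-\frac{1}{1+\alpha(y)\beta(z_M)}\right\}.
\]
So $D(y)$ has the sign of $\beta(z_M)-\beta(z_m)$ for every $y$. Integrating in $y$ shows $p_M-p_m$ has that same sign, so $p_M>p_m$ forces $D(y)>0$ for all $y$. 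This is Lemma~\ref{lemma-beta and gamma}, i.e., the falsification implication of Theorem~\ref{thm-falsification}.

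With this fact the proof closes quickly:
\begin{itemize}
\item $c>0$, $g\ge0$ and $g(y)\{p_MR_Y(y)-p_mc\}=cD(y)>0$ force $p_MR_Y(y)-p_mc>0$ for every $y$.
\item On this set of $c$, each integrand $DR_Y/(p_MR_Y-p_mc)$ is positive and strictly increasing in $c$, so $\phi$ is strictly increasing.
\item Hence $\phi(c)=1$ has at most one admissible root, and $g$ is then pinned down by $c$.
\end{itemize}

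Completed this way, your route is genuinely different from the paper's and somewhat more direct. The paper argues by contradiction in two cases. If $R_D=\int gR_Y$ is equal for the two fixed points, the fixed-point equation is linear in $g$ with a coefficient shown nonzero via the true $\alpha,\beta$. If $R_D$ differs, integrating the cross-multiplied \HL{IV2} identities against $f_m$ gives $\EXP\{\bar\alpha'\bar\alpha^\dagger\cond A=0,Z=z_m,\bX\}=-1/\gamma(z_m,\bX)<0$, contradicting nonnegativity. Your scalar parametrization absorbs the first case automatically. It replaces the second by a monotonicity argument that never needs to compare against the true fixed point.
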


Theorem \ref{thm-Psi-binary} establishes that the counterfactual quantities $\alpha$ and $\beta$ are uniquely determined by the fixed point of the mapping $\Psi$. This mapping is defined entirely in terms of $\theta$, which in turn is derived solely from the observed data. Consequently, $\alpha$ and $\beta$ are uniquely identified through a deterministic relationship with observable quantities, without any direct reference to counterfactual data. In turn, the counterfactual distribution $(\potY{0}, A=1, Z) \cond \bX$ can be expressed entirely in terms of $\theta$ via \eqref{eq-Liu parameterization}, using the induced $\alpha$ and $\beta$.

This new parameterization stands in stark contrast to the previous approaches discussed in Section \ref{sec-previous parameterization}, which employ counterfactual quantities as model parameters. A direct consequence of those parameterizations is that parametric estimation becomes practically unavoidable, since certain nuisance functions, most notably the odds ratio function $\alpha$, cannot easily be estimated nonparametrically from the observed data; see Section \ref{sec-previous parameterization} for details. In contrast, our parameterization is entirely based on quantities readily available from the observed data, eliminating the need for ad hoc parametric restrictions. As a result, estimating all nuisance functions reduces to standard regression and density learning tasks; see Section \ref{sec-estimation} for the details on the estimation procedure.

While Theorem \ref{thm-Psi-binary} establishes the theoretical foundation for expressing $\alpha$ and $\beta$ in terms of $\theta$, its practical use may appear limited, since the fixed-point equation \eqref{eq-fixed point equation-binary} seems challenging to solve directly. However, the following Theorem shows that this equation can be efficiently solved using iterative updates, provided suitable boundedness conditions are satisfied. We first introduce the required assumptions and then present the Theorem.

\begin{itemize}

\item[\HT{A3}] (Boundedness for $\theta$) 
There exist constants  $0< c_Z < 1$, $0<c_A<1$, $0<c_Y \leq C_Y < \infty$, and $0< c_g \leq C_g < \infty$ such that the parameter $\theta = \{\fy, \fa, \fz\}$ and the corresponding  $g^\star(\cdot \con \theta)$ in \eqref{eq-fixed point equation-binary} satisfy $f(Z=z \cond \bX) \in [c_Z,1-c_Z]$, $f(A=a \cond Z=z,\bX) \in [c_A,1-c_A]$,  $ f(Y=y \cond A=0,Z=z,\bX) \in [c_Y,C_Y]$, and $g^\star(y,\bX \con \theta) \in [c_g,C_g]$ for all $y \in \suppYX$, $a \in \{0,1\}$, $z \in \suppZb$, and $\bX \in \suppX$. 
\end{itemize}

Assumption \HL{A3} requires that all nuisance functions, as well as the resulting $g^\star$ function, are uniformly bounded away from 0 and infinity. This condition guarantees that the iterative procedure introduced below is both well-defined and numerically stable.

\begin{theorem} 
\label{thm-global contraction-binary}
    Suppose that Assumptions \HL{A1}-\HL{A2} and \HL{IV1}-\HL{IV4} hold for $(\potY{0},A,Z,\bX)$, with corresponding parameters $\theta = \{ \fz,\fa,\fy \}$. Further suppose that \HL{A3} holds for $\theta$. 
    For a fixed $\bX \in \suppX$, define the mapping $\overline{\Psi}: \mathcal{L}_+^{\infty} (\mathcal{Y}_{\bX}) \rightarrow \mathcal{L}_+^{\infty} (\mathcal{Y}_{\bX})$ as the normalized version of $\Psi$ in \eqref{eq-Psi-binary}:
    \begin{align*}
    &
    \overline{\Psi}
    \big( g(y,\bX) \con \theta \big)
    =
    \frac{ \Psi\big( g(y,\bX)  \con \theta \big) }{\int \Psi\big( g(t,\bX)  \con \theta \big) \, dt} 
     \ .
    \end{align*}
    Consider the iterative update defined by
    \begin{align*}
        \overline{h}^{(j+1)}(y,\bX \con \theta) = \overline{\Psi}(\overline{h}^{(j)}(y,\bX \con \theta) \con \theta) \ , 
        \quad j \in \{0,1,\ldots\} \ ,
        \numeq 
        \label{eq-iterative update-binary}
    \end{align*}
     with an arbitrary initial function $\overline{h}^{(0)} \in \mathcal{L}_+^{\infty} (\mathcal{Y}_{\bX})$ satisfying $\int \overline{h}^{(0)}(y,\bX) \, dy = 1$. Then, as $j \rightarrow \infty$, $\overline{h}^{(j)}(y,\bX \con \theta)$ converges to $g^{\star}(y,\bX \con \theta)$ in \eqref{eq-fixed point equation-binary} for all $y \in \suppYX$, i.e., 
    \begin{align*}
        \sup_{y \in \suppYX}    
        \big| \overline{h}^{(j)}(y,\bX \con \theta) - g^{\star}(y,\bX \con \theta)
        \big|
        \rightarrow 0
        \quad \text{ as } \quad j \rightarrow \infty \ .
    \end{align*}    
    Moreover, the convergence of the iterative update  \eqref{eq-iterative update-binary} is exponentially fast with respect to a certain divergence function; see {\SUPP} \ref{sec-supp-global contraction} for details.     
\end{theorem}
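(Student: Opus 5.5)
The plan is to rewrite the normalized iteration $\overline{\Psi}$ as a positive averaging map applied to the ratio $\overline{h}/g^\star$. An averaging map of this kind strictly contracts the Hilbert projective metric, which gives exponential convergence. The first observation is purely algebraic. Put $a_z = \ff{}(A=1 \cond Z=z,\bX)$ and
\[
D(y) = \ff{}(A=0\cond Z=z_m,\bX)\,\ff{}(Y=y\cond A=0,Z=z_m,\bX) - \ff{}(A=0\cond Z=z_M,\bX)\,\ff{}(Y=y\cond A=0,Z=z_M,\bX).
\]
Then $\Psi(g) = \big(\int g R_Y\big)\cdot L(g)$, where $L(g)(y) = \{a_{z_m} g(y) + D(y)\}/\{a_{z_M} R_Y(y,\bX)\}$. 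The scalar factor $\int g R_Y$ cancels under normalization, so $\overline{\Psi}(g) = L(g)/\int L(g)$, and the iteration is a normalized affine map.

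Next I will interpret $g^\star$ using Theorem \ref{thm-Psi-binary} and the model. I expect $g^\star(y) = p(y) \equiv f(\potY{0}=y\cond A=1,Z=z_m,\bX)$, a density that integrates to one, so $\int g^\star = 1$ is consistent with the normalization. Indeed, \HL{IV2} gives $a_{z_M} p_M(y) = a_{z_m} p(y) + D(y)$ with $p_M = f(\potY{0}\cond A=1,Z=z_M,\bX)$. \HL{IV4} gives $p_M(y) = p(y) R_Y(y)/c^\star$ with $c^\star = \int p R_Y$, and this is precisely the fixed-point equation. I will use it to eliminate $D$, namely $D = a_{z_M} p R_Y/c^\star - a_{z_m} p$. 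Writing $u = \overline{h}/p$, this gives
\[
\frac{L(\overline{h})(y)}{p(y)} = \frac{1}{c^\star}\Big\{ w(y)\,u(y) + \big(1-w(y)\big)\Big\}, \qquad w(y) = \frac{a_{z_m} c^\star}{a_{z_M} R_Y(y,\bX)} .
\]
The crucial point is that $w \in (0,1)$. Positivity is clear. The bound $1-w>0$ holds because $1-w(y) = a_{z_M}p_M(y)/\{a_{z_M}p(y)R_Y(y)\}\cdot c^\star>0$, which comes from the same identity. Under \HL{A3}, using the bounds on $\fy$, $\fa$ and $g^\star$, I expect $w$ to lie in some $[w_-,w_+] \subset (0,1)$ uniformly in $y$.

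Thus one step of the iteration is $u \mapsto \{wu + (1-w)\}/\int p\{wu+(1-w)\}$, i.e. a pointwise convex mixture of $u$ with the constant $1$, followed by renormalization in $L^1(p)$. I will measure the error by $d(u) = \log(\sup u/\inf u)$, the Hilbert projective distance between $\overline{h}$ and $g^\star$; normalization does not change this quantity. The normalization $\int p u = 1$ forces $m \equiv \inf u \le 1 \le M \equiv \sup u$. The new range is then contained in $[\min_w\{wm+1-w\}, \max_w\{wM+1-w\}] \subseteq [w_+m+1-w_+,\ w_+M+1-w_+]$. I will then show that
\[
\log\frac{w_+M+1-w_+}{w_+m+1-w_+} \le \kappa \log\frac{M}{m}
\]
for some $\kappa<1$ depending only on $w_+$. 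For this I would appeal to Birkhoff's contraction theorem for the positive map $(u,s)\mapsto(wu+(1-w)s,\,s)$ on the homogenized cone, or verify it directly by elementary calculus.

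This contraction estimate is the main obstacle. Its elementary form is delicate when $M/m$ is large, since the ratio of the affine images tends to a constant, not to a power of $M/m$. So I will either accept the Birkhoff $\tanh(\Delta/4)$ factor, or first show that after one step $d(u)$ is bounded by a constant depending only on $w_\pm$, and then use the uniform contraction on that bounded set.

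For an arbitrary initial $\overline{h}^{(0)}\ge 0$, which may vanish somewhere, one step gives $u^{(1)} \ge (1-w_+)/(c^\star\,\text{norm})>0$ and $u^{(1)}\le$ a finite bound. Hence $d(u^{(1)})<\infty$, and afterwards $d(u^{(j)}) \le \kappa^{j-1} d(u^{(1)})$. Finally, $m\le 1\le M$ gives $\sup_y|u^{(j)}-1| \le e^{d(u^{(j)})}-1$. Multiplying by $p = g^\star \le C_g$ yields $\sup_y|\overline{h}^{(j)} - g^\star| \to 0$, exponentially fast in $j$. This exponential rate in $d$ is the divergence statement deferred to the Supplementary Material.
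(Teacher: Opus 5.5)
Your reduction matches the paper's proof. With $u=\overline{h}/g^\star$, one step is $u\mapsto\{wu+1-w\}/\text{norm}$, your $w$ is exactly the paper's $\lambda(y,\bX)$, and $g^\star=f(\potY{0}=y\cond A=1,Z=z_m,\bX)$ is correct. However, two steps are not established.

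First, the bound $\sup_y w(y)<1$, which you call crucial, is unsupported. Because $p_M=pR_Y/c^\star$, your expression $a_{z_M}p_Mc^\star/\{a_{z_M}pR_Y\}$ is identically $1$. The correct identity is $1-w(y)=c^\star D(y)/\{a_{z_M}p(y)R_Y(y,\bX)\}$. So $w<1$ is equivalent to $D(y)>0$, i.e.\ $f(Y=y,A=0\cond Z=z_m,\bX)>f(Y=y,A=0\cond Z=z_M,\bX)$. This does not follow from the identity you used or from the bounds in \HL{A3}. It is precisely the model restriction of Theorem \ref{thm-falsification}, and it comes from $f(A=1\cond Z=z_M,\bX)>f(A=1\cond Z=z_m,\bX)$ via Lemma \ref{lemma-beta and gamma}, which gives $\beta(z_M,\bX)>\beta(z_m,\bX)$. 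The paper's route uses \eqref{eq-proof-Y0AZ} and \eqref{eq-proof-gamma} to write $w$ explicitly, then bounds it with $\alpha\le C_\alpha$ from Lemma \ref{lemma-boundedness}:
\[
w(y)=\frac{1/\beta(z_M,\bX)+\alpha(y,\bX)}{1/\beta(z_m,\bX)+\alpha(y,\bX)}\le\frac{1/\beta(z_M,\bX)+C_\alpha}{1/\beta(z_m,\bX)+C_\alpha}=\kappa(\bX)<1 .
\]

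Second, your contraction step fails as planned. In the metric $\log(M/m)$ there is no uniform $\kappa<1$: for fixed $m$ and $M\to\infty$, the ratio $\log\{(w_+M+1-w_+)/(w_+m+1-w_+)\}/\log(M/m)$ tends to $1$. This is the large-$M$ regime, not the regime you flagged. Neither fallback repairs it.
\begin{itemize}
\item Birkhoff's factor $\tanh(\Delta/4)$ equals $1$ here. The image of the cone under $(u,s)\mapsto(wu+(1-w)s,s)$ has infinite projective diameter; compare $(wu,0)$ with $(1-w,1)$.
\item $d(u^{(1)})$ is not bounded in terms of $w_\pm$ alone. Indeed $\sup u^{(1)}/\inf u^{(1)}\ge w_-\sup u^{(0)}$, and for continuous $Y$ the normalization places no bound on $\sup u^{(0)}$.
\end{itemize}
The fix is the paper's divergence $\mathfrak{d}=M/m-1$ instead of its logarithm. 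Your own range bound together with $m\le 1\le M$ gives
\[
\frac{M'}{m'}-1\le\frac{w_+(M-m)}{w_+m+1-w_+}=\frac{w_+m}{w_+m+1-w_+}\Big(\frac{M}{m}-1\Big)\le w_+\Big(\frac{M}{m}-1\Big).
\]
This is a uniform geometric contraction with no restriction on the size of $M/m$. The middle expression also handles $m^{(0)}=0$ in the first step. From $j=1$ onward your final sup-norm argument then goes through unchanged.
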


Theorem \ref{thm-global contraction-binary} states that the fixed-point equation \eqref{eq-fixed point equation-binary} can be solved using the iterative update \eqref{eq-iterative update-binary}, with the sequence converging at an exponential rate to the solution. This exponential convergence rate occurs because the mapping $\overline{\Psi}$ is a global contraction toward $g^\star$, which we formally establish in {\SUPP} \ref{sec-supp-global contraction}. As a result, for any given $\theta$, $g^\star(y,\bX \con \theta)$ can be computed efficiently for any user-specified $(y,\bX)$, provided the boundedness condition \HL{A3} holds. This, in turn, enables the computation of $\alpha$ and $\beta$ from \eqref{eq-alpha g-binary} and \eqref{eq-beta g-binary}. These results are crucial for constructing the ATT estimator in the following Section, which relies not only on estimates of $\theta$ but also on the corresponding estimates of $\alpha$ and $\beta$.

\section{Semiparametric Efficiency Theory} \label{sec-theory}

In this Section, we characterize IFs for the ATT in model $\model$, a semiparametric model for the observed data law satisfying Assumptions \HL{A1}-\HL{A2} and \HL{IV1}-\HL{IV4}, without imposing any restrictions on the nuisance functions. The following Theorem states the result:

\begin{theorem} \label{thm-IF}
Suppose that Assumptions \HL{A1}-\HL{A2} and \HL{IV1}-\HL{IV4} hold. Let $\omega(\potY{0},Z,\bX)$ be a function satisfying the following conditions: \\
\makebox[1.2cm][l]{\HT{$\omega$-i}} $\omega = \nu - \EXP \{ \nu \cond \potY{0},\bX \}
    -
    \EXP ( \nu \cond Z,\bX )
    +
    \EXP ( \nu \cond \bX )$ for some function $\nu(\potY{0},Z,\bX)$;\\
\makebox[1.2cm][l]{\HT{$\omega$-ii}} $\omega$ solves the following integral equation:  
    \begin{align} \label{eq-w IE}
    &
    \EXP \big[ \big\{ \potY{0} - \mu^*(Z,\bX) 
    \big\}  - \big\{  \omega (\potY{0},Z,\bX) - \mu^*(Z,\bX \con \omega) \big\} \cond A=1,\potY{0},\bX \big] = 0 \ .
    \end{align} 
Then, the following function is an influence function for $\tau^*$ in model $\model$:
  \begin{align*}
  \InfFt^*(\bO)
  =
  \frac{ 
   \left[ 
    \begin{array}{l}
    \{ A - (1-A) \alpha^*(Y,\bX) \beta^*(Z,\bX) \}
     \big\{ Y - \mu^*(Z,\bX) \big\}     
     - A \tau^*
     \\
       +
       (1-A) \alpha^*(Y,\bX) \beta^*(Z,\bX) \big\{ \omega(Y,Z,\bX) - \mu^*(Z,\bX \con \omega) \big\}
  \\
  +
    A
    \mu^*(Z,\bX \con \omega)
    + 
    (1-A) \omega(Y,Z,\bX) 
    \end{array}
    \right]
    }{\EXP(A)}
     \ .
     \numeq 
     \label{eq-IF}
\end{align*}
In addition, every influence function for $\tau^*$ in model $\model$ is of the form \eqref{eq-IF}.  
\end{theorem}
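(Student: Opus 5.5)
We work in the full-data model for $(\potY{0},A,Z,\bX)$ and map to the observed data by coarsening. Under \HL{A1} the observed data are a deterministic coarsening of $(\potY{0},A,Z,\bX)$: when $A=0$ we observe $\potY{0}$, and when $A=1$ it is missing (we observe $\potY{1}$ instead). We first characterize the full-data model $\model^F$ defined by \HL{IV2} and \HL{IV4}; \HL{IV1} and \HL{IV3} impose no further restrictions on the law. Using the parameterization \eqref{eq-Sun parameterization} in terms of $\theta_{\text{SMW}}=\{\fz,f_{\potY{0}},\alpha,\beta\}$, which is variationally independent, the full-data tangent space is the direct sum of four pieces:
\begin{itemize}
\item scores of $f(\bX)$;
\item mean-zero functions of $(Z,\bX)$ given $\bX$;
\item mean-zero functions of $(\potY{0},\bX)$ given $\bX$;
\item the $A$-scores $\{A-\odds/(1+\odds)\}\{a_1(\potY{0},\bX)+a_2(Z,\bX)\}$, reflecting the additive logit structure.
\end{itemize}
Its orthocomplement is therefore spanned by $\{A-\odds^*/(1+\odds^*)\}\,\omega(\potY{0},Z,\bX)$, where $\omega$ is orthogonal to all additive functions $a_1(\potY{0},\bX)+a_2(Z,\bX)$ under the weight $\VAR(A\cond \potY{0},Z,\bX)$ times $f(\potY{0},Z\cond\bX)$. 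After reweighting, this orthogonality is equivalent to the double-centering representation \HL{$\omega$-i}; we will check carefully which measure the conditional expectations are taken under.

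Full-data influence functions for $\tau^*$ are then obtained as follows. Start from the nonparametric full-data influence function $[A\{Y-\potY{0}\}-A\tau^*]/\EXP(A)$ and add any element of the orthocomplement.

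Next we pass to the observed data. Under \HL{A1} and \HL{IV1}, the coarsening variable is $A$ itself, and it depends on the missing $\potY{0}$; the mechanism is MNAR, with selection odds $\odds^*=\alpha^*\beta^*$. Following the Robins--Rotnitzky--Scharfstein theory for nonignorable missingness with a known selection model, each observed-data influence function takes the form $(1-A)\{1+\odds^*\}\,\phi^F$ plus an augmentation. The augmentation space is a single function that depends only on observed components. More concretely, we will guess the form \eqref{eq-IF} and verify it directly. We must show two things:
\begin{itemize}
\item[(a)] $\EXP\{\InfFt^*\,S\}=\partial\tau$ for every parametric submodel score $S$ in $\model$;
\item[(b)] $\InfFt^*$ is mean zero.
\end{itemize}
For (a) we decompose $S$ along the four tangent directions above and compute each pathwise derivative of $\tau^*=\EXP[A\{Y-\mu^*(Z,\bX)\}]/\EXP(A)$. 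The AIPW part \eqref{eq-AIPW} is the efficient influence function when $\alpha^*$ is known \citep{Robins2000_Sensitivity}. Since $\beta$ is saturated in $Z$, it absorbs the $\beta$-direction scores automatically. What remains are the $\alpha$-direction scores $\{A-\odds^*/(1+\odds^*)\}a_1(\potY{0},\bX)$ and the $f_{\potY{0}}$ scores. The $\omega$ terms are designed to cancel the derivative of $\tau^*$ along the $\alpha$ direction. Computing that derivative yields $-\EXP[A\{\potY{0}-\mu^*\}a_1]/\EXP(A)$. Meanwhile the $\omega$-terms contribute $\EXP[A\{\omega-\mu^*(\cdot\con\omega)\}a_1]$, because $(1-A)\odds^*h$ and $Ah$ share conditional expectations given $(\potY{0},Z,\bX)$. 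Equating the two for all $a_1$ gives exactly the integral equation \HL{$\omega$-ii}, conditioning on $(A=1,\potY{0},\bX)$. Condition \HL{$\omega$-i} ensures that the $Z$- and $\potY{0}$-marginal scores are not disturbed, using \HL{IV2}.

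For the converse (that every influence function has the form \eqref{eq-IF}), we take any two influence functions. Their difference is orthogonal to the observed tangent space of $\model$. We then show that any such observed-data function, mapped back through the coarsening operator, equals the difference of the $\omega$-terms for some $\omega$ satisfying \HL{$\omega$-i}, with difference solving the homogeneous version of \eqref{eq-w IE}.

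The main obstacle is this converse together with the precise characterization of the orthocomplement. The model is locally nonsaturated only through the coupling constraint linking the observed-data law to $(\alpha,\beta,f_{\potY{0}})$. We must therefore prove that the observed tangent space is the closure of the image of the full-data scores under conditional expectation given $\bO$. To do this we would use Theorem \ref{thm-Psi-binary}, in that $\theta$ is an equivalent variationally independent parameterization, to argue that the map from $\theta_{\text{SMW}}$ to the observed law is a smooth bijection. The observed tangent space is then the image of the full tangent space, and the orthocomplement is characterized by the adjoint equation, which reduces to \eqref{eq-w IE}.
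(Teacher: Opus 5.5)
The gap is in the converse, and it starts from your description of the full-data orthocomplement, which is incorrect. Let $p^*=\odds^*/(1+\odds^*)$. Any $h\in\mathcal{L}_0^2(\potY{0},A,Z,\bX)$ can be written $h_0(\potY{0},Z,\bX)+(A-p^*)h_1(\potY{0},Z,\bX)$. Orthogonality to the \HL{IV2} scores $S(\potY{0}\cond\bX)+S(Z\cond\bX)+S(\bX)$ forces $h_0$ to be double-centered. Orthogonality to the $A$-scores forces $p^*(1-p^*)h_1\perp a_1(\potY{0},\bX)+a_2(Z,\bX)$. You keep only the $(A-p^*)h_1$ piece and attribute \HL{$\omega$-i} to it. In fact \HL{$\omega$-i} is the $h_0$ piece, and the $A$-score condition is what ultimately produces \HL{$\omega$-ii}.

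This is not cosmetic. We have $\mathcal{T}^\perp=\mathcal{T}^{F,\perp}\cap\mathcal{L}_0^2(\bO)$, because orthogonality to $\EXP(S^F\cond\bO)$ is orthogonality to $S^F$; this is your ``adjoint equation''. With your $\mathcal{T}^{F,\perp}$, an observed function $(A-p^*)h_1$ must have $(1-p^*)h_1=k(Z,\bX)$. Orthogonality to $a_2$ then gives $\EXP(p^*\cond Z,\bX)\,k=0$, so you would conclude $\mathcal{T}^\perp=\{0\}$ for every instrument. That characterizes only a subset of $\mathcal{T}^\perp$, so it cannot show that every influence function has the form \eqref{eq-IF}. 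The conclusion $\mathcal{T}^\perp=\{0\}$ is false for categorical $Z$ with more than two levels (Section \ref{sec-supp-ClosedFormSolution}). For binary $Z$ it holds only because of Lemma \ref{lemma-unique integral equation solution}, which your argument never uses.

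A second omission: $\bO$ is not a coarsening of $(\potY{0},A,Z,\bX)$, since $AY=A\potY{1}$. You need the unrestricted law of $\potY{1}$ given $(\potY{0},A,Z,\bX)$ in the full data (the paper's $\mathcal{T}_3^F$). Orthogonality to its scores is what confines $\mathcal{T}^\perp$ to functions of $((1-A)Y,A,Z,\bX)$.

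The repair is the paper's computation: the orthocomplement of $\mathcal{T}_1$ from \citet{RR1997}, followed by its Cases 1--2. For $T\in\mathcal{T}^\perp$, write $T=(1-A)t_0(\potY{0},Z,\bX)+At_1(Z,\bX)$ and $\widetilde v=\EXP(T\cond\potY{0},Z,\bX)$. Then $T=(1-A)(1+\odds^*)(\widetilde v-k)+k$ with $k=t_1$.
\begin{itemize}
\item Orthogonality to the \HL{IV2} scores makes $\widetilde v$ double-centered.
\item Orthogonality to $(A-p^*)b(Z,\bX)$ forces $k=\mu^*(Z,\bX\con\widetilde v)$.
\item Orthogonality to $(A-p^*)a(\potY{0},\bX)$ yields the homogeneous version of \eqref{eq-w IE}.
\end{itemize}
Hence $\InfFt^*+T$ is \eqref{eq-IF} with $\omega$ replaced by $\omega+\EXP(A)\widetilde v$. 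Note there is no single augmentation function here, unlike the known-selection-model theory: $k$ is free until the $\beta$-scores are imposed.

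Your sufficiency half, by contrast, is sound and more transparent than the paper's route. The paper differentiates the moment identity and then expresses the $\alpha$-score through observed scores via \eqref{eq-proof-alpha gradient}. Write $\mathrm{Aug}$ for the $\omega$-terms of \eqref{eq-IF}, including the factor $1/\EXP(A)$. Then $\EXP(\mathrm{Aug}\cond\potY{0},Z,\bX)=\omega/\EXP(A)$ and $\EXP\{\mathrm{Aug}\,(A-p^*)\cond\potY{0},Z,\bX\}=-p^*\{\omega-\mu^*(Z,\bX\con\omega)\}/\EXP(A)$. Consequently:
\begin{itemize}
\item the \HL{IV2} scores are annihilated by \HL{$\omega$-i};
\item the $\beta$-scores are annihilated by the definition of $\mu^*(Z,\bX\con\omega)$, not by saturation of $\beta$;
\item the $\potY{1}$-scores are annihilated because $\mathrm{Aug}$ depends only on $((1-A)Y,A,Z,\bX)$;
\item along $a_1$, the contribution $-\EXP[A\{\omega-\mu^*(Z,\bX\con\omega)\}a_1]/\EXP(A)$ must \emph{equal}, rather than cancel, the explicit $\alpha$-derivative $-\EXP[A\{\potY{0}-\mu^*(Z,\bX)\}a_1]/\EXP(A)$ that the fixed-$\alpha$ AIPW term misses. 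That requirement is exactly \HL{$\omega$-ii}.
\end{itemize}
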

The IF in \eqref{eq-IF} contains two sets of terms. The first line on the right-hand side is precisely the AIPW representation in \eqref{eq-AIPW}. This corresponds to the EIF for the ATT in a semiparametric model where the odds ratio function $\alpha^*$ is known a priori, while all other components of the observed data distribution remain unrestricted \citep{Robins2000_Sensitivity}. The second and third lines capture the additional variability induced by allowing $\alpha^*$ to be unrestricted. To the best of our knowledge, this result is entirely new in the IV literature and provides a substantial improvement over previous works that rely on $\alpha^*$ either being known or known up to a finite-dimensional parameter \citep{Robins2000_Sensitivity, Sun2018, Liu2020, Sun2025}.  

We note that directly solving the integral equation in \eqref{eq-w IE} may be challenging in practice, as it involves multiple nested dependencies within the counterfactual data distribution. However, we can strengthen the result of Theorem \ref{thm-IF} by leveraging the binary nature of $Z$, as stated in Theorem \ref{thm-IF binary Z}.

\begin{theorem} \label{thm-IF binary Z}

Suppose that Assumptions \HL{A1}-\HL{A2} and \HL{IV1}-\HL{IV4} hold. Let $
        \omega^*(\potY{0},Z,\bX) 
        =
    [ L^*(\potY{0},\bX) - \EXP \{ L^*(\potY{0},\bX) \cond \bX \} ]
    \{ Z - \EXP ( Z \cond \bX ) \}$, where $L^*$ is given by 
    \begin{align*}
    L^*(\potY{0},\bX) =  \frac{\potY{0} + \EXP\{ Z \cond A=1,\potY{0},\bX\} B^* (\bX) + C^* (\bX)}{ \EXP\{ Z \cond A=1,\potY{0}, \bX \} - \EXP (Z \cond \bX)}  \ .
\end{align*} 
Closed-form expressions for $B^*$ and $C^*$ are given in equation \eqref{eq-proof-BC} in {\SUPP} \ref{sec-proof-IF binary Z}. Then, $\omega^*$ is the unique function satisfying conditions \HL{$\omega$-i} and \HL{$\omega$-ii} in Theorem \ref{thm-IF}. Therefore, the influence function in \eqref{eq-IF} evaluated at $\omega^*$ is the efficient influence function for $\tau^*$ in model $\model$.
\end{theorem}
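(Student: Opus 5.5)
We take Theorem \ref{thm-IF} as given and show that, when $Z$ is binary, conditions \HL{$\omega$-i}--\HL{$\omega$-ii} have exactly one solution, namely $\omega^*$. Throughout we fix $\bX$ and suppress it.

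\emph{Reducing \HL{$\omega$-i}.} By \HL{IV2}, $\potY{0}$ and $Z$ are independent given $\bX$. So \HL{$\omega$-i} says that $\omega$ lies in the interaction space: $\EXP\{\omega\cond \potY{0}\}=0$ and $\EXP\{\omega \cond Z\}=0$. Since $Z$ is binary, any function of $(\potY{0},Z)$ can be written as $a(\potY{0}) + b(\potY{0}) Z$. The two centering constraints then force
\[
\omega(\potY{0},Z)=\{L(\potY{0})-\EXP L(\potY{0})\}\{Z-\EXP(Z)\}
\]
for some function $L$. Conversely, every such product satisfies \HL{$\omega$-i}, taking $\nu=\omega$. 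Hence \HL{$\omega$-i} is equivalent to this product form, and the only remaining freedom is the single function $L$, determined up to an additive constant.

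\emph{Rewriting \HL{$\omega$-ii}.} Write $e(y)=\EXP\{Z\cond A=1,\potY{0}=y\}$, $p=\EXP(Z)$, and $\widetilde L = L-\EXP L$. Conditioning on $(A=1,\potY{0}=y)$ gives
\[
\EXP\{\omega \cond A=1,\potY{0}=y\} = \widetilde L(y)\{e(y)-p\}.
\]
Next, $\mu^*(Z\con\omega)=\EXP\{\omega\cond A=1,Z\}$ takes the form $\widetilde m_1 Z + \widetilde m_0(1-Z)$. Here the two constants are
\[
\widetilde m_z=(z-p)\,\EXP\{\widetilde L(\potY{0})\cond A=1,Z=z\}.
\]
Its conditional mean given $(A=1,\potY{0}=y)$ is therefore affine in $e(y)$. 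The same holds for $\EXP\{\mu^*(Z)\cond A=1,\potY{0}=y\}$, which equals $\mu^*(0)+\{\mu^*(1)-\mu^*(0)\}e(y)$.

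Substituting these into \eqref{eq-w IE} gives
\[
y - \widetilde L(y)\{e(y)-p\} + B + C' e(y)=0 \quad \text{for all } y,
\]
where $B$ and $C'$ are constants depending on $\mu^*$ and on the two scalars $\EXP\{\widetilde L\cond A=1,Z=z\}$. Solving pointwise yields
\[
\widetilde L(y)=\frac{y+e(y)B^*+C^*}{e(y)-p},
\]
after renaming the constants. This is well defined because $e(y)\neq p$ almost everywhere: by Bayes' rule and \HL{IV2}, $e(y)-p$ has the sign of $\beta^*(1)-\beta^*(0)$, which is nonzero under \HL{IV3}. Hence \HL{$\omega$-ii} is equivalent to $L$ having the form $L^*$ in the statement, with $(B^*,C^*)$ still to be pinned down.

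\emph{Closing the loop.} The constants $B^*$ and $C^*$ are themselves linear functionals of $\widetilde L$. This creates a finite-dimensional self-consistency problem. Plugging the displayed form of $\widetilde L$ back into the definitions of $\widetilde m_0$ and $\widetilde m_1$, and into the requirement $\EXP \widetilde L=0$ (the centering is automatic once we subtract $\EXP L^*$, so the additive constant in $L$ is irrelevant), produces a $2\times2$ linear system in $(B^*,C^*)$. Solving it gives the closed forms \eqref{eq-proof-BC}.

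\emph{Main obstacle.} The hard step is showing that this $2\times 2$ system is nonsingular. Its determinant involves expectations such as
\[
\EXP\bigl[\{e(\potY{0})-p\}^{-1}\cond A=1,Z=z\bigr] \quad\text{and}\quad \EXP\bigl[e(\potY{0})\{e(\potY{0})-p\}^{-1}\cond A=1,Z=z\bigr].
\]
I would first rewrite these using $f(\potY{0}\cond A=1,Z=z)\propto \alpha^*\beta^*(z)f(\potY{0})$. Under \HL{IV4}, $e(y)$ is then a Möbius-type function of $\alpha^*(y)$ with a sign that does not vary with $y$. I would then try to prove that the determinant is nonzero via a Cauchy–Schwarz or strict-monotonicity argument, using that $e(y)-p$ has constant sign.

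Nonsingularity gives both existence and uniqueness of $\omega^*$. Since the influence function is then unique, it is the efficient influence function.
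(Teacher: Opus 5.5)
Up to the pointwise solve for $\widetilde L$, your argument is correct and follows the same route as the paper: the product form of $\omega$, then the ansatz $\{\potY{0}+eB+C\}/(e-p)$. The gap is the step you identify as the main obstacle. The $2\times 2$ self-consistency system for $(B^*,C^*)$, built from the definitions of $\widetilde m_0$ and $\widetilde m_1$, is \emph{never} nonsingular, so no Cauchy--Schwarz or monotonicity argument can establish that it is. As you note yourself, \eqref{eq-w IE} depends on $L$ only through $\omega$, so $\widetilde L\mapsto \widetilde L+c$ maps solutions to solutions. In your parametrization this shift is $(B^*,C^*)\mapsto (B^*+c,\,C^*-pc)$, which is therefore a kernel direction of the system. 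Concretely, set $c_z=\EXP[\{e(\potY{0})-p\}^{-1}\cond A=1,Z=z,\bX]$ and $b_z=\EXP[e(\potY{0})\{e(\potY{0})-p\}^{-1}\cond A=1,Z=z,\bX]$, which are exactly the quantities you list. Then $b_z=1+p\,c_z$, and substituting this makes the determinant vanish identically. Adding the centering $\EXP\widetilde L=0$ does not rescue the plan: it only picks out one point on a line of candidates. One self-consistency equation is then left over and must still be verified, and that existence statement is never addressed in your proposal. So ``nonsingularity gives existence and uniqueness'' is not available.

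\emph{Existence.} Under your ansatz, the left-hand side of \eqref{eq-w IE} equals $E_1+e(y)E_2$, where $E_1$ and $E_2$ are scalars affine in $(B^*,C^*)$. Note that $\EXP[A\{\potY{0}-\mu^*(Z,\bX)\}\cond \bX]=0$, and $\EXP[A\{\omega-\mu^*(Z,\bX\con\omega)\}\cond\bX]=0$ for every $\omega$. Integrating against $f(\potY{0}=y,A=1\cond\bX)$ therefore gives $E_1+q_1E_2\equiv 0$, with $q_1=\Pr(Z=1\cond A=1,\bX)$. So the two equations are dependent and always solvable.

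\emph{Uniqueness.} Uniqueness of $\omega$ requires the solution line to be exactly the shift orbit, i.e.\ the system has rank one. This holds because the coefficient of $C^*$ in $E_2$ is $p\,c_0+(1-p)c_1$, which is nonzero since $e-p$ has constant sign.

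The paper obtains uniqueness from a separate lemma on the homogeneous equation instead. There, $L(y)-\mu^*(0,\bX\con L)$ equals the positive function $\{1+\alpha^*(y,\bX)\beta^*(0,\bX)\}\beta^*(1,\bX)$ times $\{\mu^*(1,\bX\con L)-\mu^*(0,\bX\con L)\}/\{\beta^*(1,\bX)-\beta^*(0,\bX)\}$. It therefore has constant sign unless $\mu^*(1,\bX\con L)=\mu^*(0,\bX\con L)$, which contradicts the fact that $\mu^*(0,\bX\con L)$ is its own conditional mean. Hence $L$ is constant and $\widetilde v=0$.

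Do not be misled by the paper's explicit nonzero determinant $2p_1p_0\{1+\beta_1\mu_1(\alpha)\}(\beta_1-\beta_0)$. When its expansion of $\EXP\{\omega-\mu_Z(\omega)\cond A=1,\potY{0},\bX\}$ replaces $\mathfrak{p}_0$ by $1-\mathfrak{p}_1$, both terms involving $\mu_0(L)$ pick up the wrong sign. The resulting nonsingular system is equivalent to the correct, singular one plus the normalization $\mu_0(L)=0$. That is why the paper's $B^*,C^*$ still give a valid $\omega^*$.
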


Theorem \ref{thm-IF binary Z} provides two additional results compared with Theorem \ref{thm-IF} by leveraging the binary nature of $Z$. First, it provides a closed-form expression for $\omega^*$ that satisfies conditions \HL{$\omega$-i} and \HL{$\omega$-ii}. Second, it establishes the uniqueness of $\omega^*$, which in turn implies that the IF for the ATT is unique, and therefore indeed the EIF. Intuitively, this result arises because, when $Z$ is binary, the integral equation in \eqref{eq-w IE} can be written as a Fredholm integral equation of the second kind with a separable kernel, which can be solved as a linear system \citep{Kress2014}. Furthermore, this linear system is well-posed under the relevance assumption \HL{IV3}, and the functions $B^*$ and $C^*$ can be obtained as its solution. Despite this simplification, the algebraic steps remain nontrivial, as reflected in the complexity of the resulting expressions. These derivations are provided in {\SUPP} \ref{sec-proof-IF binary Z}.

\section{Estimation} \label{sec-estimation}

\subsection{Semiparametric Efficient Estimator for the ATT}

This Section focuses on constructing an estimator for the ATT based on the EIF in Theorem \ref{thm-IF binary Z}. The proposed estimator employs a cross-fitting approach \citep{Schick1986, DDML2018}. Specifically, we randomly partition the $N$ units, denoted by $\mathcal{I} = \{1,\ldots,N\}$, into $K$ non-overlapping folds $\{ \II_1,\ldots, \II_K \}$, and define the complement of each fold as $\II_k^c = \mathcal{I} \setminus \II_k$ for $k  \in \{1,\ldots,K\}$. For each fold $k$, we proceed as follows: 
\begin{itemize}[labelwidth=1.5cm]
    \item[(Step 1)] Using units in $\II_k^c$, estimate the nuisance functions $\theta^* \equiv \{ \fz^* \equiv f^*(Z \cond \bX), \fa^* \equiv f^*(A \cond Z,\bX), \fy^* \equiv f^*(Y \cond A=0,Z,\bX) \}$ defined in Section \ref{sec-our parameterization}; 
    
    \item[(Step 2)] Obtain estimates of the odds ratio function $\alpha^*$ and baseline odds function $\beta^*$ based on the estimates from Step 1;
    
    \item[(Step 3)] Obtain an estimate of the EIF from Theorem \ref{thm-IF binary Z} using the estimates from Steps 1 and 2.
\end{itemize}
Finally, we aggregate the $K$ fold-specific EIF estimates from Step 3 to obtain an ATT estimate. Further details of the procedure are provided below.

\vspace*{0.2cm}

\noindent (\textit{Step 1: Estimation of the Nuisance Functions})

\noindent For each $k \in \{1,\ldots,K\}$, we estimate the nuisance functions using observations in $\II_k^c$. First, to estimate the conditional probability $\fz^*$, we model $Z$ as the response and $\bX$ as the regressors, employing probabilistic machine learning methods and their ensemble via the superlearner method \citep{vvLaan2007}. The conditional probability $\fa^*$ is estimated similarly, with $A$ as the response and $(Z,\bX)$ as the regressors. For the outcome densities $\fy^*$, the estimation approach depends on the type of $Y$. For categorical $Y$, a superlearner-based approach can be again used, similar to the estimation of the other nuisance functions, taking $Y$ as the response and $(Z,\bX)$ as the regressors, using only units with $A=0$. For continuous $Y$, we apply nonparametric conditional density estimation methods, including kernel regression \citep{HallRacineLi2004}, smoothing splines \citep{gss2014}, and ensembles of these approaches, again using only units with $A=0$. Additional details on the base machine learning algorithms in the superlearner library and software for conditional density estimation are provided in {\SUPP} \ref{sec-supp-estimation}. We denote the resulting estimates by $\widehat{\theta}\LSS \equiv \{ \fzhat\LSS, \fahat\LSS, \fyhat\LSS \}$, respectively, where the superscript $\LSS$ indicates that the $k$th fold $\II_k$ is excluded from estimation.

These nuisance estimates will be used in the subsequent steps. In particular, the density ratio of the outcome regression, $R_Y^*$ in \eqref{eq-Ry-binary}, plays a central role in estimating $\alpha^*$ and $\beta^*$. However, for continuous $Y$, the naive density ratio estimate obtained as the ratio of two density estimates often produces unstable results, as it focuses on individual densities rather than their ratio. To overcome this challenge, one can employ strategies that target the density ratio directly, yielding more stable and reliable estimates. Some popular nonparametric density ratio estimation methods include the  \textit{unconstrained least-squares importance fitting} and the \textit{Kullback-Leibler importance estimation procedure} \citep{DR2009_uLSIF, DensityRatio2007_1, DensityRatio2007_2}. In essence, these methods embed the density ratio within a reproducing kernel Hilbert space and estimate it by minimizing either the mean squared error or the Kullback-Leibler divergence. Further technical details for these targeted approaches can be found in {\SUPP} \ref{sec-supp-estimation}.

\vspace*{0.2cm}

\noindent (\textit{Step 2: Estimation of $\alpha^*$ and $\beta^*$ via Iterative Updates})

\noindent Given the estimates $\widehat{\theta}\LSS$, we now obtain estimates of the odds ratio function $\alpha^*$ and the baseline odds function $\beta^*$. In brief, once $\widehat{\theta}\LSS$ is available, the corresponding estimates of $\alpha^*$ and $\beta^*$ can be directly obtained from Theorems \ref{thm-Psi-binary} and \ref{thm-global contraction-binary}, provided that $\widehat{\theta}\LSS$ satisfies the boundedness condition \HL{A3}, which we formally introduce as an assumption later; see Theorem \ref{thm-AN} below. Specifically, for each $\bX_{i}$ with $i \in \II_k\LSS$, we implement the iterative update \eqref{eq-iterative update-binary} with $\widehat{\theta}\LSS$ substituted in. By Theorem \ref{thm-global contraction-binary}, the resulting sequence converges to the fixed point. The resulting estimates of $\alpha^*$ and $\beta^*$ are then obtained via \eqref{eq-alpha g-binary} and \eqref{eq-beta g-binary}, and are denoted by $\widehat{\alpha}\LSS \equiv \alpha(\cdot \con \widehat{\theta}\LSS)$ and $\widehat{\beta}\LSS \equiv \beta(\cdot \con \widehat{\theta}\LSS)$, respectively. For binary or polytomous $Y$, estimation of $\alpha^*$ and $\beta^*$ is straightforward, as the iterative updates only require evaluation over the finite set of outcome values for each $\bX_{i}$. For continuous $Y$, however, direct implementation of the iterative updates is infeasible because the support of $Y$ is uncountable. In such cases, we recommend applying the iterative updates over a sufficiently fine grid of outcome values to numerically approximate $\widehat{\alpha}\LSS$ and $\widehat{\beta}\LSS$.

\vspace*{0.2cm}

\noindent (\textit{Step 3: Estimation of the EIF})

\noindent Given $\widehat{\theta}\LSS$, $\widehat{\alpha}\LSS$, and $\widehat{\beta}\LSS$  from the previous steps, we can now construct an estimate of the density $f^*(\potY{0},A, Z  \cond \bX)$ using the relationship in \eqref{eq-Liu parameterization}. We denote this estimate by $\widehat{f}\LSS(\potY{0},A,Z \cond \bX) \equiv f(\potY{0},A,Z \cond \bX \con \widehat{\theta}\LSS )$. Substituting $\widehat{f}\LSS$ into the nuisance functions of the EIF in \eqref{eq-IF} yields plug-in estimators. For example, for a generic function $b(\potY{0},\bX)$, $\mu^*(Z,\bX \con b)$ can be estimated by 
\begin{align*}
    \widehat{\mu}\LSS(Z,\bX \con b ) = \int b(y,\bX) \widehat{f}\LSS(\potY{0}=y \cond A=1,Z,\bX) \, dy \ .
\end{align*}
The other components of the EIF can be obtained through analogous substitution procedures, and we use similar notation for their estimators, e.g., $\widehat{\omega}\LSS$. Using these estimates, we define $\widehat{\phi}\LSS$ as an estimate of the key part of the EIF in \eqref{eq-IF}:
\begin{align*}
    &
    \widehat{\phi}\LSS(\bO)  
    \\
    &
    =
    \left[ 
    \begin{array}{l}
    \{ A - (1-A) \widehat{\alpha}\LSS(Y,\bX) \widehat{\beta}\LSS (Z,\bX) \}
     \big\{ Y - \widehat{\mu}\LSS(Z,\bX) \big\}      
     \\
       +
       (1-A) \widehat{\alpha}\LSS(Y,\bX) \widehat{\beta}\LSS(Z,\bX) \big\{ \widehat{\omega}\LSS(Y,Z,\bX) - \widehat{\mu}\LSS(Z,\bX \con \widehat{\omega}\LSS) \big\}
  \\
  +
    A
    \widehat{\mu}\LSS(Z,\bX \con \widehat{\omega}\LSS)
    + 
    (1-A) \widehat{\omega}\LSS(Y,Z,\bX) 
    \end{array}
    \right] \ .  
\end{align*}

Finally, given $\widehat{\phi}\LSS$ for $k \in \{ 1,\ldots,K \}$, one can construct an ATT estimate based on the EIF as follows:
\begin{align*}
    \widehat{\tau}
    =
    \frac{  
    \sum_{k=1}^{K}
    \sum_{i \in \II_k}
    \widehat{\phi}\LSS(\bO_i) 
    }
    { 
    \sum_{i=1}^{N} A_i 
    }
    \ .
    \numeq
    \label{eq-tauhat}
\end{align*}

It remains to establish statistical properties of $\widehat{\tau}$. To this end, we introduce additional conditions and define the following notation.  For $k \in \{1,\ldots,K\}$, let $r_{Y}\LSS$, $r_{A}\LSS$, and $r_{Z}\LSS$ denote the estimation errors of the nuisance functions measured in the $L^2(P)$-norm, i.e., 
\begin{align*}
& r_{Y}\LSS = \| \widehat{f}\LSS (\potY{0} \cond A=0,Z,\bX) - f^*(\potY{0} \cond A=0,Z,\bX) \|_{P,2} \ ,
\\
&
r_{A}\LSS = \| \widehat{f}\LSS  (A=1 \cond Z,\bX) - f^*(A=1 \cond Z,\bX) \|_{P,2} \ ,
\numeq 
\label{eq-error}
\\
&
r_{Z}\LSS = \| \widehat{f}\LSS(Z=1 \cond \bX) - f^*(Z=1 \cond \bX) \|_{P,2} \ .
\end{align*}

With this notation in place, we now state the assumptions required to establish the asymptotic normality of $\widehat{\tau}$:
\begin{itemize}

\item[\HT{A4}] (Bounded $Y$) There exists a constant $B_{Y}>0$ such that $| Y | \leq B_{Y}$ almost surely.

\item[\HT{A5}] (Strong Relevance) For $k \in \{1,\ldots,K\}$ and binary $Z$, there exists a constant $B_{\beta}>0$ such that $| \beta^*(1,\bX) - \beta^*(0,\bX) | \geq B_{\beta}$ and $| \widehat{\beta}\LSS (1,\bX)  - \widehat{\beta}\LSS(0,\bX) | \geq B_{\beta}$ for all $\bX \in \suppX$.

\item[\HT{A6}] (Consistent Estimation) For $k \in \{1,\ldots,K\}$, we have $r_Y\LSS + r_A\LSS+r_Z\LSS = o_P(1)$. 

\item[\HT{A7}] (Second-order Estimation Error) For $k \in \{1,\ldots,K\}$, we have
\begin{align*}
\{ r_Y\LSS \}^{2} + \{ r_A\LSS \}^2 + r_Y\LSS r_A\LSS +  r_Y\LSS r_Z\LSS + r_A\LSS r_Z\LSS = o_P(N^{-1/2}) \ .
\numeq
\label{eq-bias structure}
\end{align*}

\end{itemize}

Assumption \HL{A4} states that the outcome is uniformly bounded. Together with the boundedness condition \HL{A3}, Assumption \HL{A5} guarantees that the difference in the extended propensity score, $| f^*(A=1 \cond \potY{0},Z=1,\bX) - f^*(A=1 \cond \potY{0},Z=0,\bX)|$, is bounded away from zero, which can be interpreted as a strengthening of the IV relevance condition, i.e., \HL{IV3}. An analogous requirement is imposed on the estimated extended propensity score. Assumption \HL{A6} states that the nuisance functions are consistently estimated in the $L^2(P)$-norm. Assumption \HL{A7} further requires that certain nuisance functions be estimated at a sufficiently fast rate. In particular, \HL{A7} requires that the outcome model $\fy^*$ and the propensity score model $\fa^*$ be estimated at an $o_P(N^{-1/4})$ rate, which typically necessitates appropriate smoothness conditions. By contrast, the instrument model $\fz^*$ may be estimated at a substantially slower rate, provided that its cross‐product estimation errors with the outcome and propensity score models, namely $r_{Z}\LSS r_{Y}\LSS$ and $r_{Z}\LSS r_{A}\LSS$, remain $o_P(N^{-1/2})$ in each case. If all nuisance functions are parametrically specified, Assumption \HL{A7} corresponds to the requirement that the working models $\fy$ and $\fa$ are correctly specified, while $\fz$ may be misspecified. We note that Assumption \HL{A7} differs from conditions in previous works \citep{Sun2018, Liu2020, Sun2025}, reflecting differences in parameterization; a detailed discussion is provided in Section \ref{sec-bias comparison}.

The following Theorem states that $\widehat{\tau}$ is an asymptotically normal and semiparametric efficient estimator of $\tau^*$ in model $\model$:
\begin{theorem} \label{thm-AN}
Suppose that Assumptions \HL{A1}-\HL{A2}, \HL{A4}-\HL{A7}, and \HL{IV1}-\HL{IV4} hold. Further suppose that \HL{A3} holds for the true nuisance functions $\theta^*$ and estimated nuisance functions $\widehat{\theta}\LSS$ for $k \in \{1,\ldots,K\}$. Then, the ATT estimator in \eqref{eq-tauhat} is asymptotically normal as $\sqrt{N} ( \widehat{\tau} - \tau^* )    \stackrel{D}{\rightarrow}    N ( 0, \sigma^2 )$, where $\sigma^2 \equiv \VAR \{ \InfFt^*(\bO) \}$ is the semiparametric efficiency bound for $\tau^*$ in model $\model$.  Moreover, a consistent estimator of the asymptotic variance of $\widehat{\tau}$ is given by
\begin{align*}
    \widehat{\sigma}^2
    =
    \frac{1}{N} 
    \sum_{k=1}^{K}
    \sum_{i \in \II_k}
    \bigg\{ \frac{ \widehat{\phi}\LSS(\bO_i) - A_i \widehat{\tau}  }{ \sum_{i=1}^{N} A_i / N } \bigg\}^2 \ .
\end{align*}
    
\end{theorem}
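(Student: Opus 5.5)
The plan is the standard cross-fitting argument for one-step estimators; the only model-specific work is bounding the remainder in terms of $\widehat{\theta}\LSS$. Write $\phi^*(\bO)$ for the bracketed numerator of \eqref{eq-IF} evaluated at $\omega^*$ from Theorem \ref{thm-IF binary Z}, plus $A\tau^*$. Then $\InfFt^* = \{\phi^* - A\tau^*\}/\EXP(A)$ and $\EXP(\phi^*) = \tau^*\EXP(A)$. We have
\begin{align*}
\widehat{\tau}-\tau^* = \frac{N^{-1}\sum_{k}\sum_{i\in\II_k}\{\widehat{\phi}\LSS(\bO_i) - A_i\tau^*\}}{N^{-1}\sum_i A_i},
\end{align*}
and we decompose the numerator for each fold as
\begin{align*}
\AVERk\{\phi^*-A\tau^*\} + (\AVERk-\EXP)\{\widehat{\phi}\LSS-\phi^*\} + \EXP\{\widehat{\phi}\LSS-\phi^*\},
\end{align*}
where $\AVERk$ is the empirical mean over $\II_k$ and the last expectation is over a fresh $\bO$ with $\widehat{\theta}\LSS$ held fixed. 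The first term gives the CLT after Slutsky, since $N^{-1}\sum A_i\to\EXP(A)$. This yields the limit $N(0,\VAR\{\InfFt^*\})$, and Theorem \ref{thm-IF binary Z} identifies this variance as the efficiency bound.

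\textbf{Empirical process term.} Conditional on $\II_k^c$, the second term has mean zero and variance $|\II_k|^{-1}\|\widehat{\phi}\LSS-\phi^*\|_{P,2}^2$. By Chebyshev it is $o_P(N^{-1/2})$ once $\|\widehat{\phi}\LSS-\phi^*\|_{P,2}=o_P(1)$. To get this, we first show that $\theta\mapsto(\alpha,\beta,\mu,\omega)(\cdot\con\theta)$ is Lipschitz in $L^2$ on the set of $\theta$ satisfying \HL{A3}.
\begin{itemize}
\item For $g^\star$, we use the global contraction of $\overline{\Psi}$ from Theorem \ref{thm-global contraction-binary} (supplement). Perturbing $\theta$ perturbs the map by $O(r_Y+r_A)$, and a contraction fixed point moves by at most that amount divided by one minus the contraction factor. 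Bounds from \HL{A3} make every ratio in \eqref{eq-Psi-binary}, \eqref{eq-alpha g-binary}, and \eqref{eq-beta g-binary} Lipschitz.
\item For $\omega$, the closed form in Theorem \ref{thm-IF binary Z} involves the denominator $\EXP\{Z\cond A=1,\potY{0},\bX\}-\EXP(Z\cond\bX)$ and the linear-system solution $(B,C)$. Both stay uniformly bounded away from singularity by \HL{A5}, which is why \HL{A5} is imposed on the estimates as well as on the truth.
\end{itemize}
With \HL{A4} and \HL{A6}, this gives consistency of $\widehat{\phi}\LSS$.

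\textbf{Second-order remainder.} This is the main obstacle: we must show $\EXP\{\widehat{\phi}\LSS-\phi^*\}$ is bounded by the products in \eqref{eq-bias structure}. The plan is to write it as $\EXP_{P^*}\{\phi(\bO;\widehat{\theta})\}-\EXP_{\widehat{P}}\{\phi(\bO;\widehat{\theta})\}$ minus $\tau$-differences, i.e., a von Mises expansion. Because $\phi$ is the EIF, the expansion $\tau(\widehat{P})-\tau(P^*)+\EXP_{P^*}\{\InfFt(\bO;\widehat{P})\}$ has no first-order term. To exhibit the products explicitly, we use that $\widehat{P}$ is a law in $\model$ parameterized by $\widehat{\theta}$, and expand along the path $\theta_t=\theta^*+t(\widehat{\theta}-\theta^*)$. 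Each $\theta_t$ satisfies the model constraints because the parameterization is variationally independent (Section \ref{sec-our parameterization}). The remainder is then $\int_0^1(1-t)\,\partial_t^2\{\cdot\}\,dt$, a quadratic form in $(\widehat{f}_Z-\fz^*,\widehat{f}_A-\fa^*,\widehat{f}_Y-\fy^*)$.

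It remains to argue that the $(\fz,\fz)$ block vanishes, so that $r_Z^2$ does not appear. The reason is that $\fz$ enters $\tau$ and $\phi$ only through the centering in \HL{$\omega$-i} and in $\omega^*$. We would verify this directly: the $Z$-centering in $\omega^*$ means that, holding $(\fa,\fy)$ at the truth, $\EXP\{\widehat{\phi}-\phi^*\}$ is exactly zero for any $\widehat{f}_Z$. This mirrors the usual argument that the mean of a doubly centered function is invariant to the marginal. Cauchy–Schwarz together with the Lipschitz bounds then gives $O_P(r_Y^2+r_A^2+r_Yr_A+r_Yr_Z+r_Ar_Z)=o_P(N^{-1/2})$ by \HL{A7}.

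\textbf{Variance estimator.} Consistency of $\widehat{\sigma}^2$ follows from the same $L^2$ consistency of $\widehat{\phi}\LSS$, from $\widehat{\tau}\to\tau^*$, and from the law of large numbers, using boundedness from \HL{A3}–\HL{A4}.
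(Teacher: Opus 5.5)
Your architecture is the paper's: the three-term split per fold, conditional Chebyshev for the cross-fitted empirical-process term, Slutsky for $\AVER(A)$, and the LLN for $\widehat{\sigma}^2$. You are also right that no $\{r_Z\LSS\}^2$ term appears, because the remainder vanishes identically when $(\fa,\fy)=(\fa^*,\fy^*)$.

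The gap is the second-order step. You Taylor-expand along $\theta_t=\theta^*+t(\widehat{\theta}\LSS-\theta^*)$ and claim each $\theta_t$ lies in $\model$ by variational independence. That property (Section \ref{sec-our parameterization}) only rules out equality restrictions; $\model$ still imposes inequalities that the segment can violate.
\begin{itemize}
\item $\model$ requires $f(Y=y,A=0\cond Z=z_m,\bX)>f(Y=y,A=0\cond Z=z_M,\bX)$ for all $y$ (Lemma \ref{lemma-beta and gamma}, Theorem \ref{thm-falsification}). This is bilinear in $(\fa,\fy)$ and not convex, so two admissible $\theta$'s can have an inadmissible midpoint. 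There $\Psi$ has no positive fixed point and $\alpha_t$ does not exist.
\item The $g^\star$-bound in \HL{A3} is not inherited by $\theta_t$.
\item \HL{A5} only bounds $|\widehat{\beta}\LSS(1,\bX)-\widehat{\beta}\LSS(0,\bX)|$, so it permits $\widehat{z}_m(\bX)\neq z_m(\bX)$. At such $\bX$ the segment passes through a point with $f_t(A=1\cond Z=1,\bX)=f_t(A=1\cond Z=0,\bX)$. There \HL{IV3} fails, $\alpha_t$ is not identified, and the linear system for $(B,C)$ is singular. With only $L^2(P)$ consistency in \HL{A6}, such $\bX$ are not automatically negligible.
\end{itemize}
Even on a good path, the two claims that carry the proof are left unverified.
\begin{itemize}
\item You need $\partial_t^2$ of the remainder bounded, uniformly in $t$, by products of $L^2(P)$ errors. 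The map runs through $g^\star$, nested conditional expectations and $1/\{\EXP(Z\cond A=1,\potY{0},\bX)-\EXP(Z\cond\bX)\}$, so this is not a formality.
\item You checked that the $(\fz,\fz)$ block vanishes only at $(\fa,\fy)=(\fa^*,\fy^*)$. The integral remainder evaluates the Hessian at $\theta_t$, where that block is merely small and needs third-order control.
\end{itemize}

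The paper needs no intermediate path. It proves the exact identity
\[
\EXPk\{\widehat{\phi}\LSS-\phi^*\}=-\EXPk\Big[A\Big\{\frac{\pihat(\potY{0},Z,\bX)}{\pi^*(\potY{0},Z,\bX)}-1\Big\}h-\what(\potY{0},Z,\bX)\Big],\qquad h=\{\potY{0}-\what\}-\muhat(Z,\bX\con \potY{0}-\what).
\]
It then splits $\pihat/\pi^*-1$ into an $\alpha$-part, a $\beta$-part and their product, and conditions $h$ on $(A=1,\potY{0},\bX)$ and on $(A=1,Z,\bX)$.
\begin{itemize}
\item The first conditional mean is $O_P(r_A\LSS+r_Y\LSS)$ because $\what$ solves the \emph{estimated} version of \eqref{eq-w IE}; the second is small because $\muhat$ estimates $\mu^*$ (Lemma \ref{lemma-supp-convergence}(x)).
\item $\EXPk\{\what\}$ is a product of the two centering errors, since $\potY{0}\indep Z\cond\bX$ (Lemma \ref{lemma-supp-convergence}(xi)).
\end{itemize}
Only Lipschitz bounds at the endpoints $\theta^*$ and $\widehat{\theta}\LSS$ are used. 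At any $\bX$ where the two disagree grossly, those bounds hold trivially by boundedness.

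Separately, your Lipschitz argument for $g^\star$ via Theorem \ref{thm-global contraction-binary} works in the sup-ratio divergence $\mathfrak{d}$. It therefore controls $g^\star$ through $\sup_y|\fyhat\LSS-\fy^*|$, not through $r_Y\LSS$. To match \HL{A6}--\HL{A7}, use instead one of the following:
\begin{itemize}
\item for fixed scalar $R_D(\bX)$, the fixed point is explicit pointwise in $y$ by \eqref{eq-proof-g0definition}, which gives pointwise and hence $L^2$ Lipschitz bounds;
\item the implicit-function argument of Lemma \ref{lemma-supp-convergence}(ii).
\end{itemize}
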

Theorem \ref{thm-AN} also implies that Wald-type confidence intervals for $\tau^*$ can be constructed. Specifically, let $z_{c}$ denote the 100$c$-th percentile of the standard normal distribution. Then, a $100(1-c)$\% confidence interval for $\tau^*$ is given by $(\widehat{\tau} + z_{c/2} \widehat{\sigma}/N^{1/2}, \widehat{\tau} + z_{1-c/2} \widehat{\sigma}/N^{1/2})$.

\subsection{Comparison of Required Estimation Errors Across Parameterizations} \label{sec-bias comparison}

We conclude this Section by detailing how Assumption \HL{A7} differs from conditions in previous works \citep{Sun2018, Liu2020, Sun2025}; we remind the reader that these parameterizations are reviewed in Section \ref{sec-previous parameterization}. Under the parameterization of \citet{Sun2018} and \citet{Liu2020}, a condition corresponding to \HL{A7} can be expressed as
\begin{align*}
    \{ r_{\alpha}\LSS \}^2
    &
    +
    r_{\alpha}\LSS r_{\beta} \LSS 
    +
    r_{\alpha}\LSS r_{Y} \LSS 
    +
    r_{\beta}\LSS r_{Y}\LSS 
    \\
    &
    +
    r_{\alpha}\LSS r_{Z}\LSS  
    +
    r_{\beta}\LSS r_{Z}\LSS  
    +
    r_{Y}\LSS r_{Z}\LSS  
    =
    o_P(N^{-1/2}) \ .
    \numeq
    \label{eq-Liu Bias}
\end{align*}
Here, $r_{\alpha}\LSS$ and $r_{\beta}\LSS$ denote the $L^2(P)$-norm estimation errors of $\alpha^*$ and $\beta^*$, respectively, defined analogously to the estimation errors in \eqref{eq-error}. The derivation of \eqref{eq-Liu Bias} is provided in {\SUPP} \ref{sec-supp-other para}.
Under this parameterization, $r_{\alpha}\LSS$ is required to attain an $o_P(N^{-1/4})$ rate, while weaker requirements are imposed on the other three error terms $r_{\beta}\LSS$, $r_{Y}\LSS$, and $r_{Z}\LSS$. Specifically, it suffices that the six cross-product errors in \eqref{eq-Liu Bias} are $o_P(N^{-1/2})$, rather than requiring each of $r_{\beta}\LSS$, $r_{Y}\LSS$, and $r_{Z}\LSS$ individually to be $o_P(N^{-1/4})$. When all nuisance functions are parametrically specified, \eqref{eq-Liu Bias} implies that the working model for $\alpha$ must be correctly specified, whereas it is sufficient for only two of the other three working models ($\beta$, $\fy$, and $\fz$) to be correctly specified. Of note, for the asymptotic normality of their estimators, \citet{Sun2018} and \citet{Liu2020} required both $\alpha$ and $\fz$ to be correctly specified, while only one of $\beta$ or $\fy$ needed to be correctly specified; however, the correct specification of $\fz$ can in fact be relaxed.

In contrast, under the parameterization of \citet{Sun2025}, a parallel condition to \HL{A7} takes the form of
\begin{align*}
    \{ r_{\alpha}\LSS \}^2
    +
    \{ r_{\beta} \LSS \}^2
    &
    +
    r_{\alpha}\LSS r_{\beta} \LSS 
    +
    r_{\alpha}\LSS r_{\potY{0}} \LSS 
    +
    r_{\beta}\LSS r_{\potY{0}}\LSS 
    \\
    &
    +
    r_{\alpha}\LSS r_{Z}\LSS  
    +
    r_{\beta}\LSS  r_{Z}\LSS  
    +
    r_{\potY{0}}\LSS  r_{Z}\LSS  
    =
    o_P(N^{-1/2}) \ ,
    \numeq 
    \label{eq-SMW Bias}
\end{align*}
where $r_{\potY{0}}\LSS$ denotes the $L^2(P)$-norm estimation error of $f_{\potY{0}}^*$; see {\SUPP} \ref{sec-supp-other para} for a detailed derivation of \eqref{eq-SMW Bias}. Unlike the previous case, this condition requires both $r_{\alpha}\LSS$ and $r_{\beta}\LSS$ to attain an $o_P(N^{-1/4})$ rate, while weaker requirements apply to $r_{\potY{0}}\LSS$ and $r_{Z}\LSS$. Specifically, it is sufficient that the six cross-product terms in \eqref{eq-SMW Bias} converge at rate $o_P(N^{-1/2})$, without requiring $r_{\potY{0}}\LSS$ and $r_{Z}\LSS$ themselves to achieve an $o_P(N^{-1/4})$ rate. Consequently, with parametric estimation, working models $\alpha$ and $\beta$ (and thus that for the extended propensity score $\pi$ in \eqref{eq-nuis-pi}) must be correctly specified, while only one of $f_{\potY{0}}$ or $\fz$ needs to be correctly specified.

While it is unsurprising that the requirements across these three parameterizations differ substantially, our approach provides a notable practical advantage when empirically assessing the necessary conditions. To illustrate this, consider the scenario in which all nuisance functions are parametrically estimated. The parameterizations by others require correct specification of $\alpha$; however, $\alpha$ is a counterfactual quantity that cannot be directly observed, making empirical assessment of goodness-of-fit challenging. In contrast, our parameterization avoids this issue by relying solely on densities derived from the observed data as model parameters. This allows goodness-of-fit to be, in principle, directly assessed using standard methods and facilitates the selection of appropriate parametric models based entirely on the observed data. Furthermore, our approach accommodates standard tuning and regularization for nuisance functions when using adaptive nonparametric or machine learning methods. Together, these features highlight a practical benefit of our parameterization.

\section{Falsification Implications} \label{sec-falsification}

While many key IV assumptions (e.g., monotonicity \citep{Angrist1996}) cannot be directly verified, they often yield conditions that can be refuted by the observed data \citep{Huber2015, Kitagawa2015, Mourifie2017, Keele2019, Danieli2025}. In other words, while these assumptions cannot be proven true, they can be empirically falsified. These falsifiable implications provide a practical diagnostic tool; specifically, if falsifiable implications appear violated, the IV approach is likely invalid for the data at hand, as the underlying IV assumptions are refuted. Conversely, if the implications hold, the IV approach is not strongly contradicted, providing supporting evidence for the credibility of the IV model. Therefore, by focusing on falsifiable implications, researchers can systematically assess the plausibility of IV assumptions and strengthen the robustness of their analysis.

Motivated by this principle, we investigate the falsifiability of the logit-separable IV model under assumptions \HL{IV1}-\HL{IV4}, an important aspect that has been largely overlooked in prior studies relying on the same IV assumptions \citep{Sun2018, Liu2020, Sun2025}. Notably, the logit-separable IV model entails falsifiable implications, which are formalized in the following Theorem:
\begin{theorem} \label{thm-falsification}
    Suppose that Assumptions \HL{A1}-\HL{A2} and \HL{IV1}-\HL{IV4} hold. Then, the following result holds for all $y\in \suppYX$ and $\bX \in \suppX$:
    \begin{align*}
    \frac{
        f^*(Y=y,A=0 \cond Z=1,\bX) - f^*(Y=y,A=0 \cond Z=0,\bX) }{ 
        f^*(A=0 \cond Z=1,\bX) - f^*(A=0 \cond Z=0,\bX)
        } \geq 0 \ .
        \numeq
        \label{eq-falsification}
    \end{align*}
\end{theorem}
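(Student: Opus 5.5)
The plan is to write both the numerator and the denominator of \eqref{eq-falsification} as integrals against the counterfactual law of $\potY{0} \cond \bX$, and then read off that they have the same sign pointwise in $y$. Under \HL{A1}, \HL{IV1} and \HL{IV2}, for each $z$ we have
\begin{align*}
f^*(Y=y,A=0 \cond Z=z,\bX) = f^*(\potY{0}=y,A=0 \cond Z=z,\bX) = f^*(A=0 \cond \potY{0}=y,Z=z,\bX)\, f^*(\potY{0}=y \cond \bX) \ ,
\end{align*}
because $Y=\potY{0}$ on $\{A=0\}$ and $\potY{0} \indep Z \cond \bX$. Under \HL{IV4} we can write $f^*(A=0 \cond \potY{0}=y,Z=z,\bX) = \{1+\alpha^*(y,\bX)\beta^*(z,\bX)\}^{-1}$. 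Integrating over $y$ then gives
\begin{align*}
f^*(A=0\cond Z=z,\bX)=\int \{1+\alpha^*(t,\bX)\beta^*(z,\bX)\}^{-1} f^*(\potY{0}=t\cond\bX)\,dt \ .
\end{align*}

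Next, for fixed $y$, consider the difference across $z$:
\begin{align*}
\frac{1}{1+\alpha^*(y,\bX)\beta^*(1,\bX)}-\frac{1}{1+\alpha^*(y,\bX)\beta^*(0,\bX)} = \frac{\alpha^*(y,\bX)\{\beta^*(0,\bX)-\beta^*(1,\bX)\}}{\{1+\alpha^*(y,\bX)\beta^*(1,\bX)\}\{1+\alpha^*(y,\bX)\beta^*(0,\bX)\}} \ .
\end{align*}
Since $\alpha^*>0$, this difference has the sign of $\beta^*(0,\bX)-\beta^*(1,\bX)$ for every $y \in \suppYX$, and that sign does not depend on $y$. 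This separability is the crux of the argument: it is exactly what \HL{IV4} supplies, because without it the sign could change with $y$.

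It then remains to handle the two cases for the sign of $\beta^*(0,\bX)-\beta^*(1,\bX)$.
\begin{itemize}
\item If $\beta^*(1,\bX)\neq\beta^*(0,\bX)$, the numerator equals this same-signed difference multiplied by $f^*(\potY{0}=y\cond\bX)\ge 0$. The denominator is its integral over $y$, which is nonzero because $\alpha^*>0$ and $f^*(\potY{0}\cond\bX)$ has positive mass on $\suppYX$. So the numerator and denominator share the same sign, and the ratio is $\ge 0$.
\item If $\beta^*(1,\bX)=\beta^*(0,\bX)$, then $f^*(A \cond Z,\bX)$ does not depend on $Z$, which contradicts \HL{IV3}. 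So this case cannot occur, and the denominator is nonzero, as \eqref{eq-falsification} implicitly requires.
\end{itemize}

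The only delicate step is ruling out a vanishing denominator. I expect this to follow from the second bullet: under \HL{IV4} a zero denominator forces $\beta^*(1,\bX)=\beta^*(0,\bX)$, which \HL{IV3} excludes. Everything else is direct substitution.
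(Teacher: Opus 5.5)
Your proposal is correct and takes essentially the paper's route. The paper's proof simply invokes Lemma~\ref{lemma-beta and gamma}, which rests on the same representation $f^*(Y=y,A=0\cond Z,\bX)=f^*(\potY{0}=y\cond\bX)/\{1+\alpha^*(y,\bX)\beta^*(Z,\bX)\}$ and on the fact that the sign of $\beta^*(1,\bX)-\beta^*(0,\bX)$ fixes the sign of the numerator uniformly in $y$, and hence of the denominator. Your explicit appeal to \HL{IV3} (read pointwise in $\bX$, as the paper does) to rule out a zero denominator makes precise a point the paper leaves implicit.
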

 
Theorem \ref{thm-falsification} provides a basis for constructing a falsification test for the IV model. Specifically, if condition \eqref{eq-falsification} is violated for some $(y,\bX)$, then at least one of the IV assumptions must be violated provided that Assumptions \HL{A1} and \HL{A2} hold. Details on the implementation of the falsification test are provided in {\SUPP} \ref{sec-supp-falsification}.

Condition \eqref{eq-falsification} is closely related to the results of \citet{Kitagawa2015}, which investigated falsifiable implications of the IV model under the monotonicity assumption of \citet{Angrist1996}. Suppressing covariates for simplicity, \eqref{eq-falsification} can be expressed as
\begin{align*}
\begin{array}{l}
\text{if } f^*(A=1 \cond Z=0) > f^*(A=1 \cond Z=1) , \\
\hspace*{2cm} \text{then } f^*(Y=y, A=0 \cond Z=1) \geq f^*(Y=y, A=0 \cond Z=0) \ ; \\[0.25cm]
\text{if } f^*(A=1 \cond Z=0) < f^*(A=1 \cond Z=1), \\
\hspace*{2cm} \text{then } f^*(Y=y, A=0 \cond Z=0) \geq f^*(Y=y, A=0 \cond Z=1) \ .
\end{array}
\numeq
\label{eq-falsification binary}
\end{align*}
In contrast, the condition in \citet{Kitagawa2015} can be written as
\begin{align*}
\begin{array}{l}
f^*(Y=y, A=1 \cond Z=1) \geq f^*(Y=y, A=1 \cond Z=0) \ , \\[0.25cm]
f^*(Y=y, A=0 \cond Z=0) \geq f^*(Y=y, A=0 \cond Z=1) \ .
\end{array}
\numeq
\label{eq-falsification Kitagawa}
\end{align*}
Both \eqref{eq-falsification binary} and \eqref{eq-falsification Kitagawa} impose restrictions on the joint distribution of $(Y,A)$ conditional on $Z$. However, \eqref{eq-falsification binary} involves only the $A=0$ case, whereas \eqref{eq-falsification Kitagawa} imposes constraints for both $A=0$ and $A=1$ cases. Moreover, in \eqref{eq-falsification binary}, the direction of the inequality for $f(Y,A=0 \cond Z)$ depends on the observed association between $A$ and $Z$. In contrast, in \eqref{eq-falsification Kitagawa}, only the case $f^*(Y=y,A=0 \cond Z=0) \geq f^*(Y=y,A=0 \cond Z=1)$ is possible for the $A=0$ case. This difference arises because our IV framework does not impose a directional restriction on the relationship between $A$ and $Z$, whereas the monotonicity assumption does.

\section{Simulation} \label{sec-simulation}

We conducted a simulation study to evaluate the finite-sample performance of the proposed estimator. We considered two data-generating processes (DGPs), one with a continuous outcome and the other with a binary outcome. For each unit, we generated a three-dimensional vector of pre-treatment covariates $\bX=(X_{1},X_{3},X_{3})\T$, where each component was independently drawn from a standard normal distribution. Let $W = \sum_{j=1}^{3} X_{j}$. The instrument $Z$ was then generated from $Z \sim \text{Ber}(\text{expit}(0.1 W) )$ where $\text{expit}(v) = \{ 1 + \exp(-v) \}^{-1}$. The continuous and binary potential outcomes were generated as follows:
\begin{align*}
    &
    \text{Continuous $Y$:}
    &&
    \potY{0} \sim N ( 0.1 W, 1  )
    \ , 
    &&
    \potY{1} \sim N ( 1+0.1 W, 0.25^2 )
    \ , 
    \\
    &
    \text{Binary $Y$:}
    &&
    \potY{0} \sim \text{Ber} ( \text{expit}(0.2 W - 0.25 ) )
    \ , 
    &&
    \potY{1} \sim \text{Ber} ( \text{expit}(0.2 W + 0.25 ) )
    \ .
\end{align*}
Lastly, the treatment $A$ was generated from $A \cond \{ \potY{0},Z,\bX \} \sim \text{Ber}( \text{expit} ( \log \alpha^*(\potY{0},\bX) + \log \beta^*(Z,\bX) ) )$; here, $\log \beta^*(z,\bX) = -2 + 4z + 0.1 W$, and $\log \alpha^*(y,\bX) =  (0.4+0.05W) \mathfrak{a}^*(y) $. The function $\mathfrak{a}^*$ was specified according to the outcome type as follows:
\begin{align*}
    & \text{Continuous $Y$:}
    &&
    \mathfrak{a}^*(y)
    =
    \big\{ 1 - (y-1)^{-2}  \big\}
    \ind \big\{ y \notin (0,2) \big\}
    +
    y (y-2)
    \ind \big\{ y \in [0,2] \big\} 
    \ ,
    \\
    &
    \text{Binary $Y$:}
    &&
    \mathfrak{a}^*(y) = 2y \ .
\end{align*}
The corresponding ATTs for the continuous and binary outcome DGPs are 1 and 0.082, respectively.

For each DGP, we generated $N \in \{1000, 2000, 4000\}$ study units. Based on these units, the proposed estimator $\widehat{\tau}$ was obtained following the procedure described in Section \ref{sec-estimation}. For inference, we constructed 95\% Wald-type confidence intervals using the result of Theorem \ref{thm-AN}. Performance of the proposed estimators and the corresponding confidence intervals was evaluated based on 1000 repetitions for each value of $N$. To mitigate variability induced by random sample splitting, we adopted the median adjustment based on three repetitions of the cross-fitting procedure; see {\SUPP} \ref{sec-supp-median} for details.

As competing estimators, we considered the following three:
\begin{itemize}
    \item[(i)] $\widehat{\tau}_{\text{2SLS}}$: The coefficient of $\widehat{A}$ in the linear regression of $Y$ on $(\widehat{A},\bX)$, where $\widehat{A}$ is obtained from regressing $A$ on $(Z,\bX)$. 
    \item[(ii)] $\widehat{\tau}_{\text{Ign}}$: The estimator based on the EIF for the ATT under the assumption of no unmeasured confounding \citep{Hahn1998}, treating $(Z,\bX)$ as pre-treatment covariates.
    \item[(iii)] $\widehat{\tau}_{\text{LM}}$: The coefficient of $A$ in the linear regression model of $Y$ on $(A,Z,\bX)$, again treating $(Z,\bX)$ as pre-treatment covariates.
\end{itemize}
The first competitor, $\widehat{\tau}_{\text{2SLS}}$, commonly referred to as the two-stage least squares (2SLS) estimator, was included because it is a widely used IV method that can account for potential unmeasured confounding. However, it relies on the linearity assumption for consistency and may therefore yield biased estimates of the ATT under our DGPs. The other two competitors, $\widehat{\tau}_{\text{Ign}}$ and $\widehat{\tau}_{\text{LM}}$, do not account for potential unmeasured confounding and are thus expected to perform poorly. For $\widehat{\tau}_{\text{Ign}}$, we also implemented a cross-fitting procedure analogous to that used for our proposed estimator $\widehat{\tau}$; see {\SUPP} \ref{sec-supp-simulation} for implementation details.

The top panel of Figure \ref{fig-Simulation} summarizes the empirical distributions of the estimators. In both the continuous and binary $Y$ settings, the proposed estimator $\widehat{\tau}$ exhibits negligible bias across all sample sizes $N$, with variability decreasing as $N$ increases. In contrast, the competing estimators show notable bias, arising either from model misspecification or the failure to account for potential unmeasured confounding. The bottom panel presents numerical summaries for $\widehat{\tau}$. Consistent with the graphical results, both standard error estimates decrease as $N$ increases and are of comparable magnitude. The empirical coverage rates are close to the nominal 95\% level. Overall, the simulation results indicate that the finite-sample performance of the proposed estimator $\widehat{\tau}$ is consistent with the asymptotic properties established in Theorem \ref{thm-AN}.

\begin{figure}[!htp]
    \centering
    \includegraphics[width=1\linewidth]{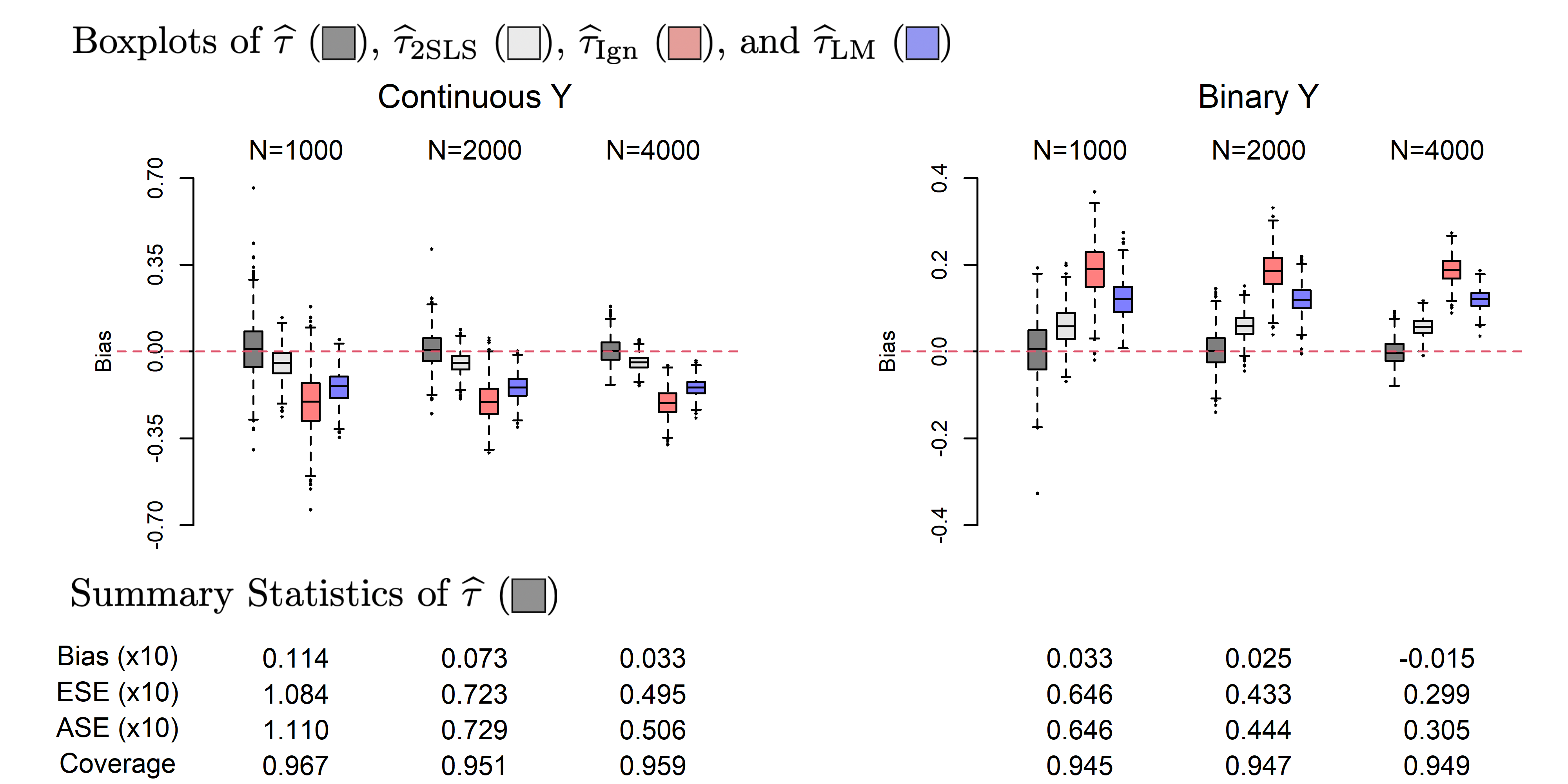}
    \caption{\footnotesize A Graphical Summary of the Simulation Results. The left and right panels correspond to the DGPs with continuous and binary $Y$, respectively. In the top panel, boxplots show the bias of each estimator for sample sizes $N \in \{1000,2000,4000\}$. The bottom panel presents numerical summaries for $\widehat{\tau}$. Each row shows the empirical bias, empirical standard error (ESE), asymptotic standard error based on the proposed variance estimator (ASE), and the empirical coverage rate of 95\% confidence intervals computed using the ASE. Biases and standard errors are scaled by a factor of 10.}
    \label{fig-Simulation}
\end{figure}

\section{Application} \label{sec-application}

We applied the proposed method to a real-world application based on the Job Corps study \citep{JobCorps2001, JobCorps2008}. Administered by the U.S. Department of Labor, Job Corps is the nation’s largest vocational training program for disadvantaged youth. As of 2023, it has served over two million participants and continues to support more than 40,000 young people annually across centers nationwide \citep{JobCorps2024, JobCorps2025}. The program offers intensive vocational training, academic education, and a range of support services aimed at improving employment outcomes. Given its scale and policy significance, there has been sustained interest in understanding the relationship between Job Corps participation and subsequent labor market outcomes. The most prominent effort in this regard is the National Job Corps Study, a nationally representative randomized experiment conducted in 2001 \citep{JobCorps2001, JobCorps2008}. While the study offers rich demographic data, unobserved factors, such as personality traits and latent characteristics, likely influence both program participation and outcomes, acting as unmeasured confounders. To address this, many studies have employed random assignment to the program group as an IV \citep{JobCorps2001, JobCorps2008, Blanco2013, JobCorps2019_Heterogeneous, Liu2025MIV}. Our analysis, detailed below, can be viewed as an extension of these works, conducted under a different IV assumption, specifically \HL{IV4}.

We used a dataset from the \texttt{causalweight} R package \citep{causalweightpackage}, which includes $N=9240$ individuals. The pre-treatment covariates $\bX$ comprised sex, age, race, ethnicity, education, English as a mother tongue, cohabitation or marital status, having children, and prior work experience. The candidate IV, $Z \in \{0,1\}$, indicates random assignment to the program group ($Z=1$) or control group ($Z=0$). Individuals assigned to the program group were eligible to participate in the Job Corps program, while those in the control group were ineligible for Job Corps for three years, although they could participate in other education or vocational training programs. Treatment receipt, $A \in \{0,1\}$, was defined as actual participation in any education or vocational training program (Job Corps or alternatives) within two years of random assignment, with $A=1$ indicating participation and $A=0$ indicating non-participation. The outcome, $Y \in [0,52]$, was the number of working weeks measured four years after random assignment.
 
In this context, the validity of the candidate IV $Z$ can be justified as follows. First, \HL{IV1} appears plausible, as the effect of random assignment to Job Corps eligibility on working weeks was expected to operate only indirectly through program participation. Second, \HL{IV2} holds because $Z$ was randomly assigned. Finally, \HL{IV3} is well-justified, as assignment to the program group would have substantially increased the likelihood of Job Corps enrollment due to the study design, and is confirmed empirically; see {\SUPP} \ref{sec-supp-application} for details. Therefore, $Z$ can be reasonably used as a valid IV for this analysis. 

The key assumption of our framework, \HL{IV4}, can be justified using the generative model described in Section \ref{sec-generative model}. Specifically, if each individual decides whether to participate in education or vocational training by comparing net utilities---perceived outcome minus cost plus an idiosyncratic error---then this selection mechanism is consistent with \HL{IV4}. Thus, establishing the plausibility of \HL{IV4} reduces to showing that the structural assumptions of the generative model are reasonable, as outlined below.

First, in the Job Corps context, the unobserved characteristic, denoted by $U$, likely represents latent traits such as intrinsic motivation, career ambition, or learning aptitude. Given $U$, the model defines the potential outcome as $\potY{a} = h_{a}(U, \bX)$, which is assumed to be strictly monotone in $U$. This assumption is plausible because higher levels of motivation or aptitude generally yield greater labor market returns. The model also incorporates $\overline{h}_a(U, \bX)$, representing an individual's subjective belief about their potential outcome. Crucially, $\overline{h}_a$ is allowed to differ from the true production function $h_{a}$ and does not require monotonicity, reflecting the possibility of imperfect self-assessment. It is further reasonable to assume that random assignment to Job Corps eligibility did not affect potential outcomes or individuals’ self-assessments, supporting the independence of both $h_a$ and $\overline{h}_a$ from $Z$. 

Another crucial component of the model is the cost function $c_{a}(Z, \bX)$, representing the cost of participating $(a=1)$ or not $(a=0)$ in education or vocational programs conditional on $(Z, \bX)$. In the Job Corps context, this cost can be interpreted as structural barriers to participation, such as administrative requirements, eligibility rules, or program awareness. Given this interpretation, the independence of $c_{a}$ from $U$ is reasonable for the Job Corps study, as an individual's intrinsic motivation or aptitude likely did not directly alter these structural costs. The model also assumes that this cost responds differentially to the instrument, i.e., $c_1(1, \bX) - c_1(0, \bX) \neq c_0(1, \bX) - c_0(0, \bX)$. This condition aligns with the study design. Note that assignment to the program group ($Z=1$) significantly lowered the cost of participation, as these individuals were eligible to participate in Job Corps. Conversely, assignment to the control group ($Z=0$) imposed a three-year embargo on Job Corps enrollment, forcing individuals to pursue alternative programs if they wished to participate and, consequently, face additional barriers. As a result, the Job Corps eligibility assignment likely substantially lowered the cost of participation, i.e., $c_1(1, \bX) \ll c_1(0, \bX)$, while the cost of not participating remained similar for both groups, i.e., $c_0(1, \bX) \approx c_0(0, \bX)$. This asymmetry justifies the required restriction on the cost function.
 
Under the linear utility model in \eqref{eq-linear utility}, it is reasonable to assume that individuals expect better labor market outcomes if they participate in an education or vocational program, implying $b_{h} \equiv b_{h,1} - b_{h,0} > 0$. At the same time, the coefficient on the cost component is expected to satisfy $b_{c} \equiv b_{c,1} - b_{c,0} \ll 0$, reflecting the asymmetry in participation costs described above.  The ratio $|b_h / b_c|$ can therefore be interpreted as the increase in motivation or aptitude required to offset the absence of Job Corps eligibility and yield the same likelihood of program participation. Nevertheless, the linear utility model rules out diminishing returns: the marginal effect of additional self-motivation on participation may decrease at higher levels of motivation. Since our generative model allows for nonlinear utility functions, more flexible specifications could be considered to capture these nuanced behavioral responses, although we omit such extensions here. Together, these elements provide a clear rationale and interpretation of the generative model. 

Using data from the Job Corps study, we estimated the ATT of participating in education or vocational training within two years of random assignment on the number of working weeks four years later. Consistent with our simulation studies, we compared four estimators: $\widehat{\tau}$, $\widehat{\tau}_{\text{2SLS}}$, $\widehat{\tau}_{\text{Ign}}$, and $\widehat{\tau}_{\text{LM}}$. For $\widehat{\tau}$ and $\widehat{\tau}_{\text{Ign}}$, we
applied the median adjustment by repeating cross-fitting 200 times; details are provided in {\SUPP} \ref{sec-supp-median}. We also empirically evaluated the support condition \HL{A2} and the falsification implication in Theorem \ref{thm-falsification}. Both conditions appear to hold, providing additional support for the validity of the results based on $\widehat{\tau}$ presented below. Further information can be found in {\SUPP} \ref{sec-supp-application}.

Table \ref{tab-ATT} reports the ATT estimates based on the four competing estimators. First, our estimator, $\widehat{\tau}$, indicates that education or vocational training increased participants’ annual work by 15.49 weeks, with the effect statistically significant at the 5\% level. Compared to the other IV estimator, $\widehat{\tau}_{\text{2SLS}}$, our estimator yields a slightly larger effect. This difference likely arises because the two estimators rely on different IV assumptions, and 2SLS requires the linearity assumption and homogeneous effects. Nevertheless, the estimates are not significantly different, as each lies within the 95\% confidence interval of the other. We also find that the two estimators developed under the assumption of no unmeasured confounding, $\widehat{\tau}_{\text{Ign}}$ and $\widehat{\tau}_{\text{LM}}$, show considerably smaller effect sizes compared to the two IV-based estimators. This stark contrast demonstrates that bias due to unmeasured confounding may be substantial.

\begin{table}[!htp]
\scriptsize
\renewcommand{\arraystretch}{1.05} \centering
\setlength{\tabcolsep}{10pt} 
\begin{tabular}{|c|c|c|c|c|}
\hline
Estimator & $\widehat{\tau}$ & $\widehat{\tau}_{\text{2SLS}}$ & $\widehat{\tau}_{\text{Ign}}$ & $\widehat{\tau}_{\text{LM}}$ \\ \hline
Estimate  & 15.49            & 11.23                          & 2.58                          & 2.44                         \\ \hline
ASE       & 4.88             & 2.62                           & 0.44                          & 0.43                         \\ \hline
95\% CI   & (5.92,25.05)     & (6.09,16.36)                   & (1.71,3.45)                   & (1.61,3.28)                  \\ \hline
\end{tabular}
\caption{Analysis Results of the Job Corps Study. The figures are reported in weeks per year.}
\label{tab-ATT}
\end{table}

In {\SUPP} \ref{sec-supp-application}, we further analyzed the QTTs of participation at the first, second, and third quartiles of the working weeks distribution using our IV approach. The results showed that all three QTTs are significantly positive at the 5\% level, indicating that participation increased working time across the distribution. Combined with the ATT results, these findings indicate that education or vocational training had consistently positive effects on participants' working time, both on average and across the distribution.

\section{Concluding Remarks} \label{sec-conclusion}

In this paper, we have focused on the IV framework under the logit-separable treatment mechanism assumption \HL{IV4}, also referred to as the separable binary treatment choice model. We first developed a variationally independent parameterization that depends only on quantities directly available from the observed data and does not involve counterfactual quantities. We then characterized the full class of IFs for the ATT in model $\model$, in which all nuisance functions are a priori unrestricted, and established that, due to the binary IV, it admits a unique closed-form representation and thus becomes the EIF. Building on this result, we constructed an EIF-based estimator that is consistent, asymptotically normal, and semiparametric efficient, provided that some, but not necessarily all, nuisance functions are estimated at a sufficiently fast rate. We further proposed a generative model for the considered IV framework and derived its falsification implications. Finally, we demonstrated the theoretical properties of the proposed estimator through simulation studies and applied our method to a real-world analysis evaluating the impact of education or vocational training on labor market outcomes.

While the focal estimand in the paper has been the ATT, our framework can be seamlessly extended to a broader class of estimands and other settings without imposing parametric assumptions. First, our approach can be adapted to infer nonlinear effects, such as the QTT, by extending the methods of \citet{Berger1994} and \citet{Lee2025}. Second, the logit-separable IV model can be generalized to infer treatment effects defined over the entire population, such as the average treatment effect, beyond causal effects among treated units. Third, it can accommodate settings with missing data under a missing-not-at-random mechanism, enabling inference on outcomes subject to nonignorable missingness. Detailed descriptions of these extensions are provided in {\SUPP} \ref{sec-supp-extension}.

Although this paper focused primarily on the canonical binary instrument case, several results readily extend to general instruments. In particular, Theorem \ref{thm-IF} holds for a general $Z$; however, solving the integral equation in \eqref{eq-w IE} for general $Z$ is practically challenging due to non unique closed-form solution and multiple nested dependencies within the counterfactual distribution. Nevertheless, we provide a closed-form expression for an IF in the case of categorical IVs. Additionally, our new parameterization in Section \ref{sec-our parameterization} remains applicable to general $Z$ following minor modifications to the fixed-point equation. These results are detailed in {\SUPP} \ref{sec-supp-extension}. That said, valid statistical inference for general $Z$ requires further technical development well-beyond the scope of this paper.

Several questions remain open under the current IV framework. First, this paper focuses exclusively on binary treatments; extending methods to more general treatment forms is an important direction for future research. Second, it would be of interest to extend the framework to longitudinal settings in which time-varying covariates, instruments, treatments, and outcomes are observed. Finally, developing additional diagnostic procedures, such as sensitivity analysis methods to assess violations of the logit-separable treatment mechanism \HL{IV4}, is an important direction for future research. These directions will be pursued elsewhere.

\newpage  
{\small
\newpage

\appendix

\renewcommand{\theequation}{S.\arabic{equation}}
\renewcommand{\thetable}{S\arabic{table}}
\renewcommand{\thefigure}{S\arabic{Figure}}
\setcounter{equation}{0}

\section*{Supplementary Material}

This document contains supplementary materials for ``Nonparametric Inference with an Instrumental Variable under a Separable Binary Treatment Choice Model.'' Section \ref{sec-supp-1} provides additional details and results that supplement the main paper. Section \ref{sec-supp-2} presents Lemmas used in this document. Section \ref{sec-supp-proof} contains proofs of the Theorems from the main paper.

\section{Details of the Main Paper} \label{sec-supp-1}



\subsection{Details of the Generative Model} \label{sec-supp-generative model}

\subsubsection{Verification of \protect\HL{IV1}-\protect\HL{IV4} under the Generative Model}

We provide algebraic details showing that the generative model in Section \ref{sec-generative model} of the main paper satisfies \HL{IV1}-\HL{IV4}. 

First, \HL{IV1} is trivially satisfied, since $\potY{a}$ does not depend on the value of $Z$. 

Second, \HL{IV2} can be shown as follows:
\begin{align*}
    U \indep Z \cond \bX
    \quad \Rightarrow \quad 
    h_{a}(U,\bX) \indep Z \cond \bX
    \quad \Rightarrow \quad 
    \potY{a} \indep Z \cond \bX \ .
\end{align*}

Third, we verify \HL{IV4}. Let $c(Z,\bX) = c_1(Z,\bX)-c_0(Z,\bX)$ and $\overline{h}(U,
\bX) = \overline{h}_1(U,
\bX) - \overline{h}_0(U,
\bX)$, and  $\epsilon = \epsilon_0 - \epsilon_1$. It is well-known that $\epsilon \sim \text{Logistic}(0,1)$, where $\text{Logistic}(m,s)$ denotes the logistic distribution with location $m$ and scale $s$. Since $h_{a}(U,\bX)$ is monotone in $U$ for each $\bX$, there exists an inverse function $h_{a}^{-1}(\cdot ,\bX)$ such that $U = h_{a}^{-1}(\potY{a},\bX)$. Consequently, the treatment assignment can be expressed as
\begin{align*}
    A &
    = \ind (\mathcal{U}_1 > \mathcal{U}_0)
    \\
    &
    = 
    \ind \big\{
     \overline{h}_1(U,\bX) - c_{1}(Z,\bX) + \epsilon_1 \geq \overline{h}_0(U,\bX) - c_{0}(Z,\bX) + \epsilon_0
     \big\}
     \\
    &
    = 
    \ind \big\{
     \overline{h}(U,\bX) - c(Z,\bX) \geq  \epsilon
     \big\}
     \\
    &
    = 
    \ind \big\{
    \underbrace{
     \overline{h}(h_{0}^{-1}(\potY{0},\bX) ,\bX) }_{\equiv \lambda(\potY{0},\bX)} - c(Z,\bX) \geq  \epsilon
     \big\} \ .
\end{align*}
This implies that
\begin{align*}
    &
    \Pr \big\{ A = 1 \cond \potY{0},Z,\bX \big\}
    =
    \frac{ 
    \exp \big\{ \lambda(\potY{0},\bX) - c(Z,\bX) \big\} 
    }{
    1+
    \exp \big\{ \lambda(\potY{0},\bX) - c(Z,\bX) \big\}
    }
    \\
    \Rightarrow
    \quad &
    \frac{ \Pr \big\{ A = 1 \cond \potY{0},Z,\bX \big\} } { \Pr \big\{ A = 0 \cond \potY{0},Z,\bX \big\} }
    = \exp \big\{ \lambda(\potY{0},\bX) \} \exp \big\{ - c(Z,\bX) \big\} \ .
\end{align*} 
Therefore, taking $\exp \big\{ \lambda(\potY{0},\bX) \}$ and $\exp \big\{ - c(Z,\bX) \big\}$ as $\alpha(\potY{0},\bX)$ and $\beta(Z, \bX)$, respectively, we conclude that \HL{IV4} is satisfied.

Lastly, we show that \HL{IV3} is satisfied. By the assumption, $c(z,\bX) \neq c(z',\bX)$ for some $z \neq z'$, which further implies $\beta(z,\bX) \neq \beta(z',\bX)$. Then, by Lemma \ref{lemma-beta and gamma}, it follows that $\Pr(A=1 \cond Z=z,\bX) \neq \Pr(A=1 \cond Z=z',\bX)$ for some $z \neq z'$ at each $\bX$, so $A \nindep Z \cond \bX$. 

Therefore, the generative model in Section \ref{sec-generative model} satisfies \HL{IV1}-\HL{IV4}.

\subsubsection{Failure of Identification in the Presence of an Interaction Term}

In this Section, we provide details on why the presence of an interaction term in the generative model generally makes the model unidentifiable. A similar argument can be found in Appendix B of \citet{Liu2020}; we reproduce it here for completeness. Suppose that  $\mathcal{U}_1-\mathcal{U}_0$ is equal to $\mathcal{U}_1-\mathcal{U}_0 = \overline{h}(U,\bX) - c(Z,\bX) + \imath(U,Z,\bX) - \epsilon$. Then,  we find
\begin{align*}
    A &
    = \ind (\mathcal{U}_1 - \mathcal{U}_0>0)
    \\
    &
    = 
    \ind \big\{
     \overline{h}(U,\bX) - c(Z,\bX) + \imath(U,Z,\bX) \geq  \epsilon
     \big\}
     \\
    &
    = 
    \ind \big\{
    \underbrace{
     \overline{h}(h_{0}^{-1}(\potY{0},\bX) ,\bX) }_{\equiv \lambda(\potY{0},\bX)} - c(Z,\bX) +
     \underbrace{ \imath(h_{0}^{-1}(\potY{0},\bX),Z,\bX) }_{\equiv \jmath (\potY{0},Z,\bX)}
     \geq  \epsilon
     \big\} \ .
\end{align*}
This implies 
\begin{align*}
	\log 
    \frac{ \Pr \big\{ A = 1 \cond \potY{0},Z,\bX \big\} } { \Pr \big\{ A = 0 \cond \potY{0},Z,\bX \big\} }
    = \lambda(\potY{0},\bX)  - c(Z,\bX)  + \jmath (\potY{0}, Z,\bX)
     \ .
\end{align*} 

Suppose, for simplicity, that $\potY{0}$ is binary and that $\bX$ is suppressed. Then, there exists a set of  parameters $\theta \equiv (\theta_1,\ldots,\theta_5)$ such that
\begin{align*}
&
\Pr \big\{ A = 0 \cond \potY{0},Z \con \theta \big\}
=
\frac{1}{ 1+ \exp \{ \theta_1 + \theta_2 Z + \theta_3 \potY{0} + \theta_{4} Z \potY{0} \} } \ , 
\\
&
\Pr \big\{ \potY{0}= 1 \con \theta \big\} = \exp(\theta_{5}) \ .
\end{align*} 
Condition 1 of \citet{Liu2020} states that, the joint law $(\potY{0},A,Z)$ is identified if and only if
\begin{align*}
	\frac{ \Pr\{ A=0 \cond \potY{0},Z \con \theta\} }
	{ \Pr\{A=0 \cond \potY{0},Z \con \theta'\} }
	\neq
	\frac{ \Pr\{ \potY{0} \con \theta' \} }{ \Pr\{ \potY{0}  \con \theta \} }
	\ , \quad
	\forall 
	\theta \neq \theta' \ .
	\numeq
	\label{eq-proof-liu id}
\end{align*}

To demonstrate the failure of identification, we specify distinct parameter vectors $\theta$ and $\theta'$ that violate \eqref{eq-proof-liu id}. Consider:
\begin{align*}
\theta &= (0, \log 6, \log 11, 0, \log 0.4), \
&&
\theta' = (\log(5/3), \log 5, \log 3, \log 1.3, \log 0.2).
\end{align*}  
Straightforward algebra confirms that these specifications yield the same observed data distribution, thereby violating the condition in \eqref{eq-proof-liu id}. This example illustrates that the non-separable extended propensity score model renders the joint law $(Y^{(0)}, A, Z)$ not identifiable.

\subsection{Details of the Nuisance Function Estimation} \label{sec-supp-estimation}

We provide additional details omitted from Section \ref{sec-estimation}. We first describe the base machine learning algorithms included in the superlearner method \citep{vvLaan2007}. Our implementation includes the following methods and their corresponding R packages: linear regression via \texttt{glm}, lasso/elastic net via \texttt{glmnet} \citep{glmnet}, splines via \texttt{earth} \citep{earth} and \texttt{polspline} \citep{polspline}, generalized additive models via \texttt{gam} \citep{gam}, boosting via \texttt{xgboost} \citep{xgboost} and \texttt{gbm} \citep{gbm}, random forests via \texttt{ranger} \citep{ranger}, and neural networks via \texttt{RSNNS} \citep{RSNNS}.

Next, for a fixed $z \in \{0,1\}$, the conditional density $Y \cond (A=0,Z=z,\bX)$ is estimated using kernel regression \citep{HallRacineLi2004} with the \texttt{np} R package \citep{np2008}, smoothing splines \citep{gss2014} with the \texttt{gss} R package \citep{gss2014}, and an ensemble of these two methods. Below, we describe the ensemble procedure in detail.

Let $\widehat{f}_{\text{np}|z}$ and $\widehat{f}_{\text{gss}|z}$ denote the conditional density estimates obtained from kernel regression and smoothing splines, respectively. For a weight $w \in [0,1]$, the ensemble density estimate is defined as $\widehat{f}_{w|z} = w \widehat{f}_{\text{np}|z} + (1-w) \widehat{f}_{\text{gss}|z}$. The ensemble weight is chosen by cross-validation, as detailed below. Let $\II_{\text{val}}$ denote a validation set. Over the validation set, the following two evaluation metrics are considered: 

\vspace*{0.2cm}

\noindent (i) Mean squared prediction error (MSPE): let $\text{MSPE}(w)$ be defined as 
\begin{align*}
        \text{MSPE}(w)
        =
        \frac{1}{| \II_{\text{val}} |}
        \sum_{i \in \II_{\text{val}}}
        \ind(A_i=0,Z_i=z)
        \bigg\{ 
        \begin{array}{l}
        \int \widehat{f}_{w|z}^2(Y=y \cond A=0,Z=z,\bX_i)
        \, dy
        \\
        - 
        2 \widehat{f}_{w|z}(Y_i\cond A=0,Z=z,\bX_i) 
        \end{array}
        \bigg\} \ .
    \end{align*}
    This metric is motivated by a well-known approximation to the integrated squared error. Specifically, for a density estimate $\widehat{f}(V=v)$ of the true density $f^*(V=v)$, we have
    \begin{align*}
    &
    \int \big\{ \widehat{f}(V=v) -  f^*(V=v ) \big\}^2 \, dv
    \\
    &
    =
    \int \widehat{f}^2(V=v) \, dv - 2 \int \widehat{f}(V=v) f^*(V=v) \, dv + \text{constant}
    \\
    &
    \simeq
    \int \widehat{f}^2(V=v) \, dv - \frac{2}{N} \sum_{i=1}^{N} \widehat{f}(V_i) + \text{constant} \quad \text{ for $V_1,\ldots,V_N \stackrel{i.i.d.}{\sim} f^*(V=v)$}
    \end{align*}

\vspace*{0.2cm}
    
\noindent (ii) Kullback-Leibler divergence (KL): let $\text{KL}(w)$ be defined as
\begin{align*}
        \text{KL}(w)
        =
        -
        \frac{1}{| \II_{\text{val}} |}
        \sum_{i \in \II_{\text{val}}}
        \ind(A_i=0,Z_i=z)
        \log \widehat{f}_{w|z}(Y_i \cond A=0,Z=z,\bX_i)
         \ .
    \end{align*}

Let $w_{\text{MSPE}}$ and $w_{\text{KL}}$ denote the weights that minimize the MSPE and KL criteria over the validation set, respectively. In the simulation studies and data analysis, we select the final ensemble weight as $w_{\text{ens}} = (w_{\text{MSPE}}+w_{\text{KL}})/2$. The ultimate ensemble estimate of the density is given by $\widehat{f}_{\text{ens}|z} = w_{\text{ens}} \widehat{f}_{\text{np}|z} + (1-w_{\text{ens}}) \widehat{f}_{\text{gss}|z}$.

Finally, we describe the density ratio estimation procedures. We employ unconstrained least-squares importance fitting (uLSIF) and the Kullback-Leibler importance estimation procedure (KLIEP) \citep{DR2009_uLSIF, DensityRatio2007_1, DensityRatio2007_2}, which are implemented in the \texttt{densratio} and \texttt{densityratio} R packages \citep{densratiopackage, densityratiopackage}. Below, we briefly review the uLSIF and KLIEP approaches. 

First, we fix $z \in \{0,1\}$. Accordingly, define $r_Y^*$ as
\begin{align*}
    r_Y^* (y,\bx) = \frac{ f^*(Y=y,\bX=\bx \cond A=0,Z=1-z) }{f^*(Y=y,\bX=\bx \cond A=0,Z=z)} \ .
\end{align*}
Note that $r_Y^*(y,\bX)$ is proportional to $R_Y^*(y,\bx) = f^*(Y=y \cond A=0,Z=1-z,\bX=\bx) / f^*(Y=y \cond A=0,Z=z,\bX=\bx)$, so it suffices to estimate $r_Y^*$. Let $\II_{\text{Est}}$ denote the dataset used for estimation consisting of observations with $A=0$, and let $\II_{\text{Est},z}$ be the subset consisting of observations with $(A=0,Z=z)$.

\vspace*{0.2cm}

\noindent (i) KLIEP: The key idea of the KL importance estimation procedure is to characterize $r_Y^*$ as the minimizer of the KL divergence between the numerator density $f^*(Y=y,\bX=\bx \cond A=0,Z=1-z)$ and the density induced by the denominator and the density ratio, namely $f^*(Y=y,\bX=\bx \cond A=0,Z=z) \cdot r_{Y}^*(y,\bx)$. Accordingly, an estimator of $r_Y^*$, denoted by  $\widehat{r}_Y$, can be obtained by solving the following constrained optimization problem: 
\begin{align}				\label{eq-supp-KL-DR1}
& 
\widehat{r}_Y
=
\argmax_{r \in \mathcal{H}} 
\EXP \big[  \log \{ r(Y,\bX) \} \cond A=0,Z=1-z \big]
\nonumber
\\
&
\text{subject to }
\EXP \big\{ r (Y,\bX) \cond A=0,Z=z \big\} = 1 \ ,
\end{align}
where $\mathcal{H}$ is a Reproducing Kernel Hilbert Space (RKHS) of $(Y,\bX)$ associated with a Gaussian kernel $\mathcal{K}$. Then, the KLIEP estimator $\widehat{r}_{Y,\text{KL}}$ is obtained from the empirical analogue of \eqref{eq-supp-KL-DR1} and takes the form
\begin{align*}
    \widehat{r}_{Y,\text{KL}} (y,\bx) =	\sum_{j \in \II_{\text{Est}} } \widehat{\delta}_{j} \mathcal{K} \big( (y,\bx), (y_j,\bx_j) \big)
    \ ;
\end{align*}
where the nonnegative coefficients $\widehat{\bm{\delta}} = \big\{ \widehat{\delta}_j \big\}_{j \in \II_{\text{Est}}}$ are obtained from
\begin{align*}	 
\widehat{\bm{\delta}} 
=
\argmax_{\bm{\delta}} \AVER_{\II_{\text{Est},1-z}} \big\{ \log \big( K_{11} \bm{\delta} \big) \big\} \text{ subject to } 
\AVER_{\II_{\text{Est},z}} \big( K_{01} \bm{\delta} \big) = 1 \ , \quad 
\bm{\delta} \geq 0 \ .
\end{align*} 
Here, $K_{zz'}$ is the gram matrix of which $(i,j)$th entry is $\mathcal{K} \big( (y_{i},\bx_{i}), (y_{j},\bx_{j} ) \big)$ for $i \in \II_{\text{Est},z}$ and $j \in \mathcal{I}_{\text{Est},z'}$. 

\vspace*{0.2cm}

\noindent (ii) uLSIF: Instead of the KL divergence, we can use the squared loss. Then, \eqref{eq-supp-KL-DR1} can be expressed as
\begin{align*}
\widehat{r}_Y =
\argmin_{r \in \mathcal{H}}
\iint \big\{ r(y,\bx) - r_Y^*(y,\bx) \big\}^2 f^*(Y=y,\bX=\bx \cond A=0,Z=z) \, d(y,\bx)  \ .
\end{align*}
The empirical counterpart of the solution to the equation is 
\begin{align*}
    \widehat{r}_{Y,\text{MSE}} (y,\bx) =	\sum_{j \in \II_{\text{Est}} } \widehat{\delta}_{j} \mathcal{K} \big( (y,\bx), (y_j,\bx_j) \big)
    \ ;
\end{align*} 
where the nonnegative coefficients $\widehat{\bm{\delta}} = \big\{ \widehat{\delta}_j \big\}_{j \in \II_{\text{Est}}}$ are obtained by solving the following optimization problem including $L_2$-regularization term:
\begin{align*}
&	
\widehat{\bm{\delta}}
=
\argmin_{\bm{\delta}}
\Big[
\AVER_{\mathcal{I}_{\text{Est},z}} \big\{ \big( K_{z \circ } \bm{\delta} \big)^2	\big\} 
- 2 \AVER_{\mathcal{I}_{\text{Est},1-z}} \big\{ \big( K_{1-z \circ}  \bm{\delta} \big)^2	\big\}
+
\lambda
\bm{\delta}\T K_{\circ\circ} \bm{\delta}
\Big]
\ , 
\\
&
K_{z\circ} 
=
\Big[ \mathcal{K} \big( (y_{i},\bx_{i}), (y_{j},\bx_{j} ) \big) \Big]_{ i \in \mathcal{I}_{\text{Est},z}, \ j \in \mathcal{I}_{\text{Est}}}  \ , 
K_{\circ\circ} 
=
\Big[ \mathcal{K} \big( (y_{i},\bx_{i}), (y_{j},\bx_{j} ) \big) \Big]_{ i, j \in \mathcal{I}_{\text{Est}} } \ ,
\end{align*}
where $\lambda$ is the regularization parameter that can be chosen from cross-validation.

\vspace*{0.2cm}

Once $\widehat{r}_{Y,\star}$ for $\star \in \{ \text{KL},\text{MSE} \}$ are obtained, define $\widehat{f}_{\star|1-z} \equiv \widehat{f}_{\star|1-z} (Y=y \cond A=0,Z=1-z,\bX)$ which is constructed as a function proportional to $\widehat{r}_{Y,\star}(y,\bX) \widehat{f}_{\text{ens}|z} (Y=y \cond A=0,Z=z,\bX)$, with the normalization constraint $\int \widehat{f}_{\star|1-z} (Y=y \cond A=0,Z=1-z,\bX) \, dy = 1$. One may further consider an ensemble of these estimates, say $ \widehat{f}_{\text{ens}|1-z} = w_{\text{ens}}'  \widehat{f}_{\text{KL}|1-z} + (1-w_{\text{ens}}') \widehat{f}_{\text{MSE}|1-z}$ where the optimal weight $w_{\text{ens}}'$ can be determined analogously to $w_{\text{ens}}$. Accordingly, the conditional density ratio estimator can be constructed as 
\begin{align*}
    \widehat{R}_Y(y,\bx) = 
    \frac{ \widehat{f}_{\text{ens}|1-z}(Y=y \cond A=0,Z=1-z,\bX=\bx) }{ \widehat{f}_{\text{ens}|z}(Y=y \cond A=0,Z=z,\bX=\bx)  } \ .
\end{align*}

All nuisance function estimators require the selection of hyperparameters, including tuning parameters in machine learning algorithms, bandwidths in kernel regression and RKHS-based estimators, and regularization parameters. We recommend selecting these hyperparameters via cross-validation.

\subsection{Bias Structure Under Different Parameterizations} \label{sec-supp-other para}

In this Section, we examine the bias structure of the ATT estimator under the parameterizations proposed by \citet{Sun2018}, \citet{Liu2020}, and \citet{Sun2025}. Throughout, we let $\gamma(Z,\bX) = f(A=1 \cond Z,\bX)/f(A=0 \cond Z, \bX)$ denote the treatment odds conditional on $(Z,\bX)$.

Our analysis builds on the intermediate results and notation introduced in Section \ref{sec-proof-AN}, which contains the proof of Theorem \ref{thm-AN}. As shown in \eqref{eq-proof-bias of ATT estimator}-\eqref{eq-proof-bias of ATT estimator4}, the bias of the ATT estimator admits the following characterization:
\begin{align*}
	&
	N^{-1/2}
	\big\|
	\text{Bias of $\widehat{\tau}$}
	\big\|
     \\
     &
     \lesssim
    \bigg\| \begin{array}{l}\alphahat (\potY{0},\bX) \\
    - \alpha^*(\potY{0},\bX)
    \end{array}  \bigg\|_{P,2}
    \times 
          \bigg\|
     \begin{array}{l} 
    \EXP \bigg[ 
          \begin{array}{l} \big\{ \mathcal{G}^{(0)} - \what \} 
          \\
     - \muhat(Z,\bX \con \mathcal{G}-\what) 
          \end{array} 
          \, \bigg| \,
          A=1,\potY{0},\bX \bigg]
     \end{array} 
          \bigg\|_{P,2}
     \\
     &
     \qquad +
     \bigg\| \begin{array}{l} \betahat (Z,\bX) \\
     - \beta^*(Z,\bX) \end{array} \bigg\|_{P,2}
    \times 
          \bigg\|
     \begin{array}{l} 
    \EXP \bigg[ 
          \begin{array}{l} \big\{ \mathcal{G}^{(0)} - \what \} 
          \\
     - \muhat(Z,\bX \con \mathcal{G}-\what) 
          \end{array} 
          \, \bigg| \,
          A=1,Z,\bX \bigg]
     \end{array} 
          \bigg\|_{P,2}
     \\
     &
     \qquad +
     \bigg\| \begin{array}{l}\alphahat (\potY{0},\bX) \\
    - \alpha^*(\potY{0},\bX)
    \end{array}  \bigg\|_{P,2}
     \bigg\| \begin{array}{l} \betahat (Z,\bX) \\
     - \beta^*(Z,\bX) \end{array} \bigg\|_{P,2} 
     \\
     &
     \qquad
     +
         \left[
        \begin{array}{l}
        \big\| \widehat{f}\LSS(A=1 \cond Z,\bX) - f^*(A=1 \cond Z,\bX)
        \big\|_{P,2}
        \\
        +
        \big\| \widehat{f}\LSS(\potY{0} \cond A=0,Z,\bX) - f^*(\potY{0} \cond A=0,Z,\bX)
        \big\|_{P,2}
        \end{array}
    \right]
    \\
    &
    \qquad\qquad
    \times
    \big\|  \widehat{f}\LSS(Z=1 \cond \bX) - f^*(Z=1 \cond \bX) \big\|_{P,2} 
     \ .
    \numeq 
    \label{eq-proof-general bias form}
\end{align*}
Note that, in \eqref{eq-proof-bias of ATT estimator2}, we set the normalization constant defined in \eqref{eq-proof-normalD L2P} to $\mathcal{N}=1$. Therefore, it suffices to characterize the above $L^2(P)$-errors under each respective parameterization. Throughout, we impose standard regularity conditions, such as uniformly bounded nuisance functions and their estimates, to avoid unnecessary technical complications.

We first consider the parameterization of \citet{Sun2018} and \citet{Liu2020}, with the model parameter as $\theta_{\text{LS}} \equiv \{ \fz \equiv f(Z \cond \bX), \fy \equiv f(Y \cond A=0,Z,\bX), \alpha \equiv \alpha(\potY{0},\bX) \}$.

From \eqref{eq-proof-gamma}, we find that $\widehat{\gamma}\LSS(Z,\bX) $ is equal to 
\begin{align*}
\widehat{\gamma}\LSS(Z,\bX)
=
\widehat{\beta}\LSS(Z,\bX)
\int \widehat{\alpha}\LSS(y,\bX) \widehat{f}\LSS(Y=y \cond A=0,Z,\bX) \, dy \ .
\end{align*}
Consequently, we find that the $L^2(P)$-norm of the estimation error of $\gamma^*$ satisfies
\begin{align*}
&
\big\|
\widehat{\gamma}\LSS(Z,\bX)
-
\gamma^*(Z,\bX)
\big\|_{P,2}
\\
&
\lesssim 
\left[ 
	\begin{array}{l}
\big\| \widehat{\beta}\LSS(Z,\bX) - \beta^*(Z,\bX) \big\|_{P,2}
\\
+
\big\| \widehat{\alpha}\LSS(\potY{0},\bX) - \alpha^*(\potY{0},\bX) \big\|_{P,2}
\\
+
\big\| \widehat{f}\LSS(\potY{0} \cond A=0,Z,\bX) - f^*(\potY{0} \cond A=0,Z,\bX) \big\|_{P,2} 
	\end{array}
\right]
\ .
\numeq
\label{eq-proof-LS-bias1}
\end{align*}
In addition, \eqref{eq-proof-gamma to fa} establishes that
\begin{align*}
\|\widehat{\gamma}\LSS(Z,\bX) - \gamma^*(Z,\bX) \|_{P,2} \asymp \| \widehat{f}\LSS(A=1 \cond Z,\bX) - f^*(A=1 \cond Z,\bX) \|_{P,2} \ .
\numeq
\label{eq-proof-LS-bias2}
\end{align*}
Consequently, combining \eqref{eq-proof-LS-bias1} and \eqref{eq-proof-LS-bias2}, we obtain
\begin{align*}
 \| \widehat{f}\LSS(A=1 \cond Z,\bX) - f^*(A=1 \cond Z,\bX) \|_{P,2}
 \lesssim 
 r_{\alpha}\LSS + r_{\beta}\LSS + r_{Y} \LSS 
 \ .
 \numeq
 \label{eq-proof-LS-bias3}
\end{align*}
In turn, the convergence rate in \eqref{eq-proof-GH convergence} is bounded by
\begin{align*}
&
              \bigg\|
     \begin{array}{l} 
    \EXP \bigg[ 
          \begin{array}{l} \big\{ \mathcal{G}^{(0)} - \what \} 
          \\
     - \muhat(Z,\bX \con \mathcal{G}-\what) 
          \end{array} 
          \, \bigg| \,
          A=1,\potY{0},\bX \bigg]
          \bigg\|_{P,2}
     \end{array} 
    \\
    &
    \lesssim 
    \left[
        \begin{array}{l}
        \big\| \widehat{f}\LSS(A=1 \cond Z,\bX) - f^*(A=1 \cond Z,\bX)
        \big\|_{P,2}
        \\
        +
        \big\| \widehat{f}\LSS(\potY{0} \cond A=0,Z,\bX) - f^*(\potY{0} \cond A=0,Z,\bX)
        \big\|_{P,2}
        \end{array}
    \right] 
    \\
    &
    \stackrel{\eqref{eq-proof-LS-bias3}}{
    \lesssim}
 r_{\alpha}\LSS + r_{\beta}\LSS + r_{Y} \LSS \ . 
\numeq
\label{eq-proof-LS-bias4}
\end{align*}

In addition, for a uniformly bounded function $v$, we have:
\begin{align*}
    & 
    \muhat(Z,\bX \con v) - \mu^*(Z,\bX \con v)
    \\
    &
    =  
\frac{
\widehat{\EXP}\LSS \big\{ v \times \alphahat(\potY{0},\bX) \cond A=0,Z,\bX \big\}
}{
\widehat{\EXP}\LSS \big\{  \alphahat(\potY{0},\bX) \cond A=0,Z,\bX \big\}
} -
\frac{ \EXP \big\{   v \times \alpha^*(\potY{0},\bX) \cond A=0, Z,\bX \big\} }{ \EXP \big\{  \alpha^*(\potY{0},\bX) \cond A=0, Z,\bX \big\} } 
\\
\Rightarrow \quad 
&
    \big\| \muhat(Z,\bX \con v) - \mu^*(Z,\bX \con v)
    \big\|_{P,2}
    \\  
&
\lesssim 
\Bigg\|
\begin{array}{l}
\EXP \big\{   v \times \alpha^*(\potY{0},\bX) \cond A=0, Z,\bX \big\}
\widehat{\EXP}\LSS \big\{ \alphahat(\potY{0},\bX) \cond A=0,Z,\bX \big\}
\\
-
\widehat{\EXP}\LSS \big\{ v \times  \alphahat(\potY{0},\bX) \cond A=0,Z,\bX \big\}
\EXP \big\{  \alpha^*(\potY{0},\bX) \cond A=0, Z,\bX \big\}
\end{array} 
\Bigg\|_{P,2}
\\
&
\lesssim 
\left[ 
\begin{array}{l}
\big\| 
\alphahat(\potY{0},\bX) 
- \alpha^*(\potY{0},\bX) 
\big\|_{P,2}
\\
+
\big\| \widehat{f}\LSS(\potY{0} \cond A=0,Z,\bX) - f^*(\potY{0} \cond A=0,Z,\bX) \big\|_{P,2}  
\end{array}
\right]
\ .
\end{align*}
Taking $v = \mathcal{G}^{(0)} - \what$, we obtain
\begin{align*}
    &
              \bigg\|
     \begin{array}{l} 
    \EXP \bigg[ 
          \begin{array}{l} \big\{ \mathcal{G}^{(0)} - \what \} 
          \\
     - \muhat(Z,\bX \con \mathcal{G}-\what) 
          \end{array} 
          \, \bigg| \,
          A=1,Z,\bX \bigg]
     \end{array} 
          \bigg\|_{P,2}
          \lesssim
          r_{\alpha}\LSS + r_{Y}\LSS \ .
          \numeq
\label{eq-proof-LS-bias5}
\end{align*}

Combining \eqref{eq-proof-general bias form}-\eqref{eq-proof-LS-bias5}, we obtain
\begin{align*}
    &
    N^{-1/2}
	\big\|
	\text{Bias of $\widehat{\tau}$}
	\big\|
    \\
    &
    \lesssim 
    r_{\alpha}\LSS 
    \big\{  r_{\alpha}\LSS + r_{\beta}\LSS + r_{Y} \LSS \big\} 
    +
    r_{\beta}\LSS  
    \big\{  r_{\alpha}\LSS + r_{Y}\LSS \big\}
    \\
    &
    \qquad
    + 
    \big\{  r_{\alpha}\LSS + r_{\beta}\LSS + r_{Y} \LSS \big\} r_{Z}\LSS
    \\
    &    
    \lesssim 
    \big\{ r_{\alpha}\LSS \big\}^2
    +
    r_{\alpha}\LSS r_{\beta} \LSS 
    +
    r_{\alpha}\LSS r_{Y} \LSS 
    +
    r_{\beta}\LSS r_{Y}\LSS 
    \\
    &
    \qquad
    +
    r_{\alpha}\LSS r_{Z}\LSS  
    +
    r_{\beta}\LSS r_{Z}\LSS  
    +
    r_{Y}\LSS r_{Z}\LSS  \  .
\end{align*}
This proves the result in \eqref{eq-Liu Bias} of the main paper.

Next, we consider the parameterization of \citet{Sun2025}, with the model parameter as $\theta_{\text{SMW}} \equiv \{ \fz \equiv f(Z \cond \bX), f_{\potY{0}} \equiv f(\potY{0} \cond \bX), \alpha \equiv \alpha(\potY{0},\bX), \beta \equiv \beta(Z,\bX) \}$. 

From \eqref{eq-proof-Y0AZ}, we find
\begin{align*} 
    &
    f^* (Y=y, A = 0 \cond  Z, \bX) 
    =
    \frac{ 
    f^* (\potY{0}=y \cond \bX)
    }{  1 + \alpha^*(y,\bX) \beta^*(Z,\bX)  }
    \\
    \Rightarrow
    \quad
    &
    f^*(A=0 \cond Z, \bX)
    =
    \int \frac{ f^* (\potY{0}=y \cond \bX)
    }{  1 + \alpha^*(y,\bX) \beta^*(Z,\bX)  }
    \, dy \ .
\end{align*} 
This implies that 
\begin{align*}
    &
    \big\|
    \widehat{f}\LSS (A = 0 \cond  Z, \bX) 
    -
    f^* (A = 0 \cond  Z, \bX) 
    \big\|^2
    \\
    &
    =
    \bigg\|
     \int
      \frac{ \widehat{f}\LSS (\potY{0}=y \cond \bX)
    }{  1 + \alphahat(y,\bX) \betahat(Z,\bX)  }
    -
    \frac{ f^* (\potY{0}=y \cond \bX)
    }{  1 + \alpha^*(y,\bX) \beta^*(Z,\bX)  }
    \, dy
    \bigg\|^2
    \\
    &
    \lesssim
     \int
      \left\| 
        \begin{array}{l}
      \widehat{f}\LSS (\potY{0}=y \cond \bX)
      \{  1 + \alpha^*(y,\bX) \beta^*(Z,\bX) \}
      \\
      -
      f^* (\potY{0}=y \cond \bX) \{ 1 + \alphahat(y,\bX) \betahat(Z,\bX) \}
        \end{array}
      \right\|^2
    \, dy 
    \\
    &
    \lesssim 
    \int \bigg[ 
    \begin{array}{l}
    \big\| 
    \widehat{f}\LSS (\potY{0}=y \cond \bX)
    -
    f^* (\potY{0}=y \cond \bX)
    \big\|^2 
    \\
    +
    \big\|
    \alphahat(y,\bX) - \alpha^*(y,\bX)
    \big\|^2
    \end{array}
    \bigg]
    f^* (\potY{0}=y \cond \bX) \, dy
    \\
    &
    \qquad + \big\| \betahat(Z,\bX) - \beta^*(Z,\bX) \big\|^2 \ .
\end{align*}
Consequently, we obtain
\begin{align*}
    &
    \big\|
    \widehat{f}\LSS (A = 0 \cond  Z, \bX) 
    -
    f^* (A = 0 \cond  Z, \bX) 
    \big\|_{P,2} 
    \lesssim
    r_{\potY{0}}\LSS + r_{\alpha}\LSS + r_{\beta}\LSS \ .
    \numeq 
    \label{eq-proof-SMW-bias1}
\end{align*}

Again, from \eqref{eq-proof-Y0AZ}, we find
\begin{align*} 
    &
    f^* (Y=y \cond A=0, Z, \bX) 
    =
    \frac{1}{f^*(A=0 \cond Z,\bX) }
    \frac{ 
    f^* (\potY{0}=y \cond \bX)
    }{  1 + \alpha^*(y,\bX) \beta^*(Z,\bX)  }
    \\
    \Rightarrow
    \quad
    &
    \big\| \widehat{f}\LSS (\potY{0} \cond A=0, Z, \bX)  - f^* (\potY{0} \cond A=0, Z, \bX)  \big\|_{P,2}
    \\
    &
    \lesssim 
    \left[ 
        \begin{array}{l}
             \big\| \widehat{f}\LSS(A=1 \cond Z,\bX) - f^*(A=1 \cond Z,\bX) \big\|_{P,2}
             \\ 
             + 
             \big\|
    \widehat{f}\LSS (\potY{0} \cond \bX) 
    -
    f^* (\potY{0} \cond  \bX) 
    \big\|_{P,2} 
    \\
    + \big\| \alphahat(\potY{0},\bX) - \alpha^*(\potY{0},\bX) \big\|_{P,2} 
    \\
    + \big\| \betahat(Z,\bX) - \beta^*(Z,\bX) \big\|_{P,2} 
        \end{array}
    \right]
    \\
    &
    \stackrel{\eqref{eq-proof-SMW-bias1}}{
    \lesssim 
    }
r_{\potY{0}}\LSS + r_{\alpha}\LSS + r_{\beta}\LSS \ .
    \numeq 
    \label{eq-proof-SMW-bias2}
    \end{align*}
    
Therefore, we obtain
\begin{align*}
&
              \bigg\|
     \begin{array}{l} 
    \EXP \bigg[ 
          \begin{array}{l} \big\{ \mathcal{G}^{(0)} - \what \} 
          \\
     - \muhat(Z,\bX \con \mathcal{G}-\what) 
          \end{array} 
          \, \bigg| \,
          A=1,\potY{0},\bX \bigg]
          \bigg\|_{P,2}
     \end{array} 
    \\
    &
    \lesssim 
    \left[
        \begin{array}{l}
        \big\| \widehat{f}\LSS(A=1 \cond Z,\bX) - f^*(A=1 \cond Z,\bX)
        \big\|_{P,2}
        \\
        +
        \big\| \widehat{f}\LSS(\potY{0} \cond A=0,Z,\bX) - f^*(\potY{0} \cond A=0,Z,\bX)
        \big\|_{P,2}
        \end{array}
    \right] 
    \\
    &
    \stackrel{\eqref{eq-proof-SMW-bias1},\eqref{eq-proof-SMW-bias2}}{
    \lesssim}
r_{\potY{0}}\LSS + r_{\alpha}\LSS + r_{\beta}\LSS \ ,
\numeq
\label{eq-proof-SMW-bias3}
\end{align*}
and
\begin{align*}
    &
              \bigg\|
     \begin{array}{l} 
    \EXP \bigg[ 
          \begin{array}{l} \big\{ \mathcal{G}^{(0)} - \what \} 
          \\
     - \muhat(Z,\bX \con \mathcal{G}-\what) 
          \end{array} 
          \, \bigg| \,
          A=1,Z,\bX \bigg]
     \end{array} 
          \bigg\|_{P,2}
          \\
          &
          \stackrel{\eqref{eq-proof-LS-bias5}}{\lesssim}
          r_{\alpha}\LSS + r_{Y}\LSS 
          \\
          &
          \stackrel{\eqref{eq-proof-SMW-bias2}}{
    \lesssim}
r_{\potY{0}}\LSS + r_{\alpha}\LSS + r_{\beta}\LSS \ .
\numeq
\label{eq-proof-SMW-bias4}
\end{align*}
 Combining \eqref{eq-proof-general bias form}, \eqref{eq-proof-SMW-bias1}-\eqref{eq-proof-SMW-bias4}, we obtain
\begin{align*}
    &
    N^{-1/2}
	\big\|
	\text{Bias of $\widehat{\tau}$}
	\big\|
    \\
    &
    \lesssim 
    r_{\alpha}\LSS 
    \big\{  r_{\potY{0}}\LSS + r_{\alpha}\LSS + r_{\beta}\LSS \big\} 
    +
    r_{\beta}\LSS  
    \big\{  r_{\potY{0}}\LSS + r_{\alpha}\LSS + r_{\beta}\LSS \big\}
    \\
    &
    \qquad
    + 
    \big\{  r_{\potY{0}}\LSS + r_{\alpha}\LSS + r_{\beta}\LSS \big\} r_{Z}\LSS
    \\
    &    
    \lesssim 
    \big\{ r_{\alpha}\LSS \big\}^2
    +
    \big\{ r_{\beta}\LSS \big\}^2
    \\ 
    &
    \qquad 
    +
    r_{\alpha}\LSS r_{\beta} \LSS 
    +
    r_{\alpha}\LSS r_{\potY{0}} \LSS 
    +
    r_{\beta}\LSS r_{\potY{0}}\LSS 
    +
    r_{\alpha}\LSS r_{Z}\LSS  
    +
    r_{\beta}\LSS r_{Z}\LSS  
    +
    r_{\potY{0}}\LSS r_{Z}\LSS  \  .
\end{align*}
This proves the result in \eqref{eq-SMW Bias} of the main paper.

\subsection{Median Adjustment for Cross-fitting Estimators} \label{sec-supp-median}

 Cross-fitting estimators depend on a specific sample split and, therefore, may produce outlying estimates if some split samples do not represent the entire data. To resolve the issue, \citet{DDML2018} proposed to use the median adjustment from multiple cross-fitting estimates.  First, for $s \in \{1,\ldots,S\}$, let $\widehat{\tau}_{s}$  be the $s$th cross-fitting estimate with the corresponding variance estimate $\widehat{\sigma}_{s}^2$. Then, the median-adjusted cross-fitting estimate and its variance estimate are defined as follows: 
 \begin{align*}
 & \widehat{\tau}_{\median}
 :=
 \median_{s=1,\ldots,S} \widehat{\tau}_{s}
 \ , \quad \widehat{\sigma}_{\median}^2
 :=
 \median_{s=1,\ldots,S} \big\{ \sigma_s^2 + (\widehat{\tau}_s - \widehat{\tau}_{\median} )^2 \big\} \ .
 \end{align*}
 These estimates are more robust to the particular realization of sample partition.

\subsection{Details of the Falsification Test} \label{sec-supp-falsification}

In this Section, we provide details of the falsification test introduced in Section \ref{sec-falsification}. For simplicity, we restrict attention to the case of a binary $Z$. The construction of the falsification test follows closely that of \citet{Kitagawa2015}. 

First, define
\begin{align*}
    & 
    \kappa_0^*(A,Z,\bX)
    \\
    &
    =
        \frac{1-A}
        {
        f^*(A=0 \cond Z=1,\bX)
        -
        f^*(A=0 \cond Z=0,\bX)
        }
    \frac{
    Z - f^*(Z=1 \cond \bX)
    }{
    f^*(Z=1 \cond \bX)
    f^*(Z=0 \cond \bX)
    } \ .
\end{align*} 
For a generic function $g$, consider an expression 
$ \EXP    
    \big\{ 
    g(Y,\bX)
    \kappa_0^*(A,Z,\bX)
    \cond \bX
    \big\}$. 
From straightforward algebra, we obtain
\begin{align*}
    \EXP    
    \Bigg\{
    g(Y,\bX)
        \frac{
        (1-A) \ind(Z=z)}{f^*(Z=z \cond \bX)}
        \, \Bigg| \,
        \bX 
        \Bigg\}
    &
    =
    \EXP    
    \big[
        \EXP \big\{ g(Y,\bX)
        (1-A) 
        \cond Z=z, \bX \big\}
        \, \big| \,
        \bX 
        \big]
    \\
    &
    =
    \int g(y,\bX) f^*(Y=y,A=0 \cond Z=z,\bX) \, dy \ .
\end{align*}      
Consequently, for a generic function $g$, we have
\begin{align*}
    &
    \EXP    
    \big\{ 
    g(Y,\bX)
    \kappa_0^*(A,Z,\bX)
    \cond \bX
    \big\}
    \\
    &
    =
    \EXP    
    \left\{
    \frac{
    g(Y,\bX) (1-A)}
        {
        f^*(A=0 \cond Z=1,\bX)
        -
        f^*(A=0 \cond Z=0,\bX)
        }
    \frac{
    Z - f^*(Z=1 \cond \bX)
    }{
    f^*(Z=1 \cond \bX)
    f^*(Z=0 \cond \bX)
    }
        \, \Bigg| \, 
        \bX
    \right\}
    \\
    &
    =
    \EXP    
    \left\{ 
    g(Y,\bX)
        (1-A)  
    \frac{
    Z / f^*(Z=1 \cond \bX)
    - 
    (1-Z) / f^*(Z=0 \cond \bX)
    }
        {
        f^*(A=0 \cond Z=1,\bX)
        -
        f^*(A=0 \cond Z=0,\bX)
        }
        \, \Bigg| \, 
        \bX
    \right\}
    \\
    &
    =
    \int g(y,\bX) 
    \frac{ f^*(Y=y,A=0 \cond Z=1, \bX) - f^*(Y=y,A=0 \cond Z=0, \bX) 
    }{f^*(A=0 \cond Z=1,\bX) - f^*(A=0 \cond Z=0,\bX)} 
     \, dy \ .
\end{align*}
Therefore, following the argument in \citet{Kitagawa2015}, one can establish that
\begin{align*}
    & 
    \frac{ f^*(Y=y,A=0 \cond Z=1, \bX) - f^*(Y=y,A=0 \cond Z=0, \bX) 
    }{f^*(A=0 \cond Z=1,\bX) - f^*(A=0 \cond Z=0,\bX)}
    \geq 0 
    \\
    &
    \text{if and only if}
    \quad 
    \EXP    
    \big\{ 
    g(Y,\bX)
    \kappa_0^*(A,Z,\bX) 
    \big\}
    \geq 0
    \ , \quad 
    \forall g \in \mathcal{G}_{\text{ind}} \ ,
\end{align*}
where $\mathcal{G}_{\text{ind}}$ denotes the class of indicator functions for rectangular sets in the support of $(Y,\bX)$.

To test the null hypothesis
\begin{align*}
    H_0: 
    \EXP \big\{ g(Y,\bX)
    \kappa_0^*(A,Z,\bX) 
    \big\}
    \geq 0
    \ , \quad 
    \forall g \in \mathcal{G}_{\text{ind}} \ , 
\end{align*}
we consider a variance-weighted Kolmogorov-Smirnov-type statistic, which is a modification of the test proposed by \citet{Kitagawa2015}. The test statistic is defined as
\begin{align*}
    T_{N}
    =
    - 
    \frac{1}{
    \sqrt{N}}
    \inf_{g \in \mathcal{G}_{\text{ind}}}
    \frac{ \sum_{i=1}^{N}  g(Y_i,\bX_i) \widehat{\kappa}_0(A_i,Z_i,\bX_i) }{ \max(\xi, \widehat{\sigma}_0) } \ .
    \numeq 
    \label{eq-proof-KS statistic}
\end{align*}
Here, $\widehat{\kappa}_0$ is an estimate of $\kappa_0^*$, $\widehat{\sigma}_0^2$ is the sample variance of $g(Y_i,\bX_i) \widehat{\kappa}_0(A_i,Z_i,\bX_i) $, and $\xi$ is a positive constant specified by the investigator. Following \citet{Kitagawa2015}, the critical value is obtained via bootstrap, treating $\widehat{\kappa}_0$ as fixed. Specifically, for $b \in \{1,\ldots,B\}$, where $B$ denotes the number of bootstrap replications, let $\mathcal{I}^{(b)}$ be the index set of the $b$th bootstrap sample drawn from $\mathcal{I}=\{1,\ldots,N\}$. The bootstrap analogue of $T_N$ is then given by
\begin{align*}
    T_{N}^{(b)}
    =
   - 
    \frac{1}{
    \sqrt{N}}
    \inf_{g \in \mathcal{G}_{\text{ind}}} 
    \frac{ \sum_{i \in \II^{(b)} }  g(Y_i,\bX_i) \widehat{\kappa}_0(A_i,Z_i,\bX_i)  
      - \sum_{i=1}^{N} g(Y_i,\bX_i) \widehat{\kappa}_0(A_i,Z_i,\bX_i)   }{ \max(\xi, \widehat{\sigma}_0^{(b)}) } \ .
\end{align*}
The null hypothesis is rejected at significance level $c$ if $T_N$ exceeds the $100(1-c)$th percentile of the empirical distribution of ${T_N^{(1)},\ldots,T_N^{(B)}}$.

Implementing this test in practice may pose practical challenges. First, computing the infimum over $\mathcal{G}_{\text{ind}}$ is challenging, particularly when $Y$ is continuous. A practical remedy is to approximate $\mathcal{G}_{\text{ind}}$ by a finite subclass. In addition, $\widehat{\kappa}_0$ may be obtained using the nonparametric methods described in Section \ref{sec-supp-estimation}, but these estimators do not necessarily satisfy the empirical moment restriction. Specifically, for any function $g(\bX)$, $\kappa_0^*$ must satisfy $\EXP[ g(\bX) \{ \kappa_0^*(A,Z,\bX) -1 \} ] = 0$. However, since $\widehat{\kappa}_0$ is not estimated by enforcing this moment restriction---being instead constructed by plugging in estimates of the corresponding nuisance parameters---the empirical moment $N^{-1}\sum_{i=1}^{N} g(\bX_i) \{ \widehat{\kappa}_0 (A_i,Z_i,\bX_i)-1\}$ may deviate substantially from zero. In such cases, the falsification test based on $T_N$ may perform poorly in finite samples. 

As an alternative, one may directly evaluate condition \eqref{eq-falsification} for each value of $\bX$  using the estimated joint density $\widehat{f}(Y=y,A=0 \cond Z,\bX)$ over a fine grid of $y \in \suppYX$. This approach is particularly more practical when $Y$ is discrete.

\subsection{Extensions} \label{sec-supp-extension}

\subsubsection{Extensions to General $Z$ Cases} \label{sec-supp-Conti Z}

In this Section, we extend our results to the general $Z$ case. Specifically, we present theorems corresponding to Theorems \ref{thm-Psi-binary} and \ref{thm-global contraction-binary} for a general $Z$.

To streamline the exposition, we introduce the following notation. Let $\suppZ \subset \R$ be the support of $Z$. For each $z \in \suppZ$ and $\bX \in \suppX$, define $\mathcal{Z}_{z,\bX}^c = \{ z' \cond f(A=1 \cond Z=z',\bX) \neq f(A=1 \cond Z=z,\bX) \}$. Note that $\mathcal{Z}_{z,\bX}^c$ is nonempty for every $z \in \suppZ$ by \HL{IV3}. Additionally, for $\bX \in \suppX$, define $\mathcal{L}_+^{\infty} (\mathcal{Y}_{\bX}) = \{ h \cond h(y,\bX) \in [0,\infty), \ \forall y  \in \mathcal{Y}_{\bX} \}$. We are now ready to state the Theorems.

\begin{theorem} \label{thm-Psi}

Suppose that Assumptions \HL{A1}-\HL{A2} and \HL{IV1}-\HL{IV4} hold for $(\potY{0},A,Z,\bX)$, with corresponding parameters $\theta = \{ \fz,\fa,\fy \}$. For a fixed $z_0 \in \suppZ$ and $\bX \in \suppX$, select any $z_1 \in \suppZ_{z_0,\bX}^c$. Define the mapping $\Psi: \mathcal{L}_+^{\infty} (\mathcal{Y}_{\bX}) \rightarrow \mathcal{L}_+^{\infty} (\mathcal{Y}_{\bX})$ by
\begin{align*}
    &
    \!\!\!
    \Psi\big( g(y,\bX) \con \theta, z_1, z_0 \big)
    \\
    &
    \!\!\!
    =
    \frac{
    \left\{
    \begin{array}{l}
    \ff{}(A=1 \cond Z=z_0,\bX)
    g(y,\bX)
    \\
    + \ff{}(A=0 \cond Z=z_0,\bX)
    \ff{}(Y=y \cond A=0,Z=z_0,\bX)
    \\
    -
    \ff{}(A=0 \cond Z=z_1,\bX)
    \ff{}(Y=y \cond A=0,Z=z_1,\bX)
    \end{array} 
    \right\}
    \!
    \displaystyle{
    \int \!
    g(t,\bX)
    R_{Y}(t,\bX) \, d t
    }
    }
    {
    \ff{}(A=1 \cond Z=z_1,\bX)
    R_{Y}(y,\bX)
    }   ,
    \numeq \label{eq-Psi}
\end{align*}
where 
\begin{align}
 R_{Y}(y,\bX) = \frac{ f(Y=y \cond A=0,Z=z_1,\bX)}{f(Y=y \cond A=0,Z=z_0,\bX)} \ .
 \numeq 
 \label{eq-Ry}
\end{align}

Then, the following results hold:
\begin{itemize}
    \item[(i)] Let $ g^\star(y,\bX \con \theta, z_1, z_0)$ denote the fixed point of $\Psi$, i.e., 
    \begin{align*}
        \Psi \big( g^\star(y,\bX \con \theta, z_1, z_0) \con \theta, z_1, z_0 \big) = g^\star(y,\bX \con \theta, z_1, z_0) \ .
        \numeq 
        \label{eq-fixed point equation}
    \end{align*}
    Then, $g^\star$ uniquely exists and does not depend on the choice of $z_1 \in \suppZ_{z_0,\bX}^c$, i.e.,
    \begin{align*}
        g^\star(y,\bX \con \theta,z_1,z_0)
        =
        g^\star(y,\bX \con \theta,z_1',z_0) 
         \ , \quad \forall (z_1,z_1') \in \suppZ_{z_0,\bX}^c \ .
    \end{align*}
    Therefore, we denote the fixed point as $g^\star(y,\bX \con \theta, z_0)$, suppressing the dependence on $z_1$.

    \item[(ii)] The odds ratio function $\alpha$ and the baseline odds function $\beta$ in \eqref{eq-nuis-alpha beta} are expressed in terms of $g^\star$ as follows:
    \begin{align*}
    &
    \alpha(y,\bX \con \theta, z_0)
    =
    \frac{ g^\star (y,\bX \con \theta, z_0) }{ g^\star  (y_R,\bX \con \theta, z_0) }
    \frac{ \ff{}(Y=y_R \cond A=0,Z=z_0,\bX) }{\ff{}(Y=y \cond A=0,Z=z_0,\bX)} \ , \
    y \in \suppYX \ ,
    \numeq \label{eq-alpha g}
    \\
    &
    \beta(z,\bX \con \theta, z_0) 
    =
    \frac{f(A=1 \cond Z=z,\bX) / f(A=0 \cond Z=z,\bX)}{ \int  \alpha(y,\bX \con \theta, z_0)  f(Y=y \cond A=0,Z=z,\bX) \, dy }  \ , \
    z \in \suppZ \ .
    \numeq \label{eq-beta g}
    \end{align*}
    Furthermore, $\alpha$ and $\beta$ do not depend on the choice of $z_0 \in \suppZ$, i.e.,
    \begin{align*}
        \alpha(y,\bX \con \theta, z_0) = \alpha(y,\bX \con \theta, z_0')
        \ , \quad 
        \beta(z,\bX \con \theta, z_0) = \beta(z,\bX \con \theta, z_0')
        \ , \quad \forall (z_0,z_0') \in \suppZ \ .
    \end{align*}
    Therefore, we denote $\alpha(y,\bX \con \theta)$ and $\beta(z,\bX \con \theta)$, suppressing the dependence on $z_0$.

\end{itemize} 
    
\end{theorem}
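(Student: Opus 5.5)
The plan is to exhibit the fixed point explicitly from the model, then prove uniqueness by reducing the fixed-point equation to a scalar equation.

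Write $p_a(z)=f(A=a\cond Z=z,\bX)$ and $f_z(y)=f(Y=y\cond A=0,Z=z,\bX)$, and suppress $\bX$ throughout. The candidate fixed point is
\[
g^\star(y)=f(\potY{0}=y\cond A=1,Z=z_0,\bX),
\]
the counterfactual density of $\potY{0}$ among the treated at $Z=z_0$.

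\textbf{Step 1 (a true fixed point exists).} Under \HL{A1}, \HL{IV1} and \HL{IV4}, the odds factorization (as in \eqref{eq-Liu parameterization}) gives, for every $z$,
\[
f(\potY{0}=y\cond A=1,Z=z,\bX)=\frac{\alpha(y)f_z(y)}{\int\alpha(t)f_z(t)\,dt}.
\]
Taking the ratio for $z=z_1$ and $z=z_0$ yields
\[
f(\potY{0}=y\cond A=1,Z=z_1,\bX)=\frac{g^\star(y)R_Y(y)}{\int g^\star(t)R_Y(t)\,dt}.
\]
By \HL{IV2}, the marginal $f(\potY{0}=y\cond\bX)$ does not depend on $z$. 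Hence
\[
p_0(z_0)f_{z_0}(y)+p_1(z_0)g^\star(y)=p_0(z_1)f_{z_1}(y)+p_1(z_1)\frac{g^\star(y)R_Y(y)}{\int g^\star R_Y}.
\]
Rearranging this identity gives exactly $\Psi(g^\star)=g^\star$ in \eqref{eq-Psi}. Since $g^\star$ is defined without reference to $z_1$, this also delivers the independence from $z_1$ asserted in part (i), once uniqueness is in hand.

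\textbf{Step 2 (part (ii)).} Because $g^\star\propto\alpha f_{z_0}$, the ratio $g^\star(y)f_{z_0}(y_R)/\{g^\star(y_R)f_{z_0}(y)\}$ equals $\alpha(y)/\alpha(y_R)=\alpha(y)$, using the boundary condition $\alpha(y_R)=1$. This is \eqref{eq-alpha g}.

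For \eqref{eq-beta g}, integrate $f(\potY{0}=y,A=1\cond Z=z,\bX)=\alpha(y)\beta(z)f(\potY{0}=y,A=0\cond Z=z,\bX)$ over $y$. This gives
\[
\frac{p_1(z)}{p_0(z)}=\beta(z)\int\alpha(t)f_z(t)\,dt,
\]
which is \eqref{eq-beta g}. Both right-hand sides equal the true, $z_0$-free quantities $\alpha,\beta$. Therefore the choice of $z_0$ is immaterial.

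\textbf{Step 3 (uniqueness, the main obstacle).} Let $g\geq0$ be any fixed point and set $c=\int gR_Y>0$. Put $D(y)=p_0(z_0)f_{z_0}(y)-p_0(z_1)f_{z_1}(y)$. The fixed-point equation is then linear in $g(y)$ once $c$ is fixed:
\[
g(y)\Big\{\frac{p_1(z_1)R_Y(y)}{c}-p_1(z_0)\Big\}=D(y).
\]
So $g$ is determined by the single scalar $c$, namely $g_c(y)=cD(y)/\{p_1(z_1)R_Y(y)-cp_1(z_0)\}$. Substituting this back into the definition of $c$, the constraint $\int g_cR_Y=c$ becomes
\[
H(c)\equiv\int\frac{D(t)R_Y(t)}{p_1(z_1)R_Y(t)-c\,p_1(z_0)}\,dt=1.
\]
Nonnegativity of $g_c$ restricts $c$ to an admissible interval on which the denominator has the same sign as $D$ wherever $D\neq0$.

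On that interval,
\[
H'(c)=p_1(z_0)\int\frac{D R_Y}{(p_1(z_1)R_Y-cp_1(z_0))^2}\,dt\cdot\frac{1}{D\ldots}
\]
is best read term by term: each integrand of $H'$ equals $p_1(z_0)\,DR_Y\,\{p_1(z_1)R_Y-cp_1(z_0)\}^{-2}\cdot\{p_1(z_1)R_Y-cp_1(z_0)\}$ divided by the same denominator again, i.e. it has the sign of $D$ times the sign of the denominator, which is $\geq0$. Hence $H$ is strictly increasing on the admissible interval, so $H(c)=1$ has at most one root, and the fixed point is unique.

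The delicate points are the following, and I would handle them before fixing the final argument.
\begin{itemize}
\item Points where $D(y)=0$ force $g(y)=0$ unless the bracket also vanishes. The relevance choice $z_1\in\suppZ_{z_0,\bX}^c$ (i.e. $p_1(z_1)\neq p_1(z_0)$) is what I would use to rule out the degenerate case in which the bracket vanishes identically.
\item One must show the admissible $c$-interval is genuinely an interval, so that monotonicity of $H$ applies. I would do this using the sign pattern of $D$, which Step 1 shows coincides with that of $p_1(z_1)R_Y-c^\star p_1(z_0)$ at the true $c^\star$.
\end{itemize}
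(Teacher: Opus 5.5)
Your Steps 1 and 2 are correct and match the paper. Your $g^\star=f(\potY{0}=y\cond A=1,Z=z_0,\bX)$ is exactly the paper's $g_{z_0}$, and the fixed-point identity comes from (IV2) in the same way. The gap is in Step 3, at the monotonicity of $H$. Differentiating gives
\[
H'(c)=p_1(z_0)\int \frac{D(t)R_Y(t)}{\{p_1(z_1)R_Y(t)-c\,p_1(z_0)\}^{2}}\,dt ,
\]
so the integrand of $H'$ has the sign of $D$ alone. The product of the signs of $D$ and of the denominator, which you invoke, is the sign of the integrand of $H$ itself (that is, nonnegativity of $g_c$), not of $H'$. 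If $D$ took both signs in $y$, the part of $H$ over $\{D>0\}$ would increase in $c$ while the part over $\{D<0\}$ would decrease. Nothing in your argument would then rule out several roots of $H(c)=1$.

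Your plan for the ``delicate points'' does not repair this. The admissible set of $c$ is automatically an interval, since it is an intersection of half-lines in $c$. And knowing that $D$ and $p_1(z_1)R_Y-c^\star p_1(z_0)$ share signs at the true $c^\star$ does not stop $D$ from changing sign.

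The missing ingredient is that $D$ has a strict sign that does not depend on $y$. By \eqref{eq-proof-Y0AZ},
\[
D(y)=f(\potY{0}=y\cond\bX)\,\frac{\alpha(y)\{\beta(z_1)-\beta(z_0)\}}{\{1+\alpha(y)\beta(z_0)\}\{1+\alpha(y)\beta(z_1)\}} ,
\]
and $\beta(z_1)-\beta(z_0)$ is nonzero with the sign of $p_1(z_1)-p_1(z_0)$ by Lemma \ref{lemma-beta and gamma}. This is where $z_1\in\suppZ_{z_0,\bX}^c$ actually enters. There are two cases:
\begin{itemize}
\item If $p_1(z_1)>p_1(z_0)$, then $D>0$. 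Admissibility means $c\,p_1(z_0)<p_1(z_1)R_Y(y)$ for all $y$, and $H$ is strictly increasing on that set.
\item If $p_1(z_1)<p_1(z_0)$, then $D<0$. Admissibility means $c\,p_1(z_0)>p_1(z_1)R_Y(y)$ for all $y$, and $H$ is strictly decreasing. So your ``strictly increasing'' is wrong here, but strict monotonicity is all you need.
\end{itemize}
Either way $H(c)=1$ has at most one root. Since $D$ never vanishes, every fixed point with $c>0$ is also strictly positive, which disposes of your first delicate point. Like the paper, you tacitly exclude the trivial fixed point $g\equiv 0$, which has $c=0$.

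With this patch, your route is a valid and arguably more transparent alternative to the paper's. The paper first shows that any fixed point integrates to one. It handles equal values of $\int gR_Y$ essentially as you do, since the coefficient of $g$ cannot vanish. For unequal values, it rebuilds two odds-ratio functions from the two fixed points, imposes (IV2) for both, and integrates against $f(Y=y\cond A=0,Z=z_0,\bX)$. This gives $\EXP\{\bar\alpha'\bar\alpha^\dagger\cond A=0,Z=z_0,\bX\}=-1/\gamma(z_0,\bX)<0$, which contradicts positivity. Your scalar reduction replaces that algebra with a one-dimensional monotonicity argument and also shows the admissible range of $c$. The cost is that it must use the constant sign of $D$ explicitly.
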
 

Next, we modify Assumption \HL{A3} in order to accommodate a general form of $Z$:
\begin{itemize}
\item[\HT{A3-General $Z$}] (Boundedness for $\theta$) \\
There exist constants  $0< c_Z < C_Z < \infty$, $0<c_A<1$, $0<c_Y \leq C_Y < \infty$, and $0< c_g \leq C_g < \infty$ such that the parameter $\theta = \{\fy, \fa, \fz\}$ and the corresponding  $g^\star(\cdot \con \theta)$ in \eqref{eq-fixed point equation-binary} satisfy $f(Z=z \cond \bX) \in [c_Z,C_Z]$, $f(A=a \cond Z=z,\bX) \in [c_A,1-c_A]$,  $ f(Y=y \cond A=0,Z=z,\bX) \in [c_Y,C_Y]$, and $g^\star(y,\bX \con \theta) \in [c_g,C_g]$ for all $y \in \suppYX$, $a \in \{0,1\}$, $z \in \suppZb$, and $\bX \in \suppX$. 
\end{itemize}

\begin{theorem} 
\label{thm-global contraction}
    Suppose that Assumptions \HL{A1}-\HL{A2} and \HL{IV1}-\HL{IV4} hold for $(\potY{0},A,Z,\bX)$, with corresponding parameters $\theta = \{ \fz,\fa,\fy \}$. Further suppose that \HL{A3-General $Z$} holds for $\theta$. 
    For a fixed $\bX \in \suppX$ and $z_0 \in \suppZinf \equiv \arginf_{z \in \suppZ} f(A=1 \cond Z=z,\bX) $, define the mapping $\overline{\Psi}: \mathcal{L}_+^{\infty} (\mathcal{Y}_{\bX}) \rightarrow \mathcal{L}_+^{\infty} (\mathcal{Y}_{\bX})$ as the normalized version of $\Psi$ in \eqref{eq-Psi}:
    \begin{align*}
    &
    \overline{\Psi}
    \big( g(y,\bX) \con \theta, z_0 \big)
    =
    \frac{ \Psi\big( g(y,\bX)  \con \theta, z_0 \big) }{\int \Psi\big( g(t,\bX)  \con \theta, z_0 \big) \, dt} 
     \ .
    \end{align*}
    Consider the iterative update defined by
    \begin{align*}
        \overline{h}^{(j+1)}(y,\bX \con \theta, z_0) = \overline{\Psi}(\overline{h}^{(j)}(y,\bX \con \theta, z_0) \con \theta, z_0) \ , 
        \quad j \in \{0,1,\ldots\} \ ,
        \numeq 
        \label{eq-iterative update}
    \end{align*}
     with an arbitrary initial function $\overline{h}^{(0)} \in \mathcal{L}_+^{\infty} (\mathcal{Y}_{\bX})$ satisfying $\int \overline{h}^{(0)}(y,\bX) \, dy = 1$. Then, as $j \rightarrow \infty$, $\overline{h}^{(j)}(y,\bX \con \theta, z_0)$ converges to $g^{\star}(y,\bX \con \theta, z_0)$ in \eqref{eq-fixed point equation} for all $y \in \suppYX$, i.e., 
    \begin{align*}
        \sup_{y \in \suppYX}    
        \big| \overline{h}^{(j)}(y,\bX \con \theta, z_0) - g^{\star}(y,\bX \con \theta, z_0)
        \big|
        \rightarrow 0
        \quad \text{ as } \quad j \rightarrow \infty \ .
    \end{align*}    
    Moreover, the convergence of the iterative update  \eqref{eq-iterative update} is exponentially fast with respect to a certain divergence function; see Section \ref{sec-supp-global contraction} for details.     
\end{theorem}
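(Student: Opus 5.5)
The plan is to reduce the general-$Z$ statement to the binary analysis behind Theorem \ref{thm-global contraction-binary}, and then run a projective-contraction argument on the normalized map.

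\emph{Reduction to the binary case.} Fix $\bX$ and $z_0 \in \suppZinf$, and pick any $z_1 \in \suppZ_{z_0,\bX}^c$. Then $f(A=1 \cond Z=z_0,\bX) < f(A=1 \cond Z=z_1,\bX)$ strictly, so the map $\Psi(\cdot \con \theta, z_1, z_0)$ in \eqref{eq-Psi} is literally the binary map \eqref{eq-Psi-binary} with $(z_m,z_M)$ replaced by $(z_0,z_1)$. By Theorem \ref{thm-Psi}(i), the fixed point $g^\star(\cdot \con \theta,z_0)$ exists, is unique, and does not depend on $z_1$. It therefore suffices to prove contraction for this two-point map, using only the boundedness constants in \HL{A3-General $Z$}. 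Those constants are the same type as in \HL{A3}, restricted to $z\in\{z_0,z_1\}$, so the binary proof should transfer verbatim.

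\emph{Structure of $\overline{\Psi}$.} The factor $\int g R_Y$ is a scalar, so it cancels under normalization. Writing
\[
a = f(A=1 \cond Z=z_0,\bX), \qquad
b(y) = f(Y=y,A=0 \cond Z=z_0,\bX) - f(Y=y,A=0 \cond Z=z_1,\bX),
\]
we get $\overline{\Psi}(g)(y) \propto \{a\,g(y) + b(y)\}/R_Y(y,\bX)$. Using the representation \eqref{eq-Liu parameterization} together with \HL{IV2} and \HL{IV4}, we have
\[
b(y) = f(\potY{0}=y \cond \bX)\Big[\{1+\alpha\beta(z_0)\}^{-1} - \{1+\alpha\beta(z_1)\}^{-1}\Big],
\]
and this is $\geq 0$ because $z_0$ minimizes the propensity, hence $\beta(z_0,\bX) \le \beta(z_1,\bX)$ by Lemma \ref{lemma-beta and gamma}. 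This is the same sign fact as in Theorem \ref{thm-falsification}. Consequently $\overline{\Psi}$ maps the cone $\mathcal{L}_+^{\infty}(\mathcal{Y}_{\bX})$ into itself, and is a positive affine map followed by normalization. The fixed point satisfies $a g^\star + b = \lambda R_Y g^\star$ for some scalar $\lambda>0$; one checks that $\lambda = f(A=1 \cond Z=z_1,\bX)\,/\int g^\star R_Y$.

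\emph{Contraction.} I would work with the ratio $u = g/g^\star$ and the Hilbert-type divergence
\[
d(g,g^\star) = \log\sup_y u - \log\inf_y u .
\]
Substituting, $\overline{\Psi}(g)/g^\star$ is proportional to $\{a\,u(y) + b(y)/g^\star(y)\}/\{a + b(y)/g^\star(y)\}$. This is a pointwise convex combination of $u(y)$ and the constant $1$, with weight
\[
w(y) = a/\{a + b(y)/g^\star(y)\} \in (0,1].
\]
The key step is a uniform bound $\sup_y w(y) \le \kappa < 1$. This follows if $b(y)/g^\star(y)$ is bounded below by a positive constant. Using the formula for $b$, the bounds $c_g \le g^\star \le C_g$, the density bounds, and $f(A=1\cond z_1)-f(A=1\cond z_0)>0$, one should obtain $d(\overline{\Psi}(g),g^\star) \le \kappa\, d(g,g^\star)$. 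Iterating gives exponential convergence in $d$. Because the iterates and $g^\star$ are both normalized to integrate to one, $d\to 0$ forces $\inf u$ and $\sup u$ to converge to $1$. The sup-norm statement then follows from $g^\star \le C_g$.

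\emph{Main obstacle.} The delicate step is the uniform positive lower bound on $b(y)/g^\star(y)$. This is where strict relevance and \HL{A3-General $Z$} must be used quantitatively: the bracket in $b$ must be bounded away from zero uniformly in $y$, which needs $\alpha$ bounded via \eqref{eq-alpha g} and a gap in $\beta$. If this fails at isolated $y$, I would fall back to the binary proof's divergence from {\SUPP} \ref{sec-supp-global contraction}. That would show a strict decrease plus compactness of the normalized iterates. Exponential rates would then be recovered only under \HL{A3-General $Z$}, as in the binary case.
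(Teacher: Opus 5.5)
Your reduction to the pair $(z_0,z_1)$ and your decomposition coincide with the paper's proof. Your weight $w(y)=a\,g^\star(y)/\{a\,g^\star(y)+b(y)\}$ is exactly the paper's $\lambda(y,\bX)=R_D(\bX)\mathcal{C}_1(\bX)/R_Y(y,\bX)$, and \eqref{eq-proof-iteration} is the same convex-combination identity for $u=\overline{h}^{(j)}/g^\star$.

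The step you flag as the main obstacle is not one. Combining \eqref{eq-proof-z0}, \eqref{eq-proof-gamma} and \eqref{eq-proof-Y0AZ} gives
\[
w(y)=\frac{\beta(z_0,\bX)\{1+\alpha(y,\bX)\beta(z_1,\bX)\}}{\beta(z_1,\bX)\{1+\alpha(y,\bX)\beta(z_0,\bX)\}}=\frac{1/\beta(z_1,\bX)+\alpha(y,\bX)}{1/\beta(z_0,\bX)+\alpha(y,\bX)} ,
\]
which is the ratio of the extended propensity scores at $z_0$ and $z_1$. Since $\beta(z_1,\bX)>\beta(z_0,\bX)$, this is increasing in $\alpha$. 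Hence $\sup_y w(y)\le\kappa(\bX)<1$ as in \eqref{eq-proof-kappa}, using only $\alpha\le C_\alpha$ from Lemma \ref{lemma-boundedness}. You need neither a lower bound on $b/g^\star$ nor the fallback argument.

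The genuine gap is the claimed inequality $d(\overline{\Psi}(g),g^\star)\le\kappa\,d(g,g^\star)$ for the logarithmic divergence $d=\log\sup_y u-\log\inf_y u$. Write $M=\sup_y u$ and $m=\inf_y u$. The convex-combination structure gives at best
\[
\sup_y\{wu+1-w\}\le\kappa M+1-\kappa, \qquad \inf_y\{wu+1-w\}\ge\kappa m+1-\kappa .
\]
Both bounds require $m\le 1\le M$, which follows from $\int\overline{h}^{(j)}=\int g^\star=1$. You should invoke this in the contraction step itself, not only at the end.

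The bounds are attained when $w$ is constant. For example, $\alpha\equiv1$, $\beta(z_0,\bX)=1/3$, $\beta(z_1,\bX)=1$ gives $w\equiv1/2$, and there the log-scale claim fails. For continuous $Y$, take $u=11/2$ on a set of $g^\star$-mass $1/10$ and $u=1/2$ elsewhere. Then $d$ moves from $\log 11$ to $\log(13/3)$, which exceeds $\tfrac12\log 11$. More generally, for fixed $m$ the ratio of the new to the old log-divergence tends to $1$ as $M\to\infty$. So no per-step factor below $1$ exists on the log scale.

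The repair is the paper's divergence $\mathfrak{d}=M/m-1$, for which
\[
\frac{\kappa M+1-\kappa}{\kappa m+1-\kappa}-1=\frac{\kappa m}{\kappa m+1-\kappa}\Big(\frac{M}{m}-1\Big)\le\kappa\Big(\frac{M}{m}-1\Big),
\]
since $\kappa m+1-\kappa\ge m$ when $m\le1$. Your log-divergence then also decays exponentially, because $\log(M/m)\le M/m-1$. However, the constant depends on the starting point; there is no per-step factor $\kappa$.

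One further detail: $\overline{h}^{(0)}$ is only nonnegative, so $\inf_y u^{(0)}$ may be $0$. Start the recursion at $j=1$, where $\inf_y u^{(1)}>0$ because $wu^{(0)}+1-w\ge1-\kappa$ before normalization.
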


Note that Theorems \ref{thm-Psi-binary} and \ref{thm-global contraction-binary} are special cases of \ref{thm-Psi} and \ref{thm-global contraction}  with $\suppZ = \{0,1\}$ and $z_0 = \argmin_{z \in \{0,1\}} f(A=1 \cond Z=z,\bX)$ and $z_1=1-z_0$. 

In addition, Theorem \ref{thm-IF} continues to hold for general $Z$. We therefore omit its full statement here, except to clarify how the function $\omega$ is defined: \\
\makebox[1.2cm][l]{\HL{$\omega$-i}} $\omega = \nu - \EXP \{ \nu \cond \potY{0},\bX \}
    -
    \EXP ( \nu \cond Z,\bX )
    +
    \EXP ( \nu \cond \bX )$ for some function $\nu(\potY{0},Z,\bX)$;\\
\makebox[1.2cm][l]{\HL{$\omega$-ii}} $\omega$ solves the following integral equation:  
    \begin{align} \tag{\ref{eq-w IE}}
    &
    \EXP \big[ \big\{ \potY{0} - \mu^*(Z,\bX) 
    \big\}  - \big\{  \omega (\potY{0},Z,\bX) - \mu^*(Z,\bX \con \omega) \big\} \cond A=1,\potY{0},\bX \big] = 0 \ .
    \end{align} 
For general $Z$, directly solving the integral equation in \eqref{eq-w IE} may be challenging in practice, as it involves multiple nested dependencies within the counterfactual data distribution. If $Z$ is categorical with finite support, the integral equation admits a closed-form solution. However, when $Z$ has more than two levels, the equation generally admits multiple solutions; we discuss these in the next Section. The situation is even more complex when $Z$ is continuous, as the integral equation admits multiple solutions and no closed-form representation appears to exist. This particular type of challenge may be viewed as an instance of IV overidentification, where the complexity of $Z$ exceeds that of the binary treatment $A$ \citep[Chapter 15]{Wooldridge2016}.

\subsubsection{A Closed-Form Solution of the Integral Equation \eqref{eq-w IE} under Categorical $Z$} 
\label{sec-supp-ClosedFormSolution}

In this Section, we provide details on constructing a closed-form solution of the integral equation \eqref{eq-w IE} when $Z$ is categorical. To simplify notation, let $\mathfrak{p}_{z}(y) = f^*(Z=z \cond A=1,\potY{0}=y,\bX)$. Without loss of generality, let $\mathcal{Z} = \{0,1,\ldots,m\}$. 

By \eqref{eq-proof-Y0AZ}, we can obtain an alternative representation of $\mathfrak{p}_z$ for $z \in \suppZ$:
\begin{align*}
    \mathfrak{p}_{z}(\potY{0},\bX)   
    & 
    =  f^*(Z=z \cond A=1, \potY{0},\bX)
    \\
    & =
    \frac{ 
    \frac{\alpha^*(\potY{0}, \bX) \beta^*(z,\bX)}{ 1 + \alpha^*(\potY{0},\bX) \beta^*(z,\bX) }
    f^*(\potY{0} \cond \bX) f^*(Z=z \cond \bX)
    }{
    \sum_{c=0}^{m}
    \frac{\alpha^*(\potY{0}, \bX) \beta^*(c,\bX)}{ 1 + \alpha^*(\potY{0},\bX) \beta^*(c,\bX) }
    f^*(\potY{0} \cond \bX) f^*(Z=c \cond \bX)
    }
    \\  
    &
    \propto   \frac{  \beta^*(z,\bX) f^*(Z=z \cond \bX) 
    }{ 1 + \alpha^*(\potY{0},\bX) \beta^*(z,\bX) }  \ .
    \numeq \label{eq-proof-frakp1 poly}
\end{align*} 

From the proof of Lemma \ref{lemma-beta and gamma}, specifically the derivation in \eqref{eq-proof-vtilde L representation poly}, it follows that $\omega = \widetilde{\nu} = \nu - \nu^\dagger$ admits the representation:
\begin{align*}
	&
    \omega (\potY{0},Z=z,\bX) 
    \\
    &
    = 
    \big[ L_{z}(\potY{0},\bX) - \EXP \big\{ L_{z}(\potY{0},\bX) \cond \bX \big\} \big]
    \big\{ \ind(Z=z) - \Pr \big( Z=z \cond \bX \big) \big\} \ , 
    \numeq \label{eq-proof-omega L representation poly}
\end{align*}
for functions $L_{z}(\potY{0}, \bX) \in \mathcal{L}^2(\potY{0},\bX)$ for $z \in \{0,1,\ldots,m\}$. Note that $L_0$ is uniquely determined once $\{L_1,\ldots,L_{m} \}$ are fixed due to the constraint $\EXP \{ \nu^\dagger \cond \potY{0},\bX \} = 0$. 

Equation \eqref{eq-proof-omega L representation poly} implies that the following expressions hold for $ z \in \suppZ$:
\begin{align*}
    & 
    \mu^*(z,\bX \con \omega)
    \\
    &
    =
    \EXP \big\{ \omega(\potY{0},Z,\bX) \cond A=1,Z=z,\bX \big\}
    \\
    &
    = 
    \big\{ \ind(Z=z) - \Pr \big( Z=z \cond \bX \big) \big\}
    \big[ \mu^*(z,\bX \con L_z) - \EXP\{ L_z(\potY{0},\bX) \cond \bX\} \big] \ , 
\end{align*}
and
\begin{align*}
    &
    A
    \big\{
    \omega (\potY{0},Z=z,\bX)
    - 
    \mu^* (z,\bX \con \omega)
    \big\}
    \\
    &
    =
    A 
    \big\{ \ind(Z=z) - \Pr \big(Z=z \cond \bX \big) \big\}
    \big[ L_{z}(\potY{0},\bX) - \mu^*(z,\bX \con L_z) \big]  \ .
\end{align*}

Consequently, we obtain 
\begin{align*}
    &
    \EXP \big\{
    \omega(\potY{0},Z,\bX)
    - 
    \mu^*(z,\bX \con \omega)  \cond A=1,\potY{0},\bX \big\}
    \\
    &
    = 
    \sum_{z=0}^{m}
    \mathfrak{p}_{z} (\potY{0},\bX) f^*(Z \neq z \cond \bX) \big\{ L_z (\potY{0},\bX) - \mu^*(z,\bX \con L_z) \big\}
    \ .
\end{align*} 
Likewise, we have  
\begin{align*}
    &
    \EXP \big\{
    \potY{0}
    - \mu^*(Z,\bX) \cond A=1,\potY{0},\bX \big\}  
    = 
    \mathcal{G}^{(0)}
    - \sum_{z=0}^{m} \mathfrak{p}_z (\potY{0},\bX)
    \mu^*(z,\bX \con L_z) \ .
\end{align*} 

Therefore, $\omega$ satisfies \HL{$\omega$-ii} if and only if $\{L_0,\ldots,L_{m} \}$ satisfy the following condition:
\begin{align*}
&
 \potY{0}
    - \sum_{z=0}^{m} \mathfrak{p}_z (\potY{0},\bX) 
    \mu^*(z,\bX \con L_z)
\\
&
    =
    \sum_{z=0}^{m}
    \mathfrak{p}_{z} (\potY{0},\bX) f^*(Z \neq z \cond \bX) \big\{ L_z (\potY{0},\bX) - \mu^*(z,\bX \con L_z) \big\}  
    \\
    \Leftrightarrow
    \quad 
	&    
\sum_{z=0}^{m}
    \mathfrak{p}_{z} (\potY{0},\bX) f^*(Z \neq z \cond \bX) L_z (\potY{0},\bX)  
    \\
&
    =
 \potY{0}
 -
 \sum_{z=0}^{m}
 \mathfrak{p}_{z} (\potY{0},\bX) f^*(Z = z \cond \bX) \mu^*(z,\bX \con L_z)
    \numeq 
    \label{eq-proof-L equation poly}
\end{align*}
Note that \eqref{eq-proof-L equation poly} can be expressed as the following Fredholm integral equation of the second kind with a separable kernel; see \citet{Kress2014} for technical details on Fredholm integral equations. To make this explicit, let $\bm{L}(\potY{0},\bX) = (L_0(\potY{0},\bX),\ldots,L_m(\potY{0},\bX))\T \in \R^{m+1}$. Then, we find
\begin{align*}
	&	\sum_{z=0}^{m}
    \mathfrak{p}_{z} (\potY{0},\bX) f^*(Z \neq z \cond \bX) L_z (\potY{0},\bX)
     =
    \bm{P}\T (\potY{0},\bX) 
    \bm{L}(\potY{0},\bX) \ ,
\end{align*}
and
\begin{align*}
	&
	\sum_{z=0}^{m}
    \mathfrak{p}_{z} (\potY{0},\bX) f^*(Z = z \cond \bX) 
	\mu^*(z,\bX \con L_z)
    \\
	&
	= 
	\sum_{z=0}^{m}
    \mathfrak{p}_{z} (\potY{0},\bX) f^*(Z = z \cond \bX) 
    \int 
	L_{z} (t,\bX) f^*(\potY{0}=t \cond A=1,Z=z,\bX) \, dt
	\\
	&
	=
	\bm{Q}\T (\potY{0},\bX) 
	\int  \bm{L}(t,\bX) \circ \bm{f}(\potY{0}=t \cond A=1, \suppZ,\bX)  \, dt \ ,
\end{align*}  
where $\circ$ denotes the element-wise (Hadamard) product, and 
\begin{align*}
	&
	\bm{P} (\potY{0},\bX)
    \equiv 
    \left[ 
    	\begin{array}{c}
    \mathfrak{p}_{0} (\potY{0},\bX) f^*(Z \neq 0 \cond \bX)
	\\
    \vdots
    \\
    \mathfrak{p}_{m} (\potY{0},\bX) f^*(Z \neq m \cond \bX)
    	\end{array}
    \right]  
    \in \R^{m+1}
     \ , 
    \\
    &
\bm{Q}(\potY{0},\bX)
\equiv 
\left[ 
	\begin{array}{c}
	 \mathfrak{p}_{0} (\potY{0},\bX) f^*(Z = 0 \cond \bX)
	 \\
	 \vdots
	 \\
    \mathfrak{p}_{m} (\potY{0},\bX) f^*(Z  = m \cond \bX)
	\end{array}
\right] 
\in \R^{m+1}
    \ , 
    \\
    &
	\bm{f}(\potY{0}=t \cond A=1, \suppZ,\bX) 
    =
    \left[
    	\begin{array}{c}
    	f^*(\potY{0}=y \cond A=1,Z=0,\bX) 
    	\\
    	\vdots
    	\\
    	f^*(\potY{0}=y \cond A=1,Z=m,\bX) 
    	\end{array} \right] 
    	\in \R^{m+1}
    	\ .
\end{align*}
Consequently, we establish a Fredholm integral equation of the second kind as follows:
\begin{align*}
	&
	\bm{P}\T (\potY{0},\bX) 
    \bm{L}(\potY{0},\bX)
    \\
    &
    =
    \potY{0}
    -
   \bm{Q}\T (\potY{0},\bX) 
   \underbrace{
	\int  \bm{L}(t,\bX) \circ \bm{f}(\potY{0}=t \cond A=1,  \suppZ,\bX)  \, dt }_{ \equiv \bm{\mu} (\suppZ,\bX \con \bm{L}) }
	  \ .	
    \numeq 
    \label{eq-proof-L equation poly Fredholm}
\end{align*}
Note that \eqref{eq-proof-L equation poly Fredholm} defines a linear system and therefore admits a closed-form solution. However, the system is underdetermined. Specifically, it involves $m+1$ unknowns but only two restrictions---one from the identity \eqref{eq-proof-L equation poly Fredholm} and one from the constraint and the other from the constraint on $\bm{L}$ given by $\EXP \{ \nu^\dagger \cond \potY{0},\bX \} = 0$.  Therefore, the solution is generally not unique when $m>1$, i.e., non-binary $Z$.

Among multiple solutions, we may consider the minimum norm solution. Suppose that $\bm{L}$ has the following form:
\begin{align*}
	\bm{L}(\potY{0},\bX) 
	=
	\frac{ 
	\potY{0} - \bm{Q}\T(\potY{0},\bX) \bm{\mu}(\suppZ,\bX \con \bm{L})	
	}{ 
	\big\|	\bm{P} (\potY{0},\bX) \big\|_{2}^{2}
	 }
	 \bm{P} (\potY{0},\bX)  \ .
\end{align*}
Then, $\bm{\mu} (\suppZ,\bX \con \bm{L})$ must satisfy the following relationship: 
\begin{align*} \
&
\bm{\mu} (\suppZ,\bX \con \bm{L})
\\
&
=
\int  \bm{L}(t,\bX) \circ \bm{f}(\potY{0}=t \cond A=1,  \suppZ,\bX)  \, dt 
\\
&
=
\int 
	\frac{ 
	t - \bm{Q}\T(t,\bX) \bm{\mu}(\suppZ,\bX \con \bm{L})	
	}{ 
	\big\|	\bm{P} (t,\bX) \big\|_{2}^{2}
	 }
	 \bm{P} (t,\bX)
	 \circ \bm{f}(\potY{0}=t \cond A=1,  \suppZ,\bX)  \, dt 
\\
&
=
\underbrace{
\int 
	\frac{ 
	t \bm{P} (t,\bX)
	 \circ \bm{f}(\potY{0}=t \cond A=1,  \suppZ,\bX)  
	}{ 
	\big\|	\bm{P} (t,\bX) \big\|_{2}^{2}
	 }
	  \, dt }_{\equiv \bm{b}(\suppZ,\bX)}
	 \\
	 &
	 \qquad 
	 - 
\underbrace{
	 \bigg\{
	 \int 
	\frac{ 
	 \bm{P} (t,\bX)
	 \circ \bm{f}(\potY{0}=t \cond A=1,  \suppZ,\bX) 
	}{ 
	\big\|	\bm{P} (t,\bX) \big\|_{2}^{2}
	 } 
	\bm{Q}\T(t,\bX) 
	 \, dt  
	 \bigg\} }_{\equiv \bm{\Omega}(\suppZ,\bX)}
	 \bm{\mu}(\suppZ,\bX \con \bm{L})	
	 \\
	 &
	 \equiv
	 \bm{b}(\suppZ,\bX) - \bm{\Omega}(\suppZ,\bX)
	 \bm{\mu} (\suppZ,\bX \con \bm{L}) \ .
\end{align*}
Consequently, we find
\begin{align*}
 \bm{\mu} (\suppZ,\bX \con \bm{L})
=
\big\{ I + 
\bm{\Omega}(\suppZ,\bX) \big\}^{+} \bm{b}(\suppZ,\bX) \ ,
\end{align*}
where $B^{+}$ is the Moore-Penrose pseudo-inverse of a matrix $B$, which is used to address the rank deficiency induced by the constraint on $\bm{L}$ given by $\EXP \{ \nu^\dagger \cond \potY{0},\bX \} = 0$. This yields a closed-form representation of $\bm{L}(\potY{0},\bX)$:
\begin{align*}
	\bm{L}(\potY{0},\bX) 
	=
	\frac{ 
	\potY{0} - \bm{Q}\T(\potY{0},\bX) 
	\big\{ I + \bm{\Omega}(\suppZ,\bX) \big\}^{+} \bm{b}(\suppZ,\bX) 
	}{ 
	\big\|	\bm{P} (\potY{0},\bX) \big\|_{2}^{2}
	 }
	 \bm{P} (\potY{0},\bX)  \ .
\end{align*}

\subsubsection{Inference of the Quantile Treatment Effect on the Treated} \label{sec-supp-QTT}

In this Section, we provide a simple method for inferring the quantile treatment effect on the treated (QTT). Note that the QTT is defined by $\tau_{q}^* = \tau_{1,q}^* - \tau_{0,q}^*$ where $\tau_{a,q}^* = F_{\potY{a}|A=1}^{*-1}(q)$ for $a \in \{0,1\}$; here,  $F_{\potY{a} \cond A=1}^*$ is the cumulative distribution function (CDF) of $\potY{a} \cond (A=1)$ and $q \in (0,1)$ is a predetermined quantile of interest. We assume that the inverses of the CDFs are well-defined, at least in a local neighborhood around the true quantile. This condition is automatically satisfied when the outcome is continuous.

The proposed inference procedure is motivated by \citet{Berger1994} and \citet{Lee2025}. The ultimate goal is to obtain a $100(1-c)$\% confidence interval for $\tau_q^*$ for a given $c>0$. We first select a constant $c_{1} < c$; \citet{Berger1994} recommended choosing a small value, such as $c_{1}=10^{-3}$ or $c_{1}=10^{-4}$. Then, from the standard bootstrap, one can obtain a $100(1-c_{1})$\% confidence interval for $\tau_{1,q}^*$, which is denoted by $\widehat{\mathcal{C}}_{1,c_1}$.  

Next, note that Theorem \ref{thm-AN} can be generalized for constructing an asymptotically normal estimator for $\rho^* \equiv \EXP \{ \mathcal{G}(\potY{0},\bX)  \cond A=1 \}$, where $\mathcal{G}(\cdot)$ is a fixed, uniformly bounded function; see Section \ref{sec-proof-AN} for this relaxation. For our purposes, we choose a specifically designed function $\mathcal{G}$. Specifically, for the candidate QTT value, denoted by $\tau_{q} \in \R$, and the candidate quantile of $Y \cond (A=1)$, denoted by $\tau_{1,q} \in \R$, define 
\begin{align*}
&
\mathcal{G}(\potY{0} \con \tau_{q}, \tau_{1,q},q) \equiv \ind \{ \potY{0} \leq \tau_{1,q} - \tau_{q} \} - q \ ,
\\
&
\rho^*(\tau_{q}, \tau_{1,q}, q)
\equiv 
\EXP \big\{ \mathcal{G}(\potY{0} \con \tau_{q}, \tau_{1,q},q) \cond A=1 \big\} \ .
\end{align*}
Note that $\tau_{1,q} - \tau_{q}$ corresponds to $\tau_{0,q}$, the candidate counterfactual quantile of $\potY{0} \cond (A=1)$. Consequently, given $( \tau_{q}, \tau_{1,q} , q)$, one can obtain a p-value for testing $H_0: \rho^*(\tau_q, \tau_{1,q},q) = 0$ by inverting the confidence interval obtained from Theorem \ref{thm-AN}. This p-value can be expressed as
\begin{align*}
    p(\tau_{q}, \tau_{1,q},q) = 2 \overline{\Phi} \bigg( \sqrt{N}
    \bigg|
    \frac{  \widehat{\rho}(\tau_{q},\tau_{1,q},q) }
    { \widehat{\sigma}_{\rho}(\tau_{q},\tau_{1,q},q) }
    \bigg|
    \bigg) \ , 
\end{align*}
where $\overline{\Phi}$ is the the survival function (complementary cumulative distribution function) of the standard normal distribution, $\widehat{\rho}(\tau_{q},\tau_{1,q},q)$ is an estimator for $\rho^*(\tau_{q},\tau_{1,q},q)$, and $\widehat{\sigma}_{\rho}^2(\tau_{q},\tau_{1,q},q)$ is the corresponding variance estimator; we remark that $\widehat{\rho}(\tau_{q},\tau_{1,q},q)$ and $\widehat{\sigma}_{\rho}^2(\tau_{q},\tau_{1,q},q)$ can be constructed following the procedure in Section \ref{sec-estimation}.

In turn, we consider the following adjusted p-value:
\begin{align*}
    \overline{p}(\tau_{q},q) \equiv 
    \sup
    \big\{ 
    p(\tau_{q}, \tau_{1,q},q)
    \cond \tau_{1,q} \in \widehat{\mathcal{C}}_{1,c_1} 
    \big\}
    +
    c_1 \ .
\end{align*}
This quantity can be interpreted as a p-value for testing the hypothesis $H_0: \rho^*(\tau_q, \tau_{1,q}^* ,q) = 0$, where $\tau_{1,q}$ is set to its true value $\tau_{1,q}^*$. Consequently, one can invert this p-value in order to obtain a $100(1-c)\%$ confidence interval for $\tau_{q}^*$, i.e.,
\begin{align*}
    \widehat{\mathcal{C}}_{c} \equiv \big\{ 
        \tau_{q} \cond \overline{p}(\tau_q,q) > c 
    \big\} \ .
\end{align*}
Under the condition for Theorem \ref{thm-AN}, one can establish that
\begin{align*}
    \lim_{N \rightarrow \infty }
    \Pr \big( \tau_q^* \in  \widehat{\mathcal{C}}_{c} \big) = 1- c \ . 
\end{align*}
The proof is analogous to that in \citet{Lee2025} and is omitted here.

\subsubsection{Inference of the Average Treatment Effect} \label{sec-supp-ATE}

In the main paper, all IV assumptions are stated with respect to $\potY{0}$. Suppose that an analogous set of assumptions also holds for $\potY{1}$, in addition to $\potY{0}$. Specifically, \HL{A2}, \HL{IV2}, and \HL{IV4} are replaced with the following \HL{A2'}, \HL{IV2'}, and \HL{IV4'}:
\begin{itemize}
    \item[\HT{A2'}] (Overlap) For each $a \in \{0,1\}$ and $\bX \in \suppX$, the support of $\potY{a} \cond (A=a',Z=z,\bX)$ is identical for all $a' \in \{0,1\}$ and $z \in \suppZ$, which is denoted by $\suppYX^{(a)}$; 
    \item[\HT{IV2'}] (Unconfoundedness) $\potY{a} \indep Z \cond \bX$ for both $a \in \{0,1\}$; 
    \item[\HT{IV4'}] (Logit-separable Treatment Mechanism) 
    For all $y \in \suppYX^{(a)}$, $a \in \{0,1\}$, $(z,z') \in \suppZ$, and $\bX \in \suppX$, we have
    \begin{align*} 
        \frac{ \Pr(A=1-a \cond \potY{a}=y,Z=z,\bX ) }{ \Pr(A=a \cond \potY{a}=y,Z=z,\bX ) }
        =
        \alpha_{a}^*(y,\bX) 
        \beta_{a}^*(z,\bX) \ .
    \end{align*}
\end{itemize}
Note that the generative model described in Section \ref{sec-generative model} is already compatible with \HL{A1}, \HL{A2'}, \HL{IV1}, \HL{IV2'}, \HL{IV3}, and \HL{IV4'}. Therefore, it can also serve as an illustrative example for justifying these assumptions.

Under these assumptions, a parallel set of results can be established for causal effects defined over the entire population, such as the average treatment effect (ATE), defined by $\tau_{\text{ATE}}^* = \EXP\{ \potY{1} - \potY{0} \}$. For example, the following result is an analogue of Theorem \ref{thm-IF} for the ATE:

    \begin{theorem} \label{thm-IF ATE}
Suppose that Assumptions \HL{A1}, \HL{A2'}, \HL{IV1}, \HL{IV2'}, \HL{IV3}, and \HL{IV4'} hold for $(\potY{a},A,Z,\bX)$ for $a \in \{0,1\}$, and let $\model_{\text{ATE}}$ be a semiparametric model for the observed data law of $\bO=(Y,A,Z,\bX)$ under these assumptions. Suppose further that there exist functions $\omega_{a}(\potY{a},Z,\bX)$ for $a \in \{0,1\}$ satisfying the following conditions: \\
\makebox[2.3cm][l]{\HT{$\omega$-i-ATE}} $\omega_{a} = \nu_{a} -  \EXP\{ \nu_{a} \cond \potY{a},\bX\} - \EXP(\nu_{a} \cond Z,\bX) + \EXP(\nu_{a} \cond \bX)$ \\
\makebox[2.3cm][l]{} for some function $\nu_{a}(\potY{a},Z,\bX)$; \\
\makebox[2.3cm][l]{\HT{$\omega$-ii-ATE}} $\omega_{a}$ solves the following integral equation:  
    \begin{align*}
    &
    \EXP \big[  
    \big\{ \potY{a} - \mu_{a}^*(Z,\bX)  \big\}  - \big\{  \omega_{a} (\potY{a},Z,\bX) - \mu_{a}^*(Z,\bX \con \omega_{a}) \big\}  
    \, \big| \, A=1-a,\potY{a},\bX \big] = 0 \ , 
    \end{align*}  
    where $\mu_{a}^*(Z,\bX \con h_{a}) = \EXP \{ h_{a}(\potY{a},Z, \bX) \cond A=1-a,Z,\bX \}$.
    
Then, the following function is an influence function for $\tau_{\text{ATE}}^*$ in model $\model_{\text{ATE}}$, 
  \begin{align*}
  &
  \InfFt_{\text{ATE}}^*(\bO)
  \\
  &
  =
  \left[ 
    \begin{array}{l}
    \{ A \alpha_{1}^*(Y,\bX) \beta_{1}^*(Z,\bX) - (1-A) \}
     \big\{ \mu_{1}^*(Z,\bX) - Y \big\}    
     \\
       -
       A \alpha_{1}^*(Y,\bX) \beta_{1}^*(Z,\bX) \big\{ \mu_{1}^*(Z,\bX \con \omega_{1}) - \omega_{1}(Y,Z,\bX)  \big\}
  \\
  -
    A
    \omega_{1}(Y,Z,\bX) 
    - 
    (1-A) 
    \mu_{1}^*(Z,\bX \con \omega_{1})
    \end{array}
    \right]
    \\
    &
    + 
   \left[ 
    \begin{array}{l}
    \{ A - (1-A) \alpha_{0}^*(Y,\bX) \beta_{0}^*(Z,\bX) \}
     \big\{ Y - \mu_{0}^*(Z,\bX) \big\}    
     \\
       +
       (1-A) \alpha_{0}^*(Y,\bX) \beta_{0}^*(Z,\bX) \big\{ \omega_{0}(Y,Z,\bX) - \mu_{0}^*(Z,\bX \con \omega_{0}) \big\}
  \\
  +
    A
    \mu_{0}^*(Z,\bX \con \omega_{0})
    + 
    (1-A) \omega_{0}(Y,Z,\bX) 
    \end{array}
    \right]
    -
    \tau_{\text{ATE}}^*
     \ .
     \numeq  
     \label{eq-IF ATE}
\end{align*}
In addition, every IF for $\tau_{\text{ATE}}^*$ in model $\model_{\text{ATE}}$ has a form of \eqref{eq-IF ATE}. 
\end{theorem}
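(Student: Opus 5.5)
The plan is to reduce Theorem \ref{thm-IF ATE} to Theorem \ref{thm-IF}, applied once to each treatment arm. The key observation is that the ATE splits as
\[
\tau_{\text{ATE}}^* = \EXP\{\potY{1}\} - \EXP\{\potY{0}\},
\]
and each counterfactual mean decomposes into an observed piece and a counterfactual piece. For the arm $a=0$, we have $\EXP\{\potY{0}\} = \EXP\{(1-A)Y\} + \EXP\{A\,\potY{0}\}$. The second term is exactly the unnormalized ATT-type functional treated in Theorem \ref{thm-IF}, namely $\EXP(A)\,\EXP\{\potY{0}\cond A=1\}$. The arm $a=1$ is handled by symmetry: relabel $A\mapsto 1-A$ and $\potY{0}\mapsto \potY{1}$. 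Under \HL{IV4'}, the odds of $A=0$ given $(\potY{1},Z,\bX)$ factor as $\alpha_1^*\beta_1^*$, so the relabelled data $(\potY{1},1-A,Z,\bX)$ satisfy \HL{A2}, \HL{IV2} and \HL{IV4} with $\mu_1^*$ playing the role of $\mu^*$. Then $\EXP\{\potY{1}\} = \EXP(AY) + \EXP\{(1-A)\,\potY{1}\}$.

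The first step is to derive the IF of the unnormalized functional $\psi_0 = \EXP\{A\,\potY{0}\}$. Reading off the proof of Theorem \ref{thm-IF}, the IF of $\tau^*$ is $[\phi_0 - A\tau^*]/\EXP(A)$, where $\phi_0$ is the bracketed expression without the $-A\tau^*$ term. By the quotient rule, the IF of $\EXP\{AY\} - \psi_0$ is $\phi_0 - \EXP(\phi_0)$, so the IF of $\psi_0$ is $AY - \phi_0$ centered. Adding the trivial IF $(1-A)Y$ of $\EXP\{(1-A)Y\}$ shows that the IF of $\EXP\{\potY{0}\}$ is $Y - \phi_0$ plus centering. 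One checks that this is a rearrangement of the second bracket in \eqref{eq-IF ATE}, using $Y - \{A - (1-A)\alpha\beta\}(Y-\mu)$ and the identity $A\cdot$ (residual) $=$ the outcome-regression piece. Applying the same computation to the relabelled $a=1$ problem, with $\omega_1$ solving \HL{$\omega$-ii-ATE}, gives the first bracket with the signs shown. Subtracting the two pieces and centering by $\tau_{\text{ATE}}^*$ yields \eqref{eq-IF ATE}.

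The main obstacle is that model $\model_{\text{ATE}}$ imposes the restrictions of both arms simultaneously, so its tangent space is the intersection of two constraint sets. One cannot simply add the IFs from two separate models without justification. I would argue that the observed-data law is locally saturated under each arm's restrictions: the identification argument of Theorem \ref{thm-Psi} shows that $\theta$ alone determines $(\alpha_a,\beta_a)$ for each arm, with no testable restriction beyond inequalities. Consequently each arm's model has the same tangent space, namely all mean-zero functions, modulo the \HL{$\omega$-i} orthogonality structure. Since the functional is the difference of two pathwise differentiable maps on this common tangent space, its IFs are the differences of the respective IFs.

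For the "every IF" claim, I would take the set of IFs in each arm, each of the form in Theorem \ref{thm-IF} indexed by $\omega_a$. The key point to verify is that the freedom in the set of IFs of the difference corresponds exactly to the pair $(\omega_0,\omega_1)$, with no additional cross terms. I expect this step to need the most care, via the orthocomplement of the tangent space computed in the proof of Theorem \ref{thm-IF}.
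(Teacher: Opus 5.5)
Your arm-by-arm reduction to Theorem \ref{thm-IF} (delta method for $\EXP\{A\potY{0}\}$, relabelling $A\mapsto 1-A$, $\potY{0}\mapsto\potY{1}$ for arm 1) is a sound route, and more economical than redoing the pathwise-derivative and orthocomplement calculation jointly, as the paper's ``analogously'' suggests. However, the argument you use to glue the arms together is false.

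The theorem allows general $\mathcal{Z}$. For $|\mathcal{Z}|>2$, neither arm's observed-data model $\model_a$ is locally saturated. The conclusion of Theorem \ref{thm-Psi}, that $g^\star$ and $\alpha$ do not depend on $(z_0,z_1)$, is an equality restriction on $\theta$; this is the overidentification the Supplement discusses. It is also why Theorem \ref{thm-IF} has a nontrivial $\mathcal{T}^\perp$. If each arm's tangent space were all of $\mathcal{L}_0^2(\bO)$, the IF would be unique and the last sentence of the theorem would be vacuous. Your argument is valid only for binary $Z$. Moreover, the two arm models do not share a tangent space: $\mathcal{T}_0^\perp\subseteq\mathcal{L}_0^2((1-A)Y,A,Z,\bX)$, whereas $\mathcal{T}_1^\perp\subseteq\mathcal{L}_0^2(AY,A,Z,\bX)$.

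For the first claim none of this is needed. At the observed-data level $\model_{\text{ATE}}=\model_0\cap\model_1$ (couple $\potY{0}$ and $\potY{1}$ conditionally independently given $(A,Z,\bX)$). A gradient of $\EXP\{\potY{a}\}$ relative to the larger model $\model_a$ is then automatically a gradient relative to $\model_{\text{ATE}}$.

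Done carefully, this gives $(Y-\phi_1)-(Y-\phi_0)-\tau^*_{\text{ATE}}=\phi_0-\phi_1-\tau^*_{\text{ATE}}$, where $\phi_1$ is your $\phi_0$ after relabelling. The $Y$'s cancel across arms, so the second bracket is $\phi_0$ itself, not a rearrangement of $Y-\phi_0$. Suppressing arguments, the arm-1 piece is
\[
-\phi_1=\{A\alpha_1^*\beta_1^*-(1-A)\}\{Y-\mu_1^*\}-A\alpha_1^*\beta_1^*\{\omega_1-\mu_1^*(Z,\bX\con\omega_1)\}-A\omega_1-(1-A)\mu_1^*(Z,\bX\con\omega_1).
\]
Its first two terms have the opposite sign to the displayed first bracket. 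With the printed signs, \eqref{eq-IF ATE} has mean $-2\,\EXP\{(1-A)(\potY{1}-\potY{0})\}\neq0$ in general. So your computation does not reproduce ``the first bracket with the signs shown''; it corrects it.

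The ``every IF'' claim is a genuine gap, and the orthocomplement from Theorem \ref{thm-IF} cannot close it alone, because that orthocomplement is relative to $\model_0$, not $\model_{\text{ATE}}$. Functions of the (corrected) form \eqref{eq-IF ATE} are one IF plus $\mathcal{T}_0^\perp+\mathcal{T}_1^\perp$. Each $t_0\in\mathcal{T}_0^\perp$ equals $(1-A)(1+\alpha_0^*\beta_0^*)\{\widetilde v-\mu_0^*(Z,\bX\con\widetilde v)\}+\mu_0^*(Z,\bX\con\widetilde v)$ with $\widetilde v$ solving the homogeneous equation, and $t_1$ is its relabelled analogue. Each is absorbed into $\omega_a$ exactly as at the end of that proof. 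All IFs, however, are one IF plus $\mathcal{T}_{\text{ATE}}^\perp$. The trivial inclusion $\mathcal{T}_{\text{ATE}}\subseteq\mathcal{T}_0\cap\mathcal{T}_1$ only gives $\mathcal{T}_{\text{ATE}}^\perp\supseteq\overline{\mathcal{T}_0^\perp+\mathcal{T}_1^\perp}$.

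The missing step is the equality $\mathcal{T}_{\text{ATE}}=\mathcal{T}_0\cap\mathcal{T}_1$. It follows from the factorization $f(\bO)=f(\bX)f(Z\mid\bX)f(Y,A=0\mid Z,\bX)^{1-A}f(Y,A=1\mid Z,\bX)^{A}$:
\begin{itemize}
\item The arm-$a$ restrictions act only on $f(Y=y,A=a\mid Z,\bX)=f(\potY{a}=y\mid\bX)/\{1+\alpha_a(y,\bX)\beta_a(Z,\bX)\}$ (Lemma \ref{lemma-proof-useful}(iv) and its arm-1 analogue).
\item The only coupling between arms is $f(A=0\mid Z,\bX)+f(A=1\mid Z,\bX)=1$.
\end{itemize}
Hence a score in $\mathcal{T}_0\cap\mathcal{T}_1$ has an $A=0$ part that is an arm-0 score and an $A=1$ part that is an arm-1 score, with conditional means given $(Z,\bX)$ cancelling. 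Such a pair is the score of a submodel of $\model_{\text{ATE}}$: restore normalization exactly by moving $\beta_1(z,\bX)$, on which $f(A=1\mid Z=z,\bX)$ depends strictly monotonically. With this identity, and up to closedness of $\mathcal{T}_0^\perp+\mathcal{T}_1^\perp$, every IF has the stated (corrected) form.
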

The remaining results follow analogously. In particular, a consistent, asymptotically normal, and semiparametric efficient estimator can be constructed following the approach in Section \ref{sec-estimation} under binary $Z$, and inference for the population quantile treatment effect of the population $\tau_{q,\text{QTE}}^* \equiv F_{\potY{1}}^{*-1}(q) - F_{\potY{0}}^{*-1}(q)$, can be obtained analogously to Section \ref{sec-supp-QTT}. These results are omitted, as their derivation is straightforward.

\subsubsection{A Missing Data Setting} \label{sec-supp-MNAR}

We reinterpret our IV framework for a nonignorable missing data model. Consider a data structure $\{ (1-A)W, A, Z,\bX \}$ where $\bX$ is a collection of baseline covariates, $W$ is an outcome of
interest, $Z$ is an instrument, and $A$ is an indicator of whether $W$ is missing $(A = 1)$ or measured $(A = 0)$. Note that $A$ is defined in this way to maintain consistency with the notation used in the main paper.

For simplicity, let $\tau_W^* \equiv \EXP(W)$ represent the target estimand; more generally, one could consider $\EXP\{\mathcal{G}(W, \bX)\}$ as the target, where $\mathcal{G}(\cdot)$ is a fixed, uniformly bounded function. Note that $\EXP(W)
    =
    \EXP ( W \cond A=0 ) \Pr(A=0)
    +
    \EXP ( W \cond A=1 ) \Pr(A=1)$, and $\EXP ( W \cond A=0 )$ and $\Pr(A=1)$ are identified and can be estimated from the observed data based on their EIFs. Therefore, to identify and construct an IF-based estimator, it suffices to focus on $\EXP ( W \cond A=1 )$. 
    
    To this end, we make assumptions on the full data $(W,A,Z,\bX)$ where $W$ replaces $\potY{0}$ from the main paper. Under these assumptions, all results of the main paper extend to the missing data setting. For example, the following result is the analogue of Theorem~\ref{thm-IF} for the missing data setting.
    
    \begin{theorem} \label{thm-IF MNAR}
Suppose that Assumptions \HL{A1}-\HL{A2} and \HL{IV1}-\HL{IV4} hold for $(W,A,Z,\bX)$ where $W$ replaces $\potY{0}$, and let $\model_{W}$ be a semiparametric model for the observed data law of $\{ (1-A)W, A, Z,\bX \}$ under these assumptions. Suppose further that there exists a function $\omega(W,Z,\bX)$ satisfying the following conditions: \\
\makebox[2.3cm][l]{\HT{$\omega$-i-missing}} $\omega = \nu -  \EXP(\nu \cond W,\bX) - \EXP(\nu \cond Z,\bX) + \EXP(\nu \cond \bX)$ for some function $\nu(W,Z,\bX)$; \\
\makebox[2.3cm][l]{\HT{$\omega$-ii-missing}} $\omega$ solves the following integral equation:  
    \begin{align*}
    &
    \EXP \big[ \big\{ W - \mu^*(Z,\bX) 
    \big\}  - \big\{  \omega (W,Z,\bX) - \mu^*(Z,\bX \con \omega) \big\} \cond A=1,W,\bX \big] = 0 \ .
    \end{align*}
    
Then, the following function is an influence function for $\tau_W^* $ in model $\model_{W}$, 
  \begin{align*}
  \InfFt_{W}^*(\bO)
  =
   \left[ 
    \begin{array}{l}
    (1-A) W
    +
    (1-A) \alpha^*(W,\bX) \beta^*(Z,\bX)
     \big\{ W - \mu^*(Z,\bX) \big\}     
     \\
     +
     A
     \mu^*(Z,\bX)
     - \tau_{W}^*
     \\
       -
       (1-A) \alpha^*(Y,\bX) \beta^*(Z,\bX) \big\{ \omega(Y,Z,\bX) - \mu^*(Z,\bX \con \omega) \big\}
  \\
  -
    A
    \mu^*(Z,\bX \con \omega)
    - 
    (1-A) \omega(Y,Z,\bX) 
    \end{array}
    \right]
     \ .
     \numeq 
     \label{eq-IF MNAR}
\end{align*}
In addition, every IF for $\tau_{W}^*$ in model $\model_{W}$ has a form of \eqref{eq-IF MNAR}. 
\end{theorem}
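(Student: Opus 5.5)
The plan is to reduce Theorem \ref{thm-IF MNAR} to Theorem \ref{thm-IF} through an explicit relabeling of variables, then handle the two observed-data pieces of $\tau_W^*$ separately.

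\textbf{Step 1: relabeling.} Set $\potY{0} \equiv W$ and $Y \equiv (1-A)W$, taking $Y$ to be arbitrary (for instance $0$) when $A=1$.

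\textbf{Step 2: the observed-data law is the same.} With this identification, the observed-data law of $\{(1-A)W,A,Z,\bX\}$ is exactly the law of $(Y,A,Z,\bX)$ restricted to the components used in the main paper. In the causal model, the distribution of $Y$ given $A=1$ concerns $\potY{1}$, which is completely unrestricted and enters no assumption. Consequently the tangent space of $\model_W$ coincides with the tangent space of $\model$ for the $(\potY{0},A,Z,\bX)$ part. The only difference is that the score component for $Y \cond A=1,Z,\bX$ is absent; this component is orthogonal to everything else anyway.

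\textbf{Step 3: decompose the target.} Write
\[
\tau_W^* = \EXP\{(1-A)W\} + \EXP(A)\,\theta_1^*, \qquad \theta_1^* \equiv \EXP(W \cond A=1).
\]
The first term is a plain mean of observed data, with influence function $(1-A)W - \EXP\{(1-A)W\}$. For the second term, the product rule combines the influence function of $\EXP(A)$, namely $A-\EXP(A)$, with $\EXP(A)$ times the influence function of $\theta_1^*$.

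\textbf{Step 4: obtain the influence function of $\theta_1^*$ from Theorem \ref{thm-IF}.} The quantity $\theta_1^* = \EXP(\potY{0}\cond A=1)$ is exactly the counterfactual part of the ATT, since $\tau^* = \EXP(Y\cond A=1) - \theta_1^*$. So its influence function is the influence function of $\EXP(Y\cond A=1)$ minus that of $\tau^*$ from Theorem \ref{thm-IF}. Concretely, from \eqref{eq-IF}, $\EXP(A)$ times the influence function of $\theta_1^*$ equals
\[
A\mu^* + (1-A)\alpha^*\beta^*(Y-\mu^*) - A\theta_1^* - \bigl[(1-A)\alpha^*\beta^*\{\omega-\mu^*(\cdot\con\omega)\} + A\mu^*(\cdot\con\omega) + (1-A)\omega\bigr].
\]
I will re-derive this directly rather than rely only on the subtraction. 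Directly, it follows from the IPW/OR identity $\theta_1^* = \EXP\{(1-A)\alpha^*\beta^* W\}/\EXP(A)$ together with the same pathwise-derivative computation used for Theorem \ref{thm-IF}; the correction terms involving $\omega$ arise from perturbing $\alpha^*$. The function $\omega$ must satisfy \HL{$\omega$-i-missing} and \HL{$\omega$-ii-missing}, which are verbatim the conditions \HL{$\omega$-i}–\HL{$\omega$-ii}. Because the proof of Theorem \ref{thm-IF} uses only the full-data assumptions on $(\potY{0},A,Z,\bX)$ and never uses $\potY{1}$, it transfers without change.

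\textbf{Step 5: assemble.} Summing the pieces, the terms $A\theta_1^*$ cancel against $\EXP(A)\theta_1^*$ from the product rule, and the constants collect into $-\tau_W^*$. This yields \eqref{eq-IF MNAR}, with $Y$ read as $W$ on $\{A=0\}$.

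\textbf{Step 6: every influence function has this form.} This transfers from Theorem \ref{thm-IF} because the tangent spaces coincide. The map from influence functions of $\theta_1^*$ to influence functions of $\tau_W^*$ is an affine bijection: add a fixed observed-data function. The part of Theorem \ref{thm-IF}'s characterization that consists of arbitrary elements orthogonal to the tangent space (parametrized by $\omega$) therefore carries over unchanged.

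\textbf{Main obstacle.} The one point needing care is in Steps 2 and 6: checking that dropping $\potY{1}$ does not enlarge the orthocomplement of the tangent space. Scores of $Y\cond A=1,Z,\bX$ live only on $\{A=1\}$, and functions of $(\potY{1},A=1,Z,\bX)$ are not observed in the missing-data structure. So the orthogonal complement relevant to $\model_W$ is the same set of $\omega$-indexed functions. I would verify this by writing the observed-data tangent space as the projection of the full-data tangent space, exactly as in the proof of Theorem \ref{thm-IF}, and noting that the $A=1$ outcome component is absent from both sides.
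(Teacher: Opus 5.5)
Your proposal is correct and follows essentially the route the paper indicates: separate the identified pieces $\EXP\{(1-A)W\}$ and $\EXP(A)$ from the counterfactual part $\EXP(W\mid A=1)$, read the influence function of the latter off Theorem~\ref{thm-IF}, and carry the $\omega$-indexed orthocomplement over unchanged. This transfer is justified because the paper's proof already places $\mathcal{T}^\perp$ inside $\mathcal{L}_0^2(\bO_0)$. The only slip is bookkeeping in Step 5: the $-A\theta_1^*$ inside $\EXP(A)$ times the influence function of $\theta_1^*$ cancels the $+A\theta_1^*$ in $\{A-\EXP(A)\}\theta_1^*$, and it is $-\EXP(A)\theta_1^*-\EXP\{(1-A)W\}$ that collects into $-\tau_W^*$.
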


The remaining results follow analogously. Finally, quantiles of the outcome subject to missingness can be inferred using the approaches described in Section \ref{sec-supp-QTT}. These results are omitted, as they can be readily deduced.

\subsection{Details of the Simulation} \label{sec-supp-simulation}

We provide details on the construction of the other three estimators. First,  $\widehat{\tau}_{\text{2SLS}}$ and $\widehat{\tau}_{\text{LS}}$ were obtained based on the following R code:
\begin{itemize}
    \item \makebox[1.2cm][l]{$\widehat{\tau}_{\text{2SLS}}$:} \texttt{ivreg::ivreg(Y $\sim$ A + X | Z + X)};
    \item \makebox[1.2cm][l]{$\widehat{\tau}_{\text{LM}}$:} \texttt{lm(Y $\sim$ A + Z + X)}.     
\end{itemize}
Here, the two-stage least squares (2SLS) estimator is implemented via the \texttt{ivreg} R package \citep{ivregpackage}. Second, $\widehat{\tau}_{\text{Ign}}$ was constructed as follows. \citet{Hahn1998} showed that the EIF for the ATT $\tau^*$ under the assumption of no unmeasured confounding is given by
\begin{align*}
    \frac{ \big\{ A - (1-A) \frac{ \Pr(A=1 \cond \bL) }{ \Pr(A=0 \cond \bL) } \big\} \{ Y - \EXP(Y \cond A=1 , \bL) \} - A \tau^* }
    { 
        \EXP(A)
    } \ ,
\end{align*}
where $\bL$ is the collection of confounders of $A$ and $Y$. Therefore, an EIF-based estimator can be constructed from the cross-fitting approach described in Section \ref{sec-estimation}. Following the main paper, we randomly partition the $N$ units, denoted by $\mathcal{I} = \{1,\ldots,N\}$, into $K$ non-overlapping folds $\{ \II_1,\ldots, \II_K \}$, and define the complement of each fold as $\II_k^c = \mathcal{I} \setminus \II_k$ for $k  \in \{1,\ldots,K\}$. Then, the EIF-based estimator $\widehat{\tau}_{\text{Ign}}$ is defined by
\begin{align*}
    \widehat{\tau}_{\text{Ign}}
    = 
    \frac{     
    \displaystyle{
    \sum_{k=1}^{K}
    \sum_{i \in \II_k}
    \bigg\{ A_i - (1-A_i) \frac{ \widehat{\Pr}\LSS (A_i=1 \cond \bL_i) }{ \widehat{\Pr}\LSS (A_i=0 \cond \bL_i) } 
    \bigg\}
    \big\{ Y_i - \widehat{\EXP}\LSS (Y_i \cond A_i=1 , \bL_i) \big\}
    }
    }{
    \sum_{i=1}^{N} A_i
    } \ ,
\end{align*}
where $\widehat{\Pr}\LSS$ and $\widehat{\EXP}\LSS$ are estimated from $\II_k^c$ using the machine learning approaches in Section \ref{sec-supp-estimation}. For both the simulation and data analysis, we set $\bL=(Z,\bX)$, treating $Z$ as a regular confounder.

\subsection{Details of the Application} \label{sec-supp-application}

In this Section, we provide additional details on the real-world application discussed in Section \ref{sec-application}.

\subsubsection{Assessment of the Relevance Assumption \protect\HL{IV3}}
 
We report the $t$-statistic from the first-stage regression of $A$ on $(Z,\bX)$ (including only main effects) to assess whether the coefficient of $Z$ differs significantly from zero. Under the null hypothesis that $A \indep Z \cond \bX$, the coefficient of $Z$ in this regression should be zero. The $t$-statistic for testing this coefficient follows a $t$-distribution with 9,228 degrees of freedom. In our analysis, the observed $t$-statistic was 15.98, corresponding to a p-value smaller than machine precision. We therefore conclude that \HL{IV3} does not appear to be violated.

\subsubsection{Assessment of the Overlap Assumption \protect\HL{A2}}

We indirectly assessed whether the overlap assumption \HL{A2} appears to be satisfied in the data. We note that \HL{A2} involves counterfactual data $(\potY{0},A=1,Z,\bX)$ and, therefore, cannot be verified using the observed data alone. Accordingly, the following results should be interpreted as indicative evidence rather than a direct assessment of \HL{A2}.

We first fix an observation $i \in \{1,\ldots,N\}$, with $N=9240$. Then, for each cross-fitting index $s \in \{1,\ldots,200\}$, let $\II_{s}^{(-i)}$ denote the split-sample fold that does not contain observation $i$, and  $\widehat{f}_{s}^{(-i)}$ denote the corresponding density estimator constructed using $\mathcal{I}_{s}^{(-i)}$. For $y \in \{0,1,\ldots,52\}$, $a \in \{0,1\}$, and $z \in \{0,1\}$, we considered the median across the 200 cross-fitted density estimates, denoted by $\widehat{f}_{\text{med}}$, i.e., 
\begin{align*}
    \widehat{f}_{\text{med}}(Y_i=y \cond A_i = a, Z_i=z, \bX_i)
    &
    =
    \median_{s=1,\ldots,200}
    \widehat{f}_{s}^{(-i)} (Y_i=y \cond A_i=a, Z_i=z, \bX_i) 
    \ .
\end{align*} 
For each observation $i$, we then defined $r_{i}(y,a,z)$ as 
\begin{align*}
    r_{i}(y,a,z) 
    =
    \frac{ \widehat{f}_{\text{med}}(Y_i=y \cond A_i = a, Z_i=z, \bX_i) }{
    \widehat{f}_{\text{med}}(Y_i=y \cond A_i = 0, Z_i=0, \bX_i)
    } \ ,
\end{align*}
which represents the ratio of the conditional density of $Y$ given $(A=a,Z=z,\bX)$ relative to its baseline value $(A=0,Z=0,\bX)$. 

Table \ref{tab-supp-Overlap} reports summary statistics of $r_i(y,a,z)$ across $y \in \{0, \ldots, 52\}$ and $i \in \{1, \ldots, N\}$. We observe that the distribution of $r_i(y,a,z)$ is close to 1, suggesting that the estimated densities are consistent with \HL{A2}.

\begin{table}[!htp]
\renewcommand{\arraystretch}{1.2} \centering
\setlength{\tabcolsep}{10pt} 
\begin{tabular}{|c|c|c|c|c|c|}
\hline
$(a,z)$ & Min    & Q1     & Q2     & Q3     & Max    \\ \hline
$(0,0)$ & 1.0000 & 1.0000 & 1.0000 & 1.0000 & 1.0000 \\ \hline
$(0,1)$ & 0.9993 & 1.0000 & 1.0001 & 1.0002 & 1.0004 \\ \hline
$(1,0)$ & 0.9833 & 0.9987 & 0.9992 & 1.0002 & 1.0222 \\ \hline
$(1,1)$ & 0.9835 & 0.9989 & 0.9993 & 1.0001 & 1.0217 \\ \hline
\end{tabular}
\caption{Summary Statistics of $r_{i}(y,a,z)$ across $y \in \{0, \ldots, 52\}$ and $i \in \{1, \ldots, N\}$, reported separately for each $(a,z)$ pair.}
\label{tab-supp-Overlap}
\end{table}

\subsubsection{Assessment of the Falsification Implication in Theorem \protect\ref{thm-falsification}} 

We evaluated whether the IV model is falsified, as characterized in Theorem \ref{thm-falsification}. Similar to the assessment of \HL{A2} above, we considered the median across the 200 cross-fitted density estimates. Specifically, we defined the following quantities for $y \in \{0,1,\ldots,52\}$ and $z \in \{0,1\}$:
\begin{align*}
    \widehat{f}_{\text{med}}(Y_i=y,A_i=0 \cond Z_i=z, \bX_i)
    &
    =
    \median_{s=1,\ldots,200}
    \widehat{f}_{s}^{(-i)} (Y_i=y,A_i=0 \cond Z_i=z, \bX_i) \ ,
    \\
    \widehat{f}_{\text{med}}(A_i=0 \cond Z_i=z, \bX_i)
    &
    =
    \median_{s=1,\ldots,200}
    \widehat{f}_{s}^{(-i)} (A_i=0 \cond Z_i=z, \bX_i) \ ,
    \\
    \widehat{f}_{\text{med}}(Z_i=z \cond \bX_i)
    &
    =
    \median_{s=1,\ldots,200}
    \widehat{f}_{s}^{(-i)} (Z_i=z \cond \bX_i)
    \ .
\end{align*} 

Our first approach was to implement the KS-type falsification test described in Section \ref{sec-supp-falsification}. In order to approximate $\mathcal{G}_{\text{ind}}$, we considered the following class:
\begin{align*}
    \widetilde{\mathcal{G}}_{\text{ind}}
    =
    \big\{ \ind\{Y=y, X_{j} = x_j \} \cond y \in \{0,1,\ldots,52\} ,\ x_j \in \text{supp}(X_j), \ j \in \{1,\ldots,d\}  \} \ .
\end{align*}
In words, $\widetilde{\mathcal{G}}_{\text{ind}}$ consists of indicator functions defined by $Y$ and the $j$th covariate. These functions are well-defined because both $Y$ and all covariates are discrete. In addition, we calculated the test statistic in \eqref{eq-proof-KS statistic} using $\widetilde{\mathcal{G}}_{\text{ind}}$ and the estimated $\widehat{\kappa}_0$ with the median-adjusted estimates $\widehat{f}_{\text{med}}$. We obtained a test statistic of $T_N = -0.5485$, while the bootstrap critical value at the 5\% level was $3.198$. Since $T_N$ is smaller than the critical value, there is insufficient evidence to refute the IV assumptions based on the KS-type falsification test.

The second approach was to use the direct assessment of the falsification implication (condition \eqref{eq-falsification} in the main paper).
For each observation $i$, we evaluated whether condition \eqref{eq-falsification} is violated by examining the following criterion:
\begin{align*} 
&
\frac{
\widehat{f}_{\text{med}}(Y_i=y, A_i=0 \mid Z_i=1, \bX_i)
-
\widehat{f}_{\text{med}}(Y_i=y, A_i=0 \mid Z_i=0, \bX_i)
}{
\widehat{f}_{\text{med}}(A_i=0 \mid Z_i=1, \bX_i)
-
\widehat{f}_{\text{med}}(A_i=0 \mid Z_i=0, \bX_i)
}
\geq 0 \ .
\numeq 
\label{eq-supp-falsification1}
\end{align*}
Finally, we computed the empirical proportion of observations for which this inequality holds.  We verified that condition \eqref{eq-supp-falsification1} holds for all $i \in \{1,\ldots,9240\}$ and $s \in \{1,\ldots,200\}$, indicating that no observation provides significant evidence against the IV assumptions. 

From these two approaches, we conclude that the IV assumptions are not refuted in this application.

\subsubsection{Inference of the Quantile Treatment Effect on the Treated} 

We applied the inference procedure outlined in Section \ref{sec-supp-QTT}. Specifically, we considered $q \in \{0.25,0.50,0.75\}$ to study the treatment effect on the treated at the first, second, and third quartiles. For $c_1$, we selected $c_1 \in \{0.001,0.005,0.010,0.025\}$. For each cross-fitting index $s \in \{1,\ldots,200\}$, we obtained 95\% confidence intervals for $\tau_{q}^*$, and we obtained the median of these 200  confidence intervals. 

Table \ref{tab-supp-QTT} summarizes the results. All three QTT estimates are significant at the 5\% level, indicating that participation in education or vocational training increased working time across these quantiles. Combined with the ATT analysis, we conclude that participation in education or vocational training resulted in a positive and broadly consistent effect on participants’ working time, both on average and across the distribution.

\begin{table}[!htp]
\renewcommand{\arraystretch}{1.2} \centering
\setlength{\tabcolsep}{10pt} 
\begin{tabular}{|lc|ccc|}
\hline
\multicolumn{2}{|c|}{\multirow{2}{*}{}}              & \multicolumn{3}{c|}{$q$}                                                  \\ \cline{3-5} 
\multicolumn{2}{|c|}{}                               & \multicolumn{1}{c|}{0.25}     & \multicolumn{1}{c|}{0.50}      & 0.75     \\ \hline
\multicolumn{1}{|l|}{\multirow{4}{*}{$c_1$}} & 0.001 & \multicolumn{1}{c|}{$[1,16]$} & \multicolumn{1}{c|}{$[10,40]$} & $[1,39]$ \\ \cline{2-5} 
\multicolumn{1}{|l|}{}                       & 0.005 & \multicolumn{1}{c|}{$[2,16]$} & \multicolumn{1}{c|}{$[11,40]$} & $[1,39]$ \\ \cline{2-5} 
\multicolumn{1}{|l|}{}                       & 0.010  & \multicolumn{1}{c|}{$[2,16]$} & \multicolumn{1}{c|}{$[11,40]$} & $[1,39]$ \\ \cline{2-5} 
\multicolumn{1}{|l|}{}                       & 0.025 & \multicolumn{1}{c|}{$[1,15]$} & \multicolumn{1}{c|}{$[11,39]$} & $[1,40]$ \\ \hline
\end{tabular}
\caption{95\% Confidence Intervals for the QTTs with $q \in \{0.25,0.50,0.75\}$.}
\label{tab-supp-QTT}
\end{table}

\newpage 

\section{Useful Lemmas} \label{sec-supp-2}

In this Section, we introduce Lemmas that will be used in Section \ref{sec-supp-proof} to present the proofs of the Theorems in the main paper and this supplementary material. We consider general $Z$ with support $\suppZ \subset \R$, unless the binary IV condition is specifically required. We also establish some notational conventions that will be used throughout. For nonnegative quantities $(Q_0,Q_1,\ldots,Q_J)$, we write $Q_0 \lesssim \sum_{j=1}^{J} Q_j$ to indicate that $Q_0 \leq \sum_{j=1}^{J} C_j Q_j$ for some nonnegative constants $C_1,\ldots,C_J$. Let $Q_1 \asymp Q_2$ indicate $Q_1 \lesssim Q_2$ and $Q_2 \lesssim Q_1$. Finally, we let $\gamma(Z,\bX) = f(A=1 \cond Z,\bX)/f(A=0 \cond Z, \bX)$ denote the treatment odds conditional on $(Z,\bX)$.

\subsection{Relationships between Nuisance Functions}

\begin{lemma} \label{lemma-proof-useful}
    Suppose that Assumptions \HL{A1}-\HL{A2} and \HL{IV1}-\HL{IV4}  hold for $(\potY{0},A,Z,\bX)$, with corresponding parameters $\theta = \{ \fz,\fa,\fy \}$. Then, the following relationships hold for nuisance functions:
    \begin{itemize}
        \item[(i)] 
        \begin{align*}  
        &
        \frac{ \ff{}(\potY{0}=y,A=1,Z,\bX) }
    { \ff{}(Y=y,A=0,Z,\bX) }
    =
    \alpha(y,\bX) \beta(Z,\bX)
    \ ,     
    \numeq \label{eq-proof-basis alpha beta}
    \\
    &
        \frac{\ff{}(\potY{0}=y \cond A=1,Z,\bX)}{\ff{}(Y=y \cond A=0,Z,\bX)} 
        =
        \frac{\alpha(y,\bX)
    \beta(Z,\bX)}{\gamma(Z,\bX)}  \ . 
    \numeq \label{eq-proof-basis gamma}
        \end{align*}

        \item[(ii)] 
        \begin{align*}           \gamma(Z,\bX) 
    = 
    \EXP \{ \alpha(Y,\bX) \cond A=0,Z,\bX \con \fy \}
    \beta(Z,\bX)  \ .  
    \numeq \label{eq-proof-gamma}
    \end{align*}

    \item[(iii)] 
    \begin{align*}
    &
    \ff{}(\potY{0}=y \cond \bX)
    \\
    &
    =
    \ff{}(\potY{0}=y \cond Z, \bX)
    \\
    &
    =
    \big\{ 1 +  \alpha(y,\bX) \beta(Z,\bX) \big\}
    \ff{} (Y = y \cond A=0, Z, \bX)
    \ff{}(A=0 \cond Z,\bX)
    \\
    &
    =
    \bigg[ 1 + \frac{ \alpha(y,\bX) \gamma(Z,\bX) }{\EXP \big\{ \alpha(Y,X) \cond A=0,Z,\bX \con \fy \big\} } \bigg]
    \frac{ \ff{} (Y=y \cond A=0, Z, \bX) }{ 1+ \gamma(Z,\bX) } \ .
    \numeq 
    \label{eq-proof-Y0Z-gamma}
\end{align*}

\item[(iv)] 
\begin{align*}
    & \ff{}(Y=y,A=0\cond Z, \bX) 
    =
    \frac{\ff{} (\potY{0}=y \cond \bX)
    }{ 1 + \alpha(y,\bX) \beta(Z,\bX) } 
    \ ,
    \numeq 
    \label{eq-proof-Y0AZ}
    \\
    & \ff{}(\potY{0}=y,A=1\cond Z, \bX) 
    =
    \frac{ \alpha(y,\bX) \beta(Z,\bX) \ff{} (\potY{0}=y \cond \bX)
    }{ 1 + \alpha(y,\bX) \beta(Z,\bX) }
    \ .
\end{align*}

    \end{itemize}

\begin{proof} \quad 
    \begin{itemize}
        \item[(i)] These results follow directly from the definitions of $(\alpha,\beta,\gamma)$.

        \item[(ii)] From \eqref{eq-proof-basis gamma}, one can multiply both sides by $\ff{}(Y=y \cond A=0,Z,\bX)$ and  then integrate with respect to $y$.

        \item[(iii)] The first line holds from \HL{IV2}: $\potY{0} \indep Z \cond \bX$. The second line holds from
        \begin{align*}
    &
    \ff{}(\potY{0}=y \cond Z, \bX)
    \\
    & 
    =
    \ff{}(\potY{0}=y, A=1\cond \bX)
    +
    \ff{}(\potY{0}=y, A=0\cond \bX)    
    \\
    &
    \stackrel{\eqref{eq-proof-basis alpha beta}}{=}
    \big\{ 1 + \alpha(y,\bX) \beta(z,\bX) \big\} 
    \ff{}(Y=y, A=0\cond Z=z, \bX) \ .
\end{align*}
The last line holds from \eqref{eq-proof-gamma} and $f(A=0 \cond Z,\bX) = \{1+\gamma(Z,\bX) \}^{-1}$.

\item[(iv)] The result directly follows from \eqref{eq-proof-basis alpha beta} and \HL{IV2}: $\potY{0} \indep Z \cond \bX$.

    \end{itemize}
\end{proof}

\end{lemma}

\subsection{Monotonicity Relationships of Nuisance Functions with Respect to $Z$}
 
\begin{lemma} \label{lemma-beta and gamma}
Suppose that Assumptions \HL{A1}-\HL{A2} and \HL{IV1}-\HL{IV4}  hold for $(\potY{0},A,Z,\bX)$, with corresponding parameters $\theta = \{ \fz,\fa,\fy \}$. Then, for $(z_0,z_1) \in \suppZ$, we have
\begin{align*}
& \beta(z_1 , \bX) > \beta(z_0, \bX)
\\
\Leftrightarrow 
\quad 
&
 \ff{}(Y=y,A=0 \cond Z=z_{0},\bX) >  \ff{}(Y=y,A=0 \cond  Z=z_{1},\bX) \ , \quad \forall y \in \suppYX
 \numeq \label{eq-proof-fya monotone}
\\
\Leftrightarrow 
\quad 
&
f(A=1 \cond Z=z_1, \bX) > f(A=1 \cond Z=z_0, \bX)
\\
\Leftrightarrow 
\quad 
&
\gamma(z_1, \bX) > \gamma(z_0, \bX) \ .    
\end{align*}

\end{lemma}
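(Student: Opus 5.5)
We prove the chain of equivalences by going through Lemma \ref{lemma-proof-useful}, using mainly part (iv) and the definition of $\gamma$. Throughout, $\bX$ is fixed and $z_0,z_1$ are two points of $\suppZ$.

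\textbf{Step 1: first equivalence.} By \eqref{eq-proof-Y0AZ},
\[
\ff{}(Y=y,A=0\cond Z=z,\bX)=\frac{\ff{}(\potY{0}=y\cond\bX)}{1+\alpha(y,\bX)\beta(z,\bX)} .
\]
The numerator does not depend on $z$, by \HL{IV2}. Under \HL{A2}, it is strictly positive on $\suppYX$, and $\alpha(y,\bX)\in(0,\infty)$. So for every $y\in\suppYX$, the map $b\mapsto \ff{}(\potY{0}=y\cond\bX)/\{1+\alpha(y,\bX)b\}$ is strictly decreasing on $(0,\infty)$. Hence $\beta(z_1,\bX)>\beta(z_0,\bX)$ implies the strict inequality \eqref{eq-proof-fya monotone} for every $y$.

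For the converse, suppose $\beta(z_1,\bX)\le\beta(z_0,\bX)$. The same formula then gives $\ff{}(Y=y,A=0\cond Z=z_0,\bX)\le \ff{}(Y=y,A=0\cond Z=z_1,\bX)$ for all $y$. This contradicts \eqref{eq-proof-fya monotone} at any single $y\in\suppYX$, which is nonempty.

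\textbf{Step 2: second equivalence.} Integrate \eqref{eq-proof-Y0AZ} over $y$:
\[
\ff{}(A=0\cond Z=z,\bX)=\int \frac{\ff{}(\potY{0}=y\cond\bX)}{1+\alpha(y,\bX)\beta(z,\bX)}\,dy .
\]
By the monotonicity in Step 1, the integrand is pointwise strictly decreasing in $\beta(z,\bX)$, and it is strictly positive on a set of positive measure (the support). Therefore $\beta(z_1,\bX)>\beta(z_0,\bX)$ gives $\ff{}(A=0\cond Z=z_1,\bX)<\ff{}(A=0\cond Z=z_0,\bX)$, that is, $f(A=1\cond Z=z_1,\bX)>f(A=1\cond Z=z_0,\bX)$.

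The reverse direction goes by contraposition as in Step 1. If $\beta(z_1,\bX)\le\beta(z_0,\bX)$, then $f(A=1\cond Z=z_1,\bX)\le f(A=1\cond Z=z_0,\bX)$. Note that equality of the $\beta$'s gives equality of the $f(A=1\cond\cdot)$'s, so this case is covered.

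Alternatively, Step 2 follows directly from Step 1 by integrating the strict pointwise inequality \eqref{eq-proof-fya monotone} over $\suppYX$.

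\textbf{Step 3: third equivalence.} Since $\gamma(z,\bX)=p/(1-p)$ with $p=f(A=1\cond Z=z,\bX)\in(0,1)$, and $p\mapsto p/(1-p)$ is strictly increasing, the last equivalence is immediate.

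\textbf{Main obstacle.} The only delicate point is handling strictness and the converse directions. Each converse needs the contrapositive argument, which uses the fact that every quantity involved is a strictly monotone function of the scalar $\beta(z,\bX)$. It also uses the positivity guaranteed by \HL{A2}: $\alpha>0$ and a positive density of $\potY{0}$ on $\suppYX$. Once every statement is written as a strictly monotone function of $\beta(z,\bX)$, the four conditions are all equivalent to the same ordering of $\beta(z_1,\bX)$ and $\beta(z_0,\bX)$.
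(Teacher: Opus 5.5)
Your proposal is correct and follows essentially the same route as the paper. Both arguments rest on \eqref{eq-proof-Y0AZ}, which writes $f(Y=y,A=0 \mid Z=z,\bX)$ as a strictly decreasing function of $\beta(z,\bX)$ with a $z$-free numerator, then integrate over $y$ and pass to odds. The only minor difference is in the converse of the marginal step: you get it by contraposition (weak monotonicity when $\beta(z_1,\bX)\le\beta(z_0,\bX)$), whereas the paper picks a single $y$ at which the strict pointwise inequality must hold.
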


    \begin{proof}
    From \eqref{eq-proof-Y0AZ}, we find 
\begin{align*}
     &
     \frac{ \ff{}(Y=y,A=0 \cond Z=z_{1},\bX) }{ \ff{}(Y=y,A=0 \cond  Z=z_{0},\bX) }
     \stackrel{\eqref{eq-proof-Y0AZ}}{=}
     \frac{ 1 + \alpha(y,\bX) \beta(z_0,\bX) }
     { 1 + \alpha(y,\bX) \beta(z_1,\bX) }  \ .
\end{align*}
Therefore, 
\begin{align*}
    &
    \beta(z_1,\bX) > \beta(z_0,\bX)
    \\
     \Leftrightarrow
     \quad 
     &
     \ff{}(Y=y,A=0 \cond Z=z_{1},\bX) <  \ff{}(Y=y,A=0 \cond  Z=z_{0},\bX) \ , \quad \forall y \in \suppYX \ ,
     \numeq \label{eq-proof-implication 1}
     \\
     \Rightarrow
     \quad &
     \ff{}(A=0 \cond Z=z_{1},\bX) <  \ff{}(A=0 \cond  Z=z_{0},\bX)
     \numeq \label{eq-proof-implication 2}
     \\
     \Leftrightarrow
     \quad &
     \ff{}(A=1 \cond Z=z_{1},\bX) >  \ff{}(A=0 \cond  Z=z_{0},\bX)
     \\
     \Leftrightarrow
     \quad &
     \gamma(z_1,\bX) > \gamma(z_0,\bX) \ .
\end{align*} 
It remains to show that \eqref{eq-proof-implication 2} implies \eqref{eq-proof-implication 1}, which is established as follows:
\begin{align*}
    & \ff{}(A=0 \cond Z=z_{1},\bX) <  \ff{}(A=0 \cond  Z=z_{0},\bX)
    \\
    \Rightarrow \quad 
    & \ff{}(Y=y,A=0 \cond Z=z_{1},\bX) <  \ff{}(Y=y, A=0 \cond  Z=z_{0},\bX) \quad \text{ for some $y$}
    \\
    \stackrel{\eqref{eq-proof-Y0AZ}}{\Rightarrow} \quad 
    &  \frac{ 1 + \alpha(y,\bX) \beta(z_0,\bX) }
     { 1 + \alpha(y,\bX) \beta(z_1,\bX) } > 1  \quad \text{ for some $y \in \suppYX$}
     \\
     \Rightarrow \quad 
     &
     \beta(z_1,\bX) > \beta(z_0,\bX) \ .
\end{align*}

This completes the proof.

    \end{proof}

\subsection{Boundedness}

\begin{lemma} \label{lemma-boundedness}
Suppose that Assumptions \HL{A1}-\HL{A2} and \HL{IV1}-\HL{IV4}  hold for $(\potY{0},A,Z,\bX)$, with corresponding parameters $\theta = \{ \fz,\fa,\fy \}$. Further suppose that $\theta$ satisfies Assumption \HL{A3-General $Z$}. Then, there exist constants $0<c_\alpha \leq C_{\alpha} < \infty$ and $0<c_\beta \leq C_{\beta} < \infty$ so that the following results hold:
\begin{align*}
    &
    \alpha(y,\bX \con \theta) \in [c_{\alpha},C_{\alpha}]
    \ , 
    &&
    \beta(z,\bX \con \theta) \in [c_{\beta},C_{\beta}]
    \ , 
    &&
    \forall y \in \suppYX \ , \quad 
    \forall z \in \suppZ \ , \quad 
    \forall  \bX \in \suppX \ .
\end{align*}
Moreover, from \eqref{eq-Sun parameterization} of the main paper, \eqref{eq-proof-Y0Z-gamma}, and \eqref{eq-proof-Y0AZ}, there exist constants $0< c_O \leq C_{O} < \infty$ such that
\begin{align*}
    f(\potY{0}=y,A=a,Z=z \cond \bX) 
    \in [c_O, C_O] \ , 
    &&
    \forall y \in \suppYX \ , \
    \forall a \in \{0,1\} \ , \
    \forall z \in \suppZ \ , \
    \forall  \bX \in \suppX \ .
\end{align*} 

\end{lemma}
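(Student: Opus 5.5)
The boundedness of $\beta$ comes from Lemma \ref{lemma-proof-useful}, and the boundedness of $\alpha$ comes from Theorem \ref{thm-Psi} together with \HL{A3-General $Z$}. The joint densities are then products and ratios of bounded factors. Throughout I fix $\bX\in\suppX$ and check that every constant depends only on $(c_Z,C_Z,c_A,c_Y,C_Y,c_g,C_g)$, which makes the bounds uniform in $\bX$.

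\textbf{Step 1: $\alpha$.} Use the representation \eqref{eq-alpha g} with any $z_0\in\suppZ$. Under \HL{A3-General $Z$}, both $g^\star(y,\bX\con\theta,z_0)$ and $g^\star(y_R,\bX\con\theta,z_0)$ lie in $[c_g,C_g]$. Both outcome densities in \eqref{eq-alpha g} lie in $[c_Y,C_Y]$. Therefore
\[
\alpha(y,\bX\con\theta)\in\big[c_gc_Y/(C_gC_Y),\;C_gC_Y/(c_gc_Y)\big],
\]
and I take these two numbers as $c_\alpha$ and $C_\alpha$.

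\textbf{Step 2: $\beta$.} Use \eqref{eq-beta g}, equivalently \eqref{eq-proof-gamma}. The numerator $\gamma(z,\bX)=f(A=1\cond z,\bX)/f(A=0\cond z,\bX)$ lies in $[c_A/(1-c_A),(1-c_A)/c_A]$. The denominator $\int\alpha(y,\bX)f(Y=y\cond A=0,Z=z,\bX)\,dy$ is an average of $\alpha$ against a probability density, so it lies in $[c_\alpha,C_\alpha]$. This step needs nothing about the size of $\suppYX$. Dividing the two ranges gives $c_\beta$ and $C_\beta$.

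\textbf{Step 3: $f(\potY{0},A,Z\cond\bX)$.} Write $f(\potY{0}=y,A=a,Z=z\cond\bX)=f(\potY{0}=y,A=a\cond Z=z,\bX)\,f(Z=z\cond\bX)$, where the second factor is in $[c_Z,C_Z]$.
\begin{itemize}
\item For $a=0$, the first factor equals $f(Y=y\cond A=0,z,\bX)f(A=0\cond z,\bX)\in[c_Yc_A,\,C_Y]$.
\item For $a=1$, by \eqref{eq-proof-basis alpha beta} the first factor equals $\alpha\beta$ times the $a=0$ factor. It is therefore bounded by $[c_\alpha c_\beta c_Yc_A,\;C_\alpha C_\beta C_Y]$.
\item As a cross-check, the same conclusion follows from \eqref{eq-proof-Y0Z-gamma} and \eqref{eq-proof-Y0AZ}. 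These give $f(\potY{0}=y\cond\bX)=(1+\alpha\beta)f(Y=y\cond A=0,z,\bX)f(A=0\cond z,\bX)$, which is bounded above and below. Multiplying by $(\alpha\beta)^a/(1+\alpha\beta)\in$ a compact subset of $(0,1)$ then bounds each cell, as in \eqref{eq-Sun parameterization}.
\end{itemize}
Taking the minimum and maximum over these cases gives $c_O$ and $C_O$.

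\textbf{Main obstacle.} There is essentially no analytic difficulty; the argument rests on two facts.
\begin{itemize}
\item The $\alpha$ and $\beta$ produced by Theorem \ref{thm-Psi} coincide with the true odds-ratio and baseline-odds functions. This licenses the use of \eqref{eq-proof-basis alpha beta} in Step 3.
\item Every constant is uniform over $y$, $z$ and $\bX$. This holds because \HL{A3-General $Z$} is itself stated uniformly.
\end{itemize}
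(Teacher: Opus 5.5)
Your proposal is correct and follows the paper's proof: you bound $\alpha$ through \eqref{eq-alpha g} using the $g^\star$ and outcome-density bounds of \HL{A3-General $Z$}, then bound $\beta$ by dividing the range of $\gamma$ by that of $\EXP\{\alpha(Y,\bX)\cond A=0,Z=z,\bX\}$, which gives exactly the paper's constants. The paper fixes $z_0=\arginf_{z} f(A=1\cond Z=z,\bX)$ rather than an arbitrary $z_0$; this is immaterial because $\alpha$ does not depend on $z_0$ and you only need the $g^\star$ bound for the $z_0$ to which \HL{A3-General $Z$} refers. Your Step 3 makes explicit the joint-density bound that the paper only indicates through \eqref{eq-Sun parameterization}, \eqref{eq-proof-Y0Z-gamma} and \eqref{eq-proof-Y0AZ}: your cross-check bullet is exactly that route, and your primary route via \eqref{eq-proof-basis alpha beta} is an equally valid shortcut.
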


\begin{proof}
    We omit the dependence on $\theta$ in the notation for brevity.
    Let $z_0 = \arginf_{z \in \suppZ} f(A=1 \cond Z=z,\bX)$. From \eqref{eq-alpha g} of the main paper, we have \begin{align*}
    \alpha(y,\bX)
    &
    =
    \frac{ g^\star(y,\bX) }{ g^\star (y_R,\bX) }
    \frac{ \ff{}(Y=y_R \cond A=0,Z=z_0,\bX) }{\ff{}(Y=y \cond A=0,Z=z_0,\bX)} 
    \\
    &
    \in 
    \bigg[ 
        \frac{c_g c_Y }{C_g C_Y}
        ,
        \frac{C_g C_Y }{c_g c_Y}
    \bigg]
    \equiv 
    \big[ c_{\alpha}, C_{\alpha}     \big]
     \ .
\end{align*}
This further implies $ \EXP \{ \alpha(Y,\bX) \cond A=0,Z=z, \bX \con \fy \} \in [ c_{\alpha} , C_{\alpha} ]$.     Therefore, from \eqref{eq-beta g} of the main paper, we have
    \begin{align*}
        \beta(z,\bX )
    & =
    \frac{f(A=1 \cond Z=z,\bX) / f(A=0 \cond Z=z,\bX)}{ \EXP \{ \alpha(Y,\bX)  \cond A=0,Z=z,\bX \con \fy  \} }
    \\
    &
    \in 
    \bigg[ \frac{c_A}{(1-c_A) C_{\alpha}}\ , 
    \frac{1-c_A}{c_A c_{\alpha}}
    \bigg]
    \equiv 
    [c_{\beta},C_{\beta} ] \ . 
    \end{align*} 
    
    This concludes the proof.
\end{proof}

\subsection{Uniqueness of the Solution to the Integral Equation for Binary $Z$}

\begin{lemma} \label{lemma-unique integral equation solution}
    Suppose that Assumptions \HL{A1}-\HL{A2} and \HL{IV1}-\HL{IV4} holds. Further suppose that $\suppZ = \{0,1\}$. For a function $v(\potY{0},Z,\bX)$, let $\widetilde{v} = v - v^\dagger$ where $v^\dagger \equiv \EXP\{ v \cond \potY{0},\bX \} + \EXP (v \cond Z , \bX ) - \EXP(v \cond \bX)$. Then, $\widetilde{v}(\potY{0},Z,\bX) =  0$ is the unique function satisfying
    \begin{align*}
    &
    \EXP \{ \widetilde{v}(\potY{0},Z,\bX) - \mu^* (Z,\bX \con \widetilde{v}) \cond A=1, \potY{0},\bX\} = 0 \ .
    \end{align*}

\end{lemma}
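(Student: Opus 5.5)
The plan is to reduce the homogeneous equation to a finite-dimensional linear system in two scalars, and then show that this system has only the trivial solution. Throughout, fix $\bX$ and write $e(\bX)=\Pr(Z=1\cond\bX)$, which lies in $(0,1)$ by the overlap/positivity conditions.

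\emph{Step 1: structure of $\widetilde v$.} Set $w=\widetilde v$. By construction $w$ is doubly centered: $\EXP(w\cond \potY{0},\bX)=0$ and $\EXP(w\cond Z,\bX)=0$. This uses \HL{IV2}, since $\potY{0}\indep Z\cond\bX$ makes $\EXP(v^\dagger\cond\potY{0},\bX)=\EXP(v\cond\potY{0},\bX)$, and similarly for $Z$.
\begin{itemize}
\item Because $Z$ is binary, the first condition reads $e\,w(y,1)+(1-e)\,w(y,0)=0$. Hence $w(y,z)=M(y,\bX)\{z-e(\bX)\}$ with $M=w(\cdot,1)-w(\cdot,0)$.
\item The second condition then gives $\{z-e\}\,\EXP\{M(\potY{0},\bX)\cond\bX\}=0$ for both $z\in\{0,1\}$. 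Since $z-e\neq0$ for each $z$, this forces $\EXP\{M\cond\bX\}=0$.
\end{itemize}
So it suffices to show that $M\equiv0$.

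\emph{Step 2: solve pointwise in $y$.} Let $p(y)=\Pr(Z=1\cond A=1,\potY{0}=y,\bX)$ and $\mu_z=\mu^*(z,\bX\con M)$. Since $\mu^*(Z,\bX\con w)=(Z-e)\mu_Z$, the equation becomes
\[
M(y)\{p(y)-e\}=p(y)(1-e)\mu_1-\{1-p(y)\}e\,\mu_0 .
\]
By Lemma \ref{lemma-proof-useful}(iv) we have $p(y)\propto e\beta(1)/\{1+\alpha(y)\beta(1)\}$ against $(1-e)\beta(0)/\{1+\alpha(y)\beta(0)\}$. Therefore $p(y)>e$ if and only if $\beta(1)/\{1+\alpha\beta(1)\}>\beta(0)/\{1+\alpha\beta(0)\}$, which holds if and only if $\beta(1)>\beta(0)$. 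By \HL{IV3} and Lemma \ref{lemma-beta and gamma}, $\beta(1)\neq\beta(0)$, so $p(y)-e$ has a fixed strict sign for every $y$. We can then divide, and $M$ is determined by the two constants:
\[
M(y)=\mu_1 a_1(y)+\mu_0 a_0(y),\qquad a_1=\frac{p(1-e)}{p-e},\quad a_0=-\frac{(1-p)e}{p-e}.
\]

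\emph{Step 3: consistency system.} Substituting this form back into the definitions of $\mu_1$ and $\mu_0$ gives a homogeneous $2\times2$ system $(I-K)(\mu_1,\mu_0)\T=0$. Here $K_{zz'}=\EXP\{a_{z'}(\potY{0})\cond A=1,Z=z,\bX\}$. The constraint $\EXP\{M\cond\bX\}=0$ supplies an additional linear relation on $(\mu_1,\mu_0)$. If $\mu_1=\mu_0=0$, then $M\equiv0$ and the lemma follows.

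\emph{Step 4: the main obstacle.} The key step is to show that the system from Step 3, together with the constraint, admits only the zero solution. I would try to avoid computing the determinant of $I-K$ directly. Instead, I would express the conditional expectations given $(A=1,Z=z)$ as integrals against $f(\potY{0}\cond\bX)$ using Lemma \ref{lemma-proof-useful}(iv). The aim is to show that, after using $\EXP(M\cond\bX)=0$, one obtains an identity of the form
\[
c\,\EXP\bigl[\,q(\potY{0})\,\{\mu_1 a_1+\mu_0 a_0\}^2\bigr]=0,
\]
with a strictly positive weight $q$ built from $p-e$ and the odds. The natural starting point is to multiply the pointwise identity from Step 2 by $M(y)$ times a suitable density ratio and integrate. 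The sign control needed here comes from the fixed sign of $p-e$ established in Step 2. If this energy-type argument does not close, the fallback is to compute $\det$ of the $3\times2$ system explicitly in terms of $\beta(0)$, $\beta(1)$ and $\alpha$. One would then show that the determinant vanishes only when $\beta(1)=\beta(0)$, which is excluded by \HL{IV3}.
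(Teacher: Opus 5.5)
\paragraph{Verdict.} There is a genuine gap: Step 4 is the whole content of the lemma, and you leave it as a plan. Your Steps 1--3 are correct and follow the paper's reduction. Your $M$ is the paper's $L-\EXP(L\cond\bX)$. Your $a_1$ simplifies to the paper's coefficient $\beta_1\{1+\alpha(y)\beta_0\}/(\beta_1-\beta_0)$, since $a_1=\pi_1/(\pi_1-\pi_0)$ with $\pi_z=\alpha\beta_z/(1+\alpha\beta_z)$. But neither of your two routes in Step 4 is carried out, so the proposal does not yet prove the statement.

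\paragraph{Why the $I-K$ framing misleads.} Note that $a_1+a_0=\{p(1-e)-(1-p)e\}/(p-e)\equiv 1$. Hence every row of $K$ sums to one, and $(1,1)\T$ always lies in the kernel of $I-K$. So $I-K$ is never invertible. The constant direction, which corresponds to $M$ being constant in $y$, can only be removed by the constraint $\EXP(M\cond\bX)=0$. Your fallback of an explicit determinant ``in terms of $\beta(0),\beta(1),\alpha$'' is also not available as stated. The entries of $K$ are conditional expectations of functions of $\alpha(\potY{0},\bX)$, so nonvanishing would still need a sign argument.

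\paragraph{The missing step.} What closes the proof is a first-order sign argument, not an energy identity. Using $a_0=1-a_1$, your Step 2 gives
\[
M(y)-\mu_0=(\mu_1-\mu_0)\,a_1(y).
\]
Here $a_1=p(1-e)/(p-e)$ has the same strict sign for every $y$, because $p>0$, $1-e>0$, and $p-e$ has a fixed strict sign, as you showed.
\begin{itemize}
\item Take $\EXP(\cdot\cond A=1,Z=0,\bX)$ of both sides. The left side vanishes by the definition of $\mu_0$.
\item On the right, $\EXP\{a_1(\potY{0})\cond A=1,Z=0,\bX\}\neq0$, so $\mu_1=\mu_0$.
\item Then $M\equiv\mu_0$ is constant in $y$, and $\EXP(M\cond\bX)=0$ forces $M\equiv 0$, i.e.\ $\widetilde v=0$.
\end{itemize}
This is exactly how the paper concludes, phrased as a contradiction. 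If $\mu^*(1,\bX\con L)\neq\mu^*(0,\bX\con L)$, then $L-\mu^*(0,\bX\con L)$ would be strictly one-signed in $y$. That contradicts $\mu^*(0,\bX\con L)=\EXP\{L\cond A=1,Z=0,\bX\}$. With this step inserted your argument is complete, and the quadratic-form route is unnecessary.
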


\begin{proof}
    We first show that $\widetilde{v}$ admits a simplified structure when $Z$ is binary. Note that any function $v(\potY{0},Z,\bX)$ can be written as
\begin{align*}
v(\potY{0},Z,\bX)
=
v_1(\potY{0},\bX)Z
+
v_2(\potY{0},\bX)(1-Z) \ ,
\end{align*}
indicating that
\begin{align*} 
&
v^\dagger(\potY{0},Z,\bX)
\\
&
= 
v_1(\potY{0},\bX)\EXP ( Z \cond \bX)
+
v_0(\potY{0},\bX) \{ 1- \EXP ( Z \cond \bX) \}
\\
&
\quad 
+
Z \EXP \big\{ \nu_1 (\potY{0},\bX) \cond \bX \big\}
+
(1-Z) \EXP \big\{ \nu_0 (\potY{0},\bX) \cond \bX \big\}
\\
&
\quad 
-
\EXP(Z \cond \bX) \EXP \big\{ v_1 (\potY{0},\bX) \cond \bX \big\}
+
\{ 1-\EXP(Z \cond \bX)\} \EXP \big\{ v_0 (\potY{0},\bX) \cond \bX \big\} \ .
\end{align*}
Therefore, $\widetilde{v} = v - v^\dagger$ is given by \begin{align*}
    \widetilde{v} (\potY{0},Z,\bX) 
    &
    = 
    \big\{ v_1 - v_0 - \EXP \big( v_1 - v_0 \cond \bX \big) \big\}
    \big\{
    Z - \EXP \big( Z \cond \bX \big) \big\}
    \\
    &
    = 
    \big[ L(\potY{0},\bX) - \EXP \big\{ L(\potY{0},\bX) \cond \bX \big\} \big]
    \big\{ Z - \EXP \big( Z \cond \bX \big) \big\} \ , 
    \numeq \label{eq-proof-vtilde L representation}
\end{align*}
where $L(\potY{0},\bX) \equiv  v_1(\potY{0},\bX) -v_0(\potY{0},\bX)$. 

This structure can be generalized to a categorical  $Z$. Without loss of generality, let $\suppZ = \{0,1,\ldots,m\}$. Then, any function $v(\potY{0},Z,\bX)$ can be written as
\begin{align*}
v(\potY{0},Z,\bX)
=
\sum_{z=0}^{m}
v_z(\potY{0},\bX) \ind (Z=z) \ ,
\end{align*}
and $v^\dagger$ is expressed as
\begin{align*}
&
	v^\dagger(\potY{0},Z=z,\bX) 
	\\
	&
	= \bigg \{ v_z - \sum_{c \neq z} v_c - \EXP \bigg( v_z - \sum_{c \neq z} v_c \, \bigg| \, \bX \bigg) \bigg\} 
	\big\{ \ind(Z=z) - \Pr(Z=z \cond \bX) \big\} 
	\\
	&
	=
	\big[ L_{z}(\potY{0},\bX) - \EXP \big\{ L_z(\potY{0},\bX) \cond \bX \big\} \big]
	\big\{ \ind(Z=z) - \Pr(Z=z \cond \bX) \big\} \ .
    \numeq \label{eq-proof-vtilde L representation poly}
\end{align*}
Note that we have a constraint $\EXP \{ v^\dagger \cond \potY{0},\bX \} = 0$. Consequently, $L_z$ is determined once $\{ L_c \cond c \in \suppZ \setminus \{z\} \}$ are fixed.

Since $Z$ is binary, we find that
\begin{align*}
    &
    \widetilde{v}(\potY{0},Z,\bX)
    =
    \big[ L(\potY{0},\bX) - \EXP \big\{ L(\potY{0},\bX) \cond \bX \big\} \big] 
    \big\{ Z - \EXP(Z \cond \bX) \big\}
    \ , 
    \\
    &
    \widetilde{v}(\potY{0},Z,\bX) - \mu^*(Z,\bX \con \widetilde{v})
    =
    \big[ L - 
    \EXP ( L \cond A=1,Z,\bX )
    \big] 
    \big\{ Z - \EXP(Z \cond \bX) \big\} \ . 
\end{align*}
This indicates
\begin{align*}
    &
    \EXP \{ \widetilde{v}(\potY{0},Z,\bX) - \mu^*(Z,\bX \con \widetilde{v}) \cond A=1, \potY{0}=y,X \}
    =
    0
    \\[0.5cm]
    \Leftrightarrow \quad   
    &
    L(y,\bX)
    \bigg[
    \begin{array}{l}
    f^*(Z=0 \cond \bX) 
    f^*(Z=1 \cond A=1, \potY{0}=y, \bX) 
    \\
    -
    f^*(Z=1 \cond \bX) 
    f^*(Z=0 \cond A=1, \potY{0}=y, \bX) 
    \end{array}
    \bigg]
    \\
    &
    =
    \bigg[
    \begin{array}{l}\mu^*(1,\bX \con L)
    f^*(Z=0 \cond \bX) 
    f^*(Z=1 \cond A=1, \potY{0}=y,\bX)
    \\
    -
    \mu^*(0,\bX \con L)
    f^*(Z=1 \cond \bX) 
    f^*(Z=0 \cond A=1, \potY{0}=y,\bX)
    \end{array}
    \bigg] 
    \\[0.5cm]
    \stackrel{\eqref{eq-proof-Y0AZ}}{\Leftrightarrow}
    \quad   
    &
    \frac{
    L(y,\bX) 
    \{ \beta^*(1,\bX) - \beta^*(0,\bX) \}
    f^*(Z=1 \cond \bX)
    f^*(Z=0 \cond \bX) 
    }{
    \beta^*(1,\bX)
    f^*(Z=1 \cond \bX) 
    +
    \beta^*(0,\bX)
    f^*(Z=0 \cond \bX)
    + 
    \alpha^*(y,\bX) \beta^*(0,\bX)
    \beta^*(1,\bX)
    }
    \\
    &
    =
    \frac{
    \mu^*(1,\bX \con L)
    \big\{ 1 + \alpha^*(y,\bX) \beta^*(0,\bX) \big\} \beta^*(1,\bX)
    f^*(Z=1 \cond X)
    f^*(Z=0 \cond X)
    }{
    \beta^*(1,\bX)
    f^*(Z=1 \cond \bX)
    +
    \beta^*(0,\bX)
    f^*(Z=0 \cond \bX)
    +
    \alpha^*(y,\bX)
    \beta^*(0,\bX) 
    \beta^*(1,\bX) 
    }
    \\
    &
    \hspace*{0.5cm}
    -
    \frac{
    \mu^*(0,\bX \con L)
    \big\{ 1 + \alpha^*(y,\bX) \beta^*(1,\bX) \big\} \beta^*(0,\bX)
    f^*(Z=1 \cond \bX)
    f^*(Z=0 \cond \bX)
    }{
    \beta^*(1,\bX)
    f^*(Z=1 \cond \bX)
    +
    \beta^*(0,\bX)
    f^*(Z=0 \cond \bX)
    +
    \alpha^*(y,\bX)
    \beta^*(0,\bX) 
    \beta^*(1,\bX) 
    } 
    \\[0.5cm]
    \Leftrightarrow \quad   
    &
    L(y,\bX)  
    \{ \beta^*(1,\bX) - \beta^*(0,\bX) \} 
    \\
    &
    =
    \alpha^*(y,\bX) \beta^*(0,\bX) \beta^*(1,\bX)  
    \big\{ \mu^*(1,\bX \con L) - \mu^*(0,\bX \con L) \big\}
    \\
    &
    \hspace*{0.5cm}
    +
    \mu^*(1,\bX \con L) \beta^*(1,\bX)
    -
    \mu^*(0,\bX \con L) \beta^*(0,\bX) 
    \\[0.5cm] 
    \Leftrightarrow \quad   
    &
    L(y,\bX)  \\
    &
    =
    \mu^*(0,\bX \con L)
    +
    \frac{  \big\{
    \alpha^*(y,\bX) \beta^*(0,\bX) + 1
    \big\}
    \big\{ \mu^*(1,\bX \con L) - \mu^*(0,\bX \con L) \big\} \beta^*(1,\bX)  }{ \beta^*(1,\bX) - \beta^*(0,\bX) } \ .
\end{align*}

Suppose that $\mu^*(1,\bX \con L) \neq \mu^*(0,\bX \con L)$ at $\bX \in \suppX$. From Lemma \ref{lemma-beta and gamma}, $\beta^*(1,\bX) - \beta^*(0,\bX)$ is always positive (or negative) and $\alpha^*(y,\bX) > 0$ for all $y \in \suppYX$. These two results imply that 
\begin{align*}
    &
    L(y,\bX) - \mu^*(0,\bX \con L)
    \\
    &
    =
    \underbrace{ \big\{
    \alpha^*(y,\bX) \beta^*(0,\bX) + 1
    \big\}
     \beta^*(1,\bX) 
    }_{> 0}
    \underbrace{
    \frac{  \mu^*(1,\bX \con L) - \mu^*(0,\bX \con L)  }{ \beta^*(1,\bX) - \beta^*(0,\bX) }
    }_{\neq 0}
    \ ,
\end{align*}
which in turn shows that $L(y,\bX) > \mu^*(0,\bX \con L) $ or $L(y,\bX) < \mu^*(0,\bX \con L) $ for all $y \in \suppYX$. Consequently, we obtain
\begin{align*}
    & \EXP \{ L(\potY{0} , \bX) \cond A=1, Z=0, X \} > \mu^*(0,\bX \con L) 
    \quad \text{ or}
    \\
    & \EXP \{ L(\potY{0} , \bX) \cond A=1, Z=0, X \} < \mu^*(0,\bX \con L) \ . 
\end{align*}
However, this contradicts the definition of $\mu^*$, namely $ \mu^*(0,\bX \con L) \equiv \EXP \{ L(\potY{0} , \bX) \cond A=1, Z=0, \bX \}$. Therefore, we must have $\mu^*(1,\bX \con L)= \mu^*(0,\bX \con L)$, which in turn implies $L(y,\bX) = \mu^* (0,\bX \con L)$. In other words, $L$ does not depend on $\potY{0}$. Therefore, we establish $\widetilde{v}(\potY{0},Z,\bX) = 0$ as follows:
\begin{align*}
    \widetilde{v} (\potY{0},Z,\bX) 
    &
    \stackrel{\eqref{eq-proof-vtilde L representation}}{=}
    \big[ L(\potY{0},\bX) - \EXP \big\{ L(\potY{0},\bX) \cond \bX \big\} \big]
    \big\{ Z - \EXP \big( Z \cond \bX \big) \big\} 
    \\
    &
    = 
    \big[ L(\bX) - \EXP \big\{ L(\bX) \cond \bX \big\} \big]
    \big\{ Z - \EXP \big( Z \cond \bX \big) \big\} 
    \\
    &
    = 
    0 \ .
\end{align*}

This concludes the proof.
\end{proof}

\subsection{Rates of Convergence for Estimation Errors}
 
Before stating the Lemma, we introduce some notation. Recall that $\pi(\potY{0},Z,\bX) = \alpha(\potY{0},\bX) \beta(Z,\bX)$. For a generic function $g$, define the $L^2(\suppYX)$-norm of $g(\potY{0},Z,\bX)$ by
\begin{align*}
    \|g (\potY{0},Z,\bX) \|_{Y,2} = \bigg\{ \int_{\suppYX} g^2 (y,Z,\bX) \, dy \bigg\}^{1/2} \ .
\end{align*} 
For an estimated function $\widehat{g}\LSS$ from the estimation fold $\II_k$ $(k \in \{1,\ldots,K\})$ and a fixed function $g$, let 
\begin{align*}
    &
    \EXPk \{ \widehat{g}\LSS(V) \cond W \}
    =
    \int \widehat{g}\LSS(v) \ftrue{}(V=v \cond W) \, dv \ ,
    \\
    &
    \EXPhat\{ g(V) \cond W \}
    =
    \int g(v) \widehat{f}\LSS(V=v \cond W) \, dv \ .
\end{align*}

\begin{lemma} \label{lemma-supp-convergence} 
Suppose that the assumptions in Theorem \ref{thm-AN} hold.  Then, we have the following results for $k \in \{1,\ldots,K\}$:
    \begin{itemize}[leftmargin=0.75cm]
    \item[\HT{i}] For $z \in \{0,1\}$, we have
    \begin{align*}
        &
        \big\| \widehat{\gamma}\LSS(Z,\bX)  - \gamma^*(Z,\bX)
        \big\|_{P,2}
        \\
        &
        \asymp
        \big\| \widehat{f}\LSS(A=1 \cond Z,\bX) - f^*(A=1 \cond Z,\bX) \big\|_{P,2} \ .
        \numeq 
        \label{eq-proof-gamma to fa}
    \end{align*}

    \item[\HT{ii}] 
For $z \in \{0,1\}$, we have
\begin{align*}
    &
    \big\|
    \pihat(\potY{0},z,\bX) - \pi^*(\potY{0},z,\bX) 
    \big\|_{Y,2}
    \\
    &
    \lesssim
    \left[ 
    \begin{array}{l}            
    \| \widehat{f}\LSS(\potY{0} \cond A=0,Z=1,\bX)
    -
    f^*(\potY{0} \cond A=0,Z=1,\bX)\|_{Y,2}
    \\
    +
    \| \widehat{f}\LSS(\potY{0} \cond A=0,Z=0,\bX)
    -
    f^*(\potY{0} \cond A=0,Z=0,\bX) \|_{Y,2}
    \\
             + \| \widehat{f}\LSS(A=1 \cond Z=1,\bX) - f^*(A=1 \cond Z=1,\bX) \|
             \\
             + \| \widehat{f}\LSS(A=1 \cond Z=0,\bX) - f^*(A=1 \cond Z=0,\bX) \|
        \end{array}
    \right] \ .
    \numeq 
    \label{eq-proof-alpha beta wrt y}
\end{align*}

\item[\HT{iii}] 
\begin{align*}
    &
    \big\|
    \pihat(\potY{0},Z,\bX) - \pi^*(\potY{0},Z,\bX) 
    \big\|_{P,2}
    \\
    &
    \lesssim 
    \left[
        \begin{array}{l}
        \big\| \widehat{f}\LSS(A=1 \cond Z,\bX) - f^*(A=1 \cond Z,\bX)
        \big\|_{P,2}
        \\
        +
        \big\| \widehat{f}\LSS(\potY{0} \cond A=0,Z,\bX) - f^*(\potY{0} \cond A=0,Z,\bX)
        \big\|_{P,2}
        \end{array}
    \right] \ .
    \numeq 
    \label{eq-proof-alpha beta L2P}
\end{align*}

\item[\HT{iv}] 
    Let $\norm^*(Z,\bX)$ and $\normhat(Z,\bX)$ denote the normalizing constant and its estimator, respectively, which are defined as
\begin{align*}
    \norm^* (Z,\bX) 
    &
    = \EXP \big\{ \pi^*(Y,Z,\bX) \cond 
    A=0,Z,\bX \big\}  \ ,
    \\
    \normhat (Z,\bX) 
    &
    = \EXPhat \big\{ \pihat (Y,Z,\bX) \cond 
    A=0,Z,\bX \big\}     \ .
\end{align*}
Then, we have
\begin{align*}
    &
    \big\|
    \normhat (Z,\bX) 
    -
    \norm^* (Z,\bX)  
    \big\|_{P,2}
    \\
    &
    \lesssim 
    \left[
        \begin{array}{l}
        \big\| \widehat{f}\LSS(A=1 \cond Z,\bX) - f^*(A=1 \cond Z,\bX)
        \big\|_{P,2}
        \\
        +
        \big\| \widehat{f}\LSS(\potY{0} \cond A=0,Z,\bX) - f^*(\potY{0} \cond A=0,Z,\bX)
        \big\|_{P,2}
        \end{array}
    \right] \ .
    \numeq 
    \label{eq-proof-normalD L2P}
\end{align*}

\item[\HT{v}] Let $\alpha_{\norm}^*$, $\beta_{\norm}$, $\alphahat_{\norm}$, and $\betahat_{\norm}$ be:
\begin{align*}
    &
    \alpha_{\norm}^*(\potY{0},\bX)
    =
    \frac{\pi^*(\potY{0},Z,\bX)}{\norm^* (Z,\bX)}
    \ , 
    &&
    \beta_{\norm}^*(Z,\bX)
    =
    \frac{\pi^*(\potY{0},Z,\bX)}{\alpha_{\norm}^*(\potY{0},\bX)}
    \ ,
    \\
    &
    \alphahat_{\norm}(\potY{0},\bX)
    =
    \frac{\pihat(\potY{0},Z,\bX)}{\normhat (Z,\bX)}
    \ , 
    &&
    \betahat_{\norm}(Z,\bX)
    =
    \frac{\pihat(\potY{0},Z,\bX)}{\alphahat_{\norm}(\potY{0},\bX)} \ .
\end{align*}
Then, we have
\begin{align*}
    &
    \big\|
    \alphahat_{\norm}(\potY{0},\bX)
    -
    \alpha_{\norm}^*(\potY{0},\bX)
    \big\|_{P,2}
    \\
    &
    \lesssim 
    \left[
        \begin{array}{l}
        \big\| \widehat{f}\LSS(A=1 \cond Z,\bX) - f^*(A=1 \cond Z,\bX)
        \big\|_{P,2}
        \\
        +
        \big\| \widehat{f}\LSS(\potY{0} \cond A=0,Z,\bX) - f^*(\potY{0} \cond A=0,Z,\bX)
        \big\|_{P,2}
        \end{array}
    \right] \ ,
    \numeq 
    \label{eq-proof-alpha L2P}
    \\
    &
    \big\|
    \betahat_{\norm}(Z,\bX)
    -
    \beta_{\norm}^*(Z,\bX)
    \big\|_{P,2}
    \\
    &
    \lesssim 
    \left[
        \begin{array}{l}
        \big\| \widehat{f}\LSS(A=1 \cond Z,\bX) - f^*(A=1 \cond Z,\bX)
        \big\|_{P,2}
        \\
        +
        \big\| \widehat{f}\LSS(\potY{0} \cond A=0,Z,\bX) - f^*(\potY{0} \cond A=0,Z,\bX)
        \big\|_{P,2}
        \end{array}
    \right] \ .
    \numeq 
    \label{eq-proof-beta L2P}
\end{align*}

\item[\HT{vi}] For a uniformly bounded function $v(\potY{0},Z,\bX)$, we have
\begin{align*}
    &
    \big\|
    \widehat{\mu} \LSS (Z,\bX \con v) 
    -
    \mu^* (Z,\bX \con v)
    \big\|_{P,2}
    \\
    &
    \lesssim 
    \left[
        \begin{array}{l}
        \big\| \widehat{f}\LSS(A=1 \cond Z,\bX) - f^*(A=1 \cond Z,\bX)
        \big\|_{P,2}
        \\
        +
        \big\| \widehat{f}\LSS(\potY{0} \cond A=0,Z,\bX) - f^*(\potY{0} \cond A=0,Z,\bX)
        \big\|_{P,2}
        \end{array}
    \right] \ .
    \numeq 
    \label{eq-proof-mu L2P}
\end{align*}

\item[\HT{vii}] For $a \in \{0,1\}$, we have 
\begin{align*}
    &
    \big\| \widehat{f}\LSS(Z=1 \cond A=a,\potY{0},\bX) - f^*(Z=1 \cond A=a,\potY{0},\bX)\big\|_{P,2} 
    \\
    &
    \lesssim 
    \left[
        \begin{array}{l}
        \big\| \widehat{f}\LSS(A=1 \cond Z,\bX) - f^*(A=1 \cond Z,\bX)
        \big\|_{P,2}
        \\
        +
        \big\| \widehat{f}\LSS(\potY{0} \cond A=0,Z,\bX) - f^*(\potY{0} \cond A=0,Z,\bX)
        \big\|_{P,2}
        \end{array}
    \right] \ .
    \numeq 
    \label{eq-proof-z given y rate}
\end{align*}

\item[\HT{viii}] 
\begin{align*}
    &
    \big\| f^*(\potY{0} \cond \bX) - \widehat{f}\LSS(\potY{0} \cond \bX) \big\|_{P,2}
    \\
    &
    \lesssim 
    \left[
        \begin{array}{l}
        \big\| \widehat{f}\LSS(A=1 \cond Z,\bX) - f^*(A=1 \cond Z,\bX)
        \big\|_{P,2}
        \\
        +
        \big\| \widehat{f}\LSS(\potY{0} \cond A=0,Z,\bX) - f^*(\potY{0} \cond A=0,Z,\bX)
        \big\|_{P,2}
        \end{array}
    \right] \ .
    \numeq 
    \label{eq-proof-y0 rate}
\end{align*}

\item[\HT{ix}]
\begin{align*}
    &
    \big\|
    \what (\potY{0},Z,\bX)
    -
    \omega^*(\potY{0},Z,\bX)
    \big\|_{P,2}
    \\
    &
    \lesssim 
    \left[ 
     \begin{array}{l}  
     \big\| \widehat{f}\LSS(Z=1 \cond \bX)
     -
     f^*(Z=1 \cond \bX)
     \big\|_{P,2}
     \\
     +
     \big\| \widehat{f}\LSS(A=1 \cond Z,\bX)
     -
     f^*(A=1 \cond Z,\bX)
     \big\|_{P,2}
     \\
     +
     \big\| 
     \widehat{f}\LSS(\potY{0} \cond A=0, Z,\bX)
     -
     f^* (\potY{0} \cond A=0, Z,\bX)
     \big\|_{P,2}
     \end{array}
     \right]  \ .
     \numeq 
     \label{eq-proof-omega hat}
\end{align*}
and
\begin{align*}
    &
    \big\| 
    \widehat{\mu}\LSS (Z,\bX \con \what) - 
    \mu^*(Z,\bX \con \omega^*)
    \big\|_{P,2} 
    \\
    &
    \lesssim
    \left[ 
     \begin{array}{l}  
     \big\| \widehat{f}\LSS(Z=1 \cond \bX)
     -
     f^*(Z=1 \cond \bX)
     \big\|_{P,2}
     \\
     +
     \big\| \widehat{f}\LSS(A=1 \cond Z,\bX)
     -
     f^*(A=1 \cond Z,\bX)
     \big\|_{P,2}
     \\
     +
     \big\| 
     \widehat{f}\LSS(\potY{0} \cond A=0, Z,\bX)
     -
     f^* (\potY{0} \cond A=0, Z,\bX)
     \big\|_{P,2}
     \end{array}
     \right] 
        \ .
    \numeq 
    \label{eq-proof-omega mu hat}
\end{align*}

\item[\HT{x}] For a uniformly bounded function $\mathcal{G}^{(0)} \equiv \mathcal{G}(\potY{0},\bX)$, let $\what$ be a function satisfying Theorems \ref{thm-IF} and \ref{thm-IF binary Z} with the estimated density, i.e., 
\begin{align*}
    \widehat{\EXP}\LSS
    \bigg[ 
    \begin{array}{l}    
    \big\{ \mathcal{G}^{(0)} - \muhat(Z,\bX \con \mathcal{G}) \big\} \\
    - \big\{ \what(\potY{0},Z,\bX) - \muhat (Z,\bX \con \what) \big\}
    \end{array}
    \, \bigg| \, A=1, \potY{0}, \bX \bigg] = 0
    \ .
\end{align*}
Then, we have
\begin{align*}
    &
    \bigg\|
    \EXPk \bigg[ 
        \begin{array}{l}          
     \big\{ \mathcal{G}^{(0)} - \what(\potY{0},Z,\bX) \} \\
     -
     \big\{\muhat(Z,\bX \con \mathcal{G})  - \muhat(Z,\bX \con \what) 
     \big\} 
        \end{array}
        \, \bigg| \, A=1,\potY{0},\bX
        \bigg] 
    \bigg\|_{P,2}
    \\
    &
    \lesssim 
    \left[
        \begin{array}{l}
        \big\| \widehat{f}\LSS(A=1 \cond Z,\bX) - f^*(A=1 \cond Z,\bX)
        \big\|_{P,2}
        \\
        +
        \big\| \widehat{f}\LSS(\potY{0} \cond A=0,Z,\bX) - f^*(\potY{0} \cond A=0,Z,\bX)
        \big\|_{P,2}
        \end{array}
    \right] \ ,
    \numeq 
    \label{eq-proof-GH convergence}
\end{align*}
and
\begin{align*}
    &
    \bigg\|
    \EXPk \bigg[ 
        \begin{array}{l}          
     \big\{ \mathcal{G}^{(0)} - \what(\potY{0},Z,\bX) \} \\
     -
     \big\{\muhat(Z,\bX \con \mathcal{G})  - \muhat(Z,\bX \con \what) 
     \big\} 
        \end{array}
        \, \bigg| \, A=1,Z,\bX
        \bigg] 
    \bigg\|_{P,2}
    \\
    &
    \lesssim 
    \left[
        \begin{array}{l}
        \big\| \widehat{f}\LSS(A=1 \cond Z,\bX) - f^*(A=1 \cond Z,\bX)
        \big\|_{P,2}
        \\
        +
        \big\| \widehat{f}\LSS(\potY{0} \cond A=0,Z,\bX) - f^*(\potY{0} \cond A=0,Z,\bX)
        \big\|_{P,2}
        \end{array}
    \right] \ ,
    \numeq 
    \label{eq-proof-GH convergence given Z}
\end{align*}

\item[\HT{xi}] For a uniformly bounded function $\nu$, let $\widetilde{\nu}\LSS$ be 
\begin{align*}    
&
\widetilde{\nu}\LSS(\potY{0},Z,\bX)
\\
&
=
\nu(\potY{0},Z,\bX)
-
\widehat{\EXP}\LSS \big\{ \nu(\potY{0},Z,\bX) \cond Z, \bX \big\}
\\
&
\qquad \qquad 
-
\widehat{\EXP}\LSS \big\{ \nu(\potY{0},Z,\bX) \cond \potY{0}, \bX \big\}
+
\widehat{\EXP}\LSS \big\{ \nu(\potY{0},Z,\bX) \cond \bX \big\} \ .
\end{align*}
Then, we have 
\begin{align*}
    &
    \big\|
    \EXPk \big\{ 
    \what(\potY{0},Z,\bX)  \big\}
    \big\| 
    \numeq 
    \label{eq-proof-omega mean rate} 
    \\
    &
    \lesssim 
    \left[
        \begin{array}{l}
        \big\| \widehat{f}\LSS(A=1 \cond Z,\bX) - f^*(A=1 \cond Z,\bX)
        \big\|_{P,2}
        \\
        +
        \big\| \widehat{f}\LSS(\potY{0} \cond A=0,Z,\bX) - f^*(\potY{0} \cond A=0,Z,\bX)
        \big\|_{P,2}
        \end{array}
    \right]
    \\
    &
    \quad \quad 
    \times
    \big\|  \widehat{f}\LSS(Z=1 \cond \bX) - f^*(Z=1 \cond \bX) \big\|_{P,2} 
    \ .
\end{align*}

\end{itemize}
\end{lemma}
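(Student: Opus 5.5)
The plan is to reduce every item to Lipschitz-type perturbation bounds for a finite collection of maps. Each map takes $\theta=\{\fz,\fa,\fy\}$ into a nuisance quantity. Throughout, Assumption \HL{A3} (for $\theta^*$ and $\widehat\theta\LSS$), Lemma \ref{lemma-boundedness}, \HL{A4} and \HL{A5} keep every numerator and denominator bounded and bounded away from zero. Pointwise differences in $(y,z,\bX)$ can then be turned into $\|\cdot\|_{Y,2}$ bounds, and after integrating against $f^*$ into $\|\cdot\|_{P,2}$ bounds, using $c_Y\le f\le C_Y$ to pass between Lebesgue measure in $y$ and $P$.

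\emph{Items (i)--(v).} Item (i) is immediate: $\gamma=p/(1-p)$ with $p=f(A=1\cond Z,\bX)\in[c_A,1-c_A]$, and this map is bi-Lipschitz on that interval.

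For (ii) I would not perturb the fixed point $g^\star$ of Theorem \ref{thm-Psi-binary} directly. Instead I would use the explicit binary-$Z$ structure that Lemma \ref{lemma-proof-useful}(iv) implies. Write $q_z(y)=f(Y=y,A=0\cond Z=z,\bX)$ and $r=\beta(1,\bX)/\beta(0,\bX)$, so that $\pi(y,1)=r\,\pi(y,0)$. Equating the two expressions for $f(\potY{0}=y\cond\bX)$ gives $(1+\pi(y,0))q_0=(1+r\pi(y,0))q_1$, hence
$\pi(y,0)=\{q_0(y)-q_1(y)\}/\{r q_1(y)-q_0(y)\}$.
The scalar $r$ is then pinned down by the normalization $\int(1+\pi(y,0))q_0(y)\,dy=1$. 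This is a one-dimensional equation in $r$ whose derivative is bounded away from zero, which I would check using \HL{A3}, \HL{A5} and Lemma \ref{lemma-beta and gamma} for the sign. A mean-value argument then gives $|\widehat r-r^*|\lesssim$ the $\|\cdot\|_{Y,2}$ errors of $q_0,q_1$. Since $q_z=f(Y\cond A=0,z,\bX)f(A=0\cond z,\bX)$, these errors are controlled by $r_Y$ and $r_A$ at $\bX$.

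Plugging this back into the formula for $\pi(y,0)$ gives (ii), because the denominator $rq_1-q_0$ is bounded away from zero under \HL{A5}. Squaring and integrating over $\bX$ gives (iii). For (iv), Lemma \ref{lemma-proof-useful}(ii) shows $\norm=\gamma$, so (iv) follows from (i). Item (v) is a ratio of quantities already controlled by (iii) and (iv).

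\emph{Items (vi)--(viii).} For (vi), write $\mu(Z,\bX\con v)$ through \eqref{eq-nuis-mu alpha}, or equivalently through Lemma \ref{lemma-proof-useful}(i), as a ratio of two integrals against $f(\cdot\cond A=0,Z,\bX)$. Adding and subtracting terms and applying Cauchy--Schwarz bounds the difference by (iii) plus $r_Y$. Item (viii) follows from \eqref{eq-proof-Y0Z-gamma}.

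Item (vii) needs more care. By Bayes, $f(Z\cond A,\potY{0},\bX)$ involves $f(Z\cond\bX)$, yet the stated bound has no $r_Z$ term. I expect this to work because $f_Z$ enters the lemma's target only through the true law. If it does not, I would carry an extra $r_Z$ term and check that it is harmless downstream in \HL{A7}.

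\emph{Items (ix)--(xi).} For (ix), $\omega^*$ has the closed form of Theorem \ref{thm-IF binary Z}, built from $L^*$, $B^*$, $C^*$. Its denominator $\EXP\{Z\cond A=1,\potY{0},\bX\}-\EXP(Z\cond\bX)$ is bounded away from zero by \HL{A5}. The bound therefore follows by composing the previous Lipschitz bounds; here $r_Z$ legitimately enters through $\EXP(Z\cond\bX)$.

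For (x), I would use that $\what$ solves the estimated integral equation exactly. The true conditional expectation then equals the true operator minus the estimated operator, applied to the same bounded function. The two operators differ only through $f(Z\cond A=1,\potY{0},\bX)$ and $\mu$, which are controlled by (vi)--(vii). The version conditional on $(A=1,Z,\bX)$ is handled the same way.

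For (xi), I would use the representation $\what=[\widehat L-\widehat\EXP(\widehat L\cond\bX)]\{Z-\widehat f(Z=1\cond\bX)\}$ from \eqref{eq-proof-vtilde L representation}. Under the truth $\potY{0}\indep Z\cond\bX$, so $\EXPk\{\what\}$ factorizes into $\EXP[\{\EXP(\widehat L\cond\bX)-\widehat\EXP(\widehat L\cond\bX)\}\{f^*(Z=1\cond\bX)-\widehat f(Z=1\cond\bX)\}]$. Cauchy--Schwarz together with (viii) then gives the product bound.

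\emph{Main obstacle.} I expect the hardest step to be (ii): establishing a uniform lower bound on the derivative of the normalization equation in $r$, for both the true and the estimated $\theta$, using only \HL{A3} and \HL{A5}. If that fails, my fallback is the fixed-point route. I would show Lipschitz dependence of $g^\star$ on $\theta$ from the global contraction of $\overline\Psi$ in Theorem \ref{thm-global contraction-binary}, via the standard bound $\|\widehat g-g\|\le(1-\kappa)^{-1}\|\overline\Psi(g;\widehat\theta)-\overline\Psi(g;\theta)\|$. The divergence there is comparable to the sup norm on the bounded set given by \HL{A3}.
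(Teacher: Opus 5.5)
Your architecture is sound, and for (i), (iii), (v), (vi), (viii)--(xi) it matches the paper in substance. The one place you must change course is (vii). Your suspicion there is right, but the expectation you state is not, and the fallback is required.

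The estimated conditional $\widehat f\LSS(Z=1\cond A=a,\potY{0},\bX)$ is computed from the estimated joint law, which contains $\widehat f\LSS(Z\cond\bX)$. With $q_z(y)=f(Y=y,A=0\cond Z=z,\bX)$ and $p_z=f(Z=z\cond\bX)$, it equals $\widehat q_1\widehat p_1/(\widehat q_1\widehat p_1+\widehat q_0\widehat p_0)$ for $a=0$, and \eqref{eq-proof-frakp1} shows the same dependence on $p_1$ for $a=1$. Take $\widehat f_A=f_A^*$ and $\widehat f_Y=f_Y^*$ but $\widehat f_Z\neq f_Z^*$. Then the left side of \eqref{eq-proof-z given y rate} is strictly positive while the right side vanishes, so (vii) is false as stated. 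The paper's own argument reaches it only by dropping the factor $f(Z=z\cond\bX)$ from Bayes' formula.

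The fix is to add $r_Z\LSS$ to (vii), and hence to the first display \eqref{eq-proof-GH convergence} of (x), which rests on (vii) alone; the second display of (x) needs only (vi). As you anticipated, this is harmless for Theorem \ref{thm-AN}. In the bias bound that term multiplies the $\alpha_{\norm}$-error, which is $O(r_A\LSS+r_Y\LSS)$ by \eqref{eq-proof-alpha L2P}. It therefore only adds $r_A\LSS r_Z\LSS+r_Y\LSS r_Z\LSS$, which are already in \HL{A7}.

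Your (ii) and (iv) take a genuinely different route from the paper. For (iv), the identity $\norm=\gamma$ holds by \eqref{eq-proof-gamma} for the truth and by construction through \eqref{eq-beta g-binary} for the estimate. It reduces (iv) to (i), and it also gives $\beta_{\norm}=\norm=\gamma$, so half of (v) is (i) as well.

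For (ii), the paper perturbs the fixed point $g^\star$: it proves Fr\'echet differentiability of $\Psi$, inverts $I-D_g\Psi$ at $g^\star$, and applies the implicit function theorem. That gives only a local first-order bound at each $\bX$, with an $o_P(\|\widehat\vartheta-\vartheta^*\|)$ remainder. Your closed form $\pi(y,0)=(q_0-q_1)/(rq_1-q_0)$, plus one scalar normalization, can give an exact Lipschitz bound with constants uniform in $\bX$.

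The step you flagged should not go through a mean-value argument, though. $F(\cdot\,;\widehat q)$ is defined only where $r\widehat q_1-\widehat q_0$ keeps a fixed sign in $y$, and $r^*$ need not lie there, since $\widehat q-q^*$ is controlled only in $L^2$. On the segment from $(r^*,q^*)$ to $(\widehat r,\widehat q)$ the denominator can vanish, and it must if the true and estimated orderings of $\beta$ differ.

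Use the exact secant identity instead. Write $\pi_0=\pi(\cdot,0)$, $D=rq_1-q_0$ and $\delta_z=\widehat q_z-q_z^*$, and note $(q_0^*-q_1^*)/D^*=\pi_0^*$. Then
\[
\widehat\pi_0-\pi_0^*=-\frac{(\widehat r-r^*)\,q_1^*\pi_0^*}{\widehat D}+\frac{(\delta_0-\delta_1)D^*-(q_0^*-q_1^*)(\widehat r\,\delta_1-\delta_0)}{\widehat D\,D^*} .
\]
Subtracting the two normalizations $\int(1+\pi_0)q_0\,dy=1$ gives $(\widehat r-r^*)\int q_1^*\pi_0^*\widehat q_0/\widehat D\,dy=O(\|\delta_0\|_{Y,2}+\|\delta_1\|_{Y,2})$.

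The coefficient of $\widehat r-r^*$ is bounded away from zero whatever the orderings. This is because $\widehat D$ has the constant sign of $\widehat\beta(1,\bX)-\widehat\beta(0,\bX)$, and $|\widehat D|$ and $|D^*|$ are bounded above and below. For the lower bound, use $q_0-q_1=f(\potY{0}=y\cond\bX)\,\alpha(\beta_1-\beta_0)/\{(1+\alpha\beta_0)(1+\alpha\beta_1)\}$ together with \HL{A3}, Lemma \ref{lemma-boundedness} and \HL{A5}. Bound (ii) then follows.

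Your contraction fallback would not do this job. $\overline\Psi$ contracts toward its own fixed point in a sup/inf-ratio divergence. At best that yields sup-norm control in terms of sup-norm errors of $f_Y$, not the $\|\cdot\|_{Y,2}$ bound.
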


\begin{proof}

For notational brevity, all integrals with respect to $y$ are taken over $\suppYX$. That is, we write $\int g(y,a,z,\bX) \, dy \equiv \int_{\suppYX} g(y,a,z,\bX) \, dy$. 

\vspace*{0.25cm}

\noindent \HL{i}  From straightforward algebra, we find 
    \begin{align*}
    & \gammahat(Z,\bX) - \gamma^*(Z,\bX)
    \\
    & =
    \frac{ \fhat{}\LSS(A=1 \cond Z,\bX) }
    { \fhat{}\LSS(A=0 \cond Z,\bX) }
    -
    \frac{ f^*(A=1 \cond Z,\bX) }
    { f^*(A=0 \cond Z,\bX) }
    \\
    & =
    \frac{ \fhat{}\LSS(A=1 \cond Z,\bX) 
    f^*(A=0 \cond Z,\bX) 
    - f^*(A=1 \cond Z,\bX) 
    \fhat{}\LSS(A=0 \cond Z,\bX) }
    { \fhat{}\LSS(A=0 \cond Z,\bX) 
    f^*(A=0 \cond Z,\bX) } 
    \\ 
    &
    = 
    \frac{  
        \fhat{}\LSS(A=1 \cond Z,\bX) -
        f^*(A=1 \cond Z,\bX) 
        }{\fhat{}\LSS(A=0 \cond Z,\bX) f^*(A=0 \cond Z,\bX)}  \ .
\end{align*}
This implies 
    \begin{align*}
        &
        \big\| \widehat{\gamma}\LSS(Z,\bX)  - \gamma^*(Z,\bX)
        \big\|
        \asymp
        \big\| \widehat{f}\LSS(A=1 \cond Z,\bX) - f^*(A=1 \cond Z,\bX) \big\| \ .
    \end{align*}
Taking the expectation, we obtain 
\begin{align*}
    \big\| \widehat{\gamma}\LSS(Z,\bX)  - \gamma^*(Z,\bX)
        \big\|_{P,2}
        \asymp
        \big\| \widehat{f}\LSS(A=1 \cond Z,\bX) - f^*(A=1 \cond Z,\bX) \big\|_{P,2} \ . 
\end{align*}

\vspace*{0.25cm}

\noindent \HL{ii} We first fix $z_0 \in \suppZ$ and $\bX \in \suppX$ in order to consider the mapping $\Psi$ and the associated fixed point. For notational brevity, we suppress the dependence on $z_0$. Let $\Psi(g \con \vartheta )$ denote the mappings in \eqref{eq-Psi} of the main paper with parameter $\vartheta \equiv \{\fy,\fa\}$:
\begin{align*}
     \Psi(g \con \vartheta )
     =
     \frac{ \{ \mathcal{C}_1(\bX \con \vartheta ) g(y,\bX) + \mathcal{C}_2(y,\bX \con \vartheta ) \} R_D(\bX \con g, \vartheta )  }{R_Y(y, \bX \con \vartheta )} \ ,
\end{align*}
where
\begin{align*}
    \mathcal{C}_1(\bX \con \vartheta ) 
    &     
    = \frac{ f(A=1 \cond Z=z_0,\bX) }{ f(A=1 \cond Z=z_1,\bX)} \ ,
    \\
    \mathcal{C}_2(y,\bX \con \vartheta ) 
    & 
    = 
    \frac{ 
    \displaystyle{ 
    \left\{
    \begin{array}{l}
    f(A=0 \cond Z=z_0,\bX)
    f(Y=y \cond A=0,Z=z_0,\bX)
    \\
    -
    f(A=0 \cond Z=z_1,\bX)
    f(Y=y \cond A=0,Z=z_1,\bX)
    \end{array} 
    \right\}
    }
    }{f(A=1 \cond Z=z_1,\bX)} 
    \ ,
    \\
    R_D(\bX \con g, \vartheta )
    & = 
    \int g(t,\bX) R_Y(t,\bX \con \vartheta ) \, dt
    \ , 
    \\
    R_Y(y, \bX \con \vartheta )     
    & =
    \frac{ f(Y=y \cond A=0,Z=z_1,\bX) }{ f(Y=y \cond A=0,Z=z_0,\bX) } \ .
\end{align*}
Let $g^\star(\vartheta )$ be the fixed point of $\Psi( \cdot \con \vartheta )$, i.e., $g(\vartheta)=\Psi( g \con \vartheta )$. Let $\Psi^*$ and $\widehat{\Psi}\LSS$ be the mappings evaluated at the true parameter $\vartheta ^*$ and the estimated parameter $\widehat{\vartheta} \LSS$, respectively. Accordingly, let $g^\star$ and $\widehat{g}\LSSstar$ denote the corresponding fixed points of $\Psi^*$ and $\widehat{\Psi}\LSS$. 

By \eqref{eq-proof-gamma}, \eqref{eq-proof-D}, \eqref{eq-proof-z0}, we have
\begin{align*}
    g^\star(y,\bX) 
    &
    = \frac{\alpha^*(y,\bX) \ff{}^*(Y=y \cond A=0,Z=z_0,\bX) }{ \EXP \big\{ \alpha^*(y,\bX) \cond A=0,Z=z_0,\bX \big\} }
    \\
    &
    = \frac{\alpha^*(y,\bX) \beta^*(z_0,\bX) \ff{}^*(Y=y \cond A=0,Z=z_0,\bX) }{ \gamma^*(z_0,\bX) }  \ ,
\end{align*}
which yields
\begin{align*}    
    \alpha^*(y,\bX) \beta^*(z,\bX)
    &
    =
    \frac{g^\star(y,\bX) \gamma^*(z,\bX)
    }{f^*(Y=y \cond A=0,Z=z,\bX)} \ ,
    \\
    \widehat{\alpha}\LSS(y,\bX) \widehat{\beta}\LSS (z,\bX)
    &
    =
    \frac{\widehat{g}\LSSstar(y,\bX) \widehat{\gamma}\LSS(z,\bX)
    }{\widehat{f}\LSS(Y=y \cond A=0,Z=z,\bX)} \ .
\end{align*}
Therefore, we obtain the following result for $z \in \{0,1\}$:
\begin{align*}
    &
    \big\|
    \widehat{\alpha}\LSS(\potY{0},\bX) \widehat{\beta}\LSS (z,\bX)
    -
    \alpha^*(\potY{0},\bX) 
    \beta^*(z,\bX) 
    \big\|_{Y,2}
    \\
    &
    \lesssim
    \big\| \widehat{g}\LSSstar(\potY{0},\bX) - g^\star(\potY{0},\bX) \big\|
    \\
    &
    \quad 
    +
    \underbrace{
    \left[ 
    \begin{array}{l}            
    \| \widehat{f}\LSS(\potY{0} \cond A=0,Z=1,\bX)
    -
    f^*(\potY{0} \cond A=0,Z=1,\bX)\|_{Y,2}
    \\
    +
    \| \widehat{f}\LSS(\potY{0} \cond A=0,Z=0,\bX)
    -
    f^*(\potY{0} \cond A=0,Z=0,\bX) \|_{Y,2}
    \\
             + \| \widehat{f}\LSS(A=1 \cond Z=1,\bX) - f^*(A=1 \cond Z=1,\bX) \|
             \\
             + \| \widehat{f}\LSS(A=1 \cond Z=0,\bX) - f^*(A=1 \cond Z=0,\bX) \|
        \end{array}
    \right] }_{ \equiv \mathfrak{R}(\widehat{\vartheta}\LSS - \vartheta^*) }
    \ .
    \numeq 
    \label{eq-proof-alpha beta estimation}
\end{align*}
Therefore, to characterize the convergence rate of $\widehat{\alpha}\LSS \widehat{\beta}\LSS$, it is necessary to first determine the convergence rate of $\widehat{g}\LSSstar$.

We first show that $g^\star(\vartheta )$ is a smooth mapping with respect to $\vartheta $. In order to do so, we establish $\Psi$ is Fr\'echet differentiable with respect to $\vartheta $ and $g$. We begin by considering differentiability with respect to $\vartheta $. Let $D_{\vartheta} \Psi(g \con \vartheta )[ \delta  ]$ denote the pathwise derivative of $\Psi(g \con \vartheta )$ with respect to $\vartheta $, for a fixed $g$, in the direction of $\delta  \equiv \{\texttt{f}_Y,\texttt{f}_A\}$, i.e.,  
\begin{align*}
    &
    D_{\vartheta} \Psi(g \con \vartheta )[ \delta  ]
    \\
    & 
    = \frac{\partial}{\partial t} \Psi(g \con \vartheta  + t \delta  ) \bigg|_{t=0} 
    \\
    & 
    = 
    \frac{ D_{\vartheta} \mathcal{C}_1(\bX \con \vartheta )[\delta] g(y,\bX) + D_{\vartheta}  \mathcal{C}_2(y,\bX \con \vartheta ) [\delta]
    }{R_Y(y, \bX \con \vartheta )}    
    R_D(\bX \con g, \vartheta ) 
    \\
    &
    \quad \quad
    +
    \frac{ \{ \mathcal{C}_1(\bX \con \vartheta ) g(y,\bX) + \mathcal{C}_2(y,\bX \con \vartheta ) \} 
    }{R_Y(y, \bX \con \vartheta )}    
    \int g(t,\bX) D_{\vartheta} R_Y(t,\bX \con \vartheta )[\delta] \, dt
    \\
    &
    \quad \quad
    -
    \frac{ \{ \mathcal{C}_1(\bX \con \vartheta ) g(y,\bX) + \mathcal{C}_2(y,\bX \con \vartheta ) \} R_D(\bX \con g, \vartheta ) 
    }{R_Y(y, \bX \con \vartheta )} 
    \frac{ D_{\vartheta} R_Y(y,\bX \con \vartheta )[\delta] }{R_Y(y,\bX \con \vartheta )}
    \\
    &
    =
    \Psi(g \con \vartheta )
    \left[ 
        \begin{array}{l}             
        \displaystyle{ \frac{ D_{\vartheta} \mathcal{C}_1(\bX \con \vartheta )[\delta ] g(y,\bX) + D_{\vartheta}  \mathcal{C}_2(y,\bX \con \vartheta ) [\delta ]  } { \mathcal{C}_1(\bX \con \vartheta ) g(y,\bX) + \mathcal{C}_2(y,\bX \con \vartheta )  } 
        }
        \\[0.4cm]
        \displaystyle{ 
        +
        \frac{ \int g(t,\bX) D_{\vartheta} R_Y(t,\bX \con \vartheta ) [\delta ] \, dt }{ \int g(t,\bX) R_Y(t,\bX \con \vartheta ) \, dt }
        -
        \frac{ D_{\vartheta} R_Y(y,\bX \con \vartheta )[\delta ] }{R_Y(y,\bX \con \vartheta )}
        }
        \end{array}
    \right] \ .
\end{align*}
Therefore, if $\mathcal{C}_1, \mathcal{C}_2$, and $R_Y$ are Fr\'echet differentiable with respect to $\vartheta $, then by standard functional calculus, $\Psi$ is also Fr\'echet differentiable, since all denominators are bounded away from zero on the parameter space.

We detail the proof of Fr\'echet differentiability for $\mathcal{C}_1$.  The proofs for $\mathcal{C}_2$ and $R_Y$ follow by the same arguments and are therefore omitted. The pathwise derivative of $\mathcal{C}_1$ is given by
\begin{align*}
    D_{\vartheta} \mathcal{C}_1(\bX \con \vartheta ) [\delta ]
    =
    \frac{
    \bigg[ 
        \begin{array}{l}
    \texttt{f}(A=1 \cond Z=z_0,\bX) f(A=1 \cond Z=z_1,\bX) \\
    -  f(A=1 \cond Z=z_0,\bX) \texttt{f} (A=1 \cond Z=z_1,\bX) 
        \end{array}
    \bigg]
    }{ f_A^2(A=1 \cond Z=z_1,\bX)} \ .
\end{align*}
Therefore, we find
\begin{align*}
    &
    \mathcal{C}_1(\bX \con \vartheta  + \delta )
    -
    \mathcal{C}_1(\bX \con \vartheta )
    - 
    D_{\vartheta} \mathcal{C}_1(\bX \con \vartheta )[\delta ]
    \\
    &
    =
    \frac{
    \left[ 
        \begin{array}{l}
            \{ f (A=1 \cond Z=z_0,\bX) + \texttt{f} (A=1 \cond Z=z_0,\bX) \} 
        f^2(A=1 \cond Z=z_1,\bX)
        \\
        -
        \{ f (A=1 \cond Z=z_1,\bX) + \texttt{f} (A=1 \cond Z=z_1,\bX) \} 
        f (A=1 \cond Z=z_1,\bX)
        f (A=1 \cond Z=z_0,\bX)
        \\
        - 
        \{ f (A=1 \cond Z=z_1,\bX) + \texttt{f} (A=1 \cond Z=z_1,\bX) \} 
        \texttt{f}(A=1 \cond Z=z_0,\bX) f(A=1 \cond Z=z_1,\bX) 
        \\
        +  
        \{ f (A=1 \cond Z=z_1,\bX) + \texttt{f} (A=1 \cond Z=z_1,\bX) \} 
        f(A=1 \cond Z=z_0,\bX) \texttt{f}(A=1 \cond Z=z_1,\bX) 
        \end{array}
    \right] 
    }{
        \{ f (A=1 \cond Z=z_1,\bX) + \texttt{f} (A=1 \cond Z=z_1,\bX) \} 
        f^2(A=1 \cond Z=z_1,\bX)
    }
    \\
    &
    =
    \frac{ \texttt{f}(A=1 \cond Z=z_1,\bX) 
     \left\{
     \begin{array}{l}
     f(A=1 \cond Z=z_0,\bX) \texttt{f}(A=1 \cond Z=z_1,\bX) \\
     - f(A=1 \cond Z=z_1,\bX) \texttt{f}(A=1 \cond Z=z_0,\bX) 
     \end{array}
     \right\}
    }{
        \{ f (A=1 \cond Z=z_1,\bX) + \texttt{f} (A=1 \cond Z=z_1,\bX) \} 
        f^2(A=1 \cond Z=z_1,\bX)
    }  \ .
\end{align*}
Consequently, we obtain
\begin{align*}
    \lim_{t \rightarrow 0}
    \frac{ \big| \mathcal{C}_1(\bX \con \vartheta  + t \delta )
    -
    \mathcal{C}_1(\bX \con \vartheta )
    - 
    D_{\vartheta} \mathcal{C}_1(\bX \con \vartheta ) [t \delta ] \big| }{ t }
    = 0 \ , \quad \forall \delta  \ ,
\end{align*}
which establishes that  $\mathcal{C}_1$ is Fr\'echet differentiable with respect to $\vartheta $. 

Likewise, we also find
\begin{align*}
    \big|
    D_{\vartheta} \mathcal{C}_1(\bX \con \vartheta ) [\delta ]
    \big|
    &
    =
    \frac{
    \bigg| 
        \begin{array}{l}
    \texttt{f}(A=1 \cond Z=z_0,\bX) f(A=1 \cond Z=z_1,\bX) \\
    -  f(A=1 \cond Z=z_0,\bX) \texttt{f} (A=1 \cond Z=z_1,\bX) 
        \end{array}
    \bigg|
    }{ f^2(A=1 \cond Z=z_1,\bX)} 
    \\
    &
    \lesssim | \texttt{f}(A=1 \cond Z=z_0,\bX) | +  | \texttt{f}(A=1 \cond Z=z_1,\bX) | \ .
\end{align*}
The inequality follows directly from the boundedness assumption.

By the same reason, $\mathcal{C}_2$ and $R_Y$ are also Fr\'echet differentiable. This in turn shows that $\Psi$ is Fr\'echet differentiable with respect to $\vartheta $. We denote the Fr\'echet derivative operator with respect to $\vartheta $ by $D_{\vartheta} \Psi(g \con \vartheta ): \delta  \mapsto D_{\vartheta} \Psi(g \con \vartheta ) [ \delta ] $ where $\delta $ is the perturbation direction in $\vartheta $. Moreover, we have the bound 
\begin{align*}
    &
    \big\| D_{\vartheta} \Psi^*(g^\star) [ \delta ] \big\|_{Y,2}
    \\
    &
    \lesssim
    \left[ 
    \begin{array}{l}            
    \| \texttt{f}(\potY{0} \cond A=0,Z=1,\bX) \|_{Y,2}
    +
    \| \texttt{f}(\potY{0} \cond A=0,Z=0,\bX) \|_{Y,2}
    \\
             + \| \texttt{f}(A=1 \cond Z=1,\bX) \|
             + \| \texttt{f}(A=1 \cond Z=0,\bX) \|
        \end{array}
    \right]
    \\
    &
    \equiv 
    \mathfrak{R} (\delta )
    \ ,
    \numeq \label{eq-proof-r2YA}
\end{align*}
where the inequality follows from the boundedness conditions.

Next, let $D_{g}\Psi(g \con \vartheta ) [h]$ denote the pathwise derivative of $\Psi(g \con \vartheta )$ with respect to $g$, for a fixed $\vartheta $, in the direction of $h$, i.e., 
\begin{align*}
    &
    D_{g} \Psi(g \con \vartheta ) [h]
    \\
    &
    = \frac{\partial}{\partial t} \Psi(g + th \con \vartheta ) \bigg|_{t=0}
    \\
    &
    =
    \underbrace{ \frac{ \mathcal{C}_1 (\bX \con \vartheta ) R_D (\bX \con g, \vartheta ) }{R_Y(y,\bX \con \vartheta )} }_{\equiv \mathcal{A}(y,\bX \con g, \vartheta  )}
    h(y,\bX) 
    +
    \underbrace{
    \frac{ \mathcal{C}_1 (\bX \con \vartheta ) g(y,\bX) + \mathcal{C}_2(y,\bX \con \vartheta )}{R_Y(y,\bX \con \vartheta )}}_{\equiv \mathcal{B}(y,\bX \con g, \vartheta  )}
     R_D(\bX \con h, \vartheta  ) \ .
\end{align*}

Define $\Delta(g \con \vartheta ) [h] \equiv \Psi(g+h \con \vartheta ) - \Psi(g \con \vartheta ) - D_{g} \Psi(g \con \vartheta ) [h]$, which is expressed by
\begin{align*}
    &
    \Delta(g \con \vartheta ) [h]
    \\
    &
    = \Psi(g+h \con \vartheta ) - \Psi(g \con \vartheta ) - D_{g} \Psi(g \con \vartheta ) [h]
    \\
    &
    =
    \frac{ \mathcal{C}_1(\bX \con \vartheta ) \{ g(y,\bX) + h(y,\bX) \} + \mathcal{C}_2(y,\bX \con \vartheta )}{R_Y(y,\bX \con \vartheta ) }
    R_D(\bX \con g + h, \vartheta ) 
    \\
    &
    \quad \quad 
    -
    \frac{ \mathcal{C}_1(\bX \con \vartheta ) g(y,\bX) + \mathcal{C}_2(y,\bX \con \vartheta )}{R_Y(y,\bX \con \vartheta ) }
    R_D(\bX \con g, \vartheta ) 
    \\
    &
    \quad \quad 
    -
    \frac{ \mathcal{C}_1(\bX \con \vartheta ) h(y,\bX) 
     }{R_Y(y,\bX \con \vartheta ) }
     R_D(\bX \con g , \vartheta )
    -
    \frac{ \mathcal{C}_1(\bX \con \vartheta ) g(y,\bX) + \mathcal{C}_2(y,\bX \con \vartheta )}{R_Y(y,\bX \con \vartheta ) }
    R_D(\bX \con h, \vartheta ) 
    \\  
    &
    =
    \frac{ \mathcal{C}_1(\bX \con \vartheta ) h(y,\bX)
    }{R_Y(y,\bX \con \vartheta )}
    R_D(\bX \con h, \vartheta )
     \ . 
\end{align*} 
Note that 
\begin{align*}
     R_D(\bX \con h, \vartheta )
    \lesssim \big\| h(\potY{0},\bX)  \big\|_{Y,2}
    \ .
\end{align*}
Therefore, we establish
\begin{align*}
    \big| \Delta(g \con \vartheta ) [h] \big|
    \lesssim 
    \big\| h(\potY{0},\bX)  \big\|_{Y,2}
    \times \big| h(y,\bX) \big| \ ,
\end{align*}
which in turn implies
\begin{align*}
    \big\| \Delta(g \con \vartheta ) [h] \big\|_{Y,2}
    \lesssim 
    \big\| h(\potY{0},\bX)  \big\|_{Y,2}^2 \ . 
\end{align*}
Consequently, we find
\begin{align*}
    \lim_{t \rightarrow 0} 
    \frac{ \big\| \Delta(g \con \vartheta )  [th] \big\|_{Y,2} }
    { |t| }
    = 
    0 \ , \quad \forall h \ ,
\end{align*}
which establishes that  $\Psi$ is Fr\'echet differentiable with respect to $g$. We denote the Fr\'echet derivative operator with respect to $g$ by $D_{g} \Psi(g \con \vartheta ): h \mapsto D_{g} \Psi(g \con \vartheta ) [h]$ where $h$ is the perturbation direction in $g$.

Now, consider the mapping $\Lambda(g, \vartheta ) = g - \Psi(g \con \vartheta )$. The Fr\'echet derivative operator of $\Lambda$ with respect to $\vartheta $ at $(g^\star,\vartheta ^*)$ is given by the operator $D_{\vartheta} \Lambda(g,\vartheta ) = - D_{\vartheta} \Psi^*(g^\star): \delta  \mapsto - D_{\vartheta}\Psi^*(g^\star)[\delta ]$. Likewise, the Fr\'echet derivative operator of $\Lambda$ with respect to $g$ at $(g^\star,\vartheta ^*)$ is given by the operator $D_{g} \Lambda(g,\vartheta ) = I - D_{g} \Psi^*(g^\star): h \mapsto h - D_{g}\Psi^*(g^\star)[h]$. We will show that the second Fr\'echet derivative operator is invertible. Note that
\begin{align*} 
    &
    D_{g} \Lambda(g^\star,\vartheta ^*)[h]
    \\
    &
    =
    h(y,\bX) - D_{g} \Psi^*(g^\star)[h]
    \\
    & =
    \big\{ 1 - \mathcal{A}^*(y,\bX \con g^\star) \big\} h(y,\bX) - \mathcal{B}^*(y,\bX \con g^\star) R_D^*(\bX \con h)
    \\
    &
    \equiv
    b(y,\bX) \ .
    \numeq 
    \label{eq-proof-mapping denom}
\end{align*}

By combining \eqref{eq-proof-RYrange} and \eqref{eq-proof-RDrange} with the definition of $\mathcal{C}_1(\bX \con \vartheta )$, we obtain a similar result to \eqref{eq-proof-kappa}:
\begin{align*}
    &
    \mathcal{A}^*(y,\bX \con g^\star)
    \\
    &
    \in 
    \left\{ 
    \begin{array}{ll}
    \Big[ \frac{ 1/\beta^*(z_1,\bX) + c_\alpha }{ 1/\beta^*(z_0,\bX) + c_\alpha }
    , 
     \frac{ 1/\beta^*(z_1,\bX) + C_\alpha }{ 1/\beta^*(z_0,\bX) + C_\alpha }
     \Big] 
     \subseteq (0,1)
     &
     \text{ if $\beta^*(z_1,\bX) > \beta^*(z_0,\bX)$}
     \\[0.25cm]
     \Big[ \frac{ 1/\beta^*(z_1,\bX) + C_\alpha }{ 1/\beta^*(z_0,\bX) + C_\alpha }
    , 
     \frac{ 1/\beta^*(z_1,\bX) + c_\alpha }{ 1/\beta^*(z_0,\bX) + c_\alpha }
     \Big] 
     \subseteq (1,\infty)
     &
     \text{ if $\beta^*(z_1,\bX) < \beta^*(z_0,\bX)$} 
    \end{array}
     \right. .
     \numeq \label{eq-proof-A star}
\end{align*} 
This implies that $\mathcal{A}^*(y,\bX \con g^\star)  \neq 1 $ for all $y \in \suppX$.

From \eqref{eq-proof-A star}, the following expression is well-defined, with the denominator bounded away from zero:
\begin{align*}
    h(y,\bX) = \frac{ b(y,\bX ) + \mathcal{B}^*(y,\bX \con g^\star) R_D^*(\bX \con h) } { 1- \mathcal{A}^*(y,\bX \con g^\star) } \ .
        \numeq
    \label{eq-proof-delta 1}
\end{align*}   
Multiplying both sides by $R_Y^*$ and integrating with respect to $y$, we obtain
\begin{align*}
    &
    R_D^*(\bX \con h)
    \\
    &
    = 
    \int \frac{ b(y,\bX) }{ 1 - \mathcal{A}^*(y,\bX \con g^\star) } R_Y^*(y,\bX) \, dy 
    + 
    R_D^*(\bX \con h) \int \frac{ \mathcal{B}^*(y,\bX \con g^\star) }{ 1 - \mathcal{A}^*(y,\bX \con g^\star) } R_Y^*(y,\bX) \, dy  \ ,
\end{align*}
meaning that
\begin{align*}
    &
    \underbrace{
    \bigg\{ 1- \int \frac{ \mathcal{B}^*(y,\bX \con g^\star) }{ 1 - \mathcal{A}^*(y,\bX \con g^\star) } R_Y^*(y,\bX) \, dy \bigg\}
    }_{\equiv \mathcal{C}_3^*(\bX)}
    R_D^*(\bX \con h)  \\
    &
    = 
    \bigg\{ \int \frac{ b(y,\bX) }{ 1 - \mathcal{A}^*(y,\bX \con g^\star) } R_Y^*(y,\bX) \, dy \bigg\} \ .
    \numeq
    \label{eq-proof-delta 2}
\end{align*}

From the definition of $g^\star$, we obtain
\begin{align*}
    \frac{ \mathcal{B}^*(y,\bX \con g^\star) }{ 1 - \mathcal{A}^*(y,\bX \con g^\star) } R_Y^*(y,\bX) 
    &
    = \frac{  \mathcal{C}_1^* (\bX) g^\star(y,\bX) + \mathcal{C}_2^* (y,\bX) }{ 1 - \mathcal{A}^*(y,\bX \con g^\star) } 
     \\
    &
    = \frac{ g^\star(y,\bX) }{ 1 - \mathcal{A}^*(y,\bX \con g^\star) }
     \frac{ R_Y^*(y,\bX) }{R_D^*(\bX \con g^\star) } \ .
\end{align*}  
Integrating with respect to $y$, we obtain
\begin{align*}
    \mathcal{C}_3^*(\bX) 
    =
    1-
    \frac{1}{R_D^*(\bX \con g^\star)}
    \int \frac{ g^\star(y,\bX) R_Y^*(y,\bX) }{1-\mathcal{A}^*(y,\bX \con g^\star)} \, dy \ .
    \numeq 
    \label{eq-proof-C3}
\end{align*}

If $z_0 = \argmin_{z \in \{0,1\}} f(A=1 \cond Z=z,\bX)$, we have $\mathcal{A}^*(y,\bX \con g^\star) \in (0,1)$ from \eqref{eq-proof-A star}. Then, the right hand side of \eqref{eq-proof-C3} becomes
\begin{align*}
    \mathcal{C}_3^*(\bX) 
    & =
    1-
    \frac{1}{R_D^*(\bX \con g^\star)}
    \int \frac{ g^\star(y,\bX) R_Y^*(y,\bX) }{1-\mathcal{A}^*(y,\bX \con g^\star)} \, dy
    \\
    & <
    1-
    \frac{1}{R_D^*(\bX \con g^\star)}
    \int g^\star(y,\bX) R_Y^*(y,\bX) \, dy
    \\
    & = 0 \ .
\end{align*}
If $z_0 = \argmax_{z \in \{0,1\}} f(A=1 \cond Z=z,\bX)$, we have $\mathcal{A}^*(y,\bX \con g^\star) \in (1,\infty)$ from \eqref{eq-proof-A star}. Then, the right hand side of \eqref{eq-proof-C3} becomes
\begin{align*}
    \mathcal{C}_3^*(\bX) 
    & =
    1-
    \frac{1}{R_D^*(\bX \con g^\star)}
    \int \underbrace{ \frac{ g^\star(y,\bX) R_Y^*(y,\bX) }{1-\mathcal{A}^*(y,\bX \con g^\star)} }_{<0} \, dy > 1 \ .
\end{align*}
Consequently, we establish $\mathcal{C}_3^*(\bX) \neq 0$. In particular, $1/\mathcal{C}_3^*(\bX)$ is uniformly bounded. 

Combined with \eqref{eq-proof-delta 1} and \eqref{eq-proof-delta 2} with $\mathcal{C}_3^* \neq 0$, it follows that 
\begin{align*}
    h(y,\bX) 
    &
    = \frac{ b(y,\bX) + {  \frac{ \mathcal{B}^*(y,\bX \con g^\star) }{ \mathcal{C}_3^*(\bX) } \int \frac{ b(t,\bX) }{ 1 - \mathcal{A}^*(t,\bX \con g^\star) } R_Y^*(t,\bX) \, dt }} {1- \mathcal{A}^*(y,\bX \con g^\star) }  \ .
\end{align*}
Therefore, the mapping $D_{g} \Lambda(g^\star,\vartheta ^*): h \mapsto D_{g} \Lambda(g^\star,\vartheta ^*)[h]$ in \eqref{eq-proof-mapping denom} is invertible in the following sense: for any given function $b$, there exists a direction $h$ such that $D_{g} \Lambda(g^\star,\vartheta ^*)[h] = b$. In addition, we also obtain
\begin{align*}
     \bigg|
     \int \frac{ b(t,\bX) }{ 1 - \mathcal{A}^*(t,\bX \con g^\star) } R_Y^*(t,\bX) \, dt
     \bigg|
     &
     \lesssim
     \int \big| b(t,\bX) \big| \, dt
     \leq  
     \bigg\{ \int b^2(t,\bX) \, dt  \bigg\}^{1/2} \ .
\end{align*}
Therefore, the $L^2(\suppYX)$-norm of $h$ is bounded in terms of the $L^2(\suppYX)$-norm of $b$ as follows:
\begin{align*}
    \big\| h(\potY{0},\bX) \big\|_{Y,2}^2
    &
    \lesssim
    \int  \frac{ b^2(y,\bX) +  \int b^2 (t,\bX) \, dt } { \{ 1- \mathcal{A}^*(y,\bX \con g^\star) \}^2 } \, dy
    \\
    &
    \lesssim 
    \int b^2(y,\bX) \, dy
    =
    \big\| b(\potY{0},\bX) \big\|_{Y,2}^2
    \ .
    \numeq 
    \label{eq-proof-delta}
\end{align*}

From the established results, specifically, Fr\'echet differentiability of $\Lambda$ and invertible derivative operator $D_{g} \Lambda(g^\star,\vartheta ^*)$, the implicit function theorem applies. Consequently, we may write
\begin{align*}
    & 
    g^\star(\widehat{\vartheta} \LSS)
    -
    g^\star(\vartheta ^*)
    \\
    &
    =
    D_{\vartheta} g^\star(\vartheta ^*) 
    [ \widehat{\vartheta} \LSS - \vartheta ^* ]
    + 
    o_P \big( \| \widehat{\vartheta} \LSS - \vartheta ^* \|_{Y,2} \big) 
    \\
    &
    =
    -
    \big\{ D_{g} \Lambda(g^\star,\vartheta ^*) \big\}^{-1}
    \circ 
    D_{\vartheta} \Lambda(g^\star,\vartheta ^*) 
    [ \widehat{\vartheta} \LSS - \vartheta ^* ]
    + 
    o_P \big( \| \widehat{\vartheta} \LSS - \vartheta ^* \|_{Y,2} \big)  \ .
\end{align*}
Let $ \widehat{b}\LSS \equiv D_{\vartheta} \Lambda(g^\star,\vartheta ^*) [ \widehat{\vartheta} \LSS - \vartheta ^* ]$, and $\widehat{h}\LSS = \big\{ D_{g} \Lambda(g^\star,\vartheta ^*) \big\}^{-1}[ \widehat{b}\LSS ]$. Then, we find
\begin{align*}
    &
    \big\| g^\star(\widehat{\vartheta} \LSS) - g^\star(\vartheta ^*) \big\|_{Y,2}
    \\
    &
    \lesssim
    \big\| \widehat{h}\LSS \big\|
    +
    o_{P} \big( \| \widehat{\vartheta} \LSS - \vartheta ^* \|_{Y,2} \big) 
    \\
    &
    \stackrel{\eqref{eq-proof-delta}}{ \lesssim }
    \big\| \widehat{b}\LSS \big\|
    +
    o_{P} \big( \| \widehat{\vartheta} \LSS - \vartheta ^* \|_{Y,2} \big) 
    \\
    &
    \stackrel{\eqref{eq-proof-r2YA}}{ \lesssim }
    \mathfrak{R}( \widehat{\vartheta} \LSS - \vartheta ^* ) 
    +
    o_{P} \big( \| \widehat{\vartheta} \LSS - \vartheta ^* \|_{Y,2} \big) 
    \ .
    \numeq
    \label{eq-proof-g estimation} 
\end{align*}

Combining \eqref{eq-proof-alpha beta estimation} and \eqref{eq-proof-g estimation}, we establish the following result for $z \in \{0,1\}$:
\begin{align*}
    &
    \big\|
    \widehat{\alpha}\LSS(\potY{0},\bX) \widehat{\beta}\LSS (z,\bX)
    -
    \alpha^*(\potY{0},\bX) 
    \beta^*(z,\bX) 
    \big\|_{Y,2}
    \\
    &
    \lesssim
    \left[ 
    \begin{array}{l}            
    \| \widehat{f}\LSS(\potY{0} \cond A=0,Z=1,\bX)
    -
    f^*(\potY{0} \cond A=0,Z=1,\bX)\|_{Y,2}
    \\
    +
    \| \widehat{f}\LSS(\potY{0} \cond A=0,Z=0,\bX)
    -
    f^*(\potY{0} \cond A=0,Z=0,\bX) \|_{Y,2}
    \\
    + \| \widehat{f}\LSS(A=1 \cond Z=1,\bX) - f^*(A=1 \cond Z=1,\bX) \|
             \\
             + \| \widehat{f}\LSS(A=1 \cond Z=0,\bX) - f^*(A=1 \cond Z=0,\bX) \|
        \end{array}
    \right] \ .
\end{align*}

\vspace*{0.25cm}

\noindent \HL{iii} Under the boundedness condition, $f^*(\potY{0} \cond \bX)$ is uniformly bounded from Lemma \ref{lemma-boundedness}. Therefore, for a function $g(\potY{0},Z,\bX)$, we have
\begin{align*}
    \big\|
    g(\potY{0},z,\bX)
    \big\|_{Y,2}^2
    = 
    \int 
    g^2(y,z,\bX) \, dy
    \asymp
    \int 
    g^2(y,z,\bX) f^*(\potY{0}=y \cond \bX)  \, dy \ .
\end{align*}
This, in turn, results in
\begin{align*}
    &
    \big\|
    g(\potY{0},Z,\bX)
    \big\|_{P,2}^2
    \\
    &
    = 
    \iint  
    g^2(y,z,\bX) f^*(\potY{0}=y \cond \bX)
    f^*(Z=z \cond \bX) 
    f^*(\bX)
    \, 
    d(y,z,\bX)
    \\
    &
    \asymp 
    \int 
    \big\|
    g(\potY{0},z,\bX)
    \big\|_{Y,2}^2
    f^*(Z=z \cond \bX) 
    f^*(\bX) 
    d \bX
    \\
    &
    \leq 
    \int 
    \big\{
    \big\|
    g(\potY{0},1,\bX)
    \big\|_{Y,2}^2
    +
    \big\|
    g(\potY{0},0,\bX)
    \big\|_{Y,2}^2
    \big\}
    f^*(\bX) 
    d \bX \ .
\end{align*}
We also find that  
\begin{align*}
    &
    \int 
    \big\{
    \| g(\potY{0}, 1,\bX)\|_{Y,2}
    +
    \| g(\potY{0}, 0,\bX)\|_{Y,2}
     \big\}^2
    \, f^*(\bX) \, d\bX
    \\
    &
    \lesssim
    \int 
    \big\{
    \| g(\potY{0}, 1,\bX)\|_{Y,2}^2
    +
    \| g(\potY{0}, 0,\bX)\|_{Y,2}^2
     \big\} 
    \, f^*(\bX) \, d\bX
    \\
    &
    \lesssim
    \int 
    \big\{
    \| g(\potY{0}, 1,\bX)\|_{Y,2}^2 
    f^*(Z=1 \cond \bX)
    +
    \| g(\potY{0}, 0,\bX)\|_{Y,2}^2
    f^*(Z=0 \cond \bX)
     \big\} 
    \, f^*(\bX) \, d\bX
    \\
    &
    \lesssim
    \int 
    \| g(\potY{0}, Z,\bX)\|_{Y,2}^2  
    f^*(Z \cond \bX) f^*(\bX) \, d(Z,\bX)
    \\
    &
    \asymp
    \int 
    \| g(\potY{0}, Z,\bX)\|^2  
    f^*(\potY{0} \cond \bX)
    f^*(Z \cond \bX) f^*(\bX) \, d(\potY{0},Z,\bX)
    \\
    &
    =
    \big \| g(\potY{0},Z,\bX) \big \|_{P,2}^2 \ .
\end{align*} 
Thus, we establish
\begin{align*}
   \big \| g(\potY{0},Z,\bX) \big \|_{P,2}^2 
   \asymp 
    \int 
    \big\{
    \big\|
    g(\potY{0},1,\bX)
    \big\|_{Y,2}^2
    +
    \big\|
    g(\potY{0},0,\bX)
    \big\|_{Y,2}^2
    \big\}
    f^*(\bX) 
    d \bX \ .
    \numeq 
    \label{eq-proof-asymptotic equi}
\end{align*}

We now take $g = \widehat{\alpha}\LSS\widehat{\beta}\LSS 
    -
    \alpha^*
    \beta^*$. From \eqref{eq-proof-alpha beta wrt y}, we obtain 
\begin{align*}
    &
    \left[ 
        \begin{array}{l}
             \big\|
    \widehat{\alpha}\LSS(\potY{0},\bX) \widehat{\beta}\LSS (1,\bX)
    -
    \alpha^*(\potY{0},\bX) 
    \beta^*(1,\bX) 
    \big\|_{Y,2}
             \\
             +
             \big\|
    \widehat{\alpha}\LSS(\potY{0},\bX) \widehat{\beta}\LSS (0,\bX)
    -
    \alpha^*(\potY{0},\bX) 
    \beta^*(0,\bX) 
    \big\|_{Y,2}
        \end{array}
    \right] 
    \\
    &
    \lesssim
    \left[ 
    \begin{array}{l}            
    \| \widehat{f}\LSS(\potY{0} \cond A=0,Z=1,\bX)
    -
    f^*(\potY{0} \cond A=0,Z=1,\bX)\|_{Y,2}
    \\
    +
    \| \widehat{f}\LSS(\potY{0} \cond A=0,Z=0,\bX)
    -
    f^*(\potY{0} \cond A=0,Z=0,\bX) \|_{Y,2}
    \\
             + \| \widehat{f}\LSS(A=1 \cond Z=1,\bX) - f^*(A=1 \cond Z=1,\bX) \|
             \\
             + \| \widehat{f}\LSS(A=1 \cond Z=0,\bX) - f^*(A=1 \cond Z=0,\bX) \|
        \end{array}
    \right] \ .
\end{align*}
Multiplying $f^*(\bX)$ both hand sides and integrating over $\bX$, and using \eqref{eq-proof-asymptotic equi}, we obtain
\begin{align*}
    &
    \big\|
    \widehat{\alpha}\LSS(\potY{0},\bX) \widehat{\beta}\LSS (Z,\bX)
    -
    \alpha^*(\potY{0},\bX) 
    \beta^*(Z,\bX) 
    \big\|_{P,2}
    \\
    &
    \lesssim 
    \left[ 
    \begin{array}{l}            
    \| \widehat{f}\LSS(\potY{0} \cond A=0,Z,\bX)
    -
    f^*(\potY{0} \cond A=0,Z,\bX)\|_{P,2}
    \\
     + \| \widehat{f}\LSS(A=1 \cond Z,\bX) - f^*(A=1 \cond Z,\bX) \|_{P,2}
        \end{array}
    \right]
    \ .
\end{align*}

\vspace*{0.25cm}

\noindent \HL{iv} Note that
\begin{align*}
    &
    \big\| \normhat(Z,\bX) - \norm^*(Z,\bX)
    \big\|^2
    \\
    &
    =
    \bigg\|
    \int \pihat (y,Z,\bX) 
    \widehat{f}\LSS(Y=y \cond A=0,Z,\bX) -
    \pi^* (y,Z,\bX) 
    f^*(Y=y \cond A=0,Z,\bX)
    \, dy
    \bigg\|^2
    \\
    &
    \lesssim
    \int 
    \left[ 
        \begin{array}{l}             
    \| \pihat (y,Z,\bX) 
    -
     \pi^* (y,Z,\bX) \|^2
     \\
             +
             \| \widehat{f}\LSS(Y=y \cond A=0, Z,\bX)
    -
     f^*(Y=y \cond A=0,Z,\bX) \|^2 
        \end{array}
    \right] f^*(\potY{0}=y \cond Z, \bX) \, dy \ .
\end{align*}
Multiplying both sides by $f^*(Z,\bX)$ and integrating with respect to $(Z,\bX)$, we obtain
\begin{align*}
    &
    \big\|
    \normhat (Z,\bX) 
    -
    \norm^* (Z,\bX)  
    \big\|_{P,2}
    \\
    &
    \lesssim 
    \left[ 
    \begin{array}{l}            
    \| \widehat{f}\LSS(\potY{0} \cond A=0,Z,\bX)
    -
    f^*(\potY{0} \cond A=0,Z,\bX)\|_{P,2}
    \\
     + \| \widehat{f}\LSS(A=1 \cond Z,\bX) - f^*(A=1 \cond Z,\bX) \|_{P,2}
        \end{array}
    \right]
    \ . 
\end{align*}

\vspace*{0.25cm}

\noindent \HL{v} We first present the result for $\alpha$:
\begin{align*}
    &
    \| 
    \alphahat_{\norm}(\potY{0},\bX)
    -
    \alpha_{\norm}^*(\potY{0},\bX)
    \|_{P,2}
    \\
    &
    \lesssim
    \| \normhat(Z,\bX) - \norm^*(Z,\bX) \|_{P,2}
    +
    \| \pihat (\potY{0},Z,\bX) - \pi^*(\potY{0},Z,\bX) \|_{P,2}
    \\
    &
    \stackrel{\eqref{eq-proof-alpha beta L2P},\eqref{eq-proof-normalD L2P}}{
    \lesssim 
    }
    \left[ 
    \begin{array}{l}            
    \| \widehat{f}\LSS(\potY{0} \cond A=0,Z,\bX)
    -
    f^*(\potY{0} \cond A=0,Z,\bX)\|_{P,2}
    \\
     + \| \widehat{f}\LSS(A=1 \cond Z,\bX) - f^*(A=1 \cond Z,\bX) \|_{P,2}
        \end{array}
    \right]
    \ .
\end{align*}

In turn, the result for $\beta$ is established as follows:
\begin{align*}
    &
    \| 
    \betahat_{\norm}(Z,\bX)
    -
    \beta_{\norm}^*(Z,\bX)
    \|_{P,2}
    \\
    &
    \lesssim
    \| 
    \alphahat_{\norm}(\potY{0},\bX)
    -
    \alpha_{\norm}^*(\potY{0},\bX)
    \|_{P,2}
    +
    \| \pihat (\potY{0},Z,\bX) - \pi^*(\potY{0},Z,\bX) \|_{P,2}
    \\
    &
    \stackrel{\eqref{eq-proof-normalD L2P},\eqref{eq-proof-alpha L2P}}{
    \lesssim 
    }
    \left[ 
    \begin{array}{l}            
    \| \widehat{f}\LSS(\potY{0} \cond A=0,Z,\bX)
    -
    f^*(\potY{0} \cond A=0,Z,\bX)\|_{P,2}
    \\
     + \| \widehat{f}\LSS(A=1 \cond Z,\bX) - f^*(A=1 \cond Z,\bX) \|_{P,2}
        \end{array}
    \right]
    \ .
\end{align*}

\vspace*{0.25cm}

\noindent \HL{vi} From the definitions of $\mu^*$ and $\muhat$, we have
\begin{align*}
    & 
    \muhat(Z,\bX \con v) - \mu^*(Z,\bX \con v)
    \\
    &
    =  
\frac{
\widehat{\EXP}\LSS \big\{ v  \alphahat(\potY{0},\bX) \cond A=0,Z,\bX \big\}
}{
\widehat{\EXP}\LSS \big\{  \alphahat(\potY{0},\bX) \cond A=0,Z,\bX \big\}
} -
\frac{ \EXP \big\{   v \alpha^*(\potY{0},\bX) \cond A=0, Z,\bX \big\} }{ \EXP \big\{  \alpha^*(\potY{0},\bX) \cond A=0, Z,\bX \big\} }
    \\
    &
    =  
\frac{
\widehat{\EXP}\LSS \big\{ v  \hat{\pi}\LSS(\potY{0},Z,\bX) \cond A=0,Z,\bX \big\}
}{
\widehat{\EXP}\LSS \big\{  \hat{\pi}\LSS(\potY{0},Z,\bX) \cond A=0,Z,\bX \big\}
} -
\frac{ \EXP \big\{   v \pi^*(\potY{0},Z,\bX) \cond A=0, Z,\bX \big\} }{ \EXP \big\{  \pi^*(\potY{0},Z,\bX) \cond A=0, Z,\bX \big\} }
\ .
\end{align*}
Then, we find the following result for $Z \in \{0,1\}$:
\begin{align*}  
&
    \big\| \muhat(Z,\bX \con v) - \mu^*(Z,\bX \con v)
    \big\|
    \\  
&
\lesssim 
\Bigg\|
\begin{array}{l}
\EXP \big\{   v(\potY{0},Z,\bX) \pi^*(\potY{0},Z,\bX) \cond A=0, Z,\bX \big\}
\widehat{\EXP}\LSS \big\{ \pihat(\potY{0},Z,\bX) \cond A=0,Z,\bX \big\}
\\[0.1cm]
-
\widehat{\EXP}\LSS \big\{ v(\potY{0},Z,\bX)  \pihat(\potY{0},Z,\bX) \cond A=0,Z,\bX \big\}
\EXP \big\{  \pi^*(\potY{0},Z,\bX) \cond A=0, Z,\bX \big\}
\end{array} 
\Bigg\|
\\
&
\lesssim 
\Bigg\|
\int v(y,Z,\bX) 
\Bigg\{ 
\begin{array}{l}
\pihat(y,Z,\bX) \widehat{f}\LSS(Y=y \cond A=0,Z,\bX)
\\[0.1cm]
-
\pi^*(y,Z,\bX) f^*(Y=y \cond A=0,Z,\bX)
\end{array}
\Bigg\}
\, dy
\Bigg\|
\\
&
\qquad +
\Bigg\|
\int
\Bigg\{ 
\begin{array}{l}
\pihat(y,Z,\bX) \widehat{f}\LSS(Y=y \cond A=0,Z,\bX)
\\[0.1cm]
-
\pi^*(y,Z,\bX) f^*(Y=y \cond A=0,Z,\bX)
\end{array}
\Bigg\}
\, dy
\Bigg\|
\\
&
\lesssim 
\int \Big\|
\pihat(y,Z,\bX) \widehat{f}\LSS(Y=y \cond A=0,Z,\bX)
-
\pi^*(y,Z,\bX) f^*(Y=y \cond A=0,Z,\bX)
\Big\| 
\, dy
\\
&
\lesssim 
\big\| 
\alphahat(\potY{0},\bX) \betahat(Z,\bX) 
- \alpha^*(\potY{0},\bX) \beta^*(Z,\bX)
\big\|_{Y,2}
\\
&
\qquad 
+
\big\| \widehat{f}\LSS(\potY{0} \cond A=0,Z,\bX) - f^*(\potY{0} \cond A=0,Z,\bX) \big\|_{Y,2}  
\ .
\end{align*}
By \eqref{eq-proof-asymptotic equi}, we have   
\begin{align*}
    &
    \big\| \muhat(Z,\bX \con v) - \mu^*(Z,\bX \con v)
    \big\|_{P,2}
    \\
    & 
    \leq 
    \big\| \alphahat(\potY{0},\bX) \betahat(Z,\bX) 
- \alpha^*(\potY{0},\bX) \beta^*(Z,\bX)
\big\|_{P,2}
\\
&
\qquad 
+
\big\| \widehat{f}\LSS(\potY{0} \cond A=0,Z,\bX) - f^*(\potY{0} \cond A=0,Z,\bX) \big\|_{P,2} 
\\
    &
   \stackrel{\eqref{eq-proof-alpha beta L2P}}{ \lesssim }
    \left[
        \begin{array}{l}
        \big\| \widehat{f}\LSS(A=1 \cond Z,\bX) - f^*(A=1 \cond Z,\bX)
        \big\|_{P,2}
        \\
        +
        \big\| \widehat{f}\LSS(\potY{0} \cond A=0,Z,\bX) - f^*(\potY{0} \cond A=0,Z,\bX)
        \big\|_{P,2}
        \end{array}
        \right]
\ . 
\end{align*}

\vspace*{0.25cm}

\noindent \HL{vii}
We first prove the case of $A=0$.
It is trivial to show that
\begin{align*}
    & f(Z=1 \cond A=0,\potY{0},\bX)
    \\
    & =
    \frac{ f(Y,A=0,Z=1,\bX) }{ \sum_{z=0}^{1} f(Y,A=0,Z=z,\bX) }
    \\
    &
	=
    \frac{ f(Y \cond A=0,Z=1,\bX) f(A=0 \cond Z=1,\bX) }{ 
    \sum_{z=0}^{1} f(Y \cond A=0,Z=z,\bX) f(A=0 \cond Z=z,\bX) }
    \ .
\end{align*}  
Therefore, we find that
\begin{align*}
	&
	\big\|
    f^*(Z=1 \cond A=0,\potY{0},\bX)
    -
    \widehat{f}\LSS( Z=1 \cond A=0,\potY{0},\bX) 
    \big\| 
    \\
    &
    \lesssim     
        \left[ 
     \begin{array}{l}  
     \big\| \widehat{f}\LSS(A=1 \cond Z,\bX)
     -
     f^*(A=1 \cond Z,\bX)
     \big\|_{P,2}
     \\
     +
     \big\| 
     \widehat{f}\LSS(\potY{0} \cond A=0, Z,\bX)
     -
     f^* (\potY{0} \cond A=0, Z,\bX)
     \big\|_{P,2}
     \end{array}
     \right] \ .
\end{align*}

Next, we prove the case of $A=1$. Note that $f(Z=1 \cond A=1,\potY{0},\bX)$ is represented as
\begin{align*}
    & f(Z=1 \cond A=1,\potY{0},\bX)
    \\
    & =
    \frac{ f(\potY{0},A=1,Z=1,\bX) }{ \sum_{z=0}^{1} f(\potY{0},A=1,Z=z,\bX) }
    \\
    &
    \stackrel{\eqref{eq-proof-basis gamma}}{=}
    \frac{ \pi(\potY{0},1,\bX) f(\potY{0} \cond A=0,Z=1,\bX) / \gamma(1,\bX) }{ 
    \sum_{z=0}^{1} \pi(\potY{0},z,\bX) f(\potY{0} \cond A=0,Z=z,\bX) / \gamma(z,\bX) }
    \ .
\end{align*}  
Consequently, we establish that
\begin{align*}
    &
    \big\|
    f^*(Z=1 \cond A=1,\potY{0},\bX)
    -
    \widehat{f}\LSS( Z=1 \cond A=1,\potY{0},\bX) 
    \big\| 
    \\
     &
     \lesssim 
     \Bigg\| \sum_{z=0}^{1}
     \bigg\{
     \begin{array}{l} 
     \pi^*(\potY{0},z,\bX) f^*(\potY{0} \cond A=0,Z=z,\bX) / \gamma^*(z,\bX)
     \\[0.1cm]
     -
     \pihat(\potY{0},z,\bX) \widehat{f}\LSS(\potY{0} \cond A=0,Z=z,\bX)  / \gammahat(z,\bX) 
     \end{array} 
     \bigg\}
     \Bigg\|
     \\
     &
     \qquad 
     +
    \Bigg\|
    \begin{array}{l}
    \pi^*(\potY{0},1,\bX) f^*(\potY{0} \cond A=0,Z=1,\bX) / \gamma^*(1,\bX)
    \\[0.1cm]
     -
     \pihat(\potY{0},1,\bX) \widehat{f}\LSS(\potY{0} \cond A=0,Z=1,\bX) / \gammahat(1,\bX) 
    \end{array}
     \Bigg\|
     \\
     &
     \lesssim
     \sum_{z=0}^{1}
     \left[ 
     \begin{array}{l}          
     \big\| \pihat (\potY{0},z,\bX)
     -
     \pi^*(\potY{0},z,\bX) 
     \big\|
     \\
     +
     \big\| \widehat{f}\LSS(A=1 \cond Z=z,\bX)
     -
     f^*(A=1 \cond Z=z,\bX)
     \big\|
     \\
     +
     \big\| 
     \widehat{f}\LSS(\potY{0} \cond A=0, Z=z,\bX)
     -
     f^* (\potY{0} \cond A=0, Z=z,\bX)
     \big\| 
     \end{array}
     \right] \ .
\end{align*} 
Multiplying both sides by $f^*(\potY{0},Z,\bX)$  and integrating, we obtain
\begin{align*}
    &
    \big\|
    f^*(Z=1 \cond A=1,\potY{0},\bX)
    -
    \widehat{f}\LSS( Z=1 \cond A=1,\potY{0},\bX) 
    \big\|_{P,2}
    \\
    &
    \lesssim 
    \left[ 
     \begin{array}{l}          
     \big\| \pihat(\potY{0},Z,\bX)
     -
     \pi^*(\potY{0},Z,\bX) 
     \big\|_{P,2}
     \\
     +
     \big\| \widehat{f}\LSS(A=1 \cond Z,\bX)
     -
     f^*(A=1 \cond Z,\bX)
     \big\|_{P,2}
     \\
     +
     \big\| 
     \widehat{f}\LSS(\potY{0} \cond A=0, Z,\bX)
     -
     f^* (\potY{0} \cond A=0, Z,\bX)
     \big\|_{P,2}
     \end{array}
     \right]
     \\
    &
    \stackrel{\eqref{eq-proof-alpha beta L2P}}{\lesssim }
    \left[ 
     \begin{array}{l}  
     \big\| \widehat{f}\LSS(A=1 \cond Z,\bX)
     -
     f^*(A=1 \cond Z,\bX)
     \big\|_{P,2}
     \\
     +
     \big\| 
     \widehat{f}\LSS(\potY{0} \cond A=0, Z,\bX)
     -
     f^* (\potY{0} \cond A=0, Z,\bX)
     \big\|_{P,2}
     \end{array}
     \right]
     \ .
\end{align*}

\vspace*{0.25cm}

\noindent \HL{viii} From \eqref{eq-proof-Y0Z-gamma}, we obtain
\begin{align*}
    &
    \big\|     
    \widehat{f}\LSS(\potY{0} \cond \bX) 
    -
    f^*(\potY{0} \cond \bX) 
    \big\|_{P,2}
    \\
    &
    \stackrel{\eqref{eq-proof-Y0Z-gamma}}{=}
    \Bigg\|
    \begin{array}{l}
         \big\{ 1+ \pihat(\potY{0},Z,\bX) \big\}
    \fhat{}\LSS(A=0 \cond Z,\bX)
    \fhat{}\LSS(\potY{0} \cond A=0,Z,\bX)
    \\
    \qquad 
    -
         \big\{ 1+ \pi^*(\potY{0},Z,\bX) \big\}
    f^*(A=0 \cond Z,\bX)
    f^*(\potY{0} \cond A=0,Z,\bX)
    \end{array} 
    \Bigg\|_{P,2}
    \\
    &
    \lesssim
    \left[
    \begin{array}{l}         
    \big\| \pihat(\potY{0},Z,\bX) -  \pi^*(\potY{0},Z,\bX)  \big\|_{P,2}
    \\
    +
    \big\| \fhat{}\LSS(A=0 \cond Z,\bX) - f^*(A=0 \cond Z,\bX) \big\|_{P,2}
    \\
    +
    \big\|  \fhat{}\LSS(\potY{0} \cond A=0,Z,\bX)
    -
    f^*(\potY{0} \cond A=0,Z,\bX)
    \big\|_{P,2}
    \end{array}
    \right] \\
    &
    \stackrel{\eqref{eq-proof-alpha beta L2P}}{\lesssim }
    \left[ 
     \begin{array}{l}  
     \big\| \widehat{f}\LSS(A=1 \cond Z,\bX)
     -
     f^*(A=1 \cond Z,\bX)
     \big\|_{P,2}
     \\
     +
     \big\| 
     \widehat{f}\LSS(\potY{0} \cond A=0, Z,\bX)
     -
     f^* (\potY{0} \cond A=0, Z,\bX)
     \big\|_{P,2}
     \end{array}
     \right] 
      \ .
\end{align*}

\vspace*{0.25cm}

\noindent \HL{ix} In the proof of Theorem \ref{thm-IF binary Z}, specifically in \eqref{eq-proof-omega star}, we find that $\omega (\cdot \con \theta)$ is a differentiable functional in terms of other nuisance functions $\theta = \{ \fy,\fa,\fz \}$. Therefore, there exists $\theta' \equiv q \widehat{\theta}\LSS + (1-q) \theta^*$ for some $q \in (0,1)$ satisfying 
\begin{align*}
    \what (\potY{0},Z,\bX)
    &
    =
    \omega(\potY{0},Z,\bX \con \widehat{\theta}\LSS)
    \\
    &
    =
    \omega(\potY{0},Z,\bX \con {\theta}^*)
    +
    \nabla_{\theta}\T \omega(\potY{0},Z,\bX \con \theta') 
    \big\{ \widehat{\theta}\LSS - \theta^* \big\}
    \\
    &
    =
    \omega^*(\potY{0},Z,\bX)
    +
    \nabla_{\theta}\T \omega(\potY{0},Z,\bX \con \theta') 
    \big\{ \widehat{\theta}\LSS - \theta^* \big\}
    \ .
\end{align*}
As a result, we find
\begin{align*} 
    \big\|
    \what (\potY{0},Z,\bX)
    -
    \omega^*(\potY{0},Z,\bX)
    \big\|^2  
    &
    =
    \big\| \nabla_{\theta} \omega(\potY{0},Z,\bX \con \theta') \big\|^2
    \big\| \widehat{\theta}\LSS - \theta^* \big\|^2 
    \\
    &
    \lesssim
    \big\| \widehat{\theta}\LSS - \theta^* \big\|^2 \ ,
\end{align*}
where we used the fact that $\sup_{\theta'} \big\| \nabla_{\theta} \omega(\potY{0},Z,X \con \theta') \big\|^2$ is bounded under the assumptions. 
Therefore, we get
\begin{align*}
    &
    \big\|
    \what (\potY{0},Z,\bX)
    -
    \omega^*(\potY{0},Z,\bX)
    \big\|_{P,2}
    \\
    &
    \lesssim 
    \left[ 
     \begin{array}{l}  
     \big\| \widehat{f}\LSS(Z=1 \cond \bX)
     -
     f^*(Z=1 \cond \bX)
     \big\|_{P,2}
     \\
     +
     \big\| \widehat{f}\LSS(A=1 \cond Z,\bX)
     -
     f^*(A=1 \cond Z,\bX)
     \big\|_{P,2}
     \\
     +
     \big\| 
     \widehat{f}\LSS(\potY{0} \cond A=0, Z,\bX)
     -
     f^* (\potY{0} \cond A=0, Z,\bX)
     \big\|_{P,2}
     \end{array}
     \right]  \ .
\end{align*}
 
We also establish
\begin{align*}
    &
    \big\|
    \mu^*(Z,\bX \con \what)
    - \mu^*(Z,\bX \con \omega^*)
    \big\|
    \\
    &
    \leq 
    \int \big\| \what(y,Z,\bX) - \omega^*(y,Z,\bX) \big\| f^*(\potY{0}=y \cond A=1,Z,\bX) \, dy 
    \\
    &
    \lesssim 
    \big\| \what(y,1,\bX) - \omega^*(y,1,\bX) \big\|_{Y,2}
    +
    \big\| \what(y,0,\bX) - \omega^*(y,0,\bX) \big\|_{Y,2} \ .
\end{align*}
From \eqref{eq-proof-asymptotic equi}, we obtain
\begin{align*}
    &
    \big\|
    \mu^*(Z,\bX \con \what)
    - \mu^*(Z,\bX \con \omega^*)
    \big\|_{P,2}
    \lesssim
     \big\| \what(y,Z,\bX) - \omega^*(y,Z,\bX) \big\|_{P,2} \ .
     \numeq
     \label{eq-proof-aux1}
\end{align*}
Consequently, we have
\begin{align*}
    &
    \big\| 
    \widehat{\mu}\LSS (Z,\bX \con \what) - 
    \mu^*(Z,\bX \con \omega^*)
    \big\|_{P,2}
    \\
    &
    \leq 
    \big\| 
    \widehat{\mu}\LSS (Z,\bX \con \what) 
    - \mu^*(Z,\bX \con \what)
    \big\|_{P,2}
    + 
    \big\| 
    \mu^*(Z,\bX \con \what)
    - \mu^*(Z,\bX \con \omega^*)
    \big\|_{P,2}
    \\
    &
    \stackrel{\eqref{eq-proof-mu L2P},\eqref{eq-proof-aux1}}{
    \lesssim }
    \left[ 
    \begin{array}{l}            
    \| \widehat{f}\LSS(\potY{0} \cond A=0,Z,\bX)
    -
    f^*(\potY{0} \cond A=0,Z,\bX)\|_{P,2}
    \\
     + \| \widehat{f}\LSS(A=1 \cond Z,\bX) - f^*(A=1 \cond Z,\bX) \|_{P,2}
     \\
    +
    \big\| \what(y,Z,\bX) - \omega^*(y,Z,\bX) \big\|_{P,2}     
        \end{array}
    \right]
    \\
    &
    \stackrel{\eqref{eq-proof-omega hat}}{\lesssim}
    \left[ 
     \begin{array}{l}  
     \big\| \widehat{f}\LSS(Z=1 \cond \bX)
     -
     f^*(Z=1 \cond \bX)
     \big\|_{P,2}
     \\
     +
     \big\| \widehat{f}\LSS(A=1 \cond Z,\bX)
     -
     f^*(A=1 \cond Z,\bX)
     \big\|_{P,2}
     \\
     +
     \big\| 
     \widehat{f}\LSS(\potY{0} \cond A=0, Z,\bX)
     -
     f^* (\potY{0} \cond A=0, Z,\bX)
     \big\|_{P,2}
     \end{array}
     \right] 
        \ .
\end{align*}

\vspace*{0.25cm}

\noindent \HL{x} Let $
    \widehat{G}\LSS(\potY{0},Z,\bX) = \mathcal{G}^{(0)} - \widehat{\mu}
    \LSS(Z,\bX \con \mathcal{G}) $. Then, from straightforward algebra, we establish
\begin{align*}
   &
   \EXPk \big\{ \widehat{G}\LSS(\potY{0},Z,\bX) - \muhat(Z,\bX \con \widehat{G}\LSS ) \, \big| \, A=1, \potY{0}, \bX \big\}
    \\
    &
    =
    \sum_{z=0}^{1}
     \big\{
     \widehat{G}\LSS(\potY{0},z,\bX)  
     -\muhat(z,\bX \con \widehat{G}\LSS)
     \big\}   
     f^*(Z=z \cond A=1, \potY{0},\bX)
     \\
     &
     = 
     \sum_{z=0}^{1}
     \big\{
     \widehat{G}\LSS(\potY{0},z,\bX)  
     -\muhat(z,\bX \con \widehat{G}\LSS)
     \big\}   
     f^*(Z=z \cond A=1, \potY{0},\bX)
     \\
     &
     \qquad 
     -
     \underbrace{
     \widehat{\EXP} \LSS 
     \big\{
     \widehat{G}\LSS  
     -\muhat(Z,\bX \con \widehat{G}\LSS) 
     \cond A=1,\potY{0},\bX 
     \big\}    
     }_{=0}
     \\
    &
    =
    \sum_{z=0}^{1}
    \left[ 
        \begin{array}{l}
     \big\{
     \widehat{G}\LSS(\potY{0},z,\bX)  
     -\muhat(z,\bX \con \widehat{G}\LSS)
     \big\}
     \\
     \quad \times 
     \big\{
     f^*(Z=z \cond A=1, \potY{0},\bX) 
     -
     \widehat{f}\LSS (Z=z \cond A=1, \potY{0},\bX) 
     \big\}
        \end{array}
    \right]
     \ .
\end{align*}
In turn, this implies that 
\begin{align*}
    &
    \big\|
    \EXPk \big\{ \widehat{G}\LSS(\potY{0},Z,\bX) - \muhat(Z,\bX \con \widehat{G}\LSS ) \, \big| \, A=1, \potY{0}, \bX \big\}
    \big\|_{P,2}
    \\
    &    
    \lesssim  
    \sum_{z=0}^{1}
    \big\|
    f^*(Z=z \cond A=1,\potY{0},\bX)
    -
    \widehat{f}\LSS(Z=z \cond A=1,\potY{0},\bX)
    \big\|_{P,2}
    \\
    &
    \stackrel{\eqref{eq-proof-z given y rate}}{
    \lesssim 
    }
    \left[
        \begin{array}{l}
        \big\| \widehat{f}\LSS(A=1 \cond Z,\bX) - f^*(A=1 \cond Z,\bX)
        \big\|_{P,2}
        \\
        +
        \big\| \widehat{f}\LSS(\potY{0} \cond A=0,Z,\bX) - f^*(\potY{0} \cond A=0,Z,\bX)
        \big\|_{P,2}
        \end{array}
    \right]  \ .
\end{align*} 
This proves \eqref{eq-proof-GH convergence}.

Next, we show \eqref{eq-proof-GH convergence given Z}. Note that
\begin{align*}
    &
    \big\|
    \EXPk \big\{ 
     \widehat{G}\LSS(\potY{0},Z,\bX) 
     -\muhat(Z,\bX \con \widehat{G}\LSS)  
        \cond A=1,Z,\bX
        \big\}  
        \big\|_{P,2}
    \\
    &
    =
    \big\|
     \mu^*(Z,\bX \con \widehat{G}\LSS)  
        -
        \muhat(Z,\bX \con \widehat{G}\LSS)  
        \big\|_{P,2}
        \\
        &
        \stackrel{\eqref{eq-proof-mu L2P}}{\lesssim}
        \left[ 
    \begin{array}{l}            
    \| \widehat{f}\LSS(\potY{0} \cond A=0,Z,\bX)
    -
    f^*(\potY{0} \cond A=0,Z,\bX)\|_{P,2}
    \\
     + \| \widehat{f}\LSS(A=1 \cond Z,\bX) - f^*(A=1 \cond Z,\bX) \|_{P,2}
        \end{array}
    \right] \ .
\end{align*}

\vspace*{0.25cm}

\noindent \HL{xi} From the form of $\widetilde{\nu}\LSS$, we have 
\begin{align*}
    &
    \EXPk \big\{ 
    \widetilde{\nu}\LSS (\potY{0},Z,\bX)  \big\}
    \\
    &
    =
    \EXPk \big[
    \EXPk \big\{ 
    \widetilde{\nu}\LSS
    (\potY{0},Z,\bX) \cond \bX \big\}
    \big]
    \\
    &
    =
    \EXPk \Bigg[
    \begin{array}{l}
         \EXPk \big\{ 
    \nu \cond \bX \big\}
    -
    \EXPk \big[
    \widehat{\EXP}\LSS \big\{ \nu  \cond \potY{0},\bX \big\}
    \cond Z, \bX \big]
    \\
    -
    \EXPk \big[
    \widehat{\EXP}\LSS \big\{ \nu \cond Z,\bX \big\}
    \cond \potY{0},\bX \big]
    +
    \widehat{\EXP}\LSS \big\{ \nu \cond \bX \big\}
    \end{array} 
    \Bigg] \ .
    \numeq \label{eq-proof-omega 1}
\end{align*}
We establish that
\begin{align*}
    \EXPk \big\{ \nu  \cond \bX \big\} 
    &
    = 
    \iint \nu(y,z,\bX) \, 
    f^*(\potY{0}=y, Z=z \cond \bX) 
    \, d(yz)
    \\
    &
    = 
    \iint \nu(y,z,\bX) \, 
    f^*(\potY{0}=y \cond \bX) 
    f^*(Z=z \cond \bX) 
    \, d(yz)
    \ .
\end{align*}
Likewise, the other representations are equal to
\begin{align*}
    &
    \EXPk \big[
    \widehat{\EXP}\LSS \big\{ \nu   \cond \potY{0},\bX \big\}
    \cond Z, \bX \big] 
    \\
    &
    =
    \iint 
    \nu(y,z,\bX) 
    \widehat{f}\LSS(Z=z \cond  \bX)
    f^*(\potY{0}=y \cond \bX) 
    \, d(yz)
     \ ,
    \\
    &
    \EXPk \big[
    \widehat{\EXP}\LSS \big\{ \nu   \cond Z,\bX \big\}
    \cond \potY{0}, \bX \big] 
    \\
    &
    =
    \iint 
    \nu(y,z,\bX) 
    \widehat{f}\LSS(\potY{0}=y \cond \bX)
    f^*(Z=z \cond \bX)
    \, d(yz)
    \ .
\end{align*}
Therefore, \eqref{eq-proof-omega 1} is equal to 
\begin{align*}
    &
    \EXPk \big\{ 
    \widetilde{\nu}\LSS(\potY{0},Z,\bX) \big\}
    \\      
    &
    =
    \EXPk
    \left[ 
    \begin{array}{l}
     \iint \nu(y,z,\bX) \, 
    f^*(\potY{0}=y \cond \bX) 
    f^*(Z=z \cond \bX)
    \, d(yz)
    \\
    -
    \iint 
    \nu(y,z,\bX) \, 
    \widehat{f}\LSS(\potY{0}=y \cond \bX) 
    f^*(Z=z \cond \bX)
    d(yz)
    \\
    -
     \iint 
    \nu(y,z,\bX) \, 
    f^*(\potY{0}=y \cond \bX) 
    \widehat{f}\LSS(Z=z \cond \bX)
    \, d(yz) 
    \\
    +
    \iint 
    \nu(y,z,\bX) \, 
    \widehat{f}\LSS(\potY{0}=y \cond \bX) 
    \widehat{f}\LSS(Z=z \cond \bX)
    \, d (yz)
    \end{array}
    \right]  
    \\  
    &
    =
    \EXPk
    \bigg[
    \iint \nu(y,z,\bX) 
    \bigg[ 
        \begin{array}{l}
    \big\{ f^*(\potY{0}=y \cond \bX) - \widehat{f}\LSS(\potY{0}=y \cond \bX) \big\}
    \\
    \times 
    \big\{ f^*(Z=z \cond \bX) - 
    \widehat{f}\LSS(Z=z \cond \bX) \big\}
        \end{array}
    \bigg]
    \, d(yz)
    \bigg] 
     \ .
\end{align*}
Consequently,
\begin{align*}
    &
    \big\|
    \EXPk \big\{ 
    \widetilde{\nu}\LSS(\potY{0},Z,\bX)  \big\}
    \big\| 
    \\
    &
    \lesssim 
    \big\|  \widehat{f}\LSS(\potY{0} \cond \bX) - f^*(\potY{0} \cond \bX) \big\|_{P,2} 
    \big\|  \widehat{f}\LSS(Z \cond \bX) - f^*(Z \cond \bX) 
    \big\|_{P,2} 
    \\
    &
    \stackrel{\eqref{eq-proof-y0 rate}}{ \lesssim } 
    \left[
        \begin{array}{l}
        \big\| \widehat{f}\LSS(A=1 \cond Z,\bX) - f^*(A=1 \cond Z,\bX)
        \big\|_{P,2}
        \\
        +
        \big\| \widehat{f}\LSS(\potY{0} \cond A=0,Z,\bX) - f^*(\potY{0} \cond A=0,Z,\bX)
        \big\|_{P,2}
        \end{array}
    \right]
    \\
    &
    \quad \quad \times 
    \big\|  \widehat{f}\LSS(Z \cond \bX) - f^*(Z \cond \bX) \big\|_{P,2} 
    \ .
\end{align*}

\end{proof}

 \newpage 
 
 \section{Proofs of the Theorems} \label{sec-supp-proof}

We consider general $Z$ with support $\suppZ \subset \R$, unless the binary IV condition is specifically required. We establish some notational conventions. For nonnegative quantities $(Q_0,Q_1,\ldots,Q_J)$, we write $Q_0 \lesssim \sum_{j=1}^{J} Q_j$ to indicate that $Q_0 \leq \sum_{j=1}^{J} C_j Q_j$ for some nonnegative constants $C_1,\ldots,C_J$. Let $Q_1 \asymp Q_2$ indicate $Q_1 \lesssim Q_2$ and $Q_2 \lesssim Q_1$. Finally, we let $\gamma(Z,\bX) = f(A=1 \cond Z,\bX)/f(A=0 \cond Z, \bX)$ denote the treatment odds conditional on $(Z,\bX)$.

\subsection{Proof of Theorem \ref{thm-Psi-binary}}

Theorem \ref{thm-Psi-binary} follows directly from Theorem \ref{thm-Psi}, since it corresponds to the special case where $Z$ is binary. The proof of Theorem \ref{thm-Psi} is presented in the next Section.

\subsection{Proof of Theorem \ref{thm-Psi}}

We fix an arbitrary value of $\bX \in \suppX$.  We omit the dependence on $\theta$ in the notation for brevity.
By Lemma \ref{lemma-beta and gamma}, we find that $\beta(z_1,\bX) \neq \beta(z_0,\bX)$ if $z_1 \in \suppZ_{z_0,\bX}^c$.

We begin by proving the existence of a solution to the fixed-point equation. Given the distribution of the full data $(\potY{0}, A, Z \mid \bX)$, we introduce the following shorthand notation:
\begin{align*}
    \numeq 
    \label{eq-proof-D}
    D(Z,\bX) 
    &
    = \EXP \big\{ \alpha(Y,\bX) \cond 
    A=0,Z,\bX \con \fy \big\} 
    \ , 
    \\
    R_{Y}(y,\bX)
    &
    =
    \frac{ \ff{}(Y=y \cond A=0,Z=z_{1},\bX) }{ \ff{}(Y=y \cond A=0,Z=z_{0},\bX) }
    \ , 
    \\
    R_D(\bX)
    &
    =
    \frac{ D(z_{1}, \bX) } { D(z_{0}, \bX) } \ ,
    \numeq 
    \label{eq-proof-RD}
    \\
    g_{z_{0}}(y, \bX) 
    &
    = \frac{ \alpha(y,\bX) \ff{}(Y=y \cond A=0,Z=z_{0},\bX) }{ D(z_{0},\bX) } \ ,
    \numeq 
    \label{eq-proof-z0}
    \\
    g_{z_{1}}(y, \bX) &    
    = \frac{ \alpha(y,\bX) \ff{}(Y=y \cond A=0,Z=z_{1},\bX) }{ D(z_{1},\bX) }
    = \frac{ 
    g_{z_{0}}(y, \bX)
    R_{Y}(y,\bX) }
    {R_D(\bX)} \  . 
\end{align*}
For $g_{z}(y,\bX)$, we find 
\begin{align*}
    &
    \numeq \label{eq-proof-integral 0}
    g_{z}(y, \bX) > 0 \ , \quad \forall y \in \mathcal{Y}_{\bX} , \ z \in \{z_0,z_1\}
    \\
    &
    \int g_{z}(y, \bX) \, d y = 1 \ ,  \quad  z \in \{z_0,z_1\}
    \ , 
    \numeq \label{eq-proof-integral 1}
    \\
    &
    \int g_{z_{0}}(y, \bX) R_{Y}(y,\bX) \, d y = R_D(\bX) \ .    
    \numeq
    \label{eq-proof-integral 2}
\end{align*}
Of note, the construction of $g_{z}(y,\bX)$ is not sensitive to the boundary condition $\alpha(y_R,\bX)=1$. Specifically, if we take $\alpha'(y,\bX) = \alpha(y,\bX) c(\bX)$ for some function $c(\bX)$, we find
\begin{align*}
    g_{z}(y,\bX \con \alpha', \fy)
    & =
    \frac{ \alpha'(y,\bX) \ff{}(Y=y \cond A=0,Z=z,\bX) }{ \EXP \big\{ \alpha'(Y,\bX) \cond 
    A=0,Z,\bX \con \fy \big\}  }
    \\
    & = 
    \frac{ \alpha(y,\bX) \ff{}(Y=y \cond A=0,Z=z,\bX) }{ D(z, \bX) }
    =
    g_{z} (y,\bX \con \alpha, \fy) \ .
\end{align*}
Therefore,  the boundary condition $\alpha(y_R,\bX)=1$ can be imposed without loss of generality and may be ignored when constructing $g_{z}$. 

We find that $\alpha(y,\bX)$ can be represented as a function proportional to $g_{z_0}(y,\bX)/\ff{}(Y=y \cond A=0,Z=z_0,\bX)$ with $\alpha(y_R,\bX)=1$, implying that
\begin{align*}
    \alpha(y,\bX)
    =
    \frac{ g_{z_0}(y,\bX) }{ g_{z_0}(y_R,\bX) }
    \frac{ \ff{}(Y=y_R \cond A=0,Z=z_0,\bX) }{\ff{}(Y=y \cond A=0,Z=z_0,\bX)}
    \ , \quad y \in \suppYX \ .
    \numeq \label{eq-proof-alpha g}
\end{align*}
In addition, $\beta(z,\bX)$ is expressed as follows from \eqref{eq-proof-gamma}:
\begin{align*}
    \beta(z,\bX) =  
    \frac{  \ff{}(A=1 \cond Z=z,\bX) / \ff{}(A=0 \cond Z=z,\bX) }{ \EXP \{ \alpha(Y,\bX) \cond A=0,Z=z,\bX \con \fy \} } \ , \quad z \in \suppZ \ .
     \numeq \label{eq-proof-beta g}
\end{align*}
Therefore, it now remains to show that $g_{z_0}$ is the unique fixed point of the mapping $\Psi$.

We first show that $g_{z_0}$ is a fixed point of $\Psi$, thereby establishing the existence of a solution to the fixed-point equation. From \eqref{eq-proof-Y0Z-gamma}, we have
\begin{align*}
    1 
    & 
    \stackrel{\text{\HL{IV2}}}{=}
    \frac{
    \ff{}(\potY{0}=y \cond Z=z_1, \bX)
    }{
    \ff{}(\potY{0}=y \cond Z=z_0, \bX)
    }
    \\
    &
    \stackrel{\eqref{eq-proof-Y0Z-gamma}}{=}
    \frac{ \displaystyle{
    \bigg\{ 1+  
    \frac{ 
    \alpha(y,\bX)
    \gamma(z_1,\bX) }{ D(z_1,\bX) }
    \bigg\}
    \frac{
    \ff{}(Y=y \cond A=0,Z=z_1,\bX)}{1+\gamma(z_1,\bX)}
    } }{
    \displaystyle{ 
     \bigg\{ 1+  
    \frac{ 
    \alpha(y,\bX)
    \gamma(z_0,\bX) }{ D(z_0,\bX) }
    \bigg\}
    \frac{
    \ff{}(Y=y \cond A=0,Z=z_0,\bX)}{1+\gamma(z_0,\bX)}
    } }
    \\
    &
    =  
    \frac{ \displaystyle{
    D(z_0,\bX)
    \big\{
    D(z_1,\bX) 
    + 
    \alpha(y,\bX)
    \gamma(z_1,\bX)
    \big\}
    \ff{}( Y=y, A=0 \cond Z=z_1,\bX)
    } }{
    \displaystyle{ 
    D(z_1,\bX)
    \big\{
    D(z_0,\bX) 
    + \alpha(y,\bX)
    \gamma(z_0,\bX)
    \big\}
     \ff{}( Y=y, A=0 \cond Z=z_0,\bX)
    }} \ .
\end{align*}
The last line implies  
\begin{align*}
    &
    \alpha(y,\bX) 
    \bigg\{ 
    \begin{array}{l}
    D(z_0,\bX) \gamma(z_1,\bX) \ff{}( Y=y, A=0 \cond Z=z_1,\bX)
    \\
    - 
    D(z_1,\bX) \gamma(z_0,\bX) \ff{}( Y=y, A=0 \cond Z=z_0,\bX)
    \end{array}
     \bigg\}
     \\
     &
     =
     \bigg\{ 
     \begin{array}{l}
     \ff{}( Y=y, A=0 \cond Z=z_0,\bX) 
     \\
     - \ff{}( Y=y, A=0 \cond Z=z_1,\bX) 
     \end{array}\bigg\}
     D(z_1,\bX)
     D(z_0,\bX) 
     \\
     \Rightarrow \quad 
    &
    \left\{
    \begin{array}{l}
    \displaystyle{
    \gamma(z_1,\bX) \frac{ \ff{}( Y=y, A=0 \cond Z=z_1,\bX) }{ D(z_1,\bX) }
    }
    \\
    \displaystyle{
    - 
    \gamma(z_0,\bX) \frac{ \ff{}( Y=y, A=0 \cond Z=z_0,\bX) }{ D(z_0,\bX) }
    }
    \end{array}
     \right\}
    \alpha(y,\bX) 
     \\
     &
     =
     \ff{}( Y=y,A=0 \cond Z=z_0,\bX) 
     -\ff{}( Y=y,A=0 \cond Z=z_1,\bX)
     \\[0.25cm]
     \Rightarrow
     \quad
     &
    \left\{
    \begin{array}{l}
    \displaystyle{
    \ff{}(A=1 \cond Z=z_1,\bX)
    \frac{ \ff{}(Y=y \cond A=0, Z=z_1,\bX) }{ D(z_1,\bX) }
    }
    \\
    -
    \displaystyle{
    \ff{}(A=1 \cond Z=z_0,\bX)
    \frac{ \ff{}(Y=y \cond A=0, Z=z_0,\bX) }{ D(z_0,\bX) }
    }
    \end{array}
    \right\}
    \alpha(y,\bX) 
     \\
     &
     =
     \left\{
     \begin{array}{l}     
     \ff{}(A=0 \cond Z=z_0,\bX) \ff{}(Y=y \cond A=0,Z=z_0,\bX)
     \\
     - 
     \ff{}(A=0 \cond Z=z_1,\bX) \ff{}(Y=y \cond A=0,Z=z_1,\bX)
     \end{array}
     \right\}
     \\[0.25cm]
     \Rightarrow
     \quad
     & 
    \bigg\{  
    \ff{}(A=1 \cond Z=z_1,\bX) 
    \frac{R_Y(y,\bX)}{R_D(\bX)}
    -
    \ff{}(A=1 \cond Z=z_0,\bX) 
    \bigg\}
    g_{z_0}(y,\bX) 
     \\
     &
     =
     \left\{
     \begin{array}{l}     
     \ff{}(A=0 \cond Z=z_0,\bX) \ff{}(Y=y \cond A=0,Z=z_0,\bX)
     \\
     - 
     \ff{}(A=0 \cond Z=z_1,\bX) \ff{}(Y=y \cond A=0,Z=z_1,\bX)
     \end{array}
     \right\}
     \\
     \Rightarrow \quad 
     &
    \bigg\{
    \begin{array}{l}         
    \ff{}(A=1 \cond Z=1,\bX)
    R_Y(y,\bX)
    \\
    -
    \ff{}(A=1 \cond Z=0,\bX)
    R_D(\bX)
    \end{array}
    \bigg\} 
    g_{z_0}(y,\bX) 
    \\
    &
    =
    \bigg\{
    \begin{array}{l}         
    \ff{}(A=0 \cond Z=z_0,\bX)
    \ff{}(Y=y \cond A=0,Z=z_0,\bX) 
    \\
    -
    \ff{}(A=0 \cond Z=z_1,\bX)
    \ff{}(Y=y \cond A=0,Z=z_1,\bX)
    \end{array}
    \bigg\}
    R_D(\bX) \ .
    \numeq \label{eq-proof-g0definition}
     \end{align*} 
 The last line \eqref{eq-proof-g0definition} can be expressed as
\begin{align}
    &
    g_{z_0}(y,\bX)
    =
    \Psi \big( g_{z_0}(y,\bX) \big)  
    \ ,
    \label{eq-proof-fixed point}
\end{align} 
where $\Psi$ is the mapping defined in the Theorem statement, i.e., 
\begin{align*}
    &
    \Psi\big( g(y,\bX) \big)
    \\
    &
    =
    \frac{
    \left\{
    \begin{array}{l}
    \ff{}(A=1 \cond Z=z_0,\bX)
    g(y,\bX)
    \\
    + \ff{}(A=0 \cond Z=z_0,\bX)
    \ff{}(Y=y \cond A=0,Z=z_0,\bX)
    \\
    -
    \ff{}(A=0 \cond Z=z_1,\bX)
    \ff{}(Y=y \cond A=0,Z=z_1,\bX)
    \end{array} 
    \right\}
    \!
    \displaystyle{
    \int \!
    g(t,\bX)
    R_{Y}(t,\bX) \, d t
    }
    }
    {
    \ff{}(A=1 \cond Z=z_1,\bX)
    R_{Y}(y,\bX)
    }    \ . 
\end{align*}
Therefore, $g_{z_0}$ in \eqref{eq-proof-z0} is the fixed point of the mapping $\Psi$.

We now show that the fixed-point equation \eqref{eq-proof-fixed point} has a unique solution via a contradiction argument. Suppose that there exist two distinct solutions $g' \equiv g_{z_0}$ in \eqref{eq-proof-z0} and $g^\dagger$ solving \eqref{eq-proof-fixed point}. Accordingly, let $(\alpha',\beta')$ and $(\alpha^\dagger, \beta^\dagger)$ be the functions defined by \eqref{eq-proof-alpha g} and \eqref{eq-proof-beta g} associated with $g'$ and $g^\dagger$, respectively. We first note that the fixed point of $\Psi$ must satisfy equation \eqref{eq-proof-integral 1}, i.e., $\int g(y, \bX) \, dy = 1$, as shown below: 
\begin{align*}
    &
    g = \Psi(g)
    \\
    \Rightarrow
    \quad 
    &
    \bigg\{
    \begin{array}{l}         
    \ff{}(A=1 \cond Z=z_1,\bX)
    R_Y(y,\bX)
    \\
    -
    \ff{}(A=1 \cond Z=z_0,\bX)
    \int g(t,\bX) R_Y(t,\bX) \, dt
    \end{array}
    \bigg\} 
    g(y,\bX) 
    \\
    &
    =
    \bigg\{
    \begin{array}{l}         
    \ff{}(A=0 \cond Z=z_0,\bX)
    \ff{}(Y=y \cond A=0,Z=z_0,\bX) 
    \\
    -
    \ff{}(A=0 \cond Z=z_1,\bX)
    \ff{}(Y=y \cond A=0,Z=z_1,\bX)
    \end{array}
    \bigg\}
    \int g(t,\bX) R_Y(t,\bX) \, dt
    \\
    \Rightarrow
    \quad 
    &
    \bigg\{
    \ff{}(A=1 \cond Z=z_1,\bX) 
    -
    \ff{}(A=1 \cond Z=z_0,\bX)
    \int g(y,\bX) \, dy  
    \bigg\} \int g(t,\bX) R_Y(y,\bX) \, dt
    \\
    &
    =
    \big\{
    \ff{}(A=0 \cond Z=z_0,\bX)
    -
    \ff{}(A=0 \cond Z=z_1,\bX)
    \big\}
    \int g(t,\bX) R_Y(y,\bX) \, dt
    \\
    \Rightarrow
    \quad 
    &
    \ff{}(A=1 \cond Z=z_0,\bX) 
    \int g(y,\bX) \, dy  
    =
    \ff{}(A=1 \cond Z=z_0,\bX) 
    \\
    \Rightarrow
    \quad 
    &
    \int g(y,\bX) \, dy  
    =
    1 \ .
\end{align*}

Under the assumption of multiple solutions, we consider the following two possible scenarios:

\vspace*{0.25cm}

\noindent \textbf{(Case 1)}: Suppose that $R_D$ in \eqref{eq-proof-integral 2} is the same for both $g'$ and $g^\dagger$, i.e., 
    \begin{align*}
        R_{D}(\bX)
        =
        \int g'(y, \bX) R_{Y}(y,\bX) \, d y = 
        \int g^\dagger(y, \bX) R_{Y}(y,\bX) \, d y \ .
    \end{align*}
Since both $g'$ and $g^\dagger$ satisfy \eqref{eq-proof-fixed point}, they must each satisfy \eqref{eq-proof-g0definition} as well, i.e.,
\begin{align*}
    &
    \bigg\{
    \begin{array}{l}         
    \ff{}(A=1 \cond Z=z_1,\bX)
    R_Y(y,\bX)
    \\
    -
    \ff{}(A=1 \cond Z=z_0,\bX)
    R_D(\bX)
    \end{array}
    \bigg\} 
    g'(y,\bX) 
    \\
    &
    =
    \bigg\{
    \begin{array}{l}         
    \ff{}(A=0 \cond Z=z_0,\bX)
    \ff{}(Y=y \cond A=0,Z=z_0,\bX) 
    \\
    -
    \ff{}(A=0 \cond Z=z_1,\bX)
    \ff{}(Y=y \cond A=0,Z=z_1,\bX)
    \end{array}
    \bigg\}
    R_D(\bX)
    \\
    &
    =
    \bigg\{
    \begin{array}{l}         
    \ff{}(A=1 \cond Z=z_1,\bX)
    R_Y(y,\bX)
    \\
    -
    \ff{}(A=1 \cond Z=z_0,\bX)
    R_D(\bX)
    \end{array}
    \bigg\} 
    g^\dagger(y,\bX) \ .
    \numeq 
    \label{eq-proof-unique RD}
\end{align*}
Note that the common coefficient of $g'$ and $g^\dagger$ is given by
\begin{align*}
    &
    \ff{}(A=1 \cond Z=z_1,\bX)
    R_Y(y,\bX)
    -
    \ff{}(A=1 \cond Z=z_0,\bX)
    R_D(\bX)
    \\
    &
    =
    \ff{}(A=1 \cond Z=z_1,\bX)
    \frac{ \ff{}(Y=y \cond A=0,Z=z_{1},\bX) }{ \ff{}(Y=y \cond A=0,Z=z_{0},\bX) }
    \\
    &
    \quad \quad 
    -
    \ff{}(A=1 \cond Z=z_0,\bX)
    \frac{ \EXP\{ \alpha' (Y,\bX) \cond A=0,Z=z_1,\bX \con \fy \} }
    { \EXP\{ \alpha'(Y,\bX) \cond A=0,Z=z_0,\bX \con \fy \} } 
    \\
    &
    \stackrel{\eqref{eq-proof-gamma}}{=}
    \ff{}(A=1 \cond Z=z_1,\bX)
    \frac{ \ff{}(Y=y \cond A=0,Z=z_{1},\bX) }{ \ff{}(Y=y \cond A=0,Z=z_{0},\bX) }
    \\
    &
    \quad \quad 
    -
    \ff{}(A=1 \cond Z=z_0,\bX)
    \frac{ \ff{}(A=1 \cond Z=z_1,\bX) / \ff{}(A=0 \cond Z=z_1,\bX) }
    { \ff{}(A=1 \cond Z=z_0,\bX) / \ff{}(A=0 \cond Z=z_0,\bX) } 
    \frac{ \beta'(z_0,\bX) }
    { \beta'(z_1,\bX) } 
    \\
    &
    =
    \ff{}(A=1 \cond Z=z_1,\bX)
    \bigg\{
    \frac{ \ff{}(Y=y \cond A=0,Z=z_{1},\bX) }{ \ff{}(Y=y \cond A=0,Z=z_{0},\bX) }
    -
    \frac{\ff{}(A=0 \cond Z=z_0,\bX)}{\ff{}(A=0 \cond Z=z_1,\bX)}
    \frac{\beta'(z_0,\bX)}{\beta'(z_1,\bX)}
    \bigg\} 
    \\
    &
    =
    \ff{}(A=1 \cond Z=z_0,\bX)
    \bigg\{
    \frac{ \ff{}(Y=y,A=0 \cond Z=z_{1},\bX) }{ \ff{}(Y=y, A=0 \cond Z=z_{0},\bX) }
    -
    \frac{\beta'(z_0,\bX)}{\beta'(z_1,\bX)}
    \bigg\} 
    \\
    &
    \stackrel{\eqref{eq-proof-Y0AZ}}{=} 
    \ff{}(A=1 \cond Z=z_0,\bX)
    \bigg\{
    \frac{ 1+\alpha'(y,\bX) \beta'(z_0,\bX) }{ 1+\alpha'(y,\bX) \beta'(z_1,\bX) }
    -
    \frac{\beta'(z_0,\bX)}{\beta'(z_1,\bX)}
    \bigg\} 
    \\
    &
    =
    \ff{}(A=1 \cond Z=z_0,\bX)
    \frac{ \beta'(z_1,\bX) - \beta'(z_0,\bX) }
    { 
    \{ 1+\alpha'(y,\bX) \beta'(z_1,\bX) \} \beta'(z_1,\bX)
    } 
    \neq 0 \ .  
\end{align*}
The last equality holds from Lemma \ref{lemma-beta and gamma}, implying that $\beta'(z_1,\bX) \neq \beta'(z_0, \bX)$ under the relevance assumption \HL{IV3}.  Therefore, \eqref{eq-proof-unique RD} implies that $g' = g^\dagger$, contradicting the assumption that the two solutions are distinct. Therefore, the case where $g'$ and $g^\dagger$ yield the same $R_D$ is impossible. Hence, $R_D$ associated with $g'$ and $g^\dagger$ must differ.
 
\vspace*{0.25cm}

\noindent \textbf{(Case 2)}: From the observation in Case 1, it suffices to consider the scenario in which $R_D$ in \eqref{eq-proof-integral 2} differs for $g'$ and $g^\dagger$, denoted by $R_{D}'(\bX)$ and $R_{D}^\dagger(\bX)$, respectively. Also, let $D'(Z,\bX)$ and $D^\dagger(Z,\bX)$ be the expected values of the odds ratio function in \eqref{eq-proof-D}, i.e.,
\begin{align*}
    D'(Z,\bX)
    &
    = \EXP \big\{ \alpha' (y,\bX) \cond A=0,Z,\bX \con \fy \big\}
    \\
    D^\dagger (Z,\bX)
    &
    = \EXP \big\{ \alpha^\dagger (y,\bX) \cond A=0,Z,\bX \con \fy \big\}
\end{align*} 
Note that $R_D'(\bX) = D'(z_1,\bX)/D'(z_0,\bX)$ and $R_D^\dagger(\bX) = D^\dagger(z_1,\bX)/D^\dagger(z_0,\bX)$.

From \eqref{eq-proof-Y0Z-gamma} and \HL{IV2}, we find the induced density of $\potY{0} \cond (Z, \bX)$ is given by 
\begin{align*}
    &
    \bigg\{ 1 + \frac{ \alpha^{\#}(y,\bX) }{ D^{\#}(z_1,\bX) }
    \frac{\ff{}(A=1 \cond Z=z_1,\bX)}{\ff{}(A=0 \cond Z=z_1,\bX)}
    \bigg\}
    \ff{} (Y=y,A=0 \cond Z=z_1, \bX)
    \\
    &
    \stackrel{\eqref{eq-proof-Y0Z-gamma}}{=}
    \ff{}^{\#} (\potY{0}=y \cond Z=z_1, \bX)
    \\
    &
    \stackrel{\text{\HL{IV2}}}{=}
    \ff{}^{\#} (\potY{0}=y \cond Z=z_0, \bX)
    \\
    &
    \stackrel{\eqref{eq-proof-Y0Z-gamma}}{=}
    \bigg\{ 1 + \frac{ \alpha^{\#}(y,\bX) }{ D^{\#}(z_0,\bX) }
    \frac{\ff{}(A=1 \cond Z=z_0,\bX)}{\ff{}(A=0 \cond Z=z_0,\bX)}
    \bigg\}
    \ff{} (Y=y,A=0 \cond Z=z_0, \bX)
    \ , \quad \# \in \{ ' , \dagger \} \ .
\end{align*}
By letting $\gamma_{z} = \ff{}(A=1 \cond Z=z,\bX)/ \ff{}(A=0 \cond Z=z,\bX)$, we have 
\begin{align*}
&
    \frac{
    \displaystyle{
     1 + \frac{ \alpha'(y,\bX) }{ D'(z_1,\bX) } \gamma_{z_1}
    }
    }{
    \displaystyle{
     1 + \frac{ \alpha'(y,\bX) }{ D'({z_0},\bX) }
     \gamma_{z_0}
    }
    }
    =
    \frac{ f(Y=y,A=0 \cond Z=z_0,\bX) }{ f(Y=y,A=0 \cond Z=z_1,\bX) }
    = 
    \frac{
    \displaystyle{
     1 + \frac{ \alpha^\dagger(y,\bX) }{ D^\dagger({z_1},\bX) } 
     \gamma_{z_1}
    }
    }{
    \displaystyle{
     1 + \frac{ \alpha^\dagger(y,\bX) }{ D^\dagger({z_0},\bX) }
     \gamma_{z_0}
    }
    }
    \\
    \Rightarrow \quad &    
     \bigg\{  1 + \frac{ \alpha'(y,\bX) }{ D'({z_1},\bX) } 
     \gamma_{z_1}
     \bigg\}
    \bigg\{ 
    1 + \frac{ \alpha^\dagger(y,\bX) }{ D^\dagger({z_0},\bX) }
     \gamma_{z_0}
     \bigg\}
     =
     \bigg\{ 1 + \frac{ \alpha'(y,\bX) }{ D'({z_0},\bX) }
     \gamma_{z_0} \bigg\}
     \bigg\{1 + \frac{ \alpha^\dagger(y,\bX) }{ D^\dagger({z_1},\bX) } 
     \gamma_{z_1}
     \bigg\}
 \\
    \Rightarrow \quad &    
    \frac{ \alpha'(y,\bX) }{ D'({z_1},\bX) } \gamma_{z_1}
    +
    \frac{ \alpha^\dagger(y,\bX) }{ D^\dagger({z_0},\bX) }
     \gamma_{z_0}
     +
     \frac{ \alpha'(y,\bX) }{ D'({z_1},\bX) } \frac{ \alpha^\dagger(y,\bX) }{ D^\dagger({z_0},\bX) }
    \gamma_1     
     \gamma_0
     \\
     &
     \quad \quad 
     =  
     \frac{ \alpha^\dagger(y,\bX) }{ D^\dagger({z_1},\bX) } 
     \gamma_{z_1}
     +
      \frac{ \alpha'(y,\bX) }{ D'({z_0},\bX) }
     \gamma_{z_0}
     +
      \frac{ \alpha'(y,\bX) }{ D'({z_0},\bX) } 
     \frac{ \alpha^\dagger(y,\bX) }{ D^\dagger({z_1},\bX) } 
     \gamma_{z_0}
     \gamma_{z_1}  \ .
     \numeq
     \label{eq-proof-unique 1} 
\end{align*} 
Let $\bar{\alpha}^{\#}$ for ${\#} \in \{' , \dagger \}$ be
\begin{align*}
\bar{\alpha}^{\#} (y, \bX)
= 
\frac{ \alpha^{\#} (y, \bX)}{ D^{\#}(z_0, \bX)  } \ ,
\end{align*}
which satisfies $\EXP \{ \bar{\alpha}^{\#} (y, \bX) \cond A=0,Z=z_0, \bX \con \fy \} = 1$. Then, noting that  $R_{D}^{\#} (\bX) = D^{\#}(z_1,\bX)/ D^{\#}(z_0,\bX)$, expression \eqref{eq-proof-unique 1} becomes
\begin{align*}
    &
    \frac{ \bar{\alpha}'(y,\bX) }{ R_D'(\bX) } \gamma_{z_1}
    +
    \bar{\alpha}^\dagger(y,\bX)
     \gamma_{z_0}
     +
     \frac{ 
     \bar{\alpha}'(y,\bX)
     \bar{\alpha}^\dagger(y,\bX)
     }{ R_D'(\bX) }
     \gamma_{z_0} 
     \gamma_{z_1}
     \\
     &
     =  
     \frac{ \bar{\alpha}^\dagger(y,\bX) }{ R_D^\dagger(\bX) } 
     \gamma_{z_1}
     +
    \bar{\alpha}'(y,\bX)
     \gamma_{z_0}
     +
     \frac{ 
     \bar{\alpha}'(y,\bX)
     \bar{\alpha}^\dagger(y,\bX)
     }{ R_D^\dagger(\bX) }
     \gamma_{z_0} 
     \gamma_{z_1} \ .
\end{align*}
We integrate both sides after multiplying by $\ff{}(Y=y \cond A=0, Z=z_0,\bX)$, which results in 
\begin{align*}
    &
    \frac{ \gamma_{z_1} }{ R_D'(\bX) } 
     +
     \frac{      
     M_{12}
     \gamma_{z_1} 
     \gamma_{z_0}
     }{ R_D'(\bX) }
     =  
     \frac{ \gamma_{z_1} }{ R_D^\dagger(\bX) } 
     +
     \frac{ 
     M_{12}
     \gamma_{z_0}
     \gamma_{z_1}
     }{ R_D^\dagger(\bX) } 
     \\
     \Rightarrow \quad 
     &
     \gamma_{z_1}
     \big(
     M_{12}
     \gamma_{z_0}
     +
     1
     \big)
     \bigg\{ 
     \frac{1}{ R_D'(\bX) }
     -
     \frac{1}{ R_D^\dagger(\bX) }
     \bigg\} 
     =
     0 \ .
     \numeq \label{eq-proof-unique 2}
\end{align*}
where $M_{12} =  \EXP \{ \bar{\alpha}'(y,\bX) \bar{\alpha}^\dagger(y,\bX) \cond A=0, Z=z_0,\bX \con \fy \} $. Since $R_D'(\bX) \neq R_D^\dagger(\bX)$ by the assumption, \eqref{eq-proof-unique 2} implies $M_{12} = - 1 / \gamma_{z_0}<0$. This, in turn, implies that $\alpha'(y,\bX) $ and $\alpha^\dagger(y,\bX)$ must have opposite signs for some $y$. Without loss of generality, suppose that $\alpha^\dagger(y,\bX)$ is negative for some $y$. By \eqref{eq-proof-alpha g}, this implies that $g^\dagger(y,\bX)$ and $g^\dagger(y_R,\bX)$ have opposite signs, so at least one of them is negative. This violates the positivity condition of $g$, i.e., \eqref{eq-proof-integral 0}, and therefore violates the defining property of $g$. Consequently, there cannot exist multiple nonnegative solutions to the fixed-point equation \eqref{eq-proof-fixed point}.

Finally, we show that the choices of $z_1 \in \suppZ_{z_0,\bX}^c$ and $z_0\in \suppZ$ do not affect Results (i) and (ii) of the Theorem, respectively. Consider two mappings $\Psi(g \con z_1^{(1)})$ and $\Psi(g \con z_1^{(2)})$, corresponding to different choices of $( z_1^{(1)}, z_1^{(2)} ) \in \suppZ_{z_0,\bX}^c$. Since the respective fixed points are unique, denoted by $g_{z_0}^{(1)} = \Psi(g_{z_0}^{(1)} \con z_1^{(1)})$ and $g_{z_0}^{(2)} = \Psi(g_{z_0}^{(2)} \con z_1^{(2)})$, we have
\begin{align*}
    g_{z_0}^{(1)}(y,\bX)
    \stackrel{\eqref{eq-proof-z0}}{=} \frac{ \alpha(y,\bX) \ff{}(Y=y \cond A=0,Z=z_{0},\bX) }{ D(z_{0},\bX) }
    \stackrel{\eqref{eq-proof-z0}}{=}
    g_{z_0}^{(2)}(y,\bX) \ .
\end{align*}
This implies that the choice of $z_1$ does not affect Result (i). Similarly, from \eqref{eq-proof-z0}, $\alpha(y,\bX)$ is always proportional to $g_{z_0}(y,\bX)/f(Y=y \cond A=0,Z=z_0,\bX)$ regardless of the choice of $z_0$, i.e.,
\begin{align*} 
    \alpha(y,\bX)
    \stackrel{\eqref{eq-proof-z0}}{=}
    \frac{ g_{z_0} (y,\bX) }{ g_{z_0}  (y_R,\bX) }
    \frac{ \ff{}(Y=y_R \cond A=0,Z=z_0,\bX) }{\ff{}(Y=y \cond A=0,Z=z_0,\bX)}
     \ , \quad \forall z_0 \in \suppZ \ .
\end{align*}
This implies that the choice of $z_0$ does not affect Result (ii).

This concludes the proof.

\subsection{Proof of Theorem \ref{thm-global contraction-binary}}

Theorem \ref{thm-global contraction-binary} follows directly from Theorem \ref{thm-global contraction}, since it corresponds to the special case where $Z$ is binary. The proof of Theorem \ref{thm-global contraction} is presented in the next Section.

\subsection{Proof of Theorem \ref{thm-global contraction}} \label{sec-supp-global contraction}

We fix an arbitrary value of $\bX \in \suppX$. We omit the dependence on $\theta$ in the notation for brevity. By Lemma \ref{lemma-beta and gamma}, we find that $z_0 \in \arginf_{z \in \suppZ} \beta(z,\bX)$ if $z_0 \in \suppZinf$. Also, from the same Lemma, we also obtain $\beta(z_1,\bX) > \beta(z_0,\bX)$ if $z_1 \in \suppZinf^c$. 

We introduce the following shorthand notation:
\begin{align*}
    \mathcal{C}_1(\bX) 
    &     
    = \frac{ f(A=1 \cond Z=z_0,\bX) }{ f(A=1 \cond Z=z_1,\bX)} \ ,
    \\
    \mathcal{C}_2(y,\bX) 
    & 
    = 
    \frac{ 
    \displaystyle{ 
    \left\{
    \begin{array}{l}
    f(A=0 \cond Z=z_0,\bX)
    f(Y=y \cond A=0,Z=z_0,\bX)
    \\
    -
    f(A=0 \cond Z=z_1,\bX)
    f(Y=y \cond A=0,Z=z_1,\bX)
    \end{array} 
    \right\}
    }
    }{f(A=1 \cond Z=z_1,\bX)} 
    \\
    & 
    = 
    \frac{ f(Y=y,A=0 \cond Z=z_0,\bX) - f(Y=y,A=0 \cond Z=z_1,\bX) }{f(A=1 \cond Z=z_1,\bX)}
    \ , 
    \\
    \lambda(y,\bX) 
    & = 
    \frac{ R_D(\bX) \mathcal{C}_1(\bX) }{  R_Y(y,\bX)     } \ .
\end{align*}
Note that $\mathcal{C}_1(\bX) > 0$ and $\mathcal{C}_2(y,\bX)>0$ for all $y \in \mathcal{Y}_{\bX}$ from \eqref{eq-proof-fya monotone}. We also remark that $\alpha$ and $\beta$ are uniformly bounded from Lemma \ref{lemma-boundedness}, i.e., for some constants $0<c_\alpha \leq C_{\alpha} < \infty$ and $0<c_\beta \leq C_{\beta} < \infty$, we have 
\begin{align*}
    &
    \alpha(y,\bX) \in [c_{\alpha},C_{\alpha}]
    \ , \quad 
    \beta(z,\bX) \in [c_{\beta},C_{\beta}]
    \ , \quad \forall (y,z,\bX) \ .
\end{align*}

For later use, we derive a bound for $\lambda$. From \eqref{eq-proof-Y0AZ}, we have  
    \begin{align*}
    & 
    R_Y(y,\bX)
    \numeq \label{eq-proof-RYrange}
    \\
    &
    =
    \frac{  1 + \alpha(y,\bX) \beta(z_0,\bX)  }{ 1 + \alpha(y,\bX) \beta(z_1,\bX) }
    \frac{ \ff{}(A=0 \cond Z=z_0, \bX) }
    { \ff{}(A=0 \cond Z=z_1, \bX)  }
    \\
    &
    \in 
    \Bigg[ 
        \frac{ 1+ C_{\alpha} \beta(z_0,\bX) }{ 1+ C_{\alpha} \beta(z_1,\bX)}\frac{ \ff{}(A=0 \cond Z=z_0, \bX) }
    { \ff{}(A=0 \cond Z=z_1, \bX)  }
    ,
    \frac{ 1+ c_\alpha \beta(z_0,\bX) }{ 1+ c_\alpha \beta(z_1,\bX)}\frac{ \ff{}(A=0 \cond Z=z_0, \bX) }
    { \ff{}(A=0 \cond Z=z_1, \bX)  }
    \Bigg] \ .
\end{align*}
From the definition of $R_D$ in \eqref{eq-proof-RD} combined with \eqref{eq-proof-gamma}, we have
\begin{align*}
    R_D(\bX)
    =
    \frac{ \EXP \big\{ \alpha(Y,\bX) \cond A=0,Z=z_1,\bX \con \fy \big\} }{ 
    \EXP \big\{ \alpha(Y,\bX) \cond A=0,Z=z_0,\bX \con \fy \big\} 
    }
    =
    \frac{ \gamma(z_1,\bX)  }{ 
    \gamma(z_0,\bX) 
    }
    \frac{ \beta(z_0,\bX)  }{ 
    \beta(z_1,\bX) 
    } \ .
    \numeq \label{eq-proof-RDrange}
\end{align*}
Therefore, by combining \eqref{eq-proof-RYrange} and \eqref{eq-proof-RDrange} with the definition of $\mathcal{C}_1(\bX)$, we establish
\begin{align*}
    &
    \lambda(y,\bX)
    \\
    &
    \leq 
    \frac{ 1+ C_\alpha \beta(z_1,\bX)}{ 1+ C_\alpha \beta(z_0,\bX) }
    \frac{ \ff{}(A=0 \cond Z=z_1, \bX)  }
    { \ff{}(A=0 \cond Z=z_0, \bX) }
    \frac{ \gamma(z_1,\bX)  }{ 
    \gamma(z_0,\bX) 
    }
    \frac{ \beta(z_0,\bX)  }{ 
    \beta(z_1,\bX) 
    } 
    \frac{ f(A=1 \cond Z=z_0,\bX) }{ f(A=1 \cond Z=z_1,\bX)} 
    \\
    &
    \stackrel{(*)}{=}
    \frac{ 1+ C_\alpha \beta(z_1,\bX)}{ 1+ C_\alpha \beta(z_0,\bX) }
    \frac{ \beta(z_0,\bX)  }{ 
    \beta(z_1,\bX) 
    } 
    \\
    &
    =
    \frac{ 1/\beta(z_1,\bX) + C_\alpha }{ 1/\beta(z_0,\bX) + C_\alpha }
    \quad \quad ( \equiv \kappa(\bX) )
    \numeq 
    \label{eq-proof-kappa}
    \\
    &
    < 1
    \ .
\end{align*} 
The equality with $(*)$ follows from the definition of $\gamma$. The last line holds because $\beta(z_1,\bX) > \beta(z_0,\bX)$. We define $\kappa(\bX)$ be the quantity in \eqref{eq-proof-kappa}, which we will use later.

Define the operator $\Phi: \mathcal{L}_+^\infty(\suppYX) \rightarrow \mathcal{L}_+^\infty(\suppYX)$ by
\begin{align*}
    \Phi(h(y,\bX)) 
    \equiv
    \frac{ \mathcal{C}_1(\bX) h (y,\bX) + \mathcal{C}_2(y , \bX) }{ R_{Y}(y,\bX) }  \ . 
\end{align*}
From \eqref{eq-Psi}, $g^\star$ satisfies
\begin{align*}
    &
    g^{\star}(y,\bX) 
    =
    \frac{ \mathcal{C}_1(\bX) g^{\star} (y,\bX) + \mathcal{C}_2(y, \bX) }{ R_{Y}(y,\bX) }  
    R_D(\bX)  
    \\
    &
    \Leftrightarrow
    \quad 
    \Phi\big( g^{\star}(y,\bX)  \big)
    = R_D(\bX) g^{\star}(y,\bX) \ .
    \numeq 
    \label{eq-proof-g recursive}
\end{align*}

Let the operator $\overline{h}$ be the normalization step, i.e.,
\begin{align*}
    \overline{h}(y,\bX) = \frac{ h(y,\bX) }{\int h(t,\bX) \, dt } \ . 
\end{align*}

Given an initial value $\overline{h}^{(0)}$ satisfying $\int \overline{h}^{(0)}(y,\bX) \, dy =1$, consider the following iterative updates:
\begin{align*}
    &
    h^{(j+1)} \leftarrow \Phi( \overline{h}^{(j)}(y,\bX) ) \ ,
    &&
    \overline{h}^{(j+1)} 
    \leftarrow 
    \text{normalization of } h^{(j+1)} \ .
    \numeq
    \label{eq-proof-Phi normalization}
\end{align*}
Note that $\overline{h}^{(j)}$ coincides with the iteration generated by $\overline{\Psi}$. In particular, the update $\overline{h}^{(j+1)} = \overline{\Psi}(\overline{h}^{(j)})$ is equivalent to the two-step procedure consisting of applying $\Phi$ and then normalizing, as described in \eqref{eq-proof-Phi normalization}. 

Given the $j$th iterate $\overline{h}^{(j)}$, we have
\begin{align*}
    \frac{ h^{(j+1)}(y,\bX) }
    {
    R_D(\bX) g^\star(y,\bX)
    }
    & =
    \frac{ 
    \Phi(\overline{h}^{(j)}(y,\bX)) 
    }{ 
    \Phi(g^{\star}(y,\bX)) 
    }
    \\
    &
    =
    \frac{ \mathcal{C}_1(\bX) \overline{h}^{(j)}  (y,\bX) + \mathcal{C}_2(y , \bX) }{ 
    \mathcal{C}_1(\bX) g^\star (y,\bX) + \mathcal{C}_2(y , \bX)
    }
    \\
    &
    =
    1+ 
    \mathcal{C}_1(\bX) 
    \frac{ \overline{h}^{(j)} (y,\bX) - g^\star(y,\bX) }{ 
    \mathcal{C}_1(\bX) g^\star (y,\bX) + \mathcal{C}_2(y , \bX)
    } 
    \\
    &
    \stackrel{\eqref{eq-proof-g recursive}}{=}
    1+ 
    \frac{1}{g^{\star}(y,\bX)}
    \underbrace{
    \frac{ R_D(\bX) \mathcal{C}_1(\bX) }{ R_Y(y,\bX)     } 
    }_{=\lambda(y,\bX)}
    \big\{  \overline{h}^{(j)} (y,\bX) - g^\star(y,\bX) \big\}
    \ .
\end{align*} 
This implies that
\begin{align*}
    \frac{ h^{(j+1)}(y,\bX) }
    {
    R_D(\bX) g^\star(y,\bX)
    }
    - 1
    =
    \lambda(y,\bX)
    \bigg\{  \frac{ \overline{h}^{(j)}(y,\bX)}{g^\star(y,\bX)}-1 \bigg\} \ .
    \numeq 
    \label{eq-proof-iteration}
\end{align*}
We define the relative ratio between $\overline{h}^{(j)}$ and $g^\star$, along with the normalizing constant of $h^{(j)}$:
\begin{align*}
    &
    r^{(j)}(y,\bX) = \frac{\overline{h}^{(j)}(y,\bX)}{g^\star(y,\bX)} \ , 
    &&
    N^{(j)}(\bX) = \int h^{(j)} (y,\bX) \, dy  
      \ .
\end{align*} 
Then, \eqref{eq-proof-iteration} can be expressed as
\begin{align*}
    r^{(j+1)} (y,\bX)
    = 
    \frac{ R_D(\bX) }{ N^{(j+1)}(\bX) }
    \big\{ \lambda(y,\bX) r^{(j)}(y,\bX) + 1 - \lambda(y,\bX) \big\} 
\end{align*}

Let $M^{(j)} = \sup_{y} r^{(j)} (y,\bX)$ and $m^{(j)} = \inf_{y} r^{(j)}(y,\bX)$. Then, we observe
\begin{align*}
    \frac{M^{(j+1)}}{m^{(j+1)}}
    =
    \frac{ \sup_{y} r^{(j+1)} (y,\bX) }{ \inf_{y} r^{(j+1)} (y,\bX) }
    =
    \frac{ \sup_{y} \big\{ \lambda(y,\bX) r^{(j)}(y,\bX) + 1 - \lambda(y,\bX) \big\} }{ \inf_{y} \big\{ \lambda(y,\bX) r^{(j)}(y,\bX) + 1 - \lambda(y,\bX) \big\}  }
    \ . 
    \numeq 
    \label{eq-proof-Mm ratio}
\end{align*}

Suppose that $M^{(j)} = m^{(j)}$ holds. This condition implies that $\overline{h}^{(j)}(y,\bX) \propto g^\star(y,\bX)$. Moreover, since $\int \overline{h}^{(j)}(y,\bX) \, dy = \int g^\star(y,\bX) \, dy = 1$, if follows that $h^{(j)}=g^\star$. Consequently, it suffices to show that $M^{(j)}/m^{(j)}$ converges to 1 as $j \rightarrow \infty$, which then would imply $\overline{h}^{(\infty)}(y,\bX) = g^\star(y,\bX)$ for all $y \in \suppYX$.

In order to proceed, we consider three cases according to the values of $M^{(j)}$ and $m^{(j)}$:

\vspace*{0.25cm}

\noindent \textbf{(Case 1)}: $M^{(j)} \leq 1$\\
If $M^{(j)} \leq 1$, then necessarily $M^{(j)}= 1$, since:
\begin{align*}
    \overline{h}^{(j)}(y,\bX) 
    \leq 
    M^{(j)} g^\star(y,\bX)
    \quad
    \Rightarrow
    \quad
    1= \int \overline{h}^{(j)}(y,\bX)  \, dy 
    \leq 
    M^{(j)} \int g^\star(y,\bX) \, dy
    =
    M^{(j)} \ .
\end{align*}
Therefore, we have
\begin{align*}
     g^\star(y,\bX) -  \overline{h}^{(j)}(y,\bX)  \geq 0 \ , \quad
     \int \big\{ g^\star(y,\bX) -  \overline{h}^{(j)}(y,\bX) \big\} \, dy = 0 \ .
\end{align*}
A nonnegative function whose integral is zero must vanish almost everywhere; hence, this means that $g^\star = \overline{h}^{(j)}$, indicating that convergence has been achieved at the $j$th iteration.

\vspace*{0.25cm}

\noindent \textbf{(Case 2)}: $m^{(j)} \geq 1$\\
Analogous to (Case 1), one can establish that $m^{(j)} \geq 1$ implies $g^\star = \overline{h}^{(j)}$, indicating that convergence has been achieved at the $j$th iteration.

\vspace*{0.25cm}

\noindent \textbf{(Case 3)}: $m^{(j)} < 1 < M^{(j)}$\\
For nonnegative functions $h_1$ and $h_2$, if follows that $\sup (h_1 h_2) \leq (\sup h_1 )( \sup  h_2) $ and $\inf (h_1  h_2)  \geq (\inf h_1) (\inf h_2)$. Consequently, an upper bound of \eqref{eq-proof-Mm ratio} can be obtained as follows:
\begin{align*}
    \frac{M^{(j+1)}}{m^{(j+1)}}
    \leq
    \frac{ \sup_{y} \big\{ \lambda(y,\bX) M^{(j)} + 1 - \lambda(y,\bX) \big\} }{ \inf_{y} \big\{ \lambda(y,\bX) m^{(j)} + 1 - \lambda(y,\bX) \big\}  }
    \leq
    \frac{ \kappa(\bX) M^{(j)} + 1 - \kappa(\bX)  }{ \kappa(\bX) m^{(j)} + 1 - \kappa(\bX) }
     \ ,
\end{align*}
where the second inequality holds from $m^{(j)}< 1<M^{(j)}$ and the definition of $\kappa(\bX)$ in \eqref{eq-proof-kappa}. From algebraic manipulation, we obtain
\begin{align*}
    \frac{M^{(j+1)}}{m^{(j+1)}}
    -
    1
    \leq 
    \frac{ \kappa(\bX) m^{(j)} }{ \kappa(\bX) m^{(j)} + 1 - \kappa(\bX) }
    \bigg\{ 
    \frac{M^{(j)}}{m^{(j)}}
    -
    1
    \bigg\}
    \leq 
    \kappa(\bX)
    \bigg\{ 
    \frac{M^{(j)}}{m^{(j)}}
    -
    1
    \bigg\}
     \ . 
     \numeq 
     \label{eq-proof-Mm iteration}
\end{align*}
The second inequality follows from the fact that the mapping $m \mapsto (\kappa m)/(\kappa m + 1 - \kappa)$ takes values in the interval $[0,\kappa]$ for $m \in [0,1]$. Therefore, \eqref{eq-proof-Mm iteration} implies that
\begin{align*}
    \bigg\{ 
    \frac{M^{(j+1)}}{m^{(j+1)}}
    -
    1
    \bigg\}
    \leq \{ \kappa(\bX) \}^{j} 
     \bigg\{ 
    \frac{M^{(0)}}{m^{(0)}}
    -
    1
    \bigg\}
    \rightarrow 0 \quad \text{ as \quad $j \rightarrow \infty$} \ ,
    \numeq 
     \label{eq-proof-Exp Convergence}
\end{align*}
where we used $\kappa(\bX) < 1$. This implies that the limiting function $\overline{h}^{(\infty)}(y,\bX)$ satisfies $M^{(\infty)}/m^{(\infty)}=1$, meaning that $\overline{h}^{(\infty)}(y,\bX) = g^\star(y,\bX)$ for all $y \in \suppYX$. Also, \eqref{eq-proof-Exp Convergence} implies that the convergence of the iterative update is exponentially fast with respect to the divergence function $\mathfrak{d}$ defined by:
\begin{align*}
    \mathfrak{d}(h)
    =
    \frac{ \sup_{y} h(y,\bX)/g^\star(y,\bX) }
    {\inf_{y} h(y,\bX)/g^\star(y,\bX)} - 1 \ .
\end{align*}
Note that $\mathfrak{d}(h)=0$ implies that $h$ and $g^\star$ are proportional to each other. In particular, for normalized functions (i.e., $\int h(y,\bX) \, dy = 1$), $\mathfrak{d}(h)=0$ implies that $h$ and $g^\star$ are identical.

This concludes the proof.

\subsection{Proof of Theorem \ref{thm-IF}} \label{sec-proof-IF}

In the proof, we show a more general result by characterizing the IFs for $ \tau^* \equiv \tau_1^* - \tau_0^*$ where $\tau_{a}^* \equiv  \EXP \{ \mathcal{G}(\potY{a},\bX)  \cond A=1 \}$; here, $\mathcal{G}(\cdot)$ is a fixed, uniformly bounded function.  For convenience, we denote $\mathcal{G}^{(a)} = \mathcal{G}(\potY{a},\bX)$ and $\mathcal{G} = \mathcal{G}(Y,\bX)$.

Before proceeding, we introduce additional notation. Let $\bO^{F} = (\potY{1},\potY{0},A,Z,\bX)$ and $\bO= (Y,A,Z,\bX)$ denote the full and observed data. Throughout, we assume the positivity assumption $f(A=0 \cond \potY{0},Z,\bX) \geq c_{A} > 0 $ for some constant $c_{A}$.  Let $\mathcal{M}^F$ and $\mathcal{M}$ denote regular models of $\bO^F$ and $\bO$ of the form:
\begin{align*}
& \mathcal{M}^F 
=
\big\{ P(\bO^F) \cond \text{\HL{A1}-\HL{A2} and  \HL{IV1}-\HL{IV4} hold} \big\} \ ,
\\
& \mathcal{M} 
=
\big\{ P(\bO) \cond \text{\HL{A1}-\HL{A2} and  \HL{IV1}-\HL{IV4} hold} \big\} \ .
\end{align*}
Consider parametric submodels of $\mathcal{M}^F$ and $\mathcal{M}$ parametrized by an one-dimensional parameter $\eta$:
\begin{align*}					
&
\mathcal{M}\ETA^{F}
=
\big\{ P\ETA (\bO^F) \cond 
\text{\HL{A1}-\HL{A2} and  \HL{IV1}-\HL{IV4} hold}
\big\} \ ,
\\
&
\mathcal{M}\ETA
=
\big\{ P\ETA (\bO) \cond 
\text{\HL{A1}-\HL{A2} and  \HL{IV1}-\HL{IV4} hold}
\big\} \ . 
\end{align*} 
Let $\EXP\ETA\{ h(\bO^F) \big\}$ and $\EXP\ETA\{ h(\bO) \big\}$ denote expectations with respect to the distribution $P\ETA(\bO^F)$ and $P\ETA(\bO)$, respectively. Likewise, let $f\ETA (\bO)$ denote the density of the parametric submodel $P\ETA(\bO)$, with $f\ETA (\bO^F)$ defined analogously. We suppose that the full data and observed data laws, $P^*(\bO^F)$ and $P^*(\bO)$, are attained at $\eta^*$, i.e., $P^*(\bO^F) = P_{\eta^*}(\bO^F)$ and $P^*(\bO) = P_{\eta^*}(\bO)$. Let $\alpha\ETA$, $\beta\ETA$, and $\omega\ETA$ be the nuisance function of the IF in \eqref{eq-IF} at the submodel $P\ETA$. For any function $h\ETA$, let $\nabla_{\eta} h\ETA$ denote its derivative with respect to $\eta$. Then, the score function of $P\ETA$ is denoted by $s\ETA = \nabla_\eta f\ETA/f\ETA$, and its value at $\eta^*$ is denoted $s^*$. Following the same notation as the score function, let $s\ETA(\cdot \con \alpha) = \nabla\ETA \alpha\ETA / \alpha\ETA $ and $s\ETA(\cdot \con \beta) = \nabla \ETA \beta\ETA/\beta\ETA$. Finally, for a random variable $\bV \subseteq \bO^F$ or $\bV \subseteq \bO$, define $\mathcal{L}^2(\bV) \equiv \{ f \cond  \EXP \{f^2(\bV) \} < \infty \}$ be the Hilbert space of square-integrable functions of $\bV$, equipped with the inner product $\langle f_1, f_2 \rangle = \EXP \{ f_1(\bV) f_2(\bV) \}$. Likewise, define $\mathcal{L}_0^2(\bV) \subseteq \mathcal{L}^2(\bV)$ be the Hilbert space of square-integrable, mean-zero functions of $\bV$.

At the submodel $P\ETA$, we define the ATT and the function $\uncInfFt$, a part of \eqref{eq-IF}, as follows:
\begin{align*}
    &
    \tau\ETA = 
    \EXP\ETA \big\{ \uncInfFt_{\eta}(\bO) \big\}
    \ , 
    \\
    &
  \uncInfFt\ETA(\bO)
  =
  \frac{ 
   \left[ 
    \begin{array}{l}
    \{ A - (1-A) \alpha\ETA(Y,\bX ) \beta\ETA(Z,\bX) \}
     \big\{ \mathcal{G} - \mu\ETA(Z,\bX \con \mathcal{G}) \big\}     
    \end{array}
    \right]
    }{\EXP\ETA(A)}
    \ . 
\end{align*}
We also define the function $\augInfFt$ indexed by $v \in \mathcal{L}^2(\potY{0},Z,\bX)$ as follows:
\begin{align*}
    &
    \augInfFt\ETA(\bO \con v)
    \\
    &
    =
  \frac{ 
   \left[ 
    \begin{array}{l}
       (1-A) \alpha\ETA(Y,\bX) \beta\ETA(Z,\bX) \big\{ \widetilde{v}\ETA(Y,Z,\bX) - \mu\ETA(Z,\bX \con \widetilde{v}\ETA) \big\}
  \\
  +
    A
    \mu\ETA(Z,\bX \con \widetilde{v}\ETA)
    + 
    (1-A) \widetilde{v}\ETA(Y,Z,\bX) 
    \end{array}
    \right]
    }{\EXP\ETA(A)} 
    \\
    &
    =
  \frac{ 
   (1-A) \big\{ 1+ \alpha\ETA(Y,\bX) \beta\ETA(Z,\bX) \big\}
       \big\{ \widetilde{v}\ETA(Y,Z,\bX) - \mu\ETA(Z,\bX \con \widetilde{v}\ETA) \big\}
  + 
    \mu\ETA(Z,\bX \con \widetilde{v}\ETA) 
    }{\EXP\ETA(A)} 
    \ ,
\end{align*}
where $\widetilde{v}\ETA$ is defined by the following operator of a function $v \in \mathcal{L}^2(\potY{0},Z,\bX)$:
\begin{align*}
    &
    \widetilde{v}\ETA(\potY{0},Z,\bX)
    =
     v(\potY{0},Z,\bX)-v\ETA^\dagger(\potY{0},Z,\bX)
     \ , 
     \numeq
     \label{eq-proof-tilde operator}
     \\
     &
     v\ETA^\dagger(\potY{0},Z,\bX) = 
     \EXP\ETA(v \cond \potY{0},\bX) + \EXP\ETA(v  \cond Z,\bX) - \EXP\ETA(v \cond \bX) \ .
\end{align*}

From the construction of $\widetilde{v}\ETA$, we find that $\EXP\ETA \{ \augInfFt\ETA(\bO \con v) \} = 0$ for any $v \in \mathcal{L}^2(\potY{0},Z,\bX)$ as follows:
\begin{align*}
    &
    \EXP\ETA \{ \augInfFt\ETA(\bO \con v) \}
    \\
    &
    =
    \EXP\ETA 
    \bigg[
    \begin{array}{l}
    (1-A) \alpha\ETA(Y,\bX) \beta\ETA(Z,\bX) \big\{ \widetilde{v}\ETA(Y,Z,\bX) - \mu\ETA(Z,\bX \con \widetilde{v}\ETA) \big\}
    \\
    +
    A
    \mu\ETA(Z,\bX \con \widetilde{v}\ETA)
    + 
    (1-A) \widetilde{v}\ETA(Y,Z,\bX) 
    \end{array}
    \bigg]
    \\
    &
    =
    \EXP\ETA 
    \Big[ A \big\{ \mu\ETA(Z,\bX \con \widetilde{v}\ETA)  - \mu\ETA(Z,\bX \con \widetilde{v}\ETA) \big\}
    +    \widetilde{v}\ETA(\potY{0},Z,\bX)   \Big] 
    \\
    &
    =
    \EXP\ETA 
    \big\{ 
    v(\potY{0},Z,\bX)   - v\ETA^\dagger(\potY{0},Z,\bX)  
    \big\} 
    \\
    &
    =
    \EXP\ETA 
    \big[ 
    \EXP\ETA 
    \big\{ 
    v(\potY{0},Z,\bX)   - v\ETA^\dagger(\potY{0},Z,\bX)  
    \cond \bX
    \big\}
    \big] 
    \\
    &
    =
    \EXP\ETA 
    \bigg[ 
    \begin{array}{l}
    \EXP\ETA 
    \big\{ 
    v(\potY{0},Z,\bX)  \cond \bX \big\}
    - 
    \EXP\ETA 
    \big[ 
    \EXP\ETA \big\{ v(\potY{0},Z,\bX)  \cond \potY{0},\bX \big\}
    \cond \bX \big]
    \\
    -
    \EXP\ETA 
    \big[ 
    \EXP\ETA \big\{ v(\potY{0},Z,\bX)  \cond Z,\bX \big\}
    \cond \bX \big]
    +
    \EXP\ETA 
    \big\{ 
    v(\potY{0},Z,\bX)  \cond \bX \big\} 
    \end{array}
    \bigg] 
    \\
    &
    =
    \EXP\ETA 
    \bigg[ 
    \begin{array}{l}
    \EXP\ETA 
    \big\{ 
    v(\potY{0},Z,\bX)  \cond \bX \big\}
    - 
    \EXP\ETA 
    \big\{ 
    v(\potY{0},Z,\bX)  \cond \bX \big\} 
    \\
    -
    \EXP\ETA 
    \big\{ 
    v(\potY{0},Z,\bX)  \cond \bX \big\} 
    +
    \EXP\ETA 
    \big\{ 
    v(\potY{0},Z,\bX)  \cond \bX \big\} 
    \end{array}
    \bigg] 
    \\
    &
    = 
    0 \ .
\end{align*}

For a fixed $v \in \mathcal{L}^2(\potY{0},Z,\bX)$, we define $\InfFt$ as follows:
\begin{align*}    
    &
    \InfFt\ETA(\bO \con \tau, v) 
    =
    \uncInfFt\ETA(\bO) 
    -
    \frac{A \tau}{\EXP\ETA(A)}
    +
    \augInfFt\ETA(\bO \con v) \ .
\end{align*}
Therefore, $\tau\ETA$ satisfies the following moment condition for any $v \in \mathcal{L}^2(\potY{0},Z,\bX)$:
\begin{align*}
    \EXP\ETA \big\{ \InfFt\ETA(\bO \con \tau\ETA, v) \big\} = 0 \ .
    \numeq 
    \label{eq-proof-IF equation}
\end{align*}
Taking the gradient of \eqref{eq-proof-IF equation} with respect to $\eta$ yields
\begin{align*}
    0 
    &
    = \nabla\ETA \EXP\ETA \big\{ \InfFt\ETA(\bO \con \tau\ETA, v) \big\}
    \\
    &
    =
    \EXP\ETA \big\{ s\ETA(\bO)
    \InfFt\ETA(\bO \con \tau\ETA, v)
    \big\}
    +
    \EXP\ETA \big\{ \nabla\ETA
    \InfFt\ETA(\bO \con \tau\ETA, v)
    \big\}
    \\
    &
    =
    \EXP\ETA \big\{ s\ETA(\bO)
    \InfFt\ETA(\bO \con \tau\ETA, v)
    \big\}
    \\
    &
    \quad \quad 
    +
    \underbrace{
    \EXP\ETA \bigg\{ \frac{\partial}{\partial \tau}
    \InfFt\ETA(\bO \con \tau, v)
    \bigg\}
    }_{=-1}
    \nabla\ETA \tau\ETA
    +
    \EXP\ETA \big\{ \nabla\ETA
    \InfFt\ETA(\bO \con \tau, v)
    |_{\tau=\tau\ETA}
    \big\}  \ .
\end{align*}
This implies
\begin{align*}
    & 
    \nabla\ETA \tau\ETA
    \\
    &
    =
    \EXP\ETA \big\{ s\ETA(\bO)
    \InfFt\ETA(\bO \con \tau\ETA, v)
    +
    \nabla\ETA
    \InfFt\ETA(\bO \con \tau, v)
    |_{\tau=\tau\ETA}
    \big\}  
    \\
    &
    =
    \EXP\ETA \Bigg[ s\ETA(\bO)
    \InfFt\ETA(\bO \con \tau\ETA, v)    
    +
    \nabla\ETA
    \bigg\{
    \uncInfFt\ETA(\bO) 
    -
    \frac{A \tau}{\EXP\ETA(A)}
    +
    \augInfFt\ETA(\bO \con v)
    \bigg\}
    \bigg|_{\tau=\tau_\eta} 
    \Bigg]
    \ .
    \numeq 
    \label{eq-proof-pathwise diff}
\end{align*}

We now find a representation of \eqref{eq-proof-pathwise diff} in order to show that $\tau^*$ is a differentiable parameter \citep{Newey1990}. To facilitate this derivation, we first introduce some useful preliminary results.
\begin{align}
    &
    \EXP\ETA \big\{ \mathcal{G} -  \mu\ETA(Z,\bX \con \mathcal{G}) \cond A=1 \big\} = \tau\ETA  \ ,
    \label{eq-proof-identity1}
    \\
    &
    \EXP\ETA \big\{ \mathcal{G}^{(0)} - \mu\ETA(Z,\bX \con \mathcal{G}) \cond A=1, Z, \bX \big\} = 0 \ ,
    \label{eq-proof-identity2}
    \\
    &
    \EXP\ETA \big\{ (1-A)  
    \alpha\ETA(Y,\bX)
    \beta\ETA(Z,\bX)  - A \cond Z,\bX \big\} = 0 \ ,
    \label{eq-proof-identity3}    
    \\
    &
    \EXP\ETA \big[ (1-A) \big\{ 1+ \alpha\ETA(Y,\bX) \beta\ETA(Z,\bX)  \big\} v(Y,Z,\bX) \big] = 
    \EXP\ETA \big\{ v(\potY{0},Z,\bX)\big\} \ ,
    \label{eq-proof-identity4}
    \\
    &  
    \nabla\ETA \EXP\ETA \big\{   \widetilde{v}\ETA(\potY{0},Z,\bX)
    \big\}  
    = 0  
    \nonumber \\
    &
    \Rightarrow
    \quad        
    \EXP\ETA \big\{ 
    \nabla\ETA \widetilde{v}\ETA(\potY{0},Z,\bX)
    \big\}  
    +
    \EXP\ETA \big\{ 
    s\ETA(\potY{0},Z,\bX)
    \widetilde{v}\ETA(\potY{0},Z,\bX )
    \big\} 
    = 0
    \nonumber
    \\
    &
    \Rightarrow
    \quad        
    \EXP\ETA \big\{ 
    \nabla\ETA \widetilde{v}\ETA(\potY{0},Z,\bX)
    \big\}  
    =
    -
    \EXP\ETA \big\{ 
    s\ETA(\potY{0},Z,\bX)
    \widetilde{v}\ETA(\potY{0},Z,\bX )
    \big\} 
    = 0 \ ,
    \label{eq-proof-identity5}
    \\
    &
    \EXP\ETA \big[ \EXP\ETA \big\{ v(\potY{0},Z,\bX) \cond Z,\bX \big\} \cond \potY{0},\bX \big]  
    \nonumber 
    \\
    &
    =
    \EXP\ETA \big[ \EXP\ETA \big\{ v(\potY{0},Z,\bX) \cond Z,\bX \big\} \cond \bX \big] 
    =
    \EXP\ETA \big\{ v(\potY{0},Z,\bX) \cond \bX \big\} \ .
    \label{eq-proof-identity6} 
\end{align}
Except for \eqref{eq-proof-identity5}, the other results can be established by straightforward algebra. 
Result \eqref{eq-proof-identity5} follows from the decomposition
\begin{align*}
    & 
    s\ETA(\potY{0},Z,\bX) 
    =
    s\ETA(\potY{0} \cond \bX) 
    +
    s\ETA(Z \cond \bX) 
    +
    s\ETA(\bX) \ , 
    \\
    &
    \EXP\ETA \big\{
    s\ETA(\potY{0} \cond \bX) 
    \cond \bX \big\}
    =
    \EXP\ETA \big\{
    s\ETA(Z \cond \bX) 
    \big\}
    =
    \EXP\ETA \big\{
    s\ETA(\bX)
    \big\}
    =
    0 \ , 
\end{align*}
and
\begin{align*}
    &
    \EXP\ETA \big\{ 
    s\ETA(\potY{0},Z,\bX)
    \widetilde{v}\ETA(\potY{0},Z,\bX )
    \big\}
    \\
    &
    =
    \EXP\ETA 
    \big[
    \big\{
        \begin{array}{l}
             s\ETA(\potY{0} \cond \bX)  
    +
    s\ETA(Z \cond \bX)  
    +
    s\ETA(\bX)    
        \end{array}
    \big\}
    \big\{
        \begin{array}{l}
             v(\potY{0},Z,\bX) 
             -
             v\ETA^\dagger(\potY{0},Z,\bX)
        \end{array}
    \big\}
    \big]
    \\
    &
    =
    \EXP\ETA 
    \left[  
        \begin{array}{l}
             s\ETA(\potY{0} \cond \bX) 
             \EXP\ETA\{ v - v\ETA^\dagger \cond \potY{0},\bX \}
             \\
    +
    s\ETA(Z \cond \bX) 
    \EXP\ETA\{ v - v\ETA^\dagger \cond Z,\bX \} 
    +
    s\ETA(\bX)    
    \EXP\ETA\{ v - v\ETA^\dagger \cond \bX \}
        \end{array} 
    \right]
    \\
    &
    =
    0 \ .
\end{align*}
The last equality holds because, for all $v \in \mathcal{L}^2(\potY{0},Z,\bX)$, we obtain $\EXP\ETA\{ v - v\ETA^\dagger \cond \potY{0},\bX \} =
\EXP\ETA\{ v - v\ETA^\dagger \cond Z,\bX \}  = \EXP\ETA\{ v - v\ETA^\dagger \cond \bX \}  = 0$ from \eqref{eq-proof-identity5}.

We now return to derive an alternative representation of \eqref{eq-proof-pathwise diff}. From a straightforward calculation, the first term of \eqref{eq-proof-pathwise diff} is expressed as:
\begin{align*}
    &
    \EXP\ETA \big\{ s\ETA (\bO) \InfFt\ETA (\bO \con \tau\ETA,v) \big\}
    \\
    &
    =
    \frac{ \EXP\ETA \left[ s\ETA(\bO) 
    \left[ 
    \begin{array}{l}
         (1-A) \alpha\ETA(Y,\bX) \beta\ETA(Z,\bX) 
         \big\{ \mathcal{G} - \mu\ETA(Z,\bX \con \mathcal{G} ) \big\} \\
         + A \big\{\mu\ETA(Z,\bX  \con \mathcal{G}) - \tau\ETA \big\}
         \\
         +
         (1-A) \alpha\ETA(Y,\bX ) \beta\ETA(Z,\bX)   
    \big\{ 
        \widetilde{v}\ETA(Y,Z,\bX ) 
        -
        \mu\ETA(Z,\bX \con \widetilde{v}\ETA)
    \big\}
    \\
    +
    (1-A)  \widetilde{v}\ETA(Y,Z,\bX) 
    +
    A
    \mu\ETA(Z,\bX \con \widetilde{v}\ETA)
    \end{array}
    \right]
    \right]
    }{ \EXP\ETA(A) }
    \numeq \label{eq-proof-IF piece1} \ .
\end{align*}

From direct algebra, the second term of \eqref{eq-proof-pathwise diff} is represented as follows:
\begin{align*}
    &
    \nabla\ETA
    \bigg\{
    \uncInfFt\ETA(\bO) 
    -
    \frac{A \tau}{\EXP\ETA(A)}
    \bigg\} 
    \\
    &
    =
    - \frac{ (1-A) \nabla\ETA \alpha\ETA(Y,\bX) \beta\ETA(Z,\bX) \big\{ \mathcal{G} - \mu\ETA(Z,\bX \con \mathcal{G}) \big\} }{ \EXP\ETA(A) }
    \\
    &
    \hspace*{1cm}
    - 
    \frac{ (1-A) \alpha\ETA(Y,\bX) 
    \nabla\ETA \beta\ETA(Z,\bX) 
    \big\{ \mathcal{G} - \mu\ETA(Z,\bX \con \mathcal{G}) \big\} }{ \EXP\ETA(A) }
    \\
    &
    \hspace*{1cm}
    -
    \frac{ \big\{ A -  (1-A)
    \alpha\ETA(Y,\bX)
    \beta\ETA(Z,\bX) 
    \big\}
    \nabla\ETA \mu\ETA(Z,\bX \con \mathcal{G})
    }{ \EXP\ETA(A) }
    \\
    &
    \hspace*{1cm}
    -
    \frac{ \big\{ A-  (1-A) 
    \alpha\ETA(Y,\bX) 
    \beta\ETA(Z,\bX) \big\}
    \big\{  \mathcal{G} - \mu\ETA(Z,\bX \con \mathcal{G}) \big\}  - A \tau }{ \EXP\ETA(A) }
    \frac{ \nabla\ETA \EXP\ETA(A) } { \EXP\ETA(A) } \ .
\end{align*}
By \eqref{eq-proof-identity1}-\eqref{eq-proof-identity3}, we establish
\begin{align*}
    &
    \EXP\ETA
    \Bigg[ \nabla\ETA
    \bigg\{
    \uncInfFt\ETA(\bO) 
    -
    \frac{A \tau}{\EXP\ETA(A)} 
    \bigg\}    \bigg|_{\tau=\tau_\eta} 
    \Bigg]
    \\
    &
    = 
    - 
    \frac{ 
    \EXP\ETA \big[ 
    s\ETA(Y,\bX \con \alpha)
    (1-A) 
    \alpha\ETA(Y,\bX)
    \beta\ETA(Z,\bX) \big\{ \mathcal{G} - \mu\ETA(Z,\bX \con \mathcal{G}) \big\} 
    \big]
    }{ \EXP\ETA(A) } \ .
    \numeq \label{eq-proof-IF piece2}
\end{align*}

The last term of \eqref{eq-proof-pathwise diff} is given by 
\begin{align*}
    &
    \nabla\ETA
    \augInfFt\ETA(\bO \con v)
    \\
    & 
    =
    \frac{
    \left[
    \begin{array}{l}
         (1-A) \nabla\ETA \alpha\ETA(Y,\bX) \beta\ETA(Z,\bX)
         \{ 
    \widetilde{v}\ETA (Y,Z,\bX)
    - \mu\ETA(Z,\bX \con \widetilde{v}\ETA)
    \}
    \\ 
    \hspace*{0.5cm}
    +
    (1-A) \alpha\ETA(Y,\bX) \nabla\ETA \beta\ETA(Z,\bX)
    \{ 
    \widetilde{v}\ETA (Y,Z,\bX)
    - \mu\ETA(Z,\bX \con \widetilde{v}\ETA)
    \}
    \\ 
    \hspace*{0.5cm}
    +
    (1-A) \big\{ 1+ \alpha\ETA(Y,\bX) \beta\ETA(Z,\bX) \big\} 
    \big\{
    \nabla\ETA\widetilde{v}(Y,Z,\bX)
    -
    \nabla\ETA
    \mu\ETA(Z,\bX \con \widetilde{v}) 
    \big\}
    \\ 
    \hspace*{0.5cm}
    +
    \nabla\ETA
    \mu\ETA(Z,\bX \con \widetilde{v})
    \end{array}
    \right]
    }{\EXP\ETA(A)}
    \\
    &
    \qquad 
    -
    \frac{ 
    \left[ 
        \begin{array}{l}
   (1-A) \big\{ 1+ \alpha\ETA(Y,\bX) \beta\ETA(Z,\bX) \big\}
       \big\{ \widetilde{v}\ETA(Y,Z,\bX) - \mu\ETA(Z,\bX \con \widetilde{v}\ETA) \big\}
       \\
    \hspace*{0.5cm}
    +
    \mu\ETA(Z,\bX \con \widetilde{v}\ETA) 
        \end{array}
    \right]
    }{\EXP\ETA(A)}
    \frac{\nabla\ETA \EXP\ETA(A)}{\EXP\ETA(A)} \ .
\end{align*} 
From some algebra, we obtain
\begin{align*}
    &
    \EXP\ETA \big\{
    (1-A) \alpha\ETA(Y,\bX) \nabla\ETA \beta\ETA(Z,\bX)
    \widetilde{v}\ETA(Y,Z,\bX)
    \big\}
    \\
    &
    =
    \EXP\ETA \big\{
    (1-A) s\ETA(Z,\bX \con \beta)
    \alpha\ETA(Y,\bX)  \beta\ETA(Z,\bX)
    \widetilde{v}\ETA(Y,Z,\bX)
    \big\}
    \\
    &
    =
    \EXP\ETA \big\{
    A s\ETA(Z,\bX \con \beta)
    \widetilde{v}\ETA(\potY{0},Z,\bX)
    \big\}
    \\
    &
    =
    \EXP\ETA \big[
    A s\ETA(Z,\bX \con \beta)
    \EXP\ETA \big\{ 
    \widetilde{v}\ETA(\potY{0},Z,\bX) \cond A=1,Z,\bX \big\}
    \big]
    \\
    &
    =
    \EXP\ETA \big\{
    A s\ETA(Z,\bX \con \beta)
    \mu\ETA(Z,\bX \con \widetilde{v}\ETA)
    \big\}
    \\
    &
    =
    \EXP\ETA \big\{
    (1-A) \alpha\ETA(Y,\bX) 
    \nabla\ETA \beta(Z,\bX)
    \mu\ETA(Z,\bX \con \widetilde{v}\ETA)
    \big\} \ .
\end{align*} 
By \eqref{eq-proof-identity4}, \eqref{eq-proof-identity5}, and the above identities, we establish
\begin{align*}
    &
    \EXP\ETA 
    \big\{
    \nabla\ETA
    \augInfFt\ETA(\bO \con v)
    \big\}
    \\
    &
    =
    \frac{\EXP\ETA \big[ 
    (1-A) \nabla\ETA \alpha\ETA(Y,\bX) 
    \beta\ETA(Z,\bX)
    \big\{ \widetilde{v}\ETA(Y,Z,\bX) - \mu\ETA(Z,\bX \con \widetilde{v}\ETA) \big\}
    \big] }{\EXP\ETA(A)}
    \\
    &
    =
    \frac{\EXP\ETA \big[ 
    s\ETA(Y,\bX \con \alpha)
    (1-A) 
    \alpha\ETA(Y,\bX) 
    \beta\ETA(Z,\bX)
    \big\{ \widetilde{v}\ETA(Y,Z,\bX) - \mu\ETA(Z,\bX \con \widetilde{v}\ETA) \big\}
    \big] }{\EXP\ETA(A)}
     \ .
    \numeq \label{eq-proof-IF piece3}
\end{align*} 
Combining \eqref{eq-proof-IF piece2} and \eqref{eq-proof-IF piece3}, we have
\begin{align*}
    &
    \EXP\ETA \Bigg[ 
    \nabla\ETA
    \bigg\{
    \uncInfFt\ETA(\bO) 
    -
    \frac{A \tau}{\EXP\ETA(A)}
    +
    \augInfFt\ETA(\bO \con v)
    \bigg\}
    \bigg|_{\tau=\tau_\eta} 
    \Bigg]
    \\
    &
    =
    -
    \frac{     
    \EXP\ETA \bigg[ 
    \begin{array}{l}
    s\ETA (Y,\bX \con \alpha)
    (1-A) 
    \alpha\ETA (Y,\bX)
    \beta\ETA (Z,\bX)
    \\
    \times 
    \big\{
    \mathcal{G} - \mu\ETA(Z,\bX \con \mathcal{G})
    -
    \widetilde{v}\ETA(Y,Z,\bX)
    +
    \mu\ETA (Z,\bX \con \widetilde{v}\ETA)
    \big\}
    \end{array}
    \bigg]
    }{\EXP\ETA (A)}  \ .
    \numeq \label{eq-proof-IF piece23}
\end{align*}

We now derive $s\ETA (Y,\bX \con \alpha)$. By \eqref{eq-proof-identity3} and \eqref{eq-proof-identity6}, we obtain the following result for any $u \in \mathcal{L}^2(\potY{0},Z,\bX)$:
\begin{align*}
    &
    \EXP\ETA \Bigg[ 
    \begin{array}{l}         
    (1-A) \alpha\ETA(Y,\bX) \beta\ETA(Z,\bX) \big\{ \widetilde{u}\ETA (\potY{0},Z,\bX) - \mu\ETA(Z,\bX \con \widetilde{u}\ETA) \big\}
    \\
    +
    A
    \mu\ETA(Z,\bX \con \widetilde{u}\ETA)
    +
    (1-A) \widetilde{u}\ETA (\potY{0},Z,\bX)
    \end{array}
    \, \Bigg| \, \potY{0}, \bX \Bigg]
    \numeq
    \label{eq-proof-alpha score 1}
    \\
    &
    =
    \EXP \Bigg\{
        \begin{array}{l}
    A \widetilde{u}\ETA(\potY{0},Z,\bX)
    -
    A
    \mu\ETA(Z,\bX \con \widetilde{u}\ETA)
    \\
    +
    A
    \mu\ETA(Z,\bX \con \widetilde{u}\ETA)
    +
    (1-A) 
    \widetilde{u}\ETA(\potY{0},Z,\bX)
        \end{array}
    \, \Bigg| \, \potY{0}, \bX \Bigg\}
    \\
    &
    =
    \EXP \big\{
    \widetilde{u}\ETA (\potY{0},Z,\bX)
    \, \big| \, \potY{0}, \bX \big\}
    \\ 
    &
    = 0 \ ,
\end{align*}
where $\widetilde{u}_{\eta}$ is defined analogously to \eqref{eq-proof-tilde operator}. 

We take the pathwise derivative of \eqref{eq-proof-alpha score 1} with respect to $\eta$. To streamline the derivation, we first establish an intermediate result. For any $g(\potY{0},Z,\bX)$, we have
\begin{align*}
    &
    \nabla\ETA 
    \EXP\ETA \big\{ (1-A) g(\potY{0},Z,\bX) \cond \potY{0},\bX \big\}
    \\
    &
    =
    \iint (1-a) g(\potY{0},z,\bX) \nabla\ETA f\ETA(A=a,Z=z \cond \potY{0},\bX)
    \, d(a,z)
    \\
    &
    =
    \iint \left\{ 
    \begin{array}{l}
    (1-a) g(\potY{0},z,\bX) \\
    \times s\ETA(A=a,Z=z \cond \potY{0},\bX) 
    \end{array}
    \right\} 
    f\ETA(A=a,Z=z \cond \potY{0},\bX)
    \, d(a,z)
    \\
    &
    =
    \iint \left\{ 
    \begin{array}{l}
    (1-a) g(\potY{0},z,\bX) \\
    \times s\ETA(Y,A=a,Z=z,\bX) 
    \end{array}
    \right\} 
    f\ETA(A=a,Z=z \cond \potY{0},\bX)
    \, d(a,z)
    \\
    &
    \qquad -
    s\ETA (\potY{0},\bX)
    \iint 
    (1-a) g(\potY{0},z,\bX)
    f\ETA(A=a,Z=z \cond \potY{0},\bX)
    \, d(a,z) 
    \\
    &
    =
    \EXP\ETA \big\{ s\ETA(\bO) (1-A) g(\potY{0},Z,\bX)  \cond \potY{0},\bX \big\}
    \\
    &
    \qquad 
    -
    s\ETA(\potY{0},\bX) \EXP\ETA \big\{ (1-A) g(\potY{0},Z,\bX)  \cond \potY{0},\bX \big\} \ .
    \numeq
    \label{eq-proof-alpha score 2}
\end{align*}
Likewise, for any $h(Z,\bX)$, we have
\begin{align*}
    &
    \nabla\ETA 
    \EXP\ETA \big\{ A h(Z,\bX) \cond \potY{0},\bX \big\}
    \\
    &
    =
    \iint a h(z,\bX) \nabla\ETA f\ETA(A=a,Z=z \cond \potY{0},\bX)
    \, d(a,z)
    \\
    &
    =
    \iint 
    a h(z,\bX) s\ETA(A=a,Z=z \cond \potY{0},\bX) 
    f\ETA(A=a,Z=z \cond \potY{0},\bX)
    \, d(a,z)
    \\
    &
    =
    \iint 
    a h(z,\bX) s\ETA(\potY{0},A=a,Z=z,\bX) 
    f\ETA(A=a,Z=z \cond \potY{0},\bX)
    \, d(a,z)
    \\
    &
    \qquad -
    s\ETA (\potY{0},\bX)
    \iint 
    a h(z,\bX)
    f\ETA(A=a,Z=z \cond \potY{0},\bX)
    \, d(a,z)  
    \\
    &
    =
    \iint 
    a h(z,\bX) s\ETA(A=a,Z=z,\bX) 
    f\ETA(A=a,Z=z \cond \potY{0},\bX)
    \, d(a,z)
    \\
    &
    \qquad -
    s\ETA (\potY{0},\bX) \EXP\ETA \big\{ A h(Z,\bX) \cond \potY{0},\bX \big\}  
    \\
    &
    =
    \iint 
    a h(z,\bX) s\ETA(Y,A=a,Z=z,\bX) 
    f\ETA(A=a,Z=z \cond \potY{0},\bX)
    \, d(a,z)
    \\
    &
    \qquad -
    s\ETA (\potY{0},\bX) \EXP\ETA \big\{ A h(Z,\bX) \cond \potY{0},\bX \big\}  
    \\
    &
    =
    \EXP\ETA \big\{ s\ETA(\bO) A h(Z,\bX)  \cond \potY{0},\bX \big\}
    -
    s\ETA (\potY{0},\bX) \EXP\ETA \big\{ A h(Z,\bX) \cond \potY{0},\bX \big\}  \ .
    \numeq
    \label{eq-proof-alpha score 3}
\end{align*}
Combining \eqref{eq-proof-alpha score 2} and \eqref{eq-proof-alpha score 3}, we find
\begin{align*}
    &
    \nabla\ETA 
    \EXP\ETA \big\{ (1-A) g(\potY{0},Z,\bX) + A h(Z,\bX)  \cond \potY{0},\bX \big\}
    \\
    &
    =
    \EXP\ETA \big[ s\ETA(\bO) \{ (1-A) g(\potY{0},Z,\bX) + A h(Z,\bX) \}  \cond \potY{0},\bX \big]
    \\
    &
    \qquad 
    -
    s\ETA (\potY{0},\bX) \EXP\ETA \big\{ (1-A) g(\potY{0},Z,\bX) + A h(Z,\bX) \cond \potY{0},\bX \big\} \ .
    \numeq
    \label{eq-proof-alpha score 4}
\end{align*}
Replacing $(1-A)g(\potY{0},Z,\bX) + A h(Z,\bX)$ in \eqref{eq-proof-alpha score 4} with the function in \eqref{eq-proof-alpha score 1}, we find
\begin{align*}
    & 
    \nabla\ETA 
    \EXP\ETA \Bigg[ 
    \begin{array}{l}         
    (1-A) \alpha_t(Y,\bX) \beta_t(Z,\bX) \big\{ \widetilde{u}_t (\potY{0},Z,\bX) - \mu_t(Z,\bX \con \widetilde{u}_t) \big\}
    \\
    +
    A
    \mu_t(Z,\bX \con \widetilde{u}_t)
    +
    (1-A) \widetilde{u}_t (\potY{0},Z,\bX)
    \end{array}
    \, \Bigg| \, \potY{0}, \bX \Bigg]
    \Bigg|_{t = \eta}
    \\
    & 
    =
    \EXP\ETA 
    \Bigg[  
    s\ETA(\bO) 
    \Bigg[
    \begin{array}{l}
    (1-A) \alpha\ETA(Y,\bX) 
    \beta\ETA(Z,\bX)
    \big\{ \widetilde{u}\ETA(Y,Z,\bX) - \mu\ETA(Z,\bX \con \widetilde{u}\ETA) \big\} 
    \\
    +
    A
    \mu\ETA(Z,\bX \con \widetilde{u}\ETA)
    +
    (1-A) \widetilde{u}\ETA(Y,Z,\bX)
    \end{array}
    \Bigg] 
    \,  
    \Bigg|\,   \potY{0},\bX  
    \Bigg]
    \\
    &\!\!
    \qquad
    - s\ETA(\potY{0},\bX) 
    \underbrace{
    \EXP\ETA 
    \Bigg[  
    \begin{array}{l}
    (1-A) \alpha\ETA(Y,\bX) 
    \beta\ETA(Z,\bX)
    \big\{ \widetilde{u}\ETA(Y,Z,\bX) - \mu\ETA(Z,\bX \con \widetilde{u}\ETA) \big\} 
    \\
    +
    A
    \mu\ETA(Z,\bX \con \widetilde{u}\ETA)
    +
    (1-A) \widetilde{u}\ETA(Y,Z,\bX)
    \end{array}  
    \,  
    \Bigg|\,   \potY{0},\bX  
    \Bigg]
    }_{=0 \text{ from } \eqref{eq-proof-alpha score 1}}
    \\
    & 
    =
    \EXP\ETA 
    \Bigg[  
    s\ETA(\bO) 
    \Bigg[
    \begin{array}{l} 
    (1-A) \alpha\ETA(Y,\bX) 
    \beta\ETA(Z,\bX)
    \big\{ \widetilde{u}\ETA(Y,Z,\bX) - \mu\ETA(Z,\bX \con \widetilde{u}\ETA) \big\} 
    \\ 
    +
    A
    \mu\ETA(Z,\bX \con \widetilde{u}\ETA)
    +
    (1-A) \widetilde{u}\ETA(Y,Z,\bX)
    \end{array}
    \Bigg] 
    \, 
    \Bigg|\,   \potY{0},\bX  
    \Bigg] .
    \numeq
    \label{eq-proof-alpha score 5}
\end{align*}

We now take the pathwise derivative of \eqref{eq-proof-alpha score 1} with respect to $\eta$, which yields
\begin{align*}
    0
    & 
    =
    \nabla\ETA 
    \EXP\ETA \Bigg[ 
    \begin{array}{l}
    (1-A) \alpha\ETA(Y,\bX) 
    \beta\ETA(Z,\bX)
    \big\{ \widetilde{u}\ETA(Y,Z,\bX) - \mu\ETA(Z,\bX \con \widetilde{u}\ETA) \big\}
    \\
    +
    A
    \mu\ETA(Z,\bX \con \widetilde{u}\ETA)
    +
    (1-A) \widetilde{u}\ETA(Y,Z,\bX)
    \end{array}
    \, \Bigg| \, \potY{0},\bX 
    \Bigg] 
    \numeq \label{eq-proof-alpha00}
    \\
    & 
    \stackrel{\eqref{eq-proof-alpha score 5}}{=}
    \EXP\ETA \Bigg[ 
    s\ETA(\bO)
    \Bigg[
    \begin{array}{l}
    (1-A) \alpha\ETA(Y,\bX) 
    \beta\ETA(Z,\bX)
    \big\{ \widetilde{u}\ETA(Y,Z,\bX) - \mu\ETA(Z,\bX \con \widetilde{u}\ETA) \big\}
    \\
    +
    A
    \mu\ETA(Z,\bX \con \widetilde{u}\ETA)
    +
    (1-A) \widetilde{u}\ETA(Y,Z,\bX)
    \end{array}
    \Bigg]
    \, \Bigg| \, \potY{0},\bX 
    \Bigg]  
    \\
    &
    \hspace*{0.5cm}
    +
    \EXP\ETA \Bigg[  
    s\ETA(Y,\bX \con \alpha) 
    \bigg[
    \begin{array}{l}    
    (1-A) \alpha\ETA(Y,\bX) 
    \beta\ETA(Z,\bX) \\
    \times \big\{ 
    \widetilde{u}\ETA(Y,Z,\bX) - \mu\ETA(Z,\bX \con \widetilde{u}\ETA) 
    \big\} 
    \end{array}
    \bigg]
    \, \Bigg| \, \potY{0},\bX
    \Bigg] 
    \\
    &
    \hspace*{0.5cm}
    +
    \EXP\ETA \Bigg[  
    s\ETA(Z,\bX \con \beta) 
    \bigg[
    \begin{array}{l}    
    (1-A) \alpha\ETA(Y,\bX) 
    \beta\ETA(Z,\bX) \\
    \times \big\{ 
    \widetilde{u}\ETA(Y,Z,\bX) - \mu\ETA(Z,\bX \con \widetilde{u}\ETA) 
    \big\} 
    \end{array}
    \bigg]
    \, \Bigg| \, \potY{0},\bX
    \Bigg] 
    \numeq \label{eq-proof-alpha 1}
    \\
    &
    \hspace*{0.5cm}
    +
    \EXP\ETA \left[  
    \begin{array}{l}
         (1-A) \alpha\ETA(Y,\bX) 
    \beta\ETA(Z,\bX)
    \\
    \quad 
    \times 
    \big\{ 
    \nabla\ETA  \widetilde{u}\ETA(Y,Z,\bX) - \nabla\ETA  \mu\ETA(Z,\bX \con \widetilde{u}\ETA) \big\} 
    \\
    + 
    \big\{ A \nabla\ETA \mu\ETA(Z,\bX \con \widetilde{u}\ETA) 
    + (1-A) \nabla\ETA \widetilde{u}\ETA(Y,Z,\bX) \big\}
    \end{array} 
    \, \Bigg| \, \potY{0}, \bX
    \right]  \ . 
    \numeq \label{eq-proof-alpha 2}
\end{align*}
Note that \eqref{eq-proof-alpha 1} and \eqref{eq-proof-alpha 2} are given by:
\begin{align*}
    &
    \EXP\ETA \Big[  
    s\ETA(Z,\bX \con \beta)
    (1-A) \alpha\ETA(Y,\bX) 
    \beta\ETA(Z,\bX) \big\{ \widetilde{u}\ETA(Y,Z,\bX) - \mu\ETA(Z,\bX \con \widetilde{u}\ETA) \big\} 
    \, \Big| \, \potY{0}, \bX
    \Big] 
    \\
    &
    =
    \EXP\ETA \Bigg[  
    s\ETA(Z,\bX \con \beta) 
    \EXP\ETA \Bigg[  
    \begin{array}{l}
    (1-A) \alpha\ETA(Y,\bX) 
    \beta\ETA(Z,\bX) \\
    \times \big\{ \widetilde{u}\ETA(Y,Z,\bX) - \mu\ETA(Z,\bX \con \widetilde{u}\ETA) \big\} 
    \end{array}
    \, \Bigg| \, \potY{0}, Z, \bX
    \Bigg]
    \, \Bigg| \, \potY{0}, \bX
    \Bigg] 
    \\
    &
    =
    \EXP\ETA \Big[  
    s\ETA(Z,\bX \con \beta) 
    \EXP\ETA \Big[  
    A 
    \big\{ \widetilde{u}\ETA(\potY{0},Z,\bX) - \mu\ETA(Z,\bX \con \widetilde{u}\ETA) \big\} 
    \, \Big| \, \potY{0}, Z, \bX
    \Big]
    \, \Big| \, \potY{0}, \bX
    \Big] 
    \\
    &
    =
    \EXP\ETA \Big[  
    s\ETA(Z,\bX \con \beta) 
    A
    \big\{ \widetilde{u}\ETA(\potY{0},Z,\bX) - \mu\ETA(Z,\bX \con \widetilde{u}\ETA) \big\} 
    \, \Big| \, \potY{0}, \bX
    \Big] \ ,
\end{align*}
and
\begin{align*}
    &
    \EXP\ETA \left[  
    \begin{array}{l}
         (1-A) \alpha\ETA(Y,\bX) 
    \beta\ETA(Z,\bX) 
    \big\{ 
    \nabla\ETA  \widetilde{u}\ETA(Y,Z,\bX) - \nabla\ETA  \mu\ETA(Z,\bX \con \widetilde{u}\ETA) \big\} 
    \\
    + 
    \big\{ A \nabla\ETA \mu\ETA(Z,\bX \con \widetilde{u}\ETA) 
    + (1-A) \nabla\ETA \widetilde{u}\ETA(Y,Z,\bX) \big\} 
    \end{array} 
    \, \Bigg| \, \potY{0}, \bX
    \right] 
    \\
    &
    =
    \EXP\ETA \big\{ \nabla\ETA \widetilde{u}\ETA(\potY{0},Z,\bX) \cond \potY{0}, \bX \big\}   \ .
\end{align*} 
Rearranging \eqref{eq-proof-alpha00} by using the above identities, we obtain the following representation of $s\ETA (\potY{0},\bX \con \alpha)$:
\begin{align*}
&
s\ETA (\potY{0},\bX \con \alpha)
\numeq 
\label{eq-proof-alpha gradient}
\\
&
=
- 
\frac{
\EXP\ETA
\left[
\begin{array}{l}
     s\ETA(\bO) \bigg[
     \begin{array}{l}
    (1-A) \alpha\ETA(Y,\bX) 
    \beta\ETA(Z,\bX) \big\{ \widetilde{u}\ETA(Y,Z,\bX) - \mu\ETA(Z,\bX \con \widetilde{u}\ETA) \big\}
    \\
    +
    A
    \mu\ETA(Z,\bX \con \widetilde{u}\ETA)
    +
    (1-A) \widetilde{u}\ETA(Y,Z,\bX)
     \end{array}
    \bigg] 
    \\
    +
    s\ETA(Z,\bX \con \beta)
    A 
    \big\{ \widetilde{u}\ETA(\potY{0},Z,\bX) - \mu\ETA(Z,\bX \con \widetilde{u}\ETA) \big\}
    +
    \nabla\ETA \widetilde{u}\ETA(\potY{0},Z,\bX)
\end{array}
    \, \Bigg| \, \potY{0}, \bX
\right]
}
{ \EXP \big[ (1-A) \alpha\ETA(Y,\bX) \beta\ETA(Z,\bX)
\big\{ \widetilde{u}\ETA(Y,Z,\bX) - \mu\ETA(Z,\bX \con \widetilde{u}\ETA) \big\}
\cond \potY{0}, \bX \big]
}
 \ . 
\end{align*}

In \eqref{eq-proof-IF piece23}, we substitute $s\ETA (\potY{0},\bX \con \alpha)$ using the expression in \eqref{eq-proof-alpha gradient} after taking $u = \nu - v$ where $\nu$ satisfies \HL{$\omega$-i} and  \HL{$\omega$-ii}, i.e., 
    \begin{align*}
    &
    u(\potY{0},Z,\bX)
    =
    \nu(\potY{0},Z,\bX) - v(\potY{0},Z,\bX) 
    \ , 
    \numeq \label{eq-proof-u nu v}
    \\
    &
    \EXP\ETA \big[ A \big\{ \mathcal{G}^{(0)} - \mu\ETA(Z,\bX \con \mathcal{G}) 
    \big\}  \cond  \potY{0},\bX \big] 
    \nonumber
    \\
    &
    =
    \EXP\ETA \big[ A \big\{  \widetilde{\nu}\ETA (\potY{0},Z,\bX) - \mu\ETA (Z,\bX \con \widetilde{\nu}\ETA) \big\} \cond  \potY{0},\bX \big]
    \ .
    \numeq \label{eq-proof-nu}
    \end{align*} 
We then obtain
\begin{align*}
    & 
    \eqref{eq-proof-IF piece23}
    \\
    &
    =
    -
    \frac{     
    \EXP\ETA \bigg[ 
    \begin{array}{l}
    s\ETA (Y,\bX \con \alpha)
    (1-A) 
    \alpha\ETA (Y,\bX)
    \beta\ETA (Z,\bX)
    \\
    \times 
    \big\{
    \mathcal{G} - \mu\ETA(Z,\bX \con \mathcal{G})
    -
    \widetilde{v}\ETA(Y,Z,\bX)
    +
    \mu\ETA (Z,\bX \con \widetilde{v}\ETA)
    \big\}
    \end{array}
    \bigg]
    }{\EXP\ETA (A)} 
    \\
    &
    =
    -
    \frac{     
    \EXP\ETA \big[ 
    s\ETA (\potY{0},\bX \con \alpha)
    A
    \big\{
    \mathcal{G}^{(0)} - \mu\ETA(Z,\bX \con \mathcal{G})
    -
    \widetilde{v}\ETA(\potY{0},Z,\bX)
    +
    \mu\ETA (Z,\bX \con \widetilde{v}\ETA)
    \big\} 
    \big]
    }{\EXP\ETA (A)} 
    \\
    &
    =
    -
    \frac{     
    \EXP\ETA 
    \bigg[ 
    s\ETA (\potY{0},\bX \con \alpha)
    \EXP\ETA 
    \bigg[ 
    A
    \bigg\{
    \begin{array}{l}
    \mathcal{G}^{(0)} - \mu\ETA(Z,\bX \con \mathcal{G})
    \\
    -
    \widetilde{v}\ETA(\potY{0},Z,\bX)
    -
    \mu\ETA (Z,\bX \con \widetilde{v}\ETA)
    \end{array}
    \bigg\}
    \, \bigg| \, \potY{0},\bX
    \bigg] 
    \bigg]
    }{\EXP\ETA (A)} 
    \\
    &
    \stackrel{\eqref{eq-proof-nu}}{=}
    -
    \frac{     
    \EXP\ETA \bigg[ 
    s\ETA (\potY{0},\bX \con \alpha)
    \EXP\ETA 
    \bigg[ 
    A
    \bigg\{
    \begin{array}{l}
    \widetilde{\nu}\ETA(\potY{0},Z,\bX)
    -
    \widetilde{v}\ETA(\potY{0},Z,\bX)
    \\
    -
    \mu\ETA (Z,\bX \con \widetilde{\nu}\ETA)
    +
    \mu\ETA (Z,\bX \con \widetilde{v}\ETA)
    \end{array}
    \bigg\}
    \, \bigg| \, \potY{0},\bX
    \bigg] 
    \bigg]
    }{\EXP\ETA (A)} 
    \\
    &
    \stackrel{\eqref{eq-proof-u nu v}}{=}
    -
    \frac{     
    \EXP\ETA \big[ 
    s\ETA (\potY{0},\bX \con \alpha)
    \EXP\ETA 
    \big[ 
    A
    \big\{
    \widetilde{u}\ETA(\potY{0},Z,\bX) 
    -
    \mu\ETA (Z,\bX \con \widetilde{u}\ETA)
    \big\}
    \cond \potY{0},\bX 
    \big] 
    \big]
    }{\EXP\ETA (A)} 
    \\
    &
    =
    -
    \frac{     
    \EXP\ETA \big[ 
    s\ETA (\potY{0},\bX \con \alpha)
    \EXP\ETA \big[
    (1-A) 
    \alpha\ETA(Y, \bX)
    \beta\ETA(Z, \bX)
    \big\{
    \widetilde{u}\ETA(Y,Z,\bX) 
    -
    \mu\ETA (Z,\bX \con \widetilde{u}\ETA)
    \big\}
    \cond \potY{0},\bX
    \big] 
    \big]
    }{\EXP\ETA (A)} 
    \\
    &
    \stackrel{\eqref{eq-proof-alpha gradient}}{=}
    \frac{
    \displaystyle{    \EXP\ETA 
    \left[ 
        \begin{array}{l}
\frac{
\displaystyle{
\EXP\ETA
\left[
\begin{array}{l}
     s\ETA(\bO) \bigg[
     \begin{array}{l}
    (1-A) \alpha\ETA(Y,\bX) 
    \beta\ETA(Z,\bX) \big\{ \widetilde{u}\ETA(Y,Z,\bX) - \mu\ETA(Z,\bX \con \widetilde{u}\ETA) \big\}
    \\
    +
    A
    \mu\ETA(Z,\bX \con \widetilde{u}\ETA)
    +
    (1-A) \widetilde{u}\ETA(Y,Z,\bX)
     \end{array}
    \bigg] 
    \\
    +
    s\ETA(Z,\bX \con \beta)
    A 
    \big\{ \widetilde{u}\ETA(\potY{0},Z,\bX) - \mu\ETA(Z,\bX \con \widetilde{u}\ETA) \big\}
    +
    \nabla\ETA \widetilde{u}\ETA(\potY{0},Z,\bX)
\end{array}
    \, \Bigg| \, \potY{0}, \bX
\right]
}
}
{
\displaystyle{\EXP \big[ (1-A) \alpha\ETA(Y,\bX) \beta\ETA(Z,\bX)
\big\{ \widetilde{u}\ETA(Y,Z,\bX) - \mu\ETA(Z,\bX \con \widetilde{u}\ETA) \big\}
\cond \potY{0}, \bX \big]
}
}
\\[0.5cm]
\quad 
\times
\EXP\ETA \big[
    (1-A) 
    \alpha\ETA(Y, \bX)
    \beta\ETA(Z, \bX)
    \big\{
    \widetilde{u}\ETA(Y,Z,\bX) 
    -
    \mu\ETA (Z,\bX \con \widetilde{u}\ETA)
    \big\}
    \, \big| \, \potY{0}, \bX \big] 
        \end{array}
    \right]}
    }{\EXP\ETA(A)}
\\
 &
    = 
    \frac{ 
    \displaystyle{
    \EXP\ETA 
    \left[ 
    \EXP\ETA
    \left[
    \begin{array}{l}
     s\ETA(\bO) \bigg[
     \begin{array}{l}
    (1-A) \alpha\ETA(Y,\bX) 
    \beta\ETA(Z,\bX) \big\{ \widetilde{u}\ETA(Y,Z,\bX) - \mu\ETA(Z,\bX \con \widetilde{u}\ETA) \big\}
    \\
    +
    A
    \mu\ETA(Z,\bX \con \widetilde{u}\ETA)
    +
    (1-A) \widetilde{u}\ETA(Y,Z,\bX)
     \end{array}
    \bigg] 
    \\
    +
    s\ETA(Z,\bX \con \beta)
    A 
    \big\{ \widetilde{u}\ETA(\potY{0},Z,\bX) - \mu\ETA(Z,\bX \con \widetilde{u}\ETA) \big\}
    +
    \nabla\ETA \widetilde{u}\ETA(\potY{0},Z,\bX)
\end{array}
    \, \Bigg| \, \potY{0}, \bX
\right]
\right]}
    }{\EXP\ETA(A)}
\\
    &
    =
    \frac{\displaystyle{
    \EXP\ETA
    \left[
    \begin{array}{l}
     s\ETA(\bO) \bigg[
     \begin{array}{l}
    (1-A) \alpha\ETA(Y,\bX) 
    \beta\ETA(Z,\bX) \big\{ \widetilde{u}\ETA(Y,Z,\bX) - \mu\ETA(Z,\bX \con \widetilde{u}\ETA) \big\}
    \\
    +
    A
    \mu\ETA(Z,\bX \con \widetilde{u}\ETA)
    +
    (1-A) \widetilde{u}\ETA(Y,Z,\bX)
     \end{array}
    \bigg] 
    \\
    +
    s\ETA(Z,\bX \con \beta)
    A 
    \big\{ \widetilde{u}\ETA(\potY{0},Z,\bX) - \mu\ETA(Z,\bX \con \widetilde{u}\ETA) \big\}
    +
    \nabla\ETA \widetilde{u}\ETA(\potY{0},Z,\bX)
\end{array} 
\right]}}{\EXP\ETA(A)}
\\
    &
    \stackrel{\eqref{eq-proof-identity3}, \eqref{eq-proof-identity5}}{=}
    \frac{\displaystyle{\EXP\ETA
    \left[
    \begin{array}{l}
     s\ETA(\bO) \bigg[
     \begin{array}{l}
    (1-A) \alpha\ETA(Y,\bX) 
    \beta\ETA(Z,\bX) \big\{ \widetilde{u}\ETA(Y,Z,\bX) - \mu\ETA(Z,\bX \con \widetilde{u}\ETA) \big\}
    \\
    +
    A
    \mu\ETA(Z,\bX \con \widetilde{u}\ETA)
    +
    (1-A) \widetilde{u}\ETA(Y,Z,\bX)
     \end{array}
    \bigg]  
\end{array} 
\right] }}{\EXP\ETA(A)}
    \ .
\numeq 
\label{eq-proof-IF piece23 final}
\end{align*}

Combining \eqref{eq-proof-IF piece1} with \eqref{eq-proof-IF piece23 final}, we obtain the representation of \eqref{eq-proof-pathwise diff}:
\begin{align*}
    & 
    \nabla\ETA \tau\ETA
    \\
    & 
    =
    \eqref{eq-proof-IF piece1} + \eqref{eq-proof-IF piece23 final}
    \\
    &
    =
    \frac{  \!\! \EXP\ETA  \!\! \left[  \! s\ETA(\bO) \!\!
    \left[ 
    \begin{array}{l}
         (1-A) \alpha\ETA(Y,\bX) \beta\ETA(Z,X) 
         \big\{ \mathcal{G} - \mu\ETA(Z,\bX \con \mathcal{G}) \big\} \\
         + A \big\{\mu\ETA(Z,\bX \con \mathcal{G}) - \tau\ETA \big\}
         \\
         +
         (1-A) \alpha\ETA(Y,\bX ) \beta\ETA(Z,\bX)   
    \big\{ 
        \widetilde{v}\ETA(Y,Z,\bX ) 
        -
        \mu\ETA(Z,\bX \con \widetilde{v}\ETA)
    \big\}
    \\
         +
         (1-A) \alpha\ETA(Y,\bX ) \beta\ETA(Z,\bX)   
    \big\{ 
        \widetilde{u}\ETA(Y,Z,\bX ) 
        -
        \mu\ETA(Z,\bX \con \widetilde{u}\ETA)
    \big\}
    \\
    +
    (1-A)  \widetilde{v}\ETA(Y,Z,\bX) 
    +
    A
    \mu\ETA(Z,\bX \con \widetilde{v}\ETA)
    \\
    +
    (1-A)  \widetilde{u}\ETA(Y,Z,\bX) 
    +
    A
    \mu\ETA(Z,\bX \con \widetilde{u}\ETA)
    \end{array}
    \!
    \right]
    \!
    \right]
    \!\!
    }{ \EXP\ETA(A) }
    \\
    &
    \stackrel{\eqref{eq-proof-u nu v}}{=}
     \frac{  \!\! \EXP\ETA  \!\! \left[  \! s\ETA(\bO) \!\!
    \left[ 
    \begin{array}{l}
         (1-A) \alpha\ETA(Y,\bX) \beta\ETA(Z,X) 
         \big\{ \mathcal{G} - \mu\ETA(Z,\bX \con \mathcal{G}) \big\} \\
         + A \big\{\mu\ETA(Z,\bX \con \mathcal{G}) - \tau\ETA \big\}
         \\
         +
         (1-A) \alpha\ETA(Y,\bX ) \beta\ETA(Z,\bX)   
    \big\{ 
        \widetilde{\nu}\ETA(Y,Z,\bX ) 
        -
        \mu\ETA(Z,\bX \con \widetilde{\nu}\ETA)
    \big\} 
    \\
    +
    (1-A)  \widetilde{\nu}\ETA(Y,Z,\bX) 
    +
    A
    \mu\ETA(Z,\bX \con \widetilde{\nu}\ETA) 
    \end{array}
    \!
    \right]
    \!
    \right]
    \!\!
    }{ \EXP\ETA(A) }  
    \ .
    \numeq 
\label{eq-proof-pathwise diff end}
\end{align*}
Since $\widetilde{\nu}_{\eta^*} = \nu - \nu_{\eta^*}^\dagger = \omega$ defined in Theorem \ref{thm-IF}, \eqref{eq-proof-pathwise diff end} implies 
\begin{align*}
    & 
    \nabla\ETA \tau\ETA \big|_{\eta = \eta^*}
    =
    \EXP
    \big\{
        s^*(\bO) 
        \InfFt^*(\bO)
    \big\} \ ,
\end{align*}
where $\InfFt^*(\bO)$ is given in \eqref{eq-IF} with $\mathcal{G}^{(0)}$ being used in the definition of $\tau^*$. Therefore, $\tau^*$ is a differentiable parameter \citep{Newey1990}, and $\InfFt^*(\bO)$ serves as its IF. 

Lastly, we find any IF for $\tau^*$ in model $\model$ must be of the form given in \eqref{eq-IF}. From semiparametric efficiency theory, the full collection of influence functions for $\tau^*$ in model $\model$ can be expressed as 
\begin{align*}
    \Big\{ 
    \InfFt^*(\bO)
    +
    t(\bO)
    \, \Big| \,
    t(\bO) \in \mathcal{T}^\perp
    \Big\} \ ,
    \numeq
    \label{eq-proof-entire IF}
\end{align*}
where $\InfFt^*(\bO)$ is an IF for $\tau^*$ in model $\model$ and $\mathcal{T}^\perp$ denotes the orthocomplement of the tangent space $\mathcal{T}$ of model $\model$. Therefore, it remains to characterize $\mathcal{T}^\perp$, which we derive by using the general framework outlined in \citet{RR1997} and \citet{Robins2000_Sensitivity}.

Let $\mathcal{T}^F$ denote the tangent space of the model for the full data $\bO^F$. This space is decomposed as $\mathcal{T}^F = \mathcal{T}_1^F \oplus \mathcal{T}_2^F \oplus \mathcal{T}_3^F$, where $\mathcal{T}_1^F$, $\mathcal{T}_2^F$, and $\mathcal{T}_3^F$ are the tangent space for $(\potY{0},Z,\bX)$, $A \cond (\potY{0},Z,\bX)$, and $\potY{1} \cond (\potY{0},A,Z,\bX)$, respectively. We first provide the closed-form representation of $\mathcal{T}_{j}^F$ for $j \in \{1,2,3\}$. From \HL{IV2}, we have
\begin{align*}
    & \mathcal{T}_1^F
    =
    \left\{
    \!
    \begin{array}{l}
         S(\potY{0} \cond \bX) + S(Z \cond \bX) + S(\bX)
    \end{array}
    \!
    \left| \,
    \begin{array}{l}
        \EXP \{ S(\potY{0} \cond \bX) \cond \bX \}
    \\
    =
    \EXP \{ S(Z \cond \bX) \cond \bX \}
    \\
    =
    \EXP \{ S(\bX) \}
    =
    0
    \end{array}        
    \right\}    
    \right.
    \!
    \cap 
    \mathcal{L}_0^2(\potY{0},Z,\bX) \ .
\end{align*}

We next characterize $\mathcal{T}_2^F$. From \HL{IV4}, the conditional probability $f\ETA(A=1 \cond \potY{0},Z,\bX)$,  at the submodel $P_\eta$ is represented as:
\begin{align*}
    &
    f\ETA(A=1 \cond \potY{0},Z,\bX)
    =
    \frac{ \exp\{ \mathfrak{a}\ETA(\potY{0},\bX) + \mathfrak{b}\ETA(Z,\bX) \} }{ 1+\exp\{ \mathfrak{a}\ETA(\potY{0},\bX) + \mathfrak{b}\ETA(Z,\bX) \} } \ ,
\end{align*}
where $\mathfrak{a}\ETA = \log(\alpha\ETA) $ and $\mathfrak{b}\ETA = \log(\beta\ETA) $. This implies that the corresponding score function is
\begin{align*}
    &
    s\ETA (A \cond \potY{0},Z,\bX)
    =
    \big\{ A - f\ETA(A=1 \cond \potY{0},Z,\bX) \big\}
    \big\{
    \nabla_{\eta} \mathfrak{a}\ETA(\potY{0},\bX)
    +
    \nabla_{\eta} \mathfrak{b}\ETA(Z,\bX)
    \big\} \ .
\end{align*}
Consequently, taking the closed linear span of this set yields the closed-form representation of $\mathcal{T}_2^F$:
\begin{align*}
    \mathcal{T}_2^F
    & 
    =
    \left\{
    \!
        \begin{array}{l}
        \big\{ A - f^*(A=1 \cond \potY{0},Z,\bX) \big\} 
        \\
        \times 
        \big\{ \mathfrak{a}(\potY{0},\bX) + \mathfrak{b}(Z,\bX) \big\} 
        \end{array}
        \!
        \left|
        \, 
        \begin{array}{l}             
        \forall \mathfrak{a} \in \mathcal{L}^2(\potY{0},\bX)
        \\ \text{s.t. } \mathfrak{a}(y_R,\bX)=0 
        \\[0.1cm]
        \forall \mathfrak{b} \in \mathcal{L}^2(Z,\bX) 
        \end{array}
    \!
    \right\} 
    \right. 
    \! \cap  \mathcal{L}_0^2(\potY{0},A,Z,\bX)
    \ .
    \numeq 
    \label{eq-proof-tangent2 full}
\end{align*}

Lastly, $\mathcal{T}_3^F$ can be expressed as
\begin{align*}
    & \mathcal{T}_3^F
    =
    \Big\{ 
        S(\bO^F)
        \, \Big| \, 
        \EXP \{ S(\bO^F) 
        \cond 
        \potY{0},A,Z,\bX
        \} = 0
    \Big\} 
    \cap 
    \mathcal{L}_0^2(\bO^F) \ .
\end{align*}
It is straightforward to show that the orthocomplement of $\mathcal{T}_3^{F}$, denoted by $ \mathcal{T}_3^{F,\perp} $, is equal to
\begin{align*}
    \mathcal{T}_3^{F,\perp} 
    &
    =
    \Big\{ b(\bO^F) \, \Big| \, \EXP \big\{ b(\bO^F) S(\bO^F) \big\} = 0 \ , \ \forall S \in \mathcal{T}_3^F \Big\}  
    \cap \mathcal{L}_0^2(\bO^F)
    \\
    &
    = 
    \Big\{ b(\potY{0},A,Z,\bX) \, \Big| \, \forall b \in \mathcal{L}_0^2(\potY{0},A,Z,\bX) \Big\} 
    \cap \mathcal{L}_0^2(\bO^F) 
    \ .
    \numeq 
    \label{eq-proof-T3Fp}
\end{align*}

By \citet{RR1997}, the observed data tangent space $\mathcal{T}$ can be written as
\begin{align*}
    \mathcal{T} = \text{closure} \big( \mathcal{T}_1 + \mathcal{T}_2 + \mathcal{T}_3 \big)
    \ , \quad 
    \mathcal{T}_j = \text{closure}\big( \mathcal{R}(\varphi \circ \Pi_j) \big) \ .
\end{align*}
Here, $\mathcal{R}(\cdot)$ denotes the range of the operator
$\varphi: \mathcal{L}_0^2(\bO^F) \rightarrow \mathcal{L}_0^2(\bO)$ with $\varphi(f(\bO^F))= \EXP \{ f(\bO^F)\cond \bO \}$, $\Pi_j: \mathcal{L}_0^2(\bO^F) \rightarrow \mathcal{T}_j^F$ for $j \in \{1,2,3\}$ is the Hilbert space projection, and $\text{closure}(\mathcal{S})$ denotes the closed linear span of a set $\mathcal{S}$. We also denote $\bO_{1}$ and $\bO_{0}$ to denote the observed-data components associated with $\potY{1}$ and $\potY{0}$, respectively, i.e., $\bO_{1} = (AY, A,Z,\bX)$ and $\bO_{0} = ((1-A)Y, A,Z,\bX)$.

As shown in \citet{BKRW1998}, the orthocomplement of $\mathcal{T}$ is given by $\mathcal{T}^{\perp}=\mathcal{T}_1^{\perp} \cap \mathcal{T}_2^{\perp} \cap \mathcal{T}_3^{\perp}$. We first characterize $\mathcal{T}_3^\perp$. Following equation (4.6) of \citet{Robins2000_Sensitivity}, we can  establish that 
\begin{align*}
    \mathcal{T}_3^{\perp} 
    & 
    =
    \Big\{ b(\bO) \, \Big| \, b(\bO) \in \mathcal{T}_3^{F,\perp} \Big\}  \cap \mathcal{L}_0^2(\bO)
    \\
    &
    \stackrel{\eqref{eq-proof-T3Fp}}{=}
    \Big\{ b(\bO) \, \Big| \, 
    b(\bO)
    =
    b(\potY{0},A,Z,\bX)
    \Big\}  \cap \mathcal{L}_0^2(\bO)
    \\
    &
    =
    \mathcal{L}_0^2( \bO_{0} ) \ .
\end{align*}
Therefore, we have 
\begin{align*}
    \mathcal{T}^\perp 
    &
    = 
    \big\{ \mathcal{T}_1^\perp \cap \mathcal{L}_0^2(\bO_0) \big\}
    \cap 
    \big\{ \mathcal{T}_2^\perp \cap \mathcal{L}_0^2(\bO_0) \big\}
    \\
    &
    = 
    \underbrace{ \big\{ \mathcal{L}_0^2(\bO_0) \setminus \mathcal{T}_1  \big\} }_{\equiv \mathcal{T}_1^\perp |_0 }
    \cap 
    \underbrace{ \big\{ \mathcal{L}_0^2(\bO_0) \setminus \mathcal{T}_2  \big\} }_{\equiv \mathcal{T}_2^\perp |_0 }
    \\
    &
    =
    \mathcal{T}_1^\perp |_0
    \cap \mathcal{T}_2^\perp |_0
    \ .
    \numeq \label{eq-proof-tangent perp}
\end{align*}

By \citet{BKRW1998} and \citet{TTRR2010}, the orthocomplement of $\mathcal{T}_1^{F}$ within $\mathcal{L}_0^2(\potY{0},Z,\bX)$ is given by 
\begin{align*}
    \mathcal{T}_1^{F, \perp}
    |_{0} 
    &
    \equiv 
    \mathcal{L}_0^2(\potY{0},Z,\bX) \setminus \mathcal{T}_1^{F}
    \\
    &
    =
    \big\{ 
    \widetilde{v}(\potY{0},Z,\bX)
    \cond \forall v \in \mathcal{L}^2(\potY{0},Z,\bX) , \
    \widetilde{v} = v - v^\dagger 
    \big\} \ , 
    \numeq \label{eq-proof-tangent1F}
\end{align*}   
where $v^\dagger = \EXP ( v \cond Z,\bX ) + \EXP ( v \cond \potY{0},\bX ) - \EXP ( v \cond \bX )$.

Propositions A1.3 and A2.3 of \citet{RR1997} established that the orthocomplement of $\mathcal{T}_1$ within $\mathcal{L}_0^2(\bO_0)$ is given by 
\begin{align*}
    \mathcal{T}_1^{\perp}
    |_{0} 
    &
    \equiv
    \mathcal{L}_0^2(\bO_0) \setminus \mathcal{T}_1
    \\
    &
    =
    \Bigg\{ 
    \frac{(1-A) m(\potY{0},Z,\bX)}{f^*(A=0 \cond \potY{0},Z,\bX)}  + T_{car}
    \, \Bigg| \, 
    m(\potY{0},Z,\bX) \in\mathcal{T}_1^{F, \perp}
    |_{0} 
    , \ 
    T_{car} \in \mathcal{T}_{car}
    \Bigg\}
    \ ,
    \\
    \mathcal{T}_{car}
    &
    =
    \Bigg\{ 
    \bigg\{ \frac{ 1-A }{f^*(A=0 \cond \potY{0},Z,\bX)} - 1 \bigg\} k(Z,\bX) \, \Bigg| \, \forall k \in \mathcal{L}_0^2(Z,\bX)
    \Bigg\}
    \ .
\end{align*}
From \eqref{eq-proof-tangent1F}, we have
\begin{align*}
    \mathcal{T}_1^{\perp} |_{0}
    &
    =
    \left\{
    \begin{array}{l}
    \frac{(1-A) \widetilde{v} (\potY{0},Z,\bX) }{f^*(A=0 \cond \potY{0},Z,\bX)} \\
    - 
    \big\{ \frac{ 1-A }{f^*(A=0 \cond \potY{0},Z,\bX)} - 1 \big\} k(Z,\bX)
    \end{array}
    \, \Bigg| \, 
    \begin{array}{l}
    \forall v \in \mathcal{L}^2(\potY{0},Z,\bX): \widetilde{v} = v - v^\dagger
    \\
    \forall k \in \mathcal{L}_0^2(Z,\bX)
    \end{array}
    \right\} \ . 
\end{align*}
In addition, equation (4.6) of \citet{Robins2000_Sensitivity} established that 
\begin{align*}
    & 
    \mathcal{T}_2^{\perp}|_{0}
    \equiv 
    \mathcal{L}_0^2(\bO_0) \setminus \mathcal{T}_2
    =
    \Big\{ b(\bO_0) \, \Big| \, b(\bO_0) \in \mathcal{T}_2^{F,\perp} \Big\}
    \ .
\end{align*} 

By \eqref{eq-proof-tangent perp}, we obtain
\begin{align*}
    \mathcal{T}^{\perp}
    =
    \Big\{
        T_{1} \in \mathcal{T}_1^{\perp} |_{0}
        \, \Big| \, 
        \EXP \big( T_{1} S_2 \big)  = 0 \ , \ S_2 \in \mathcal{T}_{2}^F
    \Big\} \ .
    \numeq
    \label{eq-proof-tagent perp}
\end{align*}
It remains to find a representation of $\mathcal{T}^{\perp}$. 

Consider a generic $T_1 \in \mathcal{T}_1^{\perp} |_{0}$ characterized by $v$ and $k$. In what follows, We will characterize the relationship between $v$ and $k$ used in $T_1$ so that that $T_1$ belongs to $\mathcal{T}^{\perp}$ in \eqref{eq-proof-tagent perp}. In order to do so, we fix an arbitrary  $v \in \mathcal{L}^2(\potY{0},Z,\bX)$ and then determine a specific form of $k$ such that the resulting $T_1 \in \mathcal{T}_1^\perp |_{0}$ becomes orthogonal to every $S_{2} \in \mathcal{T}_2^F$. Since any $S_{2} \in \mathcal{T}_2^F$ is fully characterized by the pair $\mathfrak{a}$ and $\mathfrak{b}$, it suffices to consider the following two cases: (Case 1): $\mathfrak{a}=0$ and $\mathfrak{b}$ is arbitrary; and (Case 2): $\mathfrak{a}$ is arbitrary and $\mathfrak{b}=0$. 

We begin by considering (Case 1). Let $S_{2,\mathfrak{a}=0}$ denote an element of $\mathcal{T}_2^F$ with $\mathfrak{a}=0$ in \eqref{eq-proof-tangent2 full} and arbitrary $\mathfrak{b}$, i.e., 
\begin{align*}
    S_{2,\mathfrak{a}=0}(\potY{0},A,Z,\bX) \equiv
    \big\{ A - f^*(A=1 \cond \potY{0},Z,\bX) \big\} \mathfrak{b}(Z,\bX) \ .
\end{align*}
Then, we find
\begin{align*}
        0 
        &
        =
        \EXP \big( T_1 S_{2,\mathfrak{a}=0}  \big)
        \ , \quad \forall S_{2,\mathfrak{a}=0} \in \mathcal{T}_2^F \text{ with } \mathfrak{a}=0
        \\
        \Leftrightarrow \quad
        0
        &
        =
        \EXP
        \left[ 
        \begin{array}{l}      
        \big[ 
    \frac{(1-A) \widetilde{v}(\potY{0},Z,X) }{f^*(A=0 | \potY{0},Z,\bX)}  
    -
    \big\{ \frac{ 1-A }{f^*(A=0 | \potY{0},Z,\bX)} - 1 \big\} k(Z,\bX)
    \big] 
    \\
    \times 
    \big\{ A - f(A=1 \cond \potY{0},Z,\bX) \big\} 
       \mathfrak{b}(Z,\bX) 
     \end{array}       
        \right]
        \ , \ \forall \mathfrak{b}
        \\
        \Leftrightarrow \quad
        0
        &
        =
        \EXP
        \left[ 
        \begin{array}{l}      
        \big[ 
    \frac{(1-A) \widetilde{v}(\potY{0},Z,X) }{f^*(A=0 | \potY{0},Z,\bX)}  
    -
    \big\{ \frac{ 1-A }{f^*(A=0 | \potY{0},Z,\bX)} - 1 \big\} k(Z,\bX)
    \big] 
    \\
    \times 
    \big\{ A - f^*(A=1 \cond \potY{0},Z,\bX) \big\} 
     \end{array}       
     \, \Bigg| \, Z,\bX
        \right]
        \\
        &
        =
        \EXP
        \left[
        \begin{array}{l}      
    - \widetilde{v}(\potY{0},Z,\bX)
    f^*(A=1 | \potY{0},Z,\bX) \\
    + k(Z,\bX) f^*(A=1 | \potY{0},Z,\bX) 
     \end{array}
        \, \Bigg| \,
        Z,\bX
        \right] 
        \\
        &
        =
        \EXP
        \Big[
    A \big\{ k(Z,\bX) - \widetilde{v}(\potY{0},Z,\bX)
    \big\} \, \Big| \, Z,\bX \Big]
    \\
        &
        =
        f^*(A=1 \cond Z,\bX) 
        \big\{  k(Z,\bX) 
        - \mu(Z,\bX \con \widetilde{v} )
        \big\} \ .
    \end{align*}
    This implies that $k(Z,\bX) = \mu(Z,\bX \con \widetilde{v} )$ yields $\EXP(T_1 S_{2,\mathfrak{a}=0} ) = 0$ for all $S_{2,\mathfrak{a}=0} \in \mathcal{T}_{2}^F$ with $\mathfrak{a}=0$. In other words, any $T_1 \in \mathcal{T}^\perp$ must be of the form:
    \begin{align*}
        T_1 (\potY{0},Z,\bX)
        =
        \frac{(1-A) \widetilde{v} (\potY{0},Z,\bX) }{f^*(A=0 \cond \potY{0},Z,\bX)} 
    - 
    \bigg\{ \frac{ 1-A }{f^*(A=0 \cond \potY{0},Z,\bX)} - 1 \bigg\} \mu(Z,\bX \con \widetilde{v}) \ .
    \end{align*}

    Next, we consider (Case 2). Let $S_{2,\mathfrak{b}=0}$ denote an element of $\mathcal{T}_2^F$ with $\mathfrak{b}=0$ in \eqref{eq-proof-tangent2 full} and arbitrary $\mathfrak{a}$, i.e., 
    \begin{align*}
    S_{2,\mathfrak{b}=0}(\potY{0},A,Z,\bX) \equiv
    \big\{ A - f^*(A=1 \cond \potY{0},Z,\bX) \big\} \mathfrak{a}(\potY{0},\bX) \ .
    \end{align*}
Analogous to the previous derivation, we obtain
\begin{align*}
        0 
        &
        =
        \EXP \big( T_1 S_{2,\mathfrak{b}=0}  \big)
        \ , \quad \forall S_{2,\mathfrak{b}=0} \in \mathcal{T}_2^F \text{ with } \mathfrak{b}=0
        \\
        \Leftrightarrow \quad
        0
        &
        =
        \EXP 
        \left[ 
        \begin{array}{l}      
        \big[ 
    \frac{(1-A) \widetilde{v}(\potY{0},Z,X) }{f^*(A=0 | \potY{0},Z,\bX)}  
    -
    \big\{ \frac{ 1-A }{f^*(A=0 | \potY{0},Z,\bX)} - 1 \big\} k(Z,\bX)
    \big] 
    \\
    \times 
    \big\{ A - f(A=1 \cond \potY{0},Z,\bX) \big\} 
        \mathfrak{a}(\potY{0},\bX) 
     \end{array}       
        \right] \ , \ \forall \mathfrak{a}: \mathfrak{a}(0,\bX) = 0
        \\
        \Leftrightarrow \quad
        0
        &
        =
        \EXP 
        \left[ 
        \begin{array}{l}      
        \big[ 
    \frac{(1-A) \widetilde{v}(\potY{0},Z,X) }{f^*(A=0 | \potY{0},Z,\bX)}  
    -
    \big\{ \frac{ 1-A }{f^*(A=0 | \potY{0},Z,\bX)} - 1 \big\} k(Z,\bX)
    \big] 
    \\
    \times 
    \big\{ A - f(A=1 \cond \potY{0},Z,\bX) \big\} 
        \big\{ \mathfrak{a}(\potY{0},\bX)  + \mathfrak{c} (\bX) \big\} 
     \end{array}       
        \right]  \ , \ \forall \mathfrak{a}: \mathfrak{a}(0,\bX) = 0, \forall \mathfrak{c}
        \\
        \Leftrightarrow \quad
        0
        &
        =
        \EXP 
        \left[ 
        \begin{array}{l}      
        \big[ 
    \frac{(1-A) \widetilde{v}(\potY{0},Z,X) }{f^*(A=0 | \potY{0},Z,\bX)}  
    -
    \big\{ \frac{ 1-A }{f^*(A=0 | \potY{0},Z,\bX)} - 1 \big\} k(Z,\bX)
    \big] 
    \\
    \times 
    \big\{ A - f(A=1 \cond \potY{0},Z,\bX) \big\} 
        \mathfrak{a}'(\potY{0},\bX)
     \end{array}       
        \right]  \ , \ \forall \mathfrak{a}'
        \\
        \Leftrightarrow \quad
        0
        &
        =
        \EXP 
        \left[ 
        \begin{array}{l}      
        \big[ 
    \frac{(1-A) \widetilde{v}(\potY{0},Z,X) }{f^*(A=0 | \potY{0},Z,\bX)}  
    -
    \big\{ \frac{ 1-A }{f^*(A=0 | \potY{0},Z,\bX)} - 1 \big\} k(Z,\bX)
    \big] 
    \\
    \times 
    \big\{ A - f(A=1 \cond \potY{0},Z,\bX) \big\} 
     \end{array}       
        \, \Bigg| \, \potY{0},\bX
        \right]
        \\
        &
        =
        \EXP
        \left[
        \begin{array}{l}      
    - \widetilde{v}(\potY{0},Z,\bX)
    f(A=1 | \potY{0},Z,\bX) \\
    + \mu(Z,\bX \con \widetilde{v})
    f(A=1 | \potY{0},Z,\bX) 
     \end{array}
        \, \Bigg| \,
        \potY{0},\bX
        \right] 
        \\
        &
        =
        \EXP
        \Big[
    A \big\{ \mu(Z,\bX \con \widetilde{v}) - \widetilde{v}(\potY{0},Z,\bX)
    \big\} \, \Big| \, \potY{0},\bX \Big]
    \\
        &
        =
        f(A=1 \cond \potY{0},\bX) 
        \EXP\big\{ \mu(Z,\bX \con \widetilde{v}) - \widetilde{v}(\potY{0},Z,\bX)
        \cond A=1, \potY{0},\bX
    \big\}
        \ .
    \end{align*}
    This implies that $\widetilde{v}$ must satisfy 
    \begin{align*}
        \EXP \big\{ \widetilde{v} - \mu(Z,\bX \con \widetilde{v}) \cond A=1,\potY{0},\bX \big\} = 0
        \ .
        \numeq 
        \label{eq-proof-tangent perp v tilde}
    \end{align*}
    In other words, any $T_1  \in \mathcal{T}^\perp$ must be of the form:
    \begin{align*}
        T_1 (\potY{0},Z,\bX)
        =
        \frac{(1-A) \widetilde{v} (\potY{0},Z,\bX) }{f^*(A=0 \cond \potY{0},Z,\bX)} 
    - 
    \bigg\{ \frac{ 1-A }{f^*(A=0 \cond \potY{0},Z,\bX)} - 1 \bigg\} \mu(Z,\bX \con \widetilde{v}) \ , 
    \end{align*}
    where $\widetilde{v}$ satisfies \eqref{eq-proof-tangent perp v tilde}. Therefore, $\mathcal{T}^\perp$ in \eqref{eq-proof-tagent perp} can be expressed as:
\begin{align*}
    &
    \mathcal{T}^\perp
    \numeq 
    \label{eq-proof-ortho tangent closed}
    \\
    &
    =
     \left\{
    \begin{array}{l}
    \displaystyle{
    \frac{(1-A) \widetilde{v} (\potY{0},Z,\bX) }{f^*(A=0 \cond \potY{0},Z,\bX)}}
    \\
    \displaystyle{
    - 
    \bigg\{ \frac{ 1-A }{f^*(A=0 \cond \potY{0},Z,\bX)} - 1 \bigg\} \mu(Z,\bX \con \widetilde{v})
    }
    \end{array}
    \,
    \left|
    \,
    \begin{array}{l}
        v \in \mathcal{L}^2(\potY{0},Z,\bX) \text{ satisfies \eqref{eq-proof-tangent perp v tilde}}
         \\
         \text{where } 
         \\
         \widetilde{v} = v - v^\dagger \ \text{and} 
         \\
         \mu(Z,\bX \con \widetilde{v}) = \EXP \big\{ \widetilde{v} \cond A=1,Z,\bX \big\}
    \end{array}    
    \right\}
    \right. 
    .
\end{align*}

Now, let $t_v \in \mathcal{T}^\perp$ be any element satisfying \eqref{eq-proof-tangent perp v tilde} corresponding to a function $v$. Then, the set in \eqref{eq-proof-entire IF} is expressed by
  \begin{align*}
  &
  \InfFt^*(\bO)  + t_v(\bO)
  \\
  &
  =
  \frac{ 
   \left[ 
    \begin{array}{l}
    \{ A - (1-A) \alpha^*(Y,\bX) \beta^*(Z,\bX) \}
     \big\{ \mathcal{G} - \mu^*(Z,\bX \con \mathcal{G}) \big\}     
     - A \tau^*
     \\
       +
       (1-A) \alpha^*(Y,\bX) \beta^*(Z,\bX) \big\{ \omega(Y,Z,\bX) - \mu^*(Z,\bX \con \omega) \big\}
  \\
  +
    A
    \mu^*(Z,\bX \con \omega)
    + 
    (1-A) \omega(Y,Z,\bX) 
    \end{array}
    \right]
    }{\EXP(A)}
    \\
  &
  \quad 
  +
  \frac{ 
  \EXP(A)
   \left[ 
    \begin{array}{l}
     (1-A) \alpha^*(Y,\bX) \beta^*(Z,\bX) \big\{ \widetilde{v}(Y,Z,\bX) - \mu^*(Z,\bX \con \widetilde{v}) \big\}
  \\
  +
    A
    \mu^*(Z,\bX \con \widetilde{v})
    + 
    (1-A) \widetilde{v}(Y,Z,\bX) 
    \end{array}
    \right]
    }{\EXP(A)}
    \\
  &
  =
  \frac{ 
   \left[ 
    \begin{array}{l}
    \{ A - (1-A) \alpha^*(Y,\bX) \beta^*(Z,\bX) \}
     \big\{ \mathcal{G} - \mu^*(Z,\bX \con \mathcal{G}) \big\}     
     - A \tau^*
     \\
       +
       (1-A) \alpha^*(Y,\bX) \beta^*(Z,\bX) \big\{ \omega'(Y,Z,\bX) - \mu^*(Z,\bX \con \omega') \big\}
  \\
  +
    A
    \mu^*(Z,\bX \con \omega')
    + 
    (1-A) \omega'(Y,Z,\bX) 
    \end{array}
    \right]
    }{\EXP(A)}
    \ ,
\end{align*}
where $\omega' = \omega + \EXP(A) \widetilde{v}$. Since $\omega'$ satisfies conditions \HL{$\omega$-i} and \HL{$\omega$-ii}, each element of \eqref{eq-proof-entire IF} indeed has the form in \eqref{eq-IF}.

This completes the proof.

\subsection{Proof of Theorem \ref{thm-IF binary Z}} 
\label{sec-proof-IF binary Z}

In the proof, we show a more general result by characterizing the EIF for $ \tau^* \equiv \tau_1^* - \tau_0^*$ where $\tau_{a}^* \equiv  \EXP \{ \mathcal{G}(\potY{a},\bX)  \cond A=1 \}$; here, $\mathcal{G}(\cdot)$ is a fixed, uniformly bounded function.  For convenience, we denote $\mathcal{G}^{(a)} = \mathcal{G}(\potY{a},\bX)$ and $\mathcal{G} = \mathcal{G}(Y,\bX)$.

We use the same notation as in Section \ref{sec-proof-IF}. We also introduce the following shorthand notation for the nuisance functions:
\begin{align*}
    & 
    p_z = f^*(Z=z \cond \bX)
    \ ,
    \\
    &
    \mathfrak{p}_{z}(y) = f^*(Z=z \cond A=1,\potY{0}=y,\bX)
    \ , 
    \\
    &
    \alpha(y) = \alpha^*(y,\bX)
    \ , 
    \\
    &
    \beta_z = \beta^*(z,\bX) 
    \ ,
    \\
    &
    \mu_z = \EXP \{ \mathcal{G}^{(0)} \cond A=1,Z=z,\bX \} \ ,
    \\
    &
    \mu_z (v) = \EXP \{ v(\potY{0},Z,\bX) \cond A=1,Z=z,\bX \} \ , \quad \forall v \in \mathcal{L}^2(\potY{0},Z,\bX) \ .
\end{align*}

By \eqref{eq-proof-Y0AZ}, we can obtain an alternative representation of $\mathfrak{p}_z$:
\begin{align*}
    \mathfrak{p}_{1}(\potY{0})   
    & 
    =  f^*(Z=1 \cond A=1, \potY{0},\bX)
    \\
    & =
    \frac{ 
    \frac{\alpha(\potY{0}) \beta_1}{ 1 + \alpha(\potY{0}) \beta_1 }
    f^*(\potY{0} \cond \bX) p_1
    }{
    \sum_{z=0}^{1}
    \frac{\alpha(\potY{0}) \beta_z}{ 1 + \alpha(\potY{0}) \beta_z }
    f^*(\potY{0} \cond \bX) p_z
    }
    \\  
    &
    =    
    \bigg[
    \frac{ 
    \beta_1
    p_1
    }{ 1 + \alpha(\potY{0}) \beta_1 } 
    +
    \frac{ 
    \beta_0
    p_0
    }{ 1 + \alpha(\potY{0}) \beta_0 } 
    \bigg]^{-1}
    \bigg[
    \frac{ 
    \beta_1
    p_1
    }{ 1 + \alpha(\potY{0}) \beta_1 } 
    \bigg]
    \\
    &
    =
    \frac{
    \big\{ 1 + \alpha(\potY{0}) \beta_0 \big\} \beta_1
    p_1
    }{
    \big\{ 1 + \alpha(\potY{0}) \beta_0 \big\} \beta_1
    p_1
    +
    \big\{ 1 + \alpha(\potY{0}) \beta_1 \big\} \beta_0
    p_0
    }
    \\
    &
    =
    \frac{
    \big\{ 1 + \alpha(\potY{0}) \beta_0 \big\} \beta_1
    p_1
    }{
    \beta_1
    p_1
    +
    \beta_0
    p_0
    +
    \alpha(\potY{0})
    \beta_0
    \beta_1
    }
    \ , 
    \numeq \label{eq-proof-frakp1}
\end{align*}
and
\begin{align*}
    & \mathfrak{p}_0(\potY{0})  
    =
    \frac{
    \big\{ 1 + \alpha(\potY{0}) \beta_1 \big\} \beta_0
    p_0
    }{
    \beta_1
    p_1
    +
    \beta_0
    p_0
    +
    \alpha(\potY{0})
    \beta_0
    \beta_1
    }
    \ .
\end{align*}

From the proof of Lemma \ref{lemma-unique integral equation solution}, specifically the derivation in \eqref{eq-proof-vtilde L representation}, it follows that $\omega = \widetilde{\nu} = \nu - \nu^\dagger$ admits the representation:
\begin{align*}
    \omega (\potY{0},Z,\bX) 
    &
    = 
    \big[ L(\potY{0},\bX) - \EXP \big\{ L(\potY{0},\bX) \cond \bX \big\} \big]
    \big\{ Z - \EXP \big( Z \cond \bX \big) \big\} \ , 
    \numeq \label{eq-proof-omega L representation}
\end{align*}
for a function $L(\potY{0}, \bX) \in \mathcal{L}^2(\potY{0},\bX)$. 

Next, we show that there exists a unique $L^*$ satisfying \HL{$\omega$-ii}. Equation \eqref{eq-proof-omega L representation} implies 
\begin{align*}
    \mu_Z(\omega)
    &
    =
    \EXP \big\{ \omega(\potY{0},Z,\bX) \cond A=1,Z,\bX \big\}
    \\
    &
    = 
    \big\{ Z - \EXP \big( Z \cond \bX \big) \big\}
    \big[ \mu_Z(L) - \EXP\{ L(\potY{0},\bX) \cond \bX\} \big] \ .
\end{align*}
and
\begin{align*}
    &
    A
    \big\{
    \omega (\potY{0},Z,\bX)
    - 
    \mu_Z (\omega)
    \big\}
    =
    A 
    \big\{ Z - \EXP \big( Z \cond \bX \big) \big\}
    \big\{ L(\potY{0},\bX) - \mu_Z(L) \big\}  \ .
\end{align*}

Consequently, we obtain 
\begin{align*}
    &
    \EXP \big\{
    \omega(\potY{0},Z,\bX)
    - 
    \mu_Z(\omega)  \cond A=1,\potY{0},\bX \big\}
    \\
    &
    = 
	\left[ 
		\begin{array}{l}
			- f^*(Z=0 \cond A=1,\potY{0},\bX)
			f^*(Z=1 \cond \bX)
			\big\{ L - \mu_0(L) \big\}
			\\
			+ f^*(Z=1 \cond A=1,\potY{0},\bX)
			f^*(Z=0 \cond \bX)
			\big\{ L - \mu_1(L) \big\}
		\end{array}
	\right]     
    \\
	&
	= 
    \left[ 
    	\begin{array}{l}
        \big\{ \mathfrak{p}_1(\potY{0})
        p_0
        -
        \mathfrak{p}_0(\potY{0})
        p_1 \big\} L
			\\
		  -
			\big\{
            \begin{array}{l}
            \mathfrak{p}_1(\potY{0})
			p_0
			\mu_1(L)
            - 
            \mathfrak{p}_0(\potY{0})
			p_1
			\mu_0(L)
            \end{array}
            \big\}
    	\end{array}
    \right] 
    \\
    &= 
    \left[ 
    	\begin{array}{l}
 	\mathfrak{p}_1(\potY{0}) L - p_1 L  - p_1 \mu_0(L)
 	\\
    +
 	\mathfrak{p}_1(\potY{0})
    \big\{ p_1 \mu_0(L)
    -
    p_0 \mu_1(L)
    \big\}
    	\end{array}
    \right]
    \ .
\end{align*} 
Likewise, we have  
\begin{align*}
    &
    \EXP \big\{
    \mathcal{G}^{(0)}
    - \mu_Z \cond A=1,\potY{0},\bX \big\}
    \\
    &
    = 
    \mathcal{G}^{(0)}
    - \mathfrak{p}_1(\potY{0}) \mu_1
    - \mathfrak{p}_0(\potY{0}) \mu_0
    \\
    &
    = 
    \mathcal{G}^{(0)}
    -
    \mu_0
    + \mathfrak{p}_1(\potY{0})
    \big( \mu_0 - \mu_1  \big) \ .
\end{align*} 
Therefore, $\omega$ satisfies \HL{$\omega$-ii} if and only if $L$ satisfies the following condition:
\begin{align*}
&
\left[ 
\begin{array}{l}
     \mathcal{G}^{(0)} - \mu_0 \\
    + \mathfrak{p}_1(\potY{0}) \big( \mu_0 - \mu_1  \big)
\end{array} 
    \right] 
    =
        \left[ 
    	\begin{array}{l}
 	\mathfrak{p}_1(\potY{0}) L - p_1 L - p_1 
    \mu_0(L)
 	\\
 	+
 	\mathfrak{p}(\potY{0})
 	\big\{  p_1 
    \mu_0(L)
        - p_0 
        \mu_1(L)
 	\big\}
    	\end{array}
    \right] \ .
    \numeq 
    \label{eq-proof-L equation}
\end{align*}
As mentioned in Section \ref{sec-supp-ClosedFormSolution}, \eqref{eq-proof-L equation} can be expressed as a Fredholm integral equation of the second kind with a separable kernel. Moreover, it becomes a just-identified equation when $Z$ is binary, therefore admits a unique solution. 

%

In order to find the actual closed-form representation, we posit the following form for $L$, where $B \equiv B(\bX)$ and $C \equiv C(\bX)$:
\begin{align*}
    &
    L(\potY{0},\bX) 
    = \frac{
    \mathcal{G}^{(0)}
    + \mathfrak{p}_1(\potY{0}) B + C}{\mathfrak{p}_1(\potY{0}) - p_1}  \ .
    \numeq
    \label{eq-proof-closed form L} 
\end{align*}
Under this specification, \eqref{eq-proof-L equation} is expressed as
\begin{align*}
&
\left[ 
\begin{array}{l}
     \mathcal{G}^{(0)} - \mu_0 \\
    + \mathfrak{p}_1(\potY{0}) \big( \mu_0 - \mu_1  \big)
\end{array} 
    \right]
    =
        \left[ 
    	\begin{array}{l}
 	\mathcal{G}^{(0)} - p_1 \mu_0(L) + C
 	\\
 	+
 	\mathfrak{p}_1(\potY{0})
 	\big\{ B + p_1 \mu_0(L)
        - p_0 \mu_1(L)
 	\big\}
    	\end{array}
    \right] \ .
    \numeq 
    \label{eq-proof-L equation 2}
\end{align*} 
By comparing the functional forms on both sides of \eqref{eq-proof-L equation 2}, we obtain
\begin{align*}
    &
    \left\{
    \begin{array}{l}         
    p_1 \mu_0(L) - C
    -
    \mu_0
    = 0
     \\     
    B + p_1 \mu_0(L) - p_0 \mu_1(L) 
    = \mu_0 - \mu_1
    \end{array}
    \right.
     \numeq \label{eq-required identity1} 
    \\
    \Rightarrow \quad
    &
    B + C + \mu_0 - p_0 \mu_1(L)
    =
    \mu_0 - \mu_1
    \\
    \Rightarrow \quad
    &
    p_0 \mu_1(L) - B - C - \mu_1 = 0 \ .
    \numeq \label{eq-required identity2}
\end{align*} 
Consequently, it suffices to find $B$ and $C$ that satisfy \eqref{eq-required identity1} and \eqref{eq-required identity2}.

We first find the expression of $\mathfrak{p}_1(\potY{0}) - p_1$: 
\begin{align*}
    & 
    \mathfrak{p}_1(\potY{0}) - p_1
    \\
    &
    =
    f^*(Z=1 | A=1,\potY{0},\bX) - f^*(Z=1 | \bX)
    \\
    &
    =
    \frac{ f^*(\potY{0},A=1,Z=1\cond \bX) -  f^*(\potY{0},A=1\cond \bX) f^*(Z=1 \cond \bX) }
    { f^*(\potY{0},A=1\cond \bX) }
    \\
    &
    =
    \frac{ f^*(\potY{0},A=1,Z=1\cond \bX) f^*(Z=0 \cond \bX) -  f^*(\potY{0},A=1,Z=0\cond \bX) f^*(Z=1 \cond \bX) }
    { f^*(\potY{0},A=1\cond \bX) }
    \\
    &
    =
    \frac{
    \Big\{
    \frac{ \alpha(\potY{0}) \beta_1 }{1+\alpha(\potY{0}) \beta_1}
    -
    \frac{ \alpha(\potY{0}) \beta_0 }{1+\alpha(\potY{0}) \beta_0}
    \Big\}
    f^*(\potY{0} \cond \bX ) 
    p_1 p_0 
    }
    {
    f^*(\potY{0},A=1 \cond \bX)
    }
    \\
    &
    =
    \frac{\alpha(\potY{0})
    ( \beta_1 - \beta_0 ) }{\{ 1+\alpha(\potY{0}) \beta_1\}\{1+\alpha(\potY{0}) \beta_0\}}
    \frac{ 
    f^*(\potY{0} \cond \bX ) 
    p_1 p_0
    }
    {
    f^*(\potY{0},A=1 \cond \bX)
    }
    \\
    &
    \stackrel{ \eqref{eq-proof-Y0AZ} }{=}
    \frac{ 
    \displaystyle{\frac{\alpha(\potY{0})
    ( \beta_1 - \beta_0 ) }{\{ 1+\alpha(\potY{0}) \beta_1\}\{1+\alpha(\potY{0}) \beta_0\}}
    f^*(\potY{0} \cond \bX ) 
    p_1 p_0 }
    }{
    \displaystyle{
    \sum_{z=0}^{1}
     \frac{\alpha(\potY{0}) \beta_z}{1+\alpha(\potY{0}) \beta_z} f^*(\potY{0} \cond \bX) p_z
     }
    }   
    \\
    &
    =
    \frac{ 
    \displaystyle{    
    \frac{\alpha(\potY{0}) ( \beta_1 - \beta_0 ) }{\{ 1+\alpha(\potY{0}) \beta_1\}\{1+\alpha(\potY{0}) \beta_0 \}} 
    p_1 p_0
    }
    }{
    \displaystyle{
    \frac{\alpha(\potY{0}) \beta_1
    \{ 1+\alpha(\potY{0}) \beta_0 \} 
    p_1
    +
    \alpha(\potY{0}) \beta_0 \{ 1+\alpha(\potY{0}) \beta_1 \} 
    p_0
    }{\{ 1+\alpha(\potY{0}) \beta_1\}     
    \{ 1+\alpha(\potY{0}) \beta_0 \}
    }
    }}
    \\
    &
    =
    \frac{
    ( \beta_1 - \beta_0 )
    p_1
    p_0
    }{
    \beta_1
    \{ 1+ \alpha(\potY{0}) \beta_0 \} 
    p_1
    +
    \beta_0 \{ 1+\alpha(\potY{0}) 
    \beta_1 \} 
    p_0
    }
    \\
    &
    =
    \frac{
    ( \beta_1 - \beta_0 )
    p_1
    p_0
    }{
    \beta_1 p_1
    +
    \beta_0 p_0
    +
    \alpha(\potY{0}) 
    \beta_0
    \beta_1 
    } \ .
    \numeq
    \label{eq-proof-frakp1 minus p1}
\end{align*}
Combining \eqref{eq-proof-frakp1} and \eqref{eq-proof-frakp1 minus p1}, we obtain
\begin{align*}
    \frac{ \mathfrak{p}_1(\potY{0}) }{
    \mathfrak{p}_1(\potY{0}) - p_1
    }
    =
    \frac{ \big\{ 1 + \alpha(\potY{0}) \beta_0 \big\} \beta_1
    p_1 }{ (\beta_1 - \beta_0)
    p_1 p_0 }
    \ .
    \numeq
    \label{eq-proof-closed form L piece 1}
\end{align*}
By substituting \eqref{eq-proof-closed form L piece 1} into \eqref{eq-proof-closed form L}, we find that
\begin{align*}
&
L(\potY{0},\bX) 
\\
&
= \frac{
\mathcal{G}^{(0)}
+ \mathfrak{p}_1(\potY{0}) B + C}{\mathfrak{p}_1(\potY{0}) - p_1}  
\\
&
=
\frac{
\beta_1 p_1
+
\beta_0 p_0
+
\alpha(\potY{0}) 
\beta_0
\beta_1 
}{
( \beta_1 - \beta_0 )
p_1
p_0
}
\big\{ \mathcal{G}^{(0)}
+ C
\big\} 
+
\frac{ \big\{ 1 + \alpha(\potY{0}) \beta_0 \big\} \beta_1
p_1 }{ (\beta_1 - \beta_0)
p_1 p_0 }
B
\\
&
=
\frac{ 
\left[ 
\begin{array}{lll}
\{ \sum_{z=0}^{1} \beta_z p_z \}
\mathcal{G}^{(0)}
&
+ 
&
\beta_0 \beta_1 \mathcal{G}^{(0)}
\alpha(\potY{0}) 
\\[0.0cm]
+ \{ \sum_{z=0}^{1} \beta_z p_z \} C
&
+
&
\beta_0 \beta_1 
\alpha(\potY{0}) C
\\[0.0cm]
+
\beta_1 p_1 B 
&
+ 
&
\beta_0 \beta_1
p_1 \alpha(\potY{0}) B
\end{array}
\right]
}{ (\beta_1 - \beta_0)
p_1 p_0 } \ .
\end{align*}
Consequently, we obtain
\begin{align*}
\mu_z(L)
&
=
\frac{ 
\left[ 
\begin{array}{lll}
\{ \sum_{z=0}^{1} \beta_z p_z \}
\mu_z
&
+ 
&
\beta_0 \beta_1 
\mu_z(\alpha \mathcal{G})
\\[0.0cm]
+ \{ \sum_{z=0}^{1} \beta_z p_z \} C
&
+
&
\beta_0 \beta_1 C
\mu_z(\alpha)
\\[0.0cm]
+
\beta_1 p_1 B 
&
+ 
&
\beta_0 \beta_1
p_1 B
\mu_z(\alpha)
\end{array}
\right]
}{ (\beta_1 - \beta_0)
p_1 p_0 } \ .
\numeq 
\label{eq-proof-muL}
\end{align*} 

By \eqref{eq-required identity1}  and \eqref{eq-proof-muL}, it follows that
\begin{align*}
    0
    &
    =
    p_1 \mu_0(L) - C
    -
    \mu_0
    \\ 
    &
    =
    \frac{ 
\left[ 
\begin{array}{lll}
\{ \sum_{z=0}^{1} \beta_z p_z \}
\mu_0
&
+ 
&
\beta_0 \beta_1 
\mu_0(\alpha \mathcal{G})
\\[0.0cm]
+ \{ \sum_{z=0}^{1} \beta_z p_z \} C
&
+
&
\beta_0 \beta_1 C
\mu_0(\alpha)
\\[0.0cm]
+
\beta_1 p_1 B 
&
+ 
&
\beta_0 \beta_1
p_1 B
\mu_0(\alpha)
\end{array}
\right]
-
(\beta_1 - \beta_0)
p_0 \big( C + \mu_0 \big) 
}{ (\beta_1 - \beta_0)
p_0 }  
\\ 
    &
    =
    \frac{ 
\left[ 
\begin{array}{lll}
\{ 2 \beta_0 p_0 + \beta_1 p_1 - \beta_1 p_0 \} 
\mu_0
&
+ 
&
\beta_0 \beta_1 
\mu_0(\alpha \mathcal{G})
\\[0.0cm]
+ \{ 2 \beta_0 p_0 + \beta_1 p_1 - \beta_1 p_0\} C
&
+
&
\beta_0 \beta_1 C
\mu_0(\alpha)
\\[0.0cm]
+
\beta_1 p_1 B 
&
+ 
&
\beta_0 \beta_1
p_1 B
\mu_0(\alpha)
\end{array}
\right] 
}{ (\beta_1 - \beta_0)
p_0 }   \ .
\end{align*}
This implies
\begin{align*}
    &
    B \cdot \{ \beta_1 p_1 + \beta_0 \beta_1 p_1 \mu_0(\alpha) \}
    +
    C \cdot \{ \beta_1 p_1  + 2 \beta_0 p_0 - \beta_1 p_0 + \beta_0 \beta_1 \mu_0 (\alpha) \}
    \\
    &
    =
    -
    \{ \beta_1 p_1 \mu_0  + 2 \beta_0 p_0 \mu_0 - \beta_1 p_0 \mu_0 + \beta_0 \beta_1 \mu_0(\alpha \mathcal{G}) 
    \} 
    \ .
    \numeq 
    \label{eq-proof-linear system 1}
\end{align*}

By \eqref{eq-required identity1}  and \eqref{eq-proof-muL}, we obtain
\begin{align*}
    0
    &
    =
    p_0 \mu_1(L) - B - C - \mu_1
    \\
    &
    =
    \frac{ 
\left[ 
\begin{array}{lll}
\{ \sum_{z=0}^{1} \beta_z p_z \}
\mu_1
&
+ 
&
\beta_0 \beta_1 
\mu_1(\alpha \mathcal{G})
\\[0.0cm]
+ \{ \sum_{z=0}^{1} \beta_z p_z \} C
&
+
&
\beta_0 \beta_1 C
\mu_1(\alpha)
\\[0.0cm]
+
\beta_1 p_1 B 
&
+ 
&
\beta_0 \beta_1
p_1 B
\mu_1(\alpha)
\end{array}
\right]
-
(\beta_1 - \beta_0)
p_1 \big( B + C + \mu_1 \big) 
}{ (\beta_1 - \beta_0)
p_1 }
\\
    &
    =
    \frac{ 
\left[ 
\begin{array}{lll}
\{ \beta_0 p_0 + \beta_0 p_1 \}
\mu_1
&
+ 
&
\beta_0 \beta_1 
\mu_1(\alpha \mathcal{G})
\\[0.0cm]
+ 
\{ \beta_0 p_0 + \beta_0 p_1 \}
C
&
+
&
\beta_0 \beta_1 C
\mu_1(\alpha)
\\[0.0cm]
+
\beta_0 p_1 B 
&
+ 
&
\beta_0 \beta_1
p_1 B
\mu_1(\alpha)
\end{array}
\right] 
}{ (\beta_1 - \beta_0)
p_1 } 
\\
    &
    =
    \beta_0
    \frac{ 
\left[ 
\begin{array}{lll}
\mu_1
&
+ 
&
\beta_1 
\mu_1(\alpha \mathcal{G})
\\[0.0cm]
+ 
C
&
+
&
\beta_1 C
\mu_1(\alpha)
\\[0.0cm]
+
p_1 B 
&
+ 
&
\beta_1
p_1 B
\mu_1(\alpha)
\end{array}
\right] 
}{ (\beta_1 - \beta_0)
p_1 } 
\ ,
\end{align*}
which implies
\begin{align*}
    &
    B \cdot \{
    p_1 + \beta_1 p_1 \mu_1(\alpha) 
    \}
    + 
    C \cdot 
    \{ 1  + \beta_1 \mu_1(\alpha)  \}
    =
    - \{ \mu_1 + \beta_1 \mu_1(\alpha \mathcal{G}) \} \ .
    \numeq 
    \label{eq-proof-linear system 2}
\end{align*}

Results \eqref{eq-proof-linear system 1} and \eqref{eq-proof-linear system 2} reduce to the following linear system:
\begin{align*}
    \mathcal{A} \begin{bmatrix}
        B \\ C
    \end{bmatrix}
    =
    \mathcal{B}
    \numeq
    \label{eq-proof-linear system full}
\end{align*}
where
\begin{align*} 
    \mathcal{A}
    &
    =
    \begin{bmatrix}
         \beta_1 p_1 + \beta_0 \beta_1 p_1 \mu_0(\alpha)
         &
         \phantom{space}
         &
         \beta_1 p_1  + 2 \beta_0 p_0 - \beta_1 p_0 + \beta_0 \beta_1 \mu_0 (\alpha)
        \\
         p_1 + \beta_1 p_1 \mu_1(\alpha) 
         &
         \phantom{space}
         &
         1  + \beta_1 \mu_1(\alpha)
    \end{bmatrix}
    \in \R^{2 \times 2}
    \ ,
    \\
    \mathcal{B}
    &
    =
    -
    \begin{bmatrix}
        \beta_1 p_1 \mu_0  + 2 \beta_0 p_0 \mu_0 - \beta_1 p_0 \mu_0 + \beta_0 \beta_1 \mu_0(\alpha \mathcal{G}) 
    \\
        \mu_1 + \beta_1 \mu_1(\alpha \mathcal{G})
    \end{bmatrix}
    \in \R^{2 \times 1} \ .
\end{align*}
Note that $\mathcal{A}$ is invertible as follows:
\begin{align*}
    \text{det} (\mathcal{A})
    & =    
    p_1
    \left[ 
        \begin{array}{l}
             \big\{ \beta_1 + \beta_0 \beta_1 \mu_0(\alpha) \big\}
             \big\{ 1  + \beta_1 \mu_1(\alpha) \big\}
             \\
             -
             \big\{
             \beta_1 p_1  + 2 \beta_0 p_0 - \beta_1 p_0 + \beta_0 \beta_1 \mu_0 (\alpha)
             \big\}
             \big\{ 1 + \beta_1 \mu_1(\alpha) \big\}
        \end{array}
    \right]
    \\
    & =    
    p_1
    \big\{ 1 + \beta_1 \mu_1(\alpha) \big\}
    \left[ 
        \begin{array}{l}
             \beta_1 + \beta_0 \beta_1 \mu_0(\alpha)
             \\
             -
             \big\{
             \beta_1 p_1  + 2 \beta_0 p_0 - \beta_1 p_0 + \beta_0 \beta_1 \mu_0 (\alpha)
             \big\}             
        \end{array}
    \right]
    \\
    & =    
    p_1
    \big\{ 1 + \beta_1 \mu_1(\alpha) \big\}
    \big( \beta_1 - \beta_1 p_1  - 2 \beta_0 p_0 + \beta_1 p_0 \big)
    \\
    & =    
    2 p_1 p_0
    \big\{ 1 + \beta_1 \mu_1(\alpha) \big\}
    \big( \beta_1 - \beta_0 \big)
    \\
    &
    \neq 0 \ .
\end{align*}
Of note, $\beta_1 \neq \beta_0$ is guaranteed from \HL{IV3} and Lemma \ref{lemma-beta and gamma}. Therefore, the closed-form expression of $(B,C)$  is given by solving \eqref{eq-proof-linear system full}:
\begin{align*}
    \begin{bmatrix}
        B \\ C
    \end{bmatrix}
    &
    =
    \mathcal{A}^{-1}
    \mathcal{B} 
    \numeq \label{eq-proof-BC}
    \\
    &
    =    
    -
    \frac{
    \displaystyle{ 
    \begin{bmatrix}
         1  + \beta_1 \mu_1(\alpha)
         &
         \phantom{s}
         &
         -\{
         \beta_1 p_1  + 2 \beta_0 p_0 - \beta_1 p_0 + \beta_0 \beta_1 \mu_0 (\alpha)
         \}
        \\
         -\{p_1 + \beta_1 p_1 \mu_1(\alpha) \}
         &
         \phantom{s}
         &
         \beta_1 p_1 + \beta_0 \beta_1 p_1 \mu_0(\alpha)
    \end{bmatrix} 
    }
    }{ 2 p_1 p_0
    \big\{ 1 + \beta_1 \mu_1(\alpha) \big\}
    \big( \beta_1 - \beta_0 \big) }
    \\
    &
    \hspace*{2cm} \times 
    \begin{bmatrix}
        \beta_1 p_1 \mu_0  + 2 \beta_0 p_0 \mu_0 - \beta_1 p_0 \mu_0 + \beta_0 \beta_1 \mu_0(\alpha \mathcal{G}) 
    \\
        \mu_1 + \beta_1 \mu_1(\alpha \mathcal{G})
    \end{bmatrix} \ .
\end{align*}
Consequently, $\omega^*(\potY{0},Z,\bX)$ satisfying \HL{$\omega$-i} and \HL{$\omega$-ii} are expressed by
\begin{align*}
    \omega^*(\potY{0},Z,\bX)
    =
    \big[ L^*(\potY{0},\bX) - \EXP \big\{ L^*(\potY{0},\bX) \cond \bX \big\} \big]
    \big\{ Z - \EXP \big( Z \cond \bX \big) \big\}
    \numeq 
    \label{eq-proof-omega star}
\end{align*}    
where
\begin{align*}
    L^*(\potY{0},\bX) =  \frac{\potY{0} + \EXP\{ Z \cond A=1,\potY{0},\bX\} B^* (\bX) + C^* (\bX)}{ \EXP\{ Z \cond A=1,\potY{0}, \bX \} - \EXP (Z \cond \bX)}  \ ,
\end{align*} 
with $B^*$ and $C^*$ defined in \eqref{eq-proof-BC}. The corresponding IF for $\tau^*$ in model $\model$ is given by
  \begin{align*}
  \InfFt^*(\bO)
  =
  \frac{ 
   \left[ 
    \begin{array}{l}
    \{ A - (1-A) \alpha^*(Y,\bX) \beta^*(Z,\bX) \}
     \big\{ \mathcal{G} - \mu^*(Z,\bX \con \mathcal{G}) \big\}     
     - A \tau^*
     \\
       +
       (1-A) \alpha^*(Y,\bX) \beta^*(Z,\bX) \big\{ \omega^*(Y,Z,\bX) - \mu^*(Z,\bX \con \omega^*) \big\}
  \\
  +
    A
    \mu^*(Z,\bX \con \omega^*)
    + 
    (1-A) \omega^*(Y,Z,\bX) 
    \end{array}
    \right]
    }{\EXP(A)}
     \ .
     \numeq 
     \label{eq-proof-IF omega binary}
\end{align*}

Next, we show that $\omega^*$ in \eqref{eq-proof-omega star} is the unique function that satisfies \HL{$\omega$-i} and \HL{$\omega$-ii}, i.e.,
\begin{align*}
    &
    \EXP \big\{ \mathcal{G}^{(0)}- \mu^*(Z,\bX \con \mathcal{G}) \cond A=1, \potY{0},\bX \big\}
    \\
    &
    =
    \EXP \big\{ \omega^* (\potY{0},Z,\bX) - \mu^*(Z,\bX \con \omega^*) \cond A=1,\potY{0},\bX \big\} \ .
\end{align*}
Such an $\omega^*$ is unique provided the following condition holds for $\widetilde{v} = v - v^\dagger$ with $v \in \mathcal{L}^2(\potY{0},Z,\bX)$:
\begin{align*}
    &
    \EXP \{ \widetilde{v}(\potY{0},Z,\bX) - \mu^* (Z,\bX \con \widetilde{v}) \cond A=1, \potY{0},\bX\}
    =
    0
    \quad \text{implies} \quad
    \widetilde{v}(\potY{0},Z,\bX) =  0 \ .
\end{align*}
Lemma \ref{lemma-unique integral equation solution} ensures that this condition holds, thereby establishing the uniqueness of $\omega^*$ in \eqref{eq-proof-omega star}. Since the IF for $\tau^*$ is unique, it is the EIF for $\tau^*$ in model $\model$.

The fact that the IF in \eqref{eq-proof-IF omega binary} is the EIF for $\tau^*$ in model $\model$ can be additionally justified as follows. It suffices to verify that the IF in \eqref{eq-proof-IF omega binary} lies in the tangent space $\mathcal{T}$ of model $\model$. Under binary $Z$, Lemma \ref{lemma-unique integral equation solution} implies that the unique solution to \eqref{eq-proof-tangent perp v tilde} is $\widetilde{v} = 0$. Therefore, the orthocomplement to $\mathcal{T}$, denoted by $\mathcal{T}^\perp$ in \eqref{eq-proof-ortho tangent closed}, only contains 0, i.e., $\mathcal{T}^\perp = \{0 \}$, meaning that $\mathcal{T} = \mathcal{L}_0^2(\bO)$. Therefore, the IF in \eqref{eq-proof-IF omega binary} belongs to $\mathcal{T}$, meaning that it is the EIF for $\tau^*$ in model $\model$. 

This concludes the proof.

\subsection{Proof of Theorem \ref{thm-AN}} \label{sec-proof-AN}

In the proof, we show a more general result by considering an estimator for $ \tau^* \equiv \EXP \{ \mathcal{G}(\potY{1},\bX) - \mathcal{G}(\potY{0},\bX)  \cond A=1 \}$, where $\mathcal{G}(\cdot)$ is a fixed, uniformly bounded function.  For convenience, we denote $\mathcal{G}^{(a)} = \mathcal{G}(\potY{a},\bX)$ and $\mathcal{G} = \mathcal{G}(Y,\bX)$. 

We use the following shorthand notation:
\begin{align*}
    &
    \pi^*(\potY{0},Z,\bX)
    =
    \alpha^*(\potY{0},\bX) 
    \beta^*(Z,\bX)
     \ ,
    \\
    &
    \pihat(\potY{0},Z,\bX)
    =
    \alphahat(\potY{0},\bX) 
    \betahat(Z,\bX)
     \ ,
    \\
    &
    \EXPhat\{ g(V) \cond W \}
    =
    \int g(v) \widehat{f}\LSS(V=v \cond W) \, dv \ ,
    \\
    &
    \EXPk \{ \widehat{g}\LSS(V) \cond W \}
    =
    \int \widehat{g}\LSS(v) \ftrue{}(V=v \cond W) \, dv \ ,
    \\
    &
    N_{k} = | \SSk | \ ,
    \\
        &
    \AVERk (V) = \frac{1}{N_k} \sum_{i \in \SSk} V_i
    \ , 
    \\
    &
    \AVER (V) = \frac{1}{N} \sum_{i=1}^N V_i
    \ ,
    \\
    &
    \EMPk (V) = \frac{1}{\sqrt{N_k}} \sum_{i \in \SSk} \big\{ V_i - \EXPk(V) \big\} \ .
\end{align*}

From Sections \ref{sec-proof-IF} and \ref{sec-proof-IF binary Z}, the EIF for $\tau^*$ in model $\model$ is given by 
\begin{align*}
    \InfFt^*(\bO) = \frac{ \phi^*(\bO) - A \tau^* }{\EXP(A)} \ ,
\end{align*}
where
\begin{align*}
   \phi^*(\bO) 
    =  
   \left[ 
    \begin{array}{l}
    \{ A - (1-A) \pi^*(Y,Z,\bX) \}
     \big\{ \mathcal{G} - \mu^*(Z,\bX \con \mathcal{G}) \big\}     
     \\
       +
       (1-A) \pi^*(Y,Z,\bX) \big\{ \omega^*(Y,Z,\bX) - \mu^*(Z,\bX \con \omega^*) \big\}
  \\
  +
    A
    \mu^*(Z,\bX \con \omega^*)
    + 
    (1-A) \omega^*(Y,Z,\bX) 
    \end{array}
    \right] \ .
\end{align*} 
The closed form representation of $\omega^*$ can be found in \eqref{eq-proof-omega star}.

The estimator is constructed as
\begin{align*}
    &
    \widehat{\tau} = \sum_{k=1}^{K}  \frac{N_k}{N} \widehat{\tau}\SSS
    \ , \quad 
    \widehat{\tau}\SSS
    =
    \frac{ 
    \AVERk
     \big\{ \widehat{\phi}\LSS(\bO) 
     \big\}
    }{ \AVER(A)} \ ,
    \\
    & \widehat{\phi}\LSS(\bO) 
    =
    \left[ 
    \begin{array}{l}
    \{ A - (1-A) \pihat(Y,Z,\bX)  \}
     \big\{ \mathcal{G} - \widehat{\mu}\LSS(Z,\bX \con \mathcal{G}) \big\}      
     \\
       +
       (1-A) \pihat(Y,Z,\bX) \big\{ \widehat{\omega}\LSS(Y,Z,\bX) - \widehat{\mu}\LSS(Z,\bX \con \widehat{\omega}\LSS) \big\}
  \\
  +
    A
    \widehat{\mu}\LSS(Z,\bX \con \widehat{\omega}\LSS)
    + 
    (1-A) \widehat{\omega}\LSS(Y,Z,\bX) 
    \end{array}
    \right]
    \ .
\end{align*}

From simple algebra, we obtain
\begin{align*}
    &
    \sqrt{N_k}
    \big\{ \widehat{\tau}\SSS - \tau^* \big\}
    \\
    &
    =
    \frac{1}{\sqrt{N_k}} 
    \sum_{i \in \SSk} 
    \bigg\{ 
        \frac{ \widehat{\phi}\LSS(\bO_i)}{\AVER(A)}
        -
        \frac{ A_i \tau^* }{\AVERk(A)}
    \bigg\}
    \\
    &
    =
    \underbrace{	
\frac{ \EXP(A) }{ 2}
\bigg\{
\frac{ 1 }{ \AVER(A) } - \frac{ 1 }{ \AVERk (A) } 
\bigg\}}_{=o_P(1)}
\underbrace{ 
\frac{1}{\sqrt{N_k}}
\sum_{i \in \SSk}
\frac{ \widehat{\phi}\LSS(\bO_i) + A_i \tau^* }{ \EXP(A) }
}_{=O_P(1)}
\\
&
\qquad 
+
\underbrace{
\frac{ \EXP(A) }{2}
\bigg\{
\frac{ 1 }{ \AVER(A) } + \frac{ 1 }{ \AVERk (A) } 
\bigg\} }_{=1+o_P(1)}
\underbrace{
\frac{1}{ \sqrt{N_k} }
\sum_{i \in \SSk}
\frac{ \widehat{\phi}\LSS(\bO_i)  - A_i \tau^* }{ \EXP(A) } }_{ \equiv \text{\HT{AN}} = O_P(1)}
\\
& =
\frac{1}{\sqrt{N_k}}
\sum_{i \in \SSk}
\frac{\widehat{\phi}\LSS (\bO_i) - A_i \tau^* }{ \EXP(A) } +o_P(1) \ .
\end{align*}
The second line holds because $\AVER(A)=\EXP(A)+o_P(1)$ and $ \AVERk(A) = \EXP(A) + o_P(1)$ from the law of large numbers, and  $\widehat{\phi}\LSS (\bO_i)$ is bounded. 
The fourth underbraced term (referred to as \HL{AN} hereafter) will be shown to be asymptotically normal, hence the second and fourth underbraced terms are $O_P(1)$. We further establish that
\begin{align*} 
&
    \sqrt{N_k}
    \big\{ \widehat{\tau}\SSS - \tau^* \big\} 
    \\
& =
\frac{1}{\sqrt{N_k}}
\sum_{i \in \SSk}
\frac{\widehat{\phi}\LSS(\bO_i) - A_i \tau^* }{\EXP(A)} +o_P(1)
\\
& =
\frac{1}{\sqrt{N_k}}
\sum_{i \in \SSk}
\frac{\phi^*(\bO_i) - A_i \tau^* }{\EXP(A)} +o_P(1)
\numeq 
\label{eq-proof-AN}
\\
    &
    \qquad 
    +
    \sqrt{N_k} \EXPk \Big\{ \widehat{\phi}\LSS(\bO) - \phi^*(\bO) \Big\}
    \numeq
    \label{eq-proof-Bias}
    \\
    &
    \qquad 
    +
    \EMPk
    \Big\{ \widehat{\phi}\LSS(\bO) - \phi^*(\bO) \Big\} \ .
    \numeq
    \label{eq-proof-EP}
\\
& =
\frac{1}{\sqrt{N_k}}
\sum_{i \in \SSk}
\InfFt^*(\bO_i) +o_P(1) \ .
\end{align*}
Note that \eqref{eq-proof-AN} is asymptotically normal by the central limit theorem.
Based on the derivation below, we establish that \eqref{eq-proof-Bias} and \eqref{eq-proof-EP} are $o_P(1)$. Therefore, \HL{AN} is asymptotically normal and $O_P(1)$. Furthermore, this in turn yields the main result:
\begin{align*}
\sqrt{N}
\big( 
    \widehat{\tau}
    -
    \tau^*
    \big) 
    =  
    \frac{1}{\sqrt{N}}
    \sum_{i=1}^{N} 
    \InfFt^*(\bO_i)
    + o_P(1)
    \stackrel{D}{\rightarrow}
    N(0,\sigma^2) \ ,
\end{align*}
where $\sigma^2 = \VAR \{ \InfFt^*(\bO) \}$.  

Thus, to complete the proof of asymptotic normality for $\widehat{\tau}$, it remains to show: 
\begin{align*}
    &
    \eqref{eq-proof-Bias}
    =
    \sqrt{N_k} \EXPk \Big\{ \widehat{\phi}\LSS(\bO) - \phi^*(\bO) \Big\}
    =
    o_P(1)
    \ ,
    \\
    &
    \eqref{eq-proof-EP}
    =
    \EMPk
    \Big\{ \widehat{\phi}\LSS(\bO) - \phi^*(\bO) \Big\} 
    =
    o_P(1)
    \ .
\end{align*}

We first characterize the rate of \eqref{eq-proof-Bias}. Note that it can be expressed as 
\begin{align*}
& 
N_k^{-1/2} \eqref{eq-proof-Bias} 
\\
    & =
    \EXPk
    \left[ 
    \begin{array}{l}
    \{ A - (1-A) \pihat(Y,Z,\bX)  \}
     \big\{ \mathcal{G} - \widehat{\mu}\LSS(Z,\bX \con \mathcal{G}) \big\}      
     \\
       +
       (1-A) \pihat(Y,Z,\bX) \big\{ \widehat{\omega}\LSS(Y,Z,\bX) - \widehat{\mu}\LSS(Z,\bX \con \widehat{\omega}\LSS) \big\}
  \\
  +
    A
    \widehat{\mu}\LSS(Z,\bX \con \widehat{\omega}\LSS)
    + 
    (1-A) \widehat{\omega}\LSS(Y,Z,\bX) 
    \end{array}
    \right]
    \\
    & 
    \quad 
    - \EXPk
    \left[ 
    \begin{array}{l}
    \{ A - (1-A) \pi^*(Y,Z,\bX)  \}
     \big\{ \mathcal{G} - \mu^*(Z,\bX \con \mathcal{G}) \big\}      
     \\
       +
       (1-A) \pi^*(Y,Z,\bX) \big\{ \omega^*(Y,Z,\bX) - \mu^*(Z,\bX \con \omega^*) \big\}
  \\
  +
    A
    \mu^*(Z,\bX \con \omega^*)
    + 
    (1-A) \omega^*(Y,Z,\bX) 
    \end{array}
    \right]
    \\
    &
    = - \EXPk\left[ 
        \begin{array}{l}
             (1-A) \pihat(Y,Z,\bX) 
         \big\{ \mathcal{G} - \muhat(Z,\bX \con \mathcal{G}) \big\} \\
         + A \muhat(Z,\bX \con \mathcal{G}) -  A \mu^*(Z,\bX \con \mathcal{G}) 
             \\
             - 
             (1-A) \pihat(Y,Z,\bX)  \big\{ \what(Y,Z,\bX ) - \muhat(Z,\bX \con \what) \big\}
    \\
    - 
    A
    \muhat(Z,\bX \con \what)
    - 
    (1-A) \what(Y,Z,\bX)
        \end{array}
    \right] 
    \\ 
    &
    = - \EXPk\left[ 
    \begin{array}{l}
    (1-A) 
    \big\{ \pihat (Y,Z,\bX) - \pi^* (Y,Z,\bX) \big\}
     \big\{ \mathcal{G} - \muhat(Z,\bX \con \mathcal{G}) \big\} 
     \\
     - 
     (1-A) \big\{ \pihat (Y,Z,\bX) - \pi^* (Y,Z,\bX) \big\}
     \big\{ \what(Y,Z,\bX) - \muhat(Z,\bX \con \what) \big\} 
    \\
    - 
    (1-A) \pi^*(Y,Z,\bX) \what(Y,Z,\bX)
    \\
    - 
    (1-A) \what(Y,Z,\bX)
        \end{array}
    \right]  
    \\ 
    &
    \stackrel{\eqref{eq-proof-identity4}}{=} 
    - \EXPk\left[ 
    \begin{array}{l}
    (1-A) 
    \big\{ 
    \pihat (\potY{0},Z,\bX) 
    - \pi^* (\potY{0},Z,\bX)  
    \big\}
    \\
    \qquad 
    \times
    \bigg[
        \begin{array}{l}          
     \big\{ \mathcal{G}^{(0)} - \what(\potY{0},Z,\bX) \} \\
     -
     \big\{\muhat(Z,\bX \con \mathcal{G})  - \muhat(Z,\bX \con \what) 
     \big\} 
        \end{array}
    \bigg]
    \\ 
    - \what(\potY{0},Z,\bX)
        \end{array}
    \right]  
    \\ 
    &
    \stackrel{\eqref{eq-proof-identity3}}{=} 
    - \EXPk\left[ 
    \begin{array}{l}
    A
    \big\{ 
    \pihat (\potY{0},Z,\bX) / \pi^* (\potY{0},Z,\bX)  
    - 1
    \big\}
    \\
    \qquad 
    \times
    \bigg[
        \begin{array}{l}          
     \big\{ \mathcal{G}^{(0)} - \what(\potY{0},Z,\bX) \} \\
     -
     \big\{\muhat(Z,\bX \con \mathcal{G})  - \muhat(Z,\bX \con \what) 
     \big\} 
        \end{array}
    \bigg]
    \\ 
    - \what(\potY{0},Z,\bX)
        \end{array}
    \right]  
    \ .
    \numeq
    \label{eq-proof-bias of ATT estimator}
\end{align*}

We will use $\alpha_\norm$ and $\beta_\norm$ defined in Lemma \ref{lemma-supp-convergence}-\HL{v}. For any function $h$, we have
\begin{align*}
    &
    \EXPk
    \Bigg[ 
    A
    \bigg\{ 
    \frac{ \pihat (\potY{0},Z,\bX) }{  \pi^* (\potY{0},Z,\bX)  }
    - 1
    \bigg\}
    h(\potY{0},Z,\bX)
    \Bigg]
    \\
    &
    =
    \EXPk
    \Bigg[ 
    A
    \bigg\{ 
    \frac{ \alphahat_\norm(\potY{0},\bX) \betahat_\norm (Z,\bX) }
    { \alpha_\norm^*(\potY{0},\bX) \beta_\norm^* (Z,\bX) }
    - 1
    \bigg\}
    h(\potY{0},Z,\bX)
    \Bigg] 
    \\
    &
    =
    \EXPk
    \Bigg[ 
    A
    \bigg\{ 
    \frac{ \alphahat_\norm(\potY{0},\bX) }
    { \alpha_\norm^*(\potY{0},\bX) }
    - 1
    \bigg\}
    h(\potY{0},Z,\bX)
    \Bigg] 
    \\
    &
    \qquad +
    \EXPk
    \Bigg[ 
    A
    \bigg\{ 
    \frac{  \betahat_\norm (Z,\bX) }
    { \beta_\norm^* (Z,\bX) }
    - 1
    \bigg\}
    h(\potY{0},Z,\bX)
    \Bigg] 
    \\
    &
    \qquad +
    \EXPk
    \Bigg[ 
    A
    \bigg\{ 
    \frac{ \alphahat_\norm(\potY{0},\bX) }
    { \alpha_\norm^*(\potY{0},\bX) }
    - 1
    \bigg\}
    \bigg\{ 
    \frac{  \betahat_\norm (Z,\bX) }
    { \beta_\norm^* (Z,\bX) }
    - 1
    \bigg\}
    h(\potY{0},Z,\bX)
    \Bigg] 
    \\
    &
    =
    \EXPk
    \Bigg[  
    \bigg\{ 
    \frac{ \alphahat_\norm(\potY{0},\bX) }
    { \alpha_\norm^*(\potY{0},\bX) }
    - 1
    \bigg\}
    \EXP \big\{ 
    h(\potY{0},Z,\bX) \cond A=1,\potY{0},\bX \big\}
    \Bigg] 
    \\
    &
    \qquad +
    \EXPk
    \Bigg[  
    \bigg\{ 
    \frac{  \betahat_\norm (Z,\bX) }
    { \beta_\norm^* (Z,\bX) }
    - 1
    \bigg\}
    \EXP \big\{ 
    h(\potY{0},Z,\bX) \cond A=1,Z,\bX \big\}
    \Bigg] 
    \\
    &
    \qquad +
    \EXPk
    \Bigg[ 
    A
    \bigg\{ 
    \frac{ \alphahat_\norm(\potY{0},\bX) }
    { \alpha_\norm^*(\potY{0},\bX) }
    - 1
    \bigg\}
    \bigg\{ 
    \frac{  \betahat_\norm (Z,\bX) }
    { \beta_\norm^* (Z,\bX) }
    - 1
    \bigg\}
    h(\potY{0},Z,\bX)
    \Bigg] 
    \ .
    \numeq
    \label{eq-proof-bias of ATT estimator2}
\end{align*}
Substituting $ h = 
     \big\{ \mathcal{G}^{(0)} - \what \} 
     - \muhat(Z,\bX \con \mathcal{G}-\what) $, equation \eqref{eq-proof-bias of ATT estimator2} becomes
\begin{align*}
    &
    \bigg\|
    \EXPk\left[ 
   A
    \bigg\{ 
    \frac{ \pihat (\potY{0},Z,\bX) }{  \pi^* (\potY{0},Z,\bX)  }
    - 1
    \bigg\} 
    \big[
        \big\{ \mathcal{G}^{(0)} - \what \} 
     - \muhat(Z,\bX \con \mathcal{G}-\what) 
    \big]  
    \right]
    \bigg\|
    \\
    &
    \lesssim
     \EXPk\left[ 
    \bigg\| \frac{\alphahat_{\norm} (\potY{0},\bX)}{\alpha_{\norm}^* (\potY{0},\bX)} - 1 \bigg\| 
    \times 
     \begin{array}{l} 
          \bigg\|
    \EXP \bigg[ 
          \begin{array}{l} \big\{ \mathcal{G}^{(0)} - \what \} 
          \\
     - \muhat(Z,\bX \con \mathcal{G}-\what) 
          \end{array} 
          \, \bigg| \,
          A=1,\potY{0},\bX \bigg]
     \bigg\|
     \end{array}
     \right]
     \\
     &
     \qquad +
      \EXPk\left[ 
    \bigg\| \frac{\betahat_{\norm} (Z,\bX)}{\beta_{\norm}^* (Z,\bX)} - 1 \bigg\| 
    \times 
     \begin{array}{l} 
          \bigg\|
    \EXP \bigg[ 
          \begin{array}{l} \big\{ \mathcal{G}^{(0)} - \what \} 
          \\
     - \muhat(Z,\bX \con \mathcal{G}-\what) 
          \end{array} 
          \, \bigg| \,
          A=1,Z,\bX \bigg]
     \bigg\|
     \end{array}
     \right]
     \\
     &
     \qquad +
      \EXPk\left[  
     \bigg\| \frac{\alphahat_{\norm} (\potY{0},\bX)}{\alpha_{\norm}^* (\potY{0},\bX)} - 1 \bigg\| 
     \bigg\| \frac{\betahat_{\norm} (Z,\bX)}{\beta_{\norm}^* (Z,\bX)} - 1 \bigg\|  
     \bigg\|
     \begin{array}{l}
        \big\{ \mathcal{G}^{(0)} - \what \} 
        \\
     - \muhat(Z,\bX \con \mathcal{G}-\what) 
     \end{array}
    \bigg\|
     \right]
     \\
    &
    \lesssim
    \bigg\| \begin{array}{l}\alphahat_{\norm} (\potY{0},\bX) \\
    - \alpha_{\norm}^*(\potY{0},\bX)
    \end{array}  \bigg\|_{P,2}
    \times 
          \bigg\|
     \begin{array}{l} 
    \EXP \bigg[ 
          \begin{array}{l} \big\{ \mathcal{G}^{(0)} - \what \} 
          \\
     - \muhat(Z,\bX \con \mathcal{G}-\what) 
          \end{array} 
          \, \bigg| \,
          A=1,\potY{0},\bX \bigg]
     \end{array} 
          \bigg\|_{P,2}
     \\
     &
     \qquad +
     \bigg\| \begin{array}{l} \betahat_{\norm} (Z,\bX) \\
     - \beta_{\norm}^*(Z,\bX) \end{array} \bigg\|_{P,2}
    \times 
          \bigg\|
     \begin{array}{l} 
    \EXP \bigg[ 
          \begin{array}{l} \big\{ \mathcal{G}^{(0)} - \what \} 
          \\
     - \muhat(Z,\bX \con \mathcal{G}-\what) 
          \end{array} 
          \, \bigg| \,
          A=1,Z,\bX \bigg]
     \end{array} 
          \bigg\|_{P,2}
     \\
     &
     \qquad +
     \bigg\| \begin{array}{l}\alphahat_{\norm} (\potY{0},\bX) \\
    - \alpha_{\norm}^*(\potY{0},\bX)
    \end{array}  \bigg\|_{P,2}
     \bigg\| \begin{array}{l} \betahat_{\norm} (Z,\bX) \\
     - \beta_{\norm}^*(Z,\bX) \end{array} \bigg\|_{P,2} 
     \\
    &
    \lesssim 
    \left[ 
    \begin{array}{l}            
    \| \widehat{f}\LSS(\potY{0} \cond A=0,Z,\bX)
    -
    f^*(\potY{0} \cond A=0,Z,\bX)\|_{P,2}
    \\
     + \| \widehat{f}\LSS(A=1 \cond Z,\bX) - f^*(A=1 \cond Z,\bX) \|_{P,2}
        \end{array}
    \right]^2 \ .
    \numeq
    \label{eq-proof-bias of ATT estimator3}
\end{align*}
The last line holds from \eqref{eq-proof-alpha L2P}, \eqref{eq-proof-beta L2P}, \eqref{eq-proof-GH convergence}, \eqref{eq-proof-GH convergence given Z}.

In addition, from \eqref{eq-proof-omega mean rate}, we obtain
\begin{align*}
    &
     \Big\| 
    \EXP\LSS \big\{ \what(\potY{0},Z,\bX)
    \big\}
    \Big\|
        \\
    &
    \lesssim 
    \left[
        \begin{array}{l}
        \big\| \widehat{f}\LSS(A=1 \cond Z,\bX) - f^*(A=1 \cond Z,\bX)
        \big\|_{P,2}
        \\
        +
        \big\| \widehat{f}\LSS(\potY{0} \cond A=0,Z,\bX) - f^*(\potY{0} \cond A=0,Z,\bX)
        \big\|_{P,2}
        \end{array}
    \right]
    \\
    &
    \quad \quad 
    \times
    \Big\|  \widehat{f}\LSS(Z=1 \cond \bX) - f^*(Z=1 \cond \bX) \Big\|_{P,2} \ .    
    \numeq
    \label{eq-proof-bias of ATT estimator4}
\end{align*}

Combining \eqref{eq-proof-bias of ATT estimator}, \eqref{eq-proof-bias of ATT estimator3}, and \eqref{eq-proof-bias of ATT estimator4}, we establish
\begin{align*}
     &
     N_k^{-1/2} 
     \Big\| \eqref{eq-proof-Bias} \Big\|
     \\
     &
     \lesssim
    \left[ 
    \begin{array}{l}            
    \| \widehat{f}\LSS(\potY{0} \cond A=0,Z,\bX)
    -
    f^*(\potY{0} \cond A=0,Z,\bX)\|_{P,2}
    \\
     + \| \widehat{f}\LSS(A=1 \cond Z,\bX) - f^*(A=1 \cond Z,\bX) \|_{P,2}
        \end{array}
    \right]
    \\
    &
    \quad \quad 
    \times 
    \left[ 
    \begin{array}{l}            
    \| \widehat{f}\LSS(\potY{0} \cond A=0,Z,\bX)
    -
    f^*(\potY{0} \cond A=0,Z,\bX)\|_{P,2}
    \\
     + \| \widehat{f}\LSS(A=1 \cond Z,\bX) - f^*(A=1 \cond Z,\bX) \|_{P,2}     
    \\
     + \| \widehat{f}\LSS(Z=1 \cond \bX) - f^*(Z=1 \cond \bX) \|_{P,2}
        \end{array}
    \right]
    \\
    &
    = 
    r_Y^{(-k)2}
    +
    r_A^{(-k)2}
    +
    r_Y\LSS 
    r_A \LSS
    +
    r_Y \LSS
    r_Z \LSS
    +
    r_A \LSS
    r_Z \LSS 
    \\
    &
    =
    o_P\big( N^{-1/2} \big)
    \ .
\end{align*}
The second inequality holds from \eqref{eq-proof-mu L2P}, \eqref{eq-proof-GH convergence}, \eqref{eq-proof-omega mean rate}. The last line holds from Assumption \HL{A7}. This completes the proof that \eqref{eq-proof-Bias} is $o_P(1)$.

Next, we characterize the rate of \eqref{eq-proof-EP}. The expectation of \eqref{eq-proof-EP} conditioning on $\SSk$ is 0. The variance of \eqref{eq-proof-EP} is
\begin{align*}
\VAR\LSS \big\{ \eqref{eq-proof-EP} \big\}
\leq 
\VAR\LSS \big\{ \widehat{\phi}\LSS (\bO) - \phi^* (\bO) \big\}
\leq
\EXPk \big[ \big\| 
\widehat{\phi}\LSS (\bO) - \phi^* (\bO) \big\|_2^2 \big] \ .
\end{align*}
Therefore, it suffices to find the rate of $\EXPk \big[ \big\| 
\widehat{\phi}\LSS (\bO) - \phi^* (\bO) \big\|_2^2 \big]$. Note that $\widehat{\phi}\LSS(\bO) - \phi^*(\bO)$ can be explicitly written as:
\begin{align*}
&
\widehat{\phi}\LSS(\bO) - \phi^*(\bO)
\\
        & = 
    \left[ 
    \begin{array}{l}
    \{ A - (1-A) \pihat(Y,Z,\bX)  \}
     \big\{ \mathcal{G} - \widehat{\mu}\LSS(Z,\bX \con \mathcal{G}) \big\}      
     \\
       +
       (1-A) \pihat(Y,Z,\bX) \big\{ \widehat{\omega}\LSS(Y,Z,\bX) - \widehat{\mu}\LSS(Z,\bX \con \widehat{\omega}\LSS) \big\}
  \\
  +
    A
    \widehat{\mu}\LSS(Z,\bX \con \widehat{\omega}\LSS)
    + 
    (1-A) \widehat{\omega}\LSS(Y,Z,\bX) 
    \end{array}
    \right]
    \\
    & 
    \qquad \qquad 
    -  
    \left[ 
    \begin{array}{l}
    \{ A - (1-A) \pi^*(Y,Z,\bX)  \}
     \big\{ \mathcal{G} - \mu^*(Z,\bX \con \mathcal{G}) \big\}      
     \\
       +
       (1-A) \pi^*(Y,Z,\bX) \big\{ \omega^*(Y,Z,\bX) - \mu^*(Z,\bX \con \omega^*) \big\}
  \\
  +
    A
    \mu^*(Z,\bX \con \omega^*)
    + 
    (1-A) \omega^*(Y,Z,\bX) 
    \end{array}
    \right]
    \\
    &
    =
     \left[ 
        \begin{array}{l}
            (1-A) \pi^*(Y,Z,\bX) 
         \big\{ \mathcal{G} - \mu^*(Z,\bX \con \mathcal{G}) - \omega^*(Y,Z,\bX) + \mu^*(Z,\bX \con \omega^*) \big\} 
         \\
         - (1-A) \pihat (Y,Z,\bX) 
         \big\{ \mathcal{G} - \muhat(Z,\bX \con \mathcal{G}) 
         - \what(Y,Z,\bX) + \muhat(Z,\bX \con \what ) \big\} 
         \\
         + A \big\{ \mu^*(Z,\bX \con \mathcal{G}) - \muhat(Z,\bX \con \mathcal{G}) \big\}
         \\
         - A \big\{ \mu^*(Z,\bX \con \omega^*) - \muhat(Z,\bX \con \what) \big\}
         \\
         - (1-A) \big\{ \omega^*(Y,Z,\bX) - \what(Y,Z,\bX) \big\}
        \end{array}
    \right]
    \\
    &
    =
     \left[ 
        \begin{array}{l}
            (1-A) \pi^*(Y,Z,\bX) 
         \big\{ \mathcal{G} - \mu^*(Z,\bX \con \mathcal{G}) - \omega^*(Y,Z,\bX) + \mu^*(Z,\bX \con \omega^*) \big\} 
         \\
         - (1-A) \pi^* (Y,Z,\bX) 
         \big\{ \mathcal{G} - \muhat(Z,\bX \con \mathcal{G}) 
         - \what(Y,Z,\bX) + \muhat(Z,\bX \con \what ) \big\} 
         \\
         + (1-A) \pi^* (Y,Z,\bX) 
         \big\{ \mathcal{G} - \muhat(Z,\bX \con \mathcal{G}) 
         - \what(Y,Z,\bX) + \muhat(Z,\bX \con \what ) \big\} 
         \\
         - (1-A) \pihat (Y,Z,\bX) 
         \big\{ \mathcal{G} - \muhat(Z,\bX \con \mathcal{G}) 
         - \what(Y,Z,\bX) + \muhat(Z,\bX \con \what ) \big\} 
         \\
         + A \big\{ \mu^*(Z,\bX \con \mathcal{G}) - \muhat(Z,\bX \con \mathcal{G}) \big\}
         \\
         - A \big\{ \mu^*(Z,\bX \con \omega^*) - \muhat(Z,\bX \con \what) \big\}
         \\
         - (1-A) \big\{ \omega^*(Y,Z,\bX) - \what(Y,Z,\bX) \big\}
        \end{array}
    \right] 
    \\
    &
    =
     \left[ 
        \begin{array}{l}
            (1-A) \pi^*(\potY{0},Z,\bX) 
            \left[ 
            \begin{array}{l}
                  \muhat(Z,\bX \con \mathcal{G})  - \mu^*(Z,\bX \con \mathcal{G})
                 \\
                 +
                 \what
                 (\potY{0},Z,\bX)
                 -
                 \omega^*(\potY{0},Z,\bX)
                 \\
                 +
                 \mu^*(Z,\bX \con \omega^*)
                 - 
                 \muhat(Z,\bX \con \what) 
            \end{array}
            \right]    
         \\
         + (1-A)
         \left[ 
            \begin{array}{l}
                 \mathcal{G} - \muhat(Z,\bX \con \mathcal{G}) 
                 \\
         - \what(\potY{0},Z,\bX) + \muhat(Z,\bX \con \what ) 
            \end{array}
         \right]   
         \left\{ 
        \begin{array}{l}
         \pi^* (\potY{0},Z,\bX) 
         \\
         -
              \pihat (\potY{0},Z,\bX) 
        \end{array}
         \right\} 
         \\
         + A \big\{ \mu^*(Z,\bX \con \mathcal{G}) - \muhat(Z,\bX \con \mathcal{G}) \big\}
         \\
         - A \big\{ \mu^*(Z,\bX \con \omega^*) - \muhat(Z,\bX \con \what) \big\}
         \\
         - (1-A) \big\{ \omega^*(\potY{0},Z,\bX) - \what(\potY{0},Z,\bX) \big\}
        \end{array}
    \right] 
    \ .
\end{align*}
Consequently, we obtain
\begin{align*}
&
\EXPk \big[ \big\| 
\widehat{\phi}\LSS (\bO) - \phi^* (\bO) \big\|_2^2 \big]
\\
&
\lesssim
\left[ 
\begin{array}{l}
     \big\| 
    \muhat(Z,\bX \con \mathcal{G})
    -
    \mu^*(Z,\bX \con \mathcal{G})
\big\|_{P,2}^2
\\[0.1cm]
+
\big\|
\muhat(Z,\bX \con \what)
    -
    \mu^*(Z,\bX \con \omega^*)
\big\|_{P,2}^2 
\\[0.1cm]
+
\big\| 
    \what(\potY{0},Z,\bX)
    -
    \omega^*(\potY{0},Z,\bX)
\big\|_{P,2}^2
\\[0.1cm]
+
\big\|
\pihat(\potY{0},Z,\bX)
    -
    \pi^*(\potY{0},Z,\bX)
\big\|_{P,2}^2 
\end{array}
\right]
\\
&
= 
o_P(1) \ .
\numeq 
\label{eq-proof-Consistency}
\end{align*}
The last line holds from \eqref{eq-proof-mu L2P}, \eqref{eq-proof-alpha beta L2P}, 
\eqref{eq-proof-omega hat}, \eqref{eq-proof-omega mu hat}.

We finally show that the variance estimator $\widehat{\sigma}^2$ is consistent. Note that
\begin{align*}
    \widehat{\sigma}^2
    =
     \sum_{k=1}^{K} 
     \frac{N_k}{N} \widehat{\sigma}^{2,(k)}
     \ , \quad 
     \widehat{\sigma}^{2,(k)}
     = 
    \AVERk
    \bigg[    
     \bigg\{ 
     \frac{ \widehat{\phi}\LSS(\bO) - A \widehat{\tau}
    }{ \AVER(A)}     
    \bigg\}^2
    \bigg] \ .
\end{align*}
Therefore, it suffices to show that $ 	\widehat{\sigma}^{2,(k)} - \sigma^2 = o_P(1)$, which is expressed as follows:
\begin{align}
&
\widehat{\sigma}^{2,(k)} - \sigma^2
\nonumber
\\
&
=
\big\{ \AVER(A) \big\}^{-2}
\AVER_{\II_k}
\big[
\big\{ \widehat{\phi}\LSS(\bO) - A \widehat{\tau}  \big\}^2
\big] -
\sigma^2
\nonumber
\\
&
=
\big\{ \EXP(A) \big\}^{-2}
\AVER_{\II_k}
\big[
\big\{ \widehat{\phi}\LSS(\bO) - A \widehat{\tau}  \big\}^2
\big] -
\sigma^2 + o_P(1)
\nonumber
\\
&
=
\big\{ \EXP(A) \big\}^{-2}
\Big[
\AVER_{\II_k}
\big[
\big\{ \widehat{\phi}\LSS(\bO) - A \widehat{\tau}  \big\}^2
\big] -
\AVER_{\II_k}
\big[
\big\{ \phi^*(\bO) - A \tau^*  \big\} ^2
\big]
\Big]
\nonumber
\\
& \hspace*{2cm}
+		 
\Big[
\big\{ \EXP(A) \big\}^{-2}
\AVER_{\II_k}
\big[
\big\{ \phi^*(\bO) - A \tau^*  \big\} ^2
\big]
-
\sigma^2
\Big] + o_P(1)
\nonumber
\\
& = 
\big\{ \EXP(A) \big\}^{-2}
\AVER_{\II_k}
\Big[
\big\{ \widehat{\phi}\LSS(\bO) - A \widehat{\tau} \big\}^2
- 
\big\{ \phi^*(\bO) - A \tau^*  \big\}^2 
\Big]
+ o_P(1)
\label{eq-proof-variance1} \ .
\end{align}
The third and fifth lines hold from the law of large numbers.  
Therefore, it is sufficient to show that \eqref{eq-proof-variance1} is also $o_P(1)$. The term in  \eqref{eq-proof-variance1} can be written as
\begin{align*}
&
\frac{1}{N_k}
\sum_{i \in \II_k}
\Big[ 
\big\{
\widehat{\phi}\LSS (\bO_i)
-
A_i \widehat{\tau}
\big\}^2
-
\big\{
\phi^* (\bO_i)
-
A_i
\tau^*
\big\}^2
\Big]
\\
& =
\frac{1}{N_k}
\sum_{i \in \II_k}
\left[
\begin{array}{l}
\big[
\big\{
\widehat{\phi}\LSS (\bO_i)
-
A_i
\widehat{\tau}
\big\}
-
\big\{
\phi^* (\bO_i)
-
A_i
\tau^*
\big\}
\big]
\\
\times 
\big[
\big\{
\widehat{\phi}\LSS (\bO_i)
-
A_i
\widehat{\tau}
\big\}
+
\big\{
\phi^* (\bO_i)
-
A_i
\tau^*
\big\}
\big]
\end{array} 
\right]
    \\
& = 
\frac{1}{N_k}
\sum_{i \in \II_k}
\left[
\begin{array}{l}
\big[
\big\{
\widehat{\phi}\LSS (\bO_i)
-
A_i
\widehat{\tau}
\big\}
-
\big\{
\phi^* (\bO_i)
-
A_i
\tau^*
\big\}
\big] \\
\times \big[
\big\{
\widehat{\phi}\LSS (\bO_i)
-
A_i
\widehat{\tau}
\big\}
-
\big\{
\phi^* (\bO_i)
-
A_i
\tau^*
\big\}
+ 
2 \big\{
\phi^* (\bO_i)
-
A_i
\tau^*
\big\}
\big]
\end{array}
\right]
    \\
& = 
\frac{1}{N_k}
\sum_{i \in \II_k}
\Big[
\big\{
\widehat{\phi}\LSS (\bO_i)
-
A_i
\widehat{\tau}
\big\}
-
\big\{
\phi^* (\bO_i)
-
A_i
\tau^*
\big\}
\Big]^2
\\
&
\quad 
+
\frac{2}{N_k}
\sum_{i \in \II_k}
\Big[
\big[
\big\{
\widehat{\phi}\LSS (\bO_i)
-
A_i
\widehat{\tau}
\big\}
-
\big\{
\phi^* (\bO_i)
-
A_i
\tau^*
\big\}
\big]
\big\{
\phi^* (\bO_i)
-
A_i
\tau^*
\big\}
\Big]			\ .
\end{align*}
Let $\widehat{\Delta}\LSS = 
\big\{
\widehat{\phi}\LSS(\bO)
-
A
\widehat{\tau}
\big\}
-
\big\{
\phi^*(\bO)
-
A
\tau^*
\big\}$. 
From the H\"older's inequality, we find the absolute value of \eqref{eq-proof-variance1} is upper bounded by
\begin{align*}
    \big\| \eqref{eq-proof-variance1} \big\|
    & \precsim
    \AVER_{\mathcal{I}_k} \Big[  \big\{ \widehat{\Delta} \LSS \big\}^2 \Big]
    +
    2
    \AVER_{\mathcal{I}_k} \Big[  \big\{ \widehat{\Delta} \LSS \big\}^2 \Big]
    \cdot 
    \AVER_{\mathcal{I}_k} \Big[  \big\{ \phi^*(\bO) - A \tau^* \big\}^2 \Big] \ .
\end{align*}
Since $\AVER_{\II_k} [  \{  \phi^*(\bO) - A \tau^* \}^2 ] = \{ \EXP(A)\} ^2 \sigma^2 + o_P(1) = O_P(1)$, \eqref{eq-proof-variance1} is $o_P(1)$ if $\AVER_{\II_k} [  \{ \widehat{\Delta} \LSS \}^2 ] = o_P(1)$. From some algebra, we find
\begin{align*}
    \AVER_{\II_k} \Big[  \big\{ \widehat{\Delta} \LSS \big\}^2 \Big]
    & \lesssim
    \frac{1}{N_k}
\sum_{i \in \mathcal{I}_k} 
\big\{
\widehat{\phi} \LSS (\bO_i)
-
\phi^*(\bO_i)
\big\}^2
+
\big( \widehat{\tau}  - \tau^* \big)^2
\\
& =
\EXP\LSS\Big[
\big\{
\widehat{\phi}\LSS(\bO)
-
\phi^*(\bO)
\big\}^2 \Big] 
+
\big( \widehat{\tau}  - \tau^* \big)^2
+ o_P(1)
\\
&
=
o_P(1) \ . 
\end{align*}
The second line holds from the law of large numbers applied to $\big\{ \widehat{\phi}\LSS(\bO)
-
\phi^*(\bO)
\big\}^2$. The last line holds from \eqref{eq-proof-Consistency} and $\widehat{\tau} = \tau^* + o_P(1)$, which is from the asymptotic normality of the estimator. 

This concludes the proof. 

\subsection{Proof of Theorem \ref{thm-falsification}}

The result follows directly from Lemma \ref{lemma-beta and gamma}.

}%

\newpage

\bibliographystyle{apa}
\bibliography{SIV.bib}

\end{document}